\definecolor{mygray}{gray}{0.3}
\definecolor{mycolor}{rgb}{0.05, 0.4, 0.2}
\newcommand{\blue}{\color{blue}}
\newcommand{\gray}{\color{mycolor} \small }
\newcommand{\maxtensor}{\otimes_{\mbox{max}}}
\newcommand{\mintensor}{\otimes_{\mbox{min}}}
\newcommand{\0}{\mbox{\sf 0}}
\newcommand{\1}{\mbox{\sf 1}}
\newcommand{\A}{{\mathscr A}} 
\newcommand{\Aff}{\mbox{Aff}}
\newcommand{\B}{{\mathscr B}} 
\newcommand{\Boxworld}{\mbox{\bf Boxworld}}
\newcommand{\C}{{\mathbb C}}
\newcommand{\Cat}{{\mathcal C}}
\newcommand{\con}{\mbox{con}}
\newcommand{\Conv}{\mbox{\bf Conv}}
\newcommand{\D}{{\mathcal D}}
\newcommand{\Dat}{{\mathcal D}}
\newcommand{\E}{{\mathbb E}}
\newcommand{\Ev}{{\mathcal E}}
\renewcommand{\epsilon}{\varepsilon}
\newcommand{\Expo}{\mbox{Exp}}
\newcommand{\F}{{\mathscr F}}
\newcommand{\FinSet}{\mbox{\bf FinSet}}
\newcommand{\Func}{\mbox{\bf Func}}
\newcommand{\G}{{\mathscr G}}
\newcommand{\green}{\color{green}}
\newcommand{\Grp}{\mbox{\bf Grp}}
\renewcommand{\H}{{\boldsymbol{\mathscr H}}}
\newcommand{\Hilb}{\H} 
\newcommand{\Hom}{\mbox{Hom}}
\newcommand{\id}{\mbox{id}}
\newcommand{\K}{{\boldsymbol{\mathscr K}}}
\renewcommand{\L}{{\boldsymbol{\mathscr L}}} 
\newcommand{\M}{{\mathscr M}}
\newcommand{\Meas}{\mbox{Meas}}
\newcommand{\Prob}{\Mod}
\newcommand{\N}{{\mathbb N}}
\newcommand{\Nat}{\mbox{\bf Nat}}
\newcommand{\op}{\mbox{op}}
\newcommand{\OUS}{\mbox{\bf OUS}}
\newcommand{\Mod}{\mbox{\bf Prob}}
\renewcommand{\P}{{\mathbb P}}
\newcommand{\Pow}{{\boldsymbol{\mathscr P}}} 
\newcommand{\Power}{\Pow}
\newcommand{\pr}{\mbox{Pr}}
\newcommand{\Q}{{\mathbb H}}
\newcommand{\R}{{\mathbb R}}
\newcommand{\RVec}{\mbox{\bf RVec}}
\renewcommand{\S}{{\mathscr S}}
\newcommand{\sa}{\mbox{\small sa}}
\newcommand{\Set}{\mbox{\bf Set}}
\newcommand{\spn}{\mbox{Span}}
\newcommand{\supp}{\mbox{supp}}
\newcommand{\tensor}{\otimes}
\renewcommand{\tilde}{\widetilde}
\newcommand{\U}{{\mathbb U}}
\newcommand{\Unitary}{{\mathscr U}}
\newcommand{\V}{{\mathbb V}}
\renewcommand{\Vec}{\RVec}
\newcommand{\W}{{\mathbb W}}
\renewcommand{\hat}{\widehat}
\newcommand{\X}{{\mathbb X}}
\newcommand{\Z}{{\mathbb Z}}
\newcommand{\oc}{\mbox{ co }}
\newcommand{\co}{\oc}
\newcommand{\Exp}{{\mathbb E}}
\newcommand{\Aut}{\mbox{Aut}}
\newcommand{\for}[1]{\overrightarrow{#1}}
\newcommand{\back}[1]{\overleftarrow{#1}}
\newcommand{\bilat}[1]{\overleftrightarrow{#1}}
\newcommand{\tempout}[1]{{}}
\newcommand{\Tr}{\mbox{Tr}}
\newcommand{\tr}{\mbox{Tr}}
\newcounter{thaler}
\newenvironment{mlist}{\begin{list}{\arabic{thaler}}%
{\usecounter{thaler}
\setlength{\rightmargin}{\leftmargin}
\topsep=0pt
\itemsep=0pt
\parskip=0pt
\parsep=0pt
}}{\end{list}}
\tikzset{help lines/.style=very thin}
\newtheorem{theorem}{Theorem}[section]
\newtheorem{lemma}[theorem]{Lemma}
\newtheorem{proposition}[theorem]{Proposition}
\newtheorem{corollary}[theorem]{Corollary}
\theoremstyle{definition} 
\newtheorem{definition}[theorem]{Definition}
\newtheorem{example}[theorem]{Example}
\newtheoremstyle{indented}
  {11pt}
  {11pt}
  {\addtolength{\@totalleftmargin}{3.5em}
   \addtolength{\linewidth}{-3.5em}
   \parshape 1 3.5em \linewidth}
  {}
  {\bfseries}
  {.}
  {.5em}
  {}
\theoremstyle{indented}
\newtheorem{exercise}{Exercise}
\begin{document}


\begin{center}{\bf {\large Generalized Probability Theory} \\
Notes for a short course}

\vspace{.2in} 

Alex Wilce\footnote{wilce\@susqu.edu}\\
\end{center} 
{\fontsize{8}{9}\selectfont
\epigraph{\fontsize{9}{11}\selectfont The only way of discovering the limits of the possible is to venture a little way past them into the impossible.}{\textit{\fontsize{9}{11}\selectfont Arthur C. Clarke}}
}


This is a slightly revised and expanded version of notes  distributed during a short course on GPTs given at the Perimeter Institute for Theoretical Physics in March and April of 2024.  I want to thank the members of th course, and my hosts, Lucien Hardy and Rob Spekkens, for their hospitality during that time. A special note of thanks goes to Maria Ciudad Ala\~{n}on, at whose suggestion I wrote section 4.4. \\

{\large \bf Introduction} 

The idea of regarding quantum theory as a "non-classical" probability calculus
goes back to the work of von Neumann \cite{vN}, and include the extensive mathematical literature on "quantum logics" from the 1960s, '70s and '80s, sparked by the work of Mackey \cite{Mackey}. Ideas from quantum information theory have led to a strong revival of this project during the past 20 years or so, beginning with seminal papers of Hardy \cite{Hardy} and Barrett \cite{Barrett}. A rapidly growing area of research, located somewhere between physics and mathematics, is now devoted to "generalized probabilistic theories", or GPTs (a phrase due to Barrett). This has greatly sharpened our understanding of many aspects of quantum theory, especially those having to do with entanglement: this turns out to be a generic feature of non-classical GPTs, and  thus, at least from a mathematical point of view, not a specifically "quantum" phenomenon at all. 


However, precisely because of its rapid growth, research in GPTs is somewhat scattered, with different groups making use of slightly different, "home-grown" mathematical frameworks. Moreover, all of this has taken place without much engagement with the earlier literature. This is unfortunate, since many ideas and techniques from that earlier period are still of interest and of use.  My aim in these notes is to give a unified, mathematically clear, and historically-informed outline of GPTs as a generalized probability theory. The approach I develop draws particularly from work of D. J. Foulis and C. H. Randall (e.g., \cite{RF70, RJF, FR-TP, FPR}) in the 1970s and '80s. This was originally presented, in part, as a  generalization --- and, implicitly, a criticism --- of then-prevailing methods and assumptions in quantum logic.  Suitably updated, the Foulis-Randall approach remains, in my view, the best (most flexible, most expressive) framework currently available for the study of GPTs. 

As I hope will become clear, I regard this framework not so much as something belonging to physics, but as a very conservative generalization of classical probability theory. Indeed, I rather think the acronym GPT should be read as "general probability theory".  The only real departure from classical probability theory is that we abandon the (usually tacit) assumption that all statistical experiments can effectively be performed jointly.  The resulting theory's scope is very broad, allowing it to take in quantum theory, as well as various hypothetical "post-quantum" probabilistic physical theories.  

{\em Prerequisites}  I assume the reader is comfortable with basic mathematical ideas, idioms, and notations, especially the essentials of (naive) set theory and linear algebra, the latter ideally including duality and tensor products. Beyond this, some familiarity with measure theory and functional analysis, a bit of point-set  topology, and a casual acquaintance with category theory will all be helpful, but I will briefly review some of this material as we go along, as it's needed to follow the discussion. Some less central material is marked off in {\gray green}. 

{\em Notational conventions} If $X$ and $Y$ are sets, 
I write $Y^{X}$ for the set of all mappings $f : X \rightarrow Y$ and $\Power(X)$ for the power set of $X$. If $\V$ is a vector space, the set $\V^{X}$ is a vector space under pointwise operations. In particular, $\R^{X}$ and $\C^{X}$ are vector spaces in this way.  If $\V, \W$ are vector spaces, $\L(\V,\W)$ denotes the set of all linear mappings $\V \rightarrow \W$; $\L(\V)$ is short for $\L(\V,\V)$, and $\V'$ is $\L(\V,\R)$, the algebraic {\em dual space} of $\V$. If $\V$ carries a linear topology, e.g., if $\V$ is a normed space, $\V^{\ast}$ always denotes the continuous dual. 
Physicists like to write an inner product on a complex vector space $\H$ as $\langle x | y \rangle$, understanding this as linear in the second argument. Mathematical tradition uses $\langle x, y \rangle$, usually linear in the first argument. In either case, the inner product is conjugate-symmetric, so if we simply define $\langle x | y \rangle = \langle y, x \rangle$, 
(where the side being defined depends on which side you're on), we can go back and forth between these notations as the mood takes us. On a real vector space, of course, it doesn't matter, and I'll stick to using a comma, rather than a bar, in that case. 

{\em Exercises} I've taken the liberty to sprinkle exercises throughout the text. Most are very simple, amounting to an invitation to confirm that something really is "evident" or "easily checked".  A few require somewhat more thought, and these are marked with $\star$.  The first sort should of course be attempted when encountered; the second sort should be read (as they sometimes mention useful facts), but needn't be tackled unless they seem especially interesting.  Exercises in {\gray green} are totally optional, and needn't even be read, let alone attempted. 

\tempout{
{
{\em ``Measurement"}  John Bell famously inveighed against the use of ``measurement" as a fundamental term in physics. This needn't concern us here, since we are developing probability theory, which is prior to physics. But one of his points is well-taken: ``measurement" suggests that there is something (some pre-existing value) being measured.  Bell suggests that ``experiment" would be better, and I largely agree: for the most part, that's the term I use. But ``measurement" is simply more euphonious than 
``experiment", and slips off the pen (keyboard?) very readily. To avoid having to impose on myself a draconian editorial regime in which all use to the M-word is suppressed in favor of the E-word,  I will instead just declare that, in these notes, the two are to be understood synonymously.}
}

I intend at some point to expand these notes. In particular, I hope eventually to add a chapter on quantum reconstructions, and one on contextuality, ontic models, and locality. Meanwhile, I'll be very grateful for any feedback particularly if it involves pointing out mistakes! 

\newpage
{\bf Outline} 

1. Probabilistic Models
\begin{mlist} 
\item[1.1] Test spaces and probability weights 
\item[1.2] Events, perspectivity, and a digression on quantum logics
\item[1.3] Sequential measurement and 
interference
\item[1.4] Mappings of models 
\end{mlist} 

2.  Probabilistic Models Linearized 
\begin{mlist} 
\item[2.1] Ordered vector spaces
\item[2.2] Ordered vector spaces associated with a probabilistic model 
\item[2.3] Effect algebras
\item[2.4] Processes 
\end{mlist} 

3.  Composite Models and Entanglement 
\begin{mlist} 
\item[3.1] Bipartite states and non-signaling composites 
\item[3.2] Composites linearized 
\item[3.3] Local tomography and the maximal and minimal tensor products
\item[3.4] Entanglement and remote evaluation 
\end{mlist} 

4.  Probabilistic Theories 
\begin{mlist}
\item[4.1] Categories, functors, natural transformations 
\item[4.2] Monoidal categories and process theories 
\item[4.3] Probabilistic theories as functors
\item[4.4] Other frameworks
\end{mlist} 

Appendices
\begin{mlist} 
\item[A] State of the ensembles
\item[B] Base-normed and order unit spaces
\item[C] Completeness results for $\V(A)$.
\end{mlist}

\tempout{
5.  Reconstructions 
\begin{mlist} 
\item[5.1] Quantum reconstructions in antiquity 
\item[5.2] Hardy-style reconstructions 
\item[5.3] The Pavian reconstruction 
\item[5.4] Jordan-algebraic reconstructions 
\end{mlist} 

6.  Further Topics ($\ast$)
\begin{mlist}
\item[6.1] Compounding, coarse-graining and interference
\item[6.2] Real and Quaternionic QM; InvQM
\item[6.3] Contextuality and classicality
\end{mlist} 
}
\newpage
\section{Probabilistic Models}

 \epigraph{\fontsize{9}{11} \selectfont [The outcome of a] physical operation is ... 
 a {\em symbol}...}{\fontsize{9}{11} \selectfont \textit {C. H. Randall and D. J. Foulis} \cite{RF-Hooker}}
\vspace{-.5in}
\epigraph{
\fontsize{9}{11} \selectfont An outcome is a bump on the head.}{\fontsize{9}{11} 
	\selectfont \textit{ E. Wigner} \cite{Wigner}}

Very roughly, a {\em probabilistic model} for a physical system, is a mathematical structure in which the system is assigned a space of {\em states}, a space of {\em measurements} or {\em experiments}, and a way of assigning probabilities to the outcomes of the latter, given an element of the former. 
In this section, I will discuss various way in which one can represent such a structure mathematically. I begin with a framework (due to D.J. Foulis and C.H. Randall) that is  mathematically very simple and conceptually very conservative, but nevertheless remarkably general. In later sections, I will show how this leads naturally to a kind of linear representation of probabilistic models in terms of ordered vector spaces. 

\subsection{Test spaces and probability weights}

In elementary classical probability theory, a "probabilistic model" is a pair $(E,\alpha)$ where $E$ is the discrete outcome-set of some experiment, and $\alpha$ is a probability weight on $E$.  An obvious generalization is to allow both $E$ and $\alpha$ to vary.  Let's start with the outcome-set.

\begin{definition}\label{def: test space} A {\em test space}\footnote{Originally, Foulis and Randall called these {\em manuals}, and their elements, {\em operations}, which terminology I like better. But "test" and "test space" are well settled in the literature at this point.} \cite{FGR-II} is a collection $\M$ of outcome-sets $E, F, ....$ of various measurements, experiments, or {\em tests}.  If $X = \bigcup \M$ is the set of all outcomes of all tests, then a {\em probability weight} on $\M$ is a function $\alpha : X := \bigcup \M \rightarrow [0,1]$ with $\sum_{x \in E} \alpha(x) = 1$ for every test $E \in \M$.  
\end{definition}
Notice that if $E, F \in \M$ and $E \subseteq F$, then every probability weight $\alpha$ assigns probability $0$ to every outcome $x \in F \setminus E$. For this reason, it is usual to assume, and I will assume, that $\M$ is {\em irredundant}, meaning that if $E, F \in \M$ with $E \subseteq F$, then $E = F$. 

Denote the set of all probability weights on $\M$ by $\Pr(\M)$, and notice that this is a {\em convex} subset of $[0,1]^{X}$: weighted averages of probability weights are again probability weights. 

The simplest examples, of course, are those in which $\M$ consists of a single test: $\M = \{E\}$. In this case, $\Pr(\M) = \Delta(E)$, the simplex of all probability weights on $E$. Test spaces of this form are said to be {\em classical}. 

\begin{example} 
A very simple (literally two-bit) non-classical test space consists of a pair 
\[\M = \{\{x,x'\}, \{y,y'\}\}\] 
of disjoint, two-outcome tests. A probability weight is determined by $\alpha(x)$ and $\alpha(y)$, and 
these can take any values in $[0,1]$, so 
$\Pr(\M)$ is isomorphic, as a convex set, to 
the unit square in $\R^{2}$.  
\end{example} 

More generally, a test space  $\M$ is {\em semi-classical} iff $E \cap F = \emptyset$ for distinct tests $E, F \in \M$. In this case, a probability weight $\alpha$ on $\M$ amounts to an assignment of a probability weight $\alpha_{E}$ to each test $E \in \M$, so $\Pr(\M)$ is effectively the Cartesian product of the simplices $\Delta(E)$, $E \in \M$.  
In general, however, the tests $E \in \M$ can overlap, and the  combinatorial structure of $\M$ imposes constraints on the possible probability weights.

\begin{example} Let $\M = \{u,x,v\}, \{v,y,w\}$.  A probability weight $\alpha$ on $\M$ must satisfy 
\[\alpha(x) = 1 - (\alpha(u) + \alpha(v)) \ \ \mbox{and} \ \ 
\alpha(y) = 1 - (\alpha(v) + \alpha(w))\]
and is thus determined by the triple $(\alpha(u), \alpha(v), \alpha(w)) \in \R^{3}$, subject 
to the conditions that $\alpha(u), \alpha(v), \alpha(w)$ are all non-negative, and  $\alpha(u) + \alpha(v)$ and $\alpha(v) + \alpha(w)$ are both $\leq 1$. 
\end{example}

\begin{exercise}  With $\M$ as in Example 1.3, show that $\Pr(\M)$, as represented in $\R^{3}$, is a (skewed) square-based pyramid with apex at $(0,1,0)$. 
\end{exercise} 

\begin{exercise} 
Find simple examples of test spaces having (a) no probability weights, (b) exactly one probability weight. 
\end{exercise} 

An immediate question arises: which convex sets can be represented as $\Pr(\M)$ for a test space $\M$? It can 
be shown that {\em every compact convex set} arises 
in this way. (This follows from a sharper result 
due to Shulz \cite{Shulz}).)

\begin{exercise}[requiring some general topology] Show that if $\M$ is {\em locally finite}, meaning that every test is a finite set, then $\Pr(\M)$ is closed, and hence compact, in the product topology on $[0,1]^{X}$. 
\end{exercise} 




\begin{definition}\label{def: model} A (general) {\em probabilistic 
model} 
is a pair $(\M,\Omega)$ where 
$\M$ is a test space and $\Omega \subseteq \Pr(\M)$ is some designated set of probability weights. We call 
weights in $\Omega$ {\em states} of the model, and 
$\Omega$, its {\em state space}. 
\end{definition} 

Of course, we can always take $\Omega$ to be the whole set $\Pr(\M)$ of {\em all} probability weights on $\M$, in which case we'll say that model is {\em full}. In general, however, one will want to place some restrictions on which probability weights count as legitimate states in one's model, in ways that will 
vary with the case at hand. 

This said, one often imposes some structural constaints on 
the state spaces of models. A very weak example is the following: 

\begin{definition}\label{def: positive set of weights} A set $\Omega$ of probability weights on a test space $\M$ is {\em positive} iff, for every $x \in X = \bigcup \M$, there is at least one $\alpha \in \Omega$ with $\alpha(x) > 0$. 
\end{definition} 

If $\Omega$ is not positive, one can simply replace $X$ with the 
smaller set $X_{+} = \{ x | \exists \alpha \in \Omega \ \alpha(x) > 0\}$ and $\M$ with $\M_{+} = \{ E \cap X_{+} | E \in \M\}$. 
Identifying each state with its restriction to $X_{+}$, we 
end up with a positive model $(\M_+,\Omega)$. It is therefore 
usually harmless to assume that all models have positive 
state spaces.  {\bf \em This will be a standing assumption henceforth}.

{\bf Convexity and Closure}  
It's  usual to assume that the chosen state space $\Omega$ is convex in $\R^{X}$ --- if not, replace $\Omega$ by its convex hull. This reflects the idea that one can always prepare arbitrary random mixtures of states. 

Since states belong to the space $B(X)$ of bounded real-valued 
functions on $X$, we can also ask that $\Omega$ be closed with respect to the sup norm on the latter: equivalently, that 
uniform limits of states are states.  Although these are not by any means  necessary features of a probabilistic model (consider, e.g., pure-state QM, in which 
 $\Omega$ is not convex), in the interest of simplifying the exposition, I will make it another {\bf \em standing assumption} in these notes that {\bf \emph{state spaces are always uniformly closed and convex}}, unless otherwise indicated.

{\em Remark:} Note that requiring $\Omega$ to be closed 
under uniform limits is weaker than requiring it to be closed under pointwise sequential limits, and this in turn is weaker 
than requiring $\Omega$ to be closed under arbitrary 
pointwise-limits: the latter amounts to closure in the 
product topology on $[0,1]^{X}$, which would render $\Omega$ 
compact in this topology, a very powerful constraint that 
is nevertheless satisfied in many cases (recall Exercise 3!). More about this later. 


{\bf Further Examples} Here are some further examples. The first shows that standard measure-theoretic probability theory is within the scope of present framework.  

\begin{example}[Borel Models]\label{ex: Borel} Let $(S,\Sigma)$ be a measurable space. The corresponding locally countable and locally finite {\em Borel test spaces} 
$\B_{\sigma}(S,\Sigma)$ and $\B_{o}(S,\Sigma)$ consist, respectively of countable and of finite  measurable partitions of $S$. The probability weights on these correspond in a natural way to countably, respectively finitely, additive probability measures on $(S,\Sigma)$.  
A Borel test space becomes a {\em Borel model} if we equip it with any designated compact, convex set of probability measures. (For instance, one might consider the model consisting of the locally countable Borel test space of $\R$, equipped with the set of all Radon probability measures, or those absolutely continuous with respect to Lebesgue measure.  
\end{example} 

The following examples recover, in our slightly non-standard language, some very standard ways of modelling quantum systems. 

\begin{example}[Hilbert Models]\label{ex: Hilbert} Let $\H$ be a (real or complex)  Hilbert space, and let $\F(\H)$ be the set of {\em frames} --- unordered orthonormal bases --- of $\H$. Note that the outcome space here, $X(\H) = \bigcup \F(\H)$, is exactly $\H$'s unit sphere. Every unit vector $v \in \H$ defines a probability weight on $\H$ by $\alpha_{v}(x) := |\langle v, x \rangle|^2$. Let $\Omega(\H)$ be the closed convex span of these in $\R^{X({\mathbf H})}$: then every state $\alpha \in \Omega(\H)$ has the form $\alpha_W (x) = \langle Wx, x \rangle$ for a unique density operator $W$ on $\H$. {\em Gleason's Theorem} \cite{Gleason} tells us that if $\dim(\H) > 2$, the model $(\F(\H), \Omega(\H))$ is full, i.e., {\em every} probability weight on $\F(\H)$ belongs to $\Omega(\H)$.  
(This is a distinctly nontrivial result, not at all easy to prove.)  

In this model, different outcomes --- that is, unit vectors of $\H$ --- that differ by a scalar factor will have the same probability in all states: if $y = cx$ where $|c| = 1$, 
then 
\[\alpha_{W}(y) = \langle Wcx, cx \rangle = |c|^2 \langle Wx, x \rangle = \alpha_{W}(x).\]
We might want to identify such outcomes with one another.  A convenient way to do this is to replace $\F(\H)$ by the set $\F_{p}(\H)$ of {\em projective frames}, i.e., maximal pairwise orthogonal sets of rank-one projections on $\H$.  If $p = p_{x}$ is the projection onto the
span of unit vector $x$ and $W$ is a density operator, we have  $\langle Wx, x \rangle 
= \Tr(Wp)$. This defines a probability weight on $\F_{p}(\H)$, and the set of such weights 
gives us the state space for the {\em projective Hilbert model} based on $\H$.  In a sense made more precise later on, this is a quotient of the (as we might call it) vectorial  Hilbert model described above. As we will also see, passing to this quotient is not always safe. 

\tempout
{\gray Two slightly different models, both of which might be called {\em projective Hilbert model}, represent outcomes, not by unit vectors, but by one-dimensional projections.  
Let $\F_{p}(\H)$ denote the set of {\em projective frames}, i.e., maximal pairwise orthogonal sets of rank-one projections. Each unit density operator $\H$ gives rise to a 
probability weight, which I will again call $\alpha_{W}$, on $\F_{p}(\H)$,  $\alpha_{W}(p) = \Tr(Wp)$. Letting $\Omega_{p}(\H)$ denote the closed convex span of these weights, 
we see that $(\F_{p}(\H),\Omega_p(\H))$ and $(\F(\H),\Omega(\H))$ have essentially 
the same states, but slightly different test spaces. 

For a third quantum model, replace $\F_{p}(\H)$ by the set $\M(\H)$ of all 
maximal pairwise orthogonal sets of projections, of whatever rank. Again, we have for each 
density operator $W$ a probability weight given by $\alpha_{W}(p) = \Tr(Wp)$; 
and again, the state space has not changed in any essential way; rather, we've 
added more tests to $\F_{p}(\H)$. } 
\end{example}

{\gray 
\begin{example}[von Neumann Models]\label{ex: vN} Let $\A$ be a von Neumann algebra without type $I_2$ summand, and let $\P(\A)$ be its projection lattice. The set $\M(\A)$ of countable (resp., finite) partitions of unity in $\P(\A)$ is a test space, and every normal (resp., arbitrary) state $f \in \A^{\ast}$ induces a probability weight on it. If $\A$ has no type $I_2$ factor, then conversely, every probability weight on $\M$ arises in this way from a normal (resp., arbitrary) state. This is the content of the (again, highly non-trivial) Christensen-Yeadon extension of Gleason's Theorem.  A still more general result, 
due to L. J. Bunce and J. D. M. Wright \cite{B-MW} shows that if $\X$ is any Banach 
space and $\mu : \P(\A) \rightarrow \X$ is any finitely-additive probability measure 
on $\P(\A)$, there exists a unique extension of $\mu$ to a bounded linear operator 
$\A \rightarrow \X$.  

In the special case in which $\A$ is the algebra of all bounded operators on a Hilbert space $\H$, we will write $\M(\A)$ as $\M(\H)$. Note that $\M(\H)$ includes $\F_{p}(\H)$ as a sub-test space, and all probability weights on $\M(\H)$ restrct to probability weights on the latter, If $\dim(\H) > 2$, these are  determined by density operators on $\H$, which, in turn, 
define states on $\M(\H)$ by the rule 
$p \mapsto \Tr(Wp)$. Thus, every probability weight on $\F_{p}(\H)$ extends uniquely to a probability weight on $\M(\H)$. 
\end{example} 
}

{\bf Pure states and dispersion-free states}  An point $a$ in a convex set $K$ is 
{\em extreme} iff it can't be expressed nontrivially as a convex combination of other points. 
That is, $a$ is pure iff, for any $0 < t < 1$, 
\[a = tb + (1-t)c \ \Rightarrow \ \ b = c = a.\]
A basic result in functional analysis (the {\em Krein-Milman Theorem}) tells us 
that every compact convex set is the closed convex hull of its extreme points.  Since 
we are assuming that our state-spaces $\Omega(A)$ are compact and convex, it follows that they have an abundance of extreme points. In physics, these are more usually called 
{\em pure states}.  


One source of pure probability weights on $\B_{\sigma}(S,\Sigma)$ are the {\em point-masses} $\delta_{s}$, defined by $\delta_{s}(x) = 1$ iff $x = s \in S$ (so that $\delta_{s}(x) = 0$ if $x \not = s$). More generally, a probability measure $\mu$ on $\Sigma$ is pure iff it takes only the values $0$ and $1$\footnote{meaning that $\mu$ is an {\em ultrafilter} on $\Sigma$, and thus, the point mass associated with a point in the Stone space of $\Sigma$ --- but that's another story!}. For another example, the pure states on $\F(\H)$ are the vector states $\alpha_{v}$.  Thus, pure states on $\B_{\sigma}(S,\Sigma)$ are {\em dispersion-free} (d.f.), meaning that they take only values (probabilities) $0$ or $1$, while 
those on $\F(\H)$ are never d.f.: just consider any vector $x$ that is neither orthogonal to, nor a multiple of, $v$.  

\begin{exercise} \hspace{.1in} 
\begin{mlist} 
\item[(a)] Show that dispersion-free probability weights are always pure.   
\item[(b)] Show that the extreme points of the set of trace-one self-adjoint trace-class operators on  $\H$ are precisely the rank-one projections. 
\item[(c)] Using Gleason's Theorem, conclude that $\F(\H)$ has no d.f. probability weights if 
$\dim(\H) > 2$. 
\end{mlist} 
\end{exercise}  

{\bf Small test spaces and Greechie Diagrams} Small examples of test spaces, in  a small number of tests, each with only a few outcomes, intersect only in one or two outcomes, if at all, can be represented using {\em Greechie digrams}: 
outcomes are represented by nodes and sets of nodes belonging to a single test are connected by a line or other smooth curve (like beads on a wire) in such a way that tests correspond to maximal smooth curves. 

For example, below are Greechie diagrams representing 
the test spaces $\{\{a,a'\}, \{b,b'\}$ and 
$\{a,x,b\}, \{b,y,c\}$ discussed earlier.
\[
\begin{array}{ccc}
\begin{tikzpicture} 
\draw[thick, red] (0,0) -- (0,2);
\draw[green] (1,0) -- (1,2);
\draw[fill=white] (0,0) circle (.1cm) (0,2) circle (.1cm) (1,0) circle (.1cm) (1,2) circle (.1cm); 
\end{tikzpicture} 
& \hspace{.2in} & 
\begin{tikzpicture} 
\draw[green] (1,0) -- (.5,.87) -- (0,1.74);
\draw[blue] (0,1.74) -- (-.5,.87) -- (-1,0);
\draw[fill=white] (-1,0) circle (.1cm) 
(1,0) circle (.1cm) (.5,.87) circle (.1cm) (0,1.74) circle (.1cm) (-.5,.87) circle (.1cm);
\end{tikzpicture}\\  
 \mbox{(a)} &  & \mbox{(b)} 
\end{array} 
\]

\begin{exercise} 
Draw Greechie diagrams for the test spaces  $\{\{a,x,y,u\}, \{b,x,y,v\}$ and 
$\{\{x,y,z,u\}, \{x,y,z,v\}\}$. (Remember that curves 
representing tests needn't be straight.)
\end{exercise} 

The following small example is particularly interesting. 

\begin{example}[The Firefly Box]\label{ex: Firefly}  A firefly is confined in a triangular box having an opaque top and bottom, but slightly translucent sides. The box is divided into three chambers, $a$, $b$ and $c$, 
in such a way that each side of the box gives a (cloudy) view 
of two chambers.  The walls between the chambers contain angled passageways, allowing the firefly --- but not light! --- to move from one chamber to another. An experiment consists in viewing the box from one side, and noting whether, on that viewing, a light appears on one side or the other, or not at all. 

We can represent this by a test space 
\[\M = \{\{a,x,b\}, \{b,y,c\}, \{c,z,a\}\}\]
where $x, y$ and $z$ represent the outcomes 
of seeing no light in the appropriate window.  Here is the corresponding Greechie 
diagram:
\[
\begin{array}{c} 
\begin{tikzpicture} 
\draw[thick] (-1,0) -- (0,0) -- (1,0);
\draw (1,0) -- (.5,.87) -- (0,1.74);
\draw (0,1.74) -- (-.5,.87) -- (-1,0);
\draw[fill=white] (-1,0) circle (.1cm) (0,0) circle (.1cm) (1,0) circle (.1cm) (.5,.87) circle (.1cm) (0,1.74) circle (.1cm) (-.5,.87) circle (.1cm);
\draw (-1.2,-.25) node{$a$} (0,-.25) node{$x$} (1.2,-.25) node{$b$} (.83,.9) node{$y$} (0,2.0) node{$c$} (-.83,.9) node{$z$};
\end{tikzpicture}\\  
\end{array} 
\] 
and here are some sample probability weights: 
\[
\begin{array}{ccc} 
\begin{tikzpicture} 
\draw (-1,0) -- (0,0) -- (1,0) -- (.5,.87) -- (0,1.74) -- (-.5,.87) -- (-1,0);
\draw[fill=white] (-1,0) circle (.1cm) (0,0) circle (.1cm) (1,0) circle (.1cm) (.5,.87) circle (.1cm) (0,1.74) circle (.1cm) (-.5,.87) circle (.1cm);
\draw (-1.2,-.25) node{$1$} (0,-.26) node{$0$} (1.2,-.25) node{$0$} (.83,.9) node{$1$} (0,2.0) node{$0$} (-.83,.9) node{$0$};
\end{tikzpicture} 
& & 
\begin{tikzpicture} 
\draw (-1,0) -- (0,0) -- (1,0) -- (.5,.87) -- (0,1.74) -- (-.5,.87) -- (-1,0);
\draw[fill=white] (-1,0) circle (.1cm) (0,0) circle (.1cm) (1,0) circle (.1cm) (.5,.87) circle (.1cm) (0,1.74) circle (.1cm) (-.5,.87) circle (.1cm);
\draw (-1.2,-.25) node{$\tfrac{1}{2}$} (0,-.26) node{$0$}  (1.2,-.25) node{$\tfrac{1}{2}$} (.83,.9) node{$0$} (0,2.1) node{$\tfrac{1}{2}$} (-.83,.9) node{$0$};
\end{tikzpicture}  
\end{array}
\]
Note that the state on the left dispersion-free, hence, pure. The non-d.f. state on the right is also pure!\footnote{To see this, just note that it's the only state that vanishes at $x, y$ and $z$.} 
\end{example} 

\begin{exercise} Let $\M$ denote the Firefly box, as described above. 
\begin{mlist} 
\item[(a)] 
Show that the four deterministic states 
and the $\tfrac{1}{2}-\tfrac{1}{2}-\tfrac{1}{2}$ state 
are the only pure states. 
\item[(b)] 
Construct an explanation for 
the non-d.f. state above, in terms of the behavior of the firefly. 
\item[(c)] {\gray {\em (Causalists only!)} What causal structure 
corresponds to the Firefly Box? Does the non-deterministic pure state requires fine-tuning?}
\end{mlist} 
\end{exercise} 

{\em Remark:} In addition to the full model on the firefly test space $\M$, we could consider the model $(\M, \Pr_{\mbox{df}}(\M))$ in which states are 
restricted to the dispersion-free, or deterministic, 
states, and (in order to maintain our standing assumption of a convex state space) 
mixtures of these. This would exclude the non-deterministic pure state.  

\begin{exercise} A probability weight on the Firefly box $\M$ is determined by its 
values on the "corner" outcomes $a, b$ and $c$, so $\Pr(\M)$ is a 3-dimensional polytope. 
Sketch this, and identify the smaller state space consisting of convex combinations of dispersion-free states.  
\end{exercise} 



\begin{example}\label{ex: doubly-stochastic} Let $\M$ consist of the rows 
and columns of the set $n \times n$. The probability 
weights on $\M$ are exactly the doubly-stochastic 
$n \times n$ matrices. The {\em Birkhoff-von Neumann Theorem}  tells us that the pure probability weights are the permutation matrices, i.e., the d.f. weights. 
\end{example} 

\begin{exercise}
Let $\M_{k,n}$ be the set of rows 
and columns of $k \times n$, with $k < n$. Describe the possible probability weights. 
\end{exercise} 

{\em Remark:} The last few examples, involving test spaces in which $X$ is a finite set, should not make us lose sight of the fact that in general, even a test space having only finite tests will have an infinite outcome-set, often with significant further topological or geometric structure. In particular, both tests and outcomes will often be smoothly parameterized by quantities relating to, e.g., the physical position, orientation, temperature, etc., of some part of a laboratory aparatus relative to other parts; 
the strength of a magnetic field or a current in some part of the apparatus; time as measured by some kind of clock; and so on.

{\gray {\bf Ensembles and Preparations} Given a probabilistic model $(\M,\Omega)$, 
We might want to construct a test space that represents the 
possible ways of {\em preparing} a state in $\Omega$. One,  
not very interesting, way to do this would be to 
define a test space $\{\{\alpha\} | \alpha \in \Omega\}$: each 
test has a single outcome, corresponding to the preparation 
of a particular state. 

Here's a more interesting approach. Given a convex set $K$, let's agree that a finite {\em ensemble} over $K$ is a finite set of pairs $(t_1,\alpha_1),....,(t_n,\alpha_n)$ 
such that $\alpha_1,...,\alpha_n \in K$, and $t_1,...,t_n$ is a list of non-negative real constants summing to $1$ --- that is, a finite probability distribution over $\{1,...,n\}$. 
If $\sum_{i} t_i \alpha_i = \alpha \in K$, we say that $\{(t_i,\alpha_i)\}$ is an 
ensemble {\em for} $\alpha$.  Let $\D(K)$ be the test space of all such finite ensembles for the convex set $K$. We think of $E = \{(t_i, \alpha_i)\}$ as a randomized preparation procedure producing one of the states $\alpha_i$ with prescribed probability $t_i$. 
Note that the outcome space here is $X(K) = (0,1] \times K$. 

I've put the proof of the following in Appendix A:

\begin{theorem}\label{thm: state of the ensemble}  The only probability weight on $\D(K)$ is the weight $\rho((t,\alpha)) = t$.
 \end{theorem} 
 
In other words, {\em the probability to get $\alpha_i$ with probability $t_i$, is $t_i$}.  (Note that this is the {\em conclusion}, not the proof, of the advertised result!)
} 

\tempout{
Let $K$ be a convex set. An {\em ensemble} over $K$ is a collection of pairs $(t_i, \alpha_i)$ such that $\alpha_i \in K$, $t_i > 0$, and $\sum_i t_i = 1$. Let $\D(K)$ be the set of all such ensembles. Note this is irredundant, thus, a test space. Notice that $\rho(t,\alpha) = t$ then defines a probability weight on $\D(K)$. 

In fact, one can show that $\rho$ is the {\em only} probability weight on $\D(K)$. See  Appendix A for details. If $K = \Omega(A)$, we can understand an ensemble $E = \{(t_i, \alpha_i)\}$ as a {\em randomized preparation procedure} that selects one of the states $\alpha_1,...,\alpha_n$ with prescribed probability $t_i$.  The uniqueness result just mentioned tells us that there is no other rational way to assign these probabilities: in other words, necessarily {\em the probability to get $\alpha_i$ with probability $t_i$, is $t_i$}.  (Note that this is the {\em conclusion}, not the proof, of the advertised result!)
\end{example} 
}

\subsection{\bf Events, perspectivity, and algebraic test spaces}

It will be useful to introduce some further language borrowed from classical probability theory. 
An {\em event} for a test space $\M$ is simply an 
event in the usual probabilistic sense for one of the tests in $\M$; that is, an event is a set $a \subseteq E$
\footnote{I'm going to use 
$a,b,...$ for events, since I'll later want to use $A, B, ...$ for system labels.} 
for some $E \in \M$. We write $\Ev(\M)$ for the set 
of all events of $\M$. If 
$\alpha$ is a probability weight on $\M$, we define 
the probability of an event $a$ in the usual way, that is,  
$\alpha(a) = \sum_{x \in a} \alpha(x)$

\begin{definition}\label{def: orthogonality etc for events} 
Two events $a, b \in \Ev(\M)$ are
\begin{mlist} 
\item[(a)] {\em orthogonal}, written 
$a \perp b$, iff they 
are disjoint and their union is still an event.\footnote{Note carefully that, in general, this has nothing to do with orthogonality in any geometric sense: it's just a term of art 
for a form of mutual exclusivity (though in the case of a frame test space, of course, we can take it literally).}
\item[(b)] {\em complements}, written $a \co b$, iff $a \perp b$ and $a \cup b = E \in \M$ --- that is, 
if $a$ and $b$ partition a test 
\item[(c)] {\em perspective}, written $a \sim b$, 
iff they share a complement, i.e., there exists an event 
$c$ with $a \co c \co b$. 
\end{mlist} 
\end{definition}

Every event has at least one complement, but, 
because tests can overlap, will generally have 
many complements (all of which will then be perspective 
to one another).  

\begin{exercise} Show that $\M$ is semi-classical iff 
every event has a unique complement. 
\end{exercise} 

\begin{exercise} Find an example of a test space $\M$ and events 
$a, a', b, b'$ such that (i) $a$ and $b$ are compatible, (ii) $a \sim a'$, $b \sim b'$ but (iv) $a \cap b \not \sim a' \cap b'$.  
\end{exercise} 

\tempout{
The following is easy to check (for (a), recall that we are assuming $\M(A)$ is irredundant). 

{\bf Lemma 1:} Let $a, b \in \Ev(A)$. Then 
\begin{mlist} 
\item[(a)] $a \sim b \subseteq a \Rightarrow b = a$; 
\item[(b)] $a \sim b \Rightarrow \alpha(a) = \beta(a)$ for 
all probability weights $\alpha$ on $A$. 
\end{mlist} 

{\bf Exercise:} Easily check Lemma 1. 
\tempout{
{\em Proof:}  (a) Let $c$ be complementary to both $a$ and $b$. 
Then $a \cup c = E \in \M$ and $b \cup c = F \in \M$. 
If $b \subseteq a$, then $F \subseteq E$, in which case, 
by irredundance, $F = E$, and we have $b = E \setminus c = a$. 
For (b), just note that if $a \co c$ and $c \co b$, then for any 
state $\alpha$, $\alpha(a) + \alpha(c) = 1 = \alpha(b) + \alpha(c)$. $\Box$
}
}

Notice that if $E, F \in \M$, then $E \sim F$ 
(since both are complementary to $\emptyset$). 
It is also easy to check that if $a \sim b$, then $\alpha(a) = \alpha(b)$ for all probability weights $\alpha$ on $\M$. Also, owing to irredundance, if $a \subseteq b \sim a$, we have $a = b$.

{\bf Algebraic test spaces and orthoalgebras} 
The notion of perspectivity allows one to attatch a kind of "quantum logic" to a large 
class of test spaces.

\begin{definition}\label{def: algebraic} $\M$ is {\em algebraic} iff, 
for all events $a, b, c \in \Ev(\M)$, if 
$a \sim b$ and $b$ is complementary to $c$, then $a$ is also complementary to $c$.
\end{definition} 

\begin{exercise} 
(a) Verify that all of the examples given in Section 1 are algebraic.  (b) Find an example of a 
test space in which there exist events $a, b$ and $c$ with $a \sim b$, $b \perp c$, and $a \cap c \not = \emptyset$. 
\end{exercise} 

{\gray 
\begin{exercise}[$\bigstar$] A test space $\M$ is {\em pre-algebraic} 
iff there exists an algebraic test space $\M'$ with 
$\M \subseteq \M'$. (a) Show that in this case, 
the intersection of all algebraic test spaces containing 
$\M$ is algebraic, and has the same outcome-set as $\M$. 
(b) $\M$ is {\em semi-unital} iff, for every outcome 
$x \in X = \bigcup \M$, there exists a probability 
weight $\alpha$ with $\alpha(x) > 1/2$. Let $\M$ be 
semi-unital and let $\M'$ be the set of all 
subsets of $X$ over which every probability weight 
sums to $1$. Show that $\M'$ is algebraic. 
\end{exercise} 
}

If $\M$ is algebraic, $\sim$ is an equivalence relation on $\Ev(\M)$, with the feature that, 
for all events $a,b$ and $c$, 
\begin{equation} 
a \sim b \ \mbox{and} \ b \perp c \ \ \Rightarrow \ \ a \perp c \ \mbox{and} \ a \cup c \sim b \cup c.
\end{equation}

{\gray 
\begin{exercise}[$\bigstar$] Verify this. 
That is, assuming $\M$ is algebraic,
(i) show that $\sim$ is an equivalence relation, and (ii) verify that the implication (1) holds for all $a,b, c \in \Ev(\M)$. 
\end{exercise}
}

Let $\Pi = \Pi(\M) = \Ev(\M)/\hspace{-.1in}\sim$, the 
collection of equivalence 
classes of events under perspectivity. 
This carries a natural partial-algebraic 
structure, as follows. 
Where $[a]$ stands for the equivalence 
class of event $a$. 
define a relation $\perp$ on $\Pi$ by 
setting $[a] \perp [b]$ iff $a \perp b$. 
Next, if $[a] \perp [b]$, set 
\[[a] \oplus [b] := [a \cup b].\]
Condition (\theequation) above guarantees that these are well-defined. We can also 
define $1 := [E]$ for any $E \in \M$, as tests 
are all equivalent under $\sim$. 

The structures $(\Pi(\M), \perp, \oplus, 1)$ arising in this way can be characterized abstractly:


\begin{definition}\label{def: orthoalgebra}{\em  An {\em orthoalgebra} \cite{FR-TP} is a structure 
$(L,\perp, \oplus, 1)$ where $\perp$ is a symmetric, irreflexive binary relation on $L$, and 
$\oplus : \perp \rightarrow L$ is a partial binary 
operation, defined for pairs $(p,q)$ with $p \perp q$, 
and $1 \in L$, such that $\forall p,q, r \in L$,  
\vspace{-.2in}
\begin{itemize} 
\item[(i)] $p \perp q \Rightarrow p \oplus q = q \oplus p$; 
\item[(ii)] $p \perp (q \oplus r) \Rightarrow 
p \perp q$, $(p \oplus q) \perp r$ and $p \oplus (q \oplus r) = (p \oplus q) \oplus r$; 
\item[(iii)] $\exists 1 \in L \forall p \in L \exists ! p' \in L$ with $p \perp p'$ and 
$p \oplus p' = 1$ \\
\end{itemize} }
\end{definition}


\begin{exercise} If $A$ is algebraic, $\Pi(A)$ is an orthoalgebra under $1 = [E]$ ($E \in \M(A)$), and $\perp, \oplus$ as defined above. 
\end{exercise}

If $\M$ is algebraic, the orthoalgebra $\Pi(\M)$ is called the {\em logic} of $\M$. It is 
easy to see that every probability weight on $\M$ descends to a finitely-additive probability measure on $\Pi(\M)$. 

Every orthoalgebra arises as $\Pi(\M)$ for some algebraic test space $\M$.  Indeed, there is even a canonical choice for 
$\M$.  In order to define this, we need the concept of 
{\em joint orthogonality}.  Informally, a finite set 
$A = \{a_1,...,a_n\} \subseteq L$ is jointly orthogonal iff the sum $\bigoplus_{i=1}^{n} a_i$ exists. We can express this 
more exactly using the following recursive 

\begin{definition} Let $L$ be an orthoalgebra and let 
$A \subseteq L$ be pairwise orthogonal. We say that $A$ 
is jointly orthogonal, with sum $a = \bigoplus A$, iff 
for every set $B \subseteq A$ and $p \in A \setminus B$, 
$B$ is jointly orthogonal with $\bigoplus B = b \perp a$, 
in which case $\bigoplus A := b \oplus a$. 
\end{definition}

Now given an orthoalgebra, say that $E \subseteq L \setminus \{0\}$ is an {\em orthopartition} or {\em decomposition} of its unit iff 
$E$ is jointly orthogonal with $\bigoplus E = 1$. Let $\D(L)$ denote the set of orthopartitions of the unit. This is an algebraic test space, and one can show that $\Pi(\D(L))$ is isomorphic to $L$. 

{\gray 
\begin{exercise}[$\bigstar$] (a) Write down a reasonable definition of "isomorphism of orthoalgebras". (b) Show that $\D(L)$ is algebraic. (c) Construct an isomorphism 
$L \simeq \Pi(\D(L))$. 
\end{exercise} }


{\bf Orthocoherence and orthomodular posets}
 The distinctive feature of non-Boolean orthoalgebras is that pairwise orthogonal elements need not be {\em jointly} orthogonal, meaning they can not necessarily be summed. As an example, in the logic of the triangular Firefly Box test space of Example \ref{ex: Firefly} above, $[a], [b], [c]$ (the "propositions" corresponding to the corners of the triangle) are pairwise orthogonal, but $[a] \oplus [b] = [a \oplus b]$ is not orthogonal to $[c]$, so 
"$([a] \oplus [b]) \oplus [c]$" does not exist ---  reflecting the fact that $\{a,b,c\}$ is not an event. 


\begin{definition}\label{def: orthocoherent} An orthoalgebra is {\em orthocoherent} iff for every pairwise orthogonal triple $p,q,r$, 
$p \perp (q \oplus r)$.
 \end{definition}

Any orthoalgebra carries a natural partial ordering, given by 
\begin{equation} p \leq q \ \Leftrightarrow \ \exists r \ q = p \oplus r.\end{equation}
The resulting poset $(L,\leq)$ has least element $0$ and greatest element $1$. Moreover, the mapping $p \mapsto p'$ is an orthocomplementation on $L$. 

\begin{exercise} (a) Check that (\theequation) really does define a partial ordering. 
{\gray (b) Verify that the operation $'$ an orthocomplementation (looking up the term if necessary!)}
\end{exercise} 

\begin{exercise} Show that if $a,b$ are orthogonal elements in an orthoalgebra $L$, then $a \oplus b$ is a 
{\em minimal} upper bound fo $a, b$ in the ordering defined 
above. (b) Find an example of an orthoalgebra $L$ and a 
pair of orthogonal elements $a, b$ for which $a \oplus b$ 
is not the {\em least} (that is, not the unique minimal) 
upper bound. 
\end{exercise} 


 In this case, one can show 
 that $(L, \leq, ')$ is what is known as an 
 {\em orthomodular poset} \cite{AW-SEP}. 

\tempout{
In any poset $(L,\leq)$, the least 
upper bound or {\em join} of $p, q \in L$, if this exists, 
is the element $p \vee q$ such that 
\[p, q \leq r \ \Rightarrow \ p \vee q \leq r.\]
The {\em meet} of $p$ and $q$, if it exists, is their greatest lower bound: the element $p \wedge q$ with 
\[r \leq p, q \ \Leftrightarrow \ r \leq p \wedge q.\] 
$(L,\leq)$ is a {\em lattice} iff $p \vee q$ and $p \wedge q$ exist for all pairs $p, q \in L$. 

Let $(L,\leq)$ be a poset with least element $0$. A mapping $p \mapsto p'$ is an {\em orthocomplementation} on $L$ iff 
\begin{itemize} 
\item[$\bullet$] $p \leq q \Rightarrow q' \leq p'$
\item[$\bullet$] $p" = p$
\item[$\bullet$] $p \wedge p' = 0$, where 
\end{itemize} 
The last item simply requires that there is nothing 
in $L$ other than $0$ below both $p$ and $p'$. One 
can check that if we set $1 := 0'$, then $1$ is the greatest element of $L$. In particular, an orthocomplemented poset is {\em bounded}. 

{\bf Definition} An orthocomplemented poset $(L,\leq,')$ 
(a poset with a distinguished orthocomplementation) is {\em orthomodular} iff, for all $p, q \in L$, 
\[p \leq q \ \Rightarrow (q \wedge p') \vee p = q.\]
Note that part of what is asserted here is that the 
meets and joins on the right exist. 

{\bf Exercise:} Let $L$ be an OMP. Define 
$p \perp q$ iff $p \leq q'$, and in that case set 
$p \oplus q = p \vee q$. Show that $(L, \perp, \oplus)$ is 
an orthoalgebra. 
}

\begin{exercise} If $L$ is an orthoalgebra, 
$p \perp q$ iff $p \leq q'$.
\end{exercise} 

{\gray 
\begin{exercise}[$\bigstar$] If $p \perp q$, 
then $p \oplus q$ is a {\em minimal} upper bound 
for $p$ and $q$. Find an example of an orthoalgebra 
in which there exist elements $p \perp q$ in 
which $p, q$ have an additional minimal upper bound, 
besides $p \oplus b$. 
\end{exercise} 

\begin{exercise}[${\bigstar}$] Show that if an orthoalgebra $L$ is orthocoherent, $p \oplus q$ is the least upper bound of $p$ and $q$ whenever $p,q \in L$ with $p \perp q$.  
\end{exercise} 

\begin{exercise}[${\bigstar}$] Look up the definition of an orthomodular 
poset (OMP). Use the result of the previous Exercise to show that every orthocoherent orthoalgebra is one. Also show how to turn every OMP into an orthocoherent orthoalgebra.  
\end{exercise} 

\vspace{-.1in} 
{\em Remark:} Orthomodular lattices and posets were the 
prevailing models of "quantum logics" from the early 1960s through most of the 1980s. 
A natural question: when is $(\Pi(\M), \leq)$ a lattice (hence, an orthomodular one)? A sufficient condition 
is given by the {\em Loop Lemma}, due to R. Greechie and later refined by Foulis, Greechie and 
R\"{u}ttimann. See \cite{FGR} for details. \\
}

{\bf Boolean orthoalgebras and Refinement ideals} Orthoalgebras are a natural generalization of Boolean algebras. In any orthoagebra $L$, call three elements $p, q, r$ {\em jointly orthogonal} iff $p \perp q$ and $p \oplus q \perp r$ (cf. (ii) above). We say that elements $a, b \in L$ are {\em compatible} iff there exists a jointly orthogonal triple $p,q,r$ with $a = p \oplus q$ and $b = q \oplus r$. A Boolean algebra can be defined as an orthoalgebra in which any two elements are compatible. 

\tempout{\blue {\gray {\bf Exercise:} Looking up the definition of a Boolean algebra if necessary, 
show that if $L$ is one, then setting 
$p \perp q \Leftrightarrow p \leq q'$ 
and in this case defining $p \oplus q = p$, 
show that $(L,\perp,\oplus,1)$ is an 
orthoalgebra, in which any two elements are 
compatible. Conversely, show that any 
orthoalgebra with this property, regarded 
as an ordered set, is Boolean.}}

\begin{exercise} Show that a finite, pairwise orthogonl subset 
of an orthoalgebra is jointly orthogonal iff the sub-orthoalgebra 
it generates is Boolean. 
\end{exercise}

If $E, F$ are tests belonging to a 
test space $\M$, we say that $E$ {\em refines}, or {\em is a refinement of},  $F$ iff, for every outcome $y \in F$, 
there exists an event $a \subseteq E$ with $a \sim \{y\}$. A test space is a 
{\em refinement ideal} iff every pair of tests have a common refinement. 

\begin{exercise}[$\bigstar$] Let $\M$ be an algebraic refinement ideal. Show that $\Pi(\M)$ is a 
boolean algebra.
\end{exercise} 


\begin{exercise}[$\bigstar$] Let $\M$ be algebraic. 
\begin{mlist}
\item[(a)] Show that if $\M_o \subseteq \M$ is a refinement ideal, then so is 
$\langle \M_o \rangle$ 
\item[(b)] Let $B \leq \Pi(\M)$ be a boolean sub-orthoalgebra of $\Pi(\M)$. Show 
that there exists an algebraic refinement ideal $\M_o \subseteq \M$ with 
$B \simeq \Pi(\M_o)$.
\item[(c)] Need the refinement ideal $\M_o$ of part (b) be unique? 
\end{mlist} 
\end{exercise}

{\bf Coarse-Graining} 
Broadly speaking, a {\em coarse-graining} of an experiment is a second experiment, the outcomes of which are in some relevant way equivalent to non-trivial {\em events} of the given experiment. The simplest 
case is one in which we simply treat a partition, 
$\{a_i\}$, of a given experiment $E$ into non-empty events 
$a_1, a_2,...$ {\em as} an experiment, in which, having 
performed $E$ and obtained, say, outcome $x$, we {\em record} 
only that the event $a_i$ to which it belongs, has occurred. 

It is often very helpful to enrich a test space (or model) by adjoining all coarse-grainings of this kind. 

\begin{definition}\label{def: coarsening} The {\em coarsening} of a test space, $\M$, is the test space $\M^{\#}$ consisting of all 
partitions of tests in $\M$.
\end{definition} 

Thus, an outcome for $\M^{\#}$ is a non-empty 
{\em event} for $\M$, and an event for $\M^{\#}$ 
is a jointly-orthogonal family $\{a_i\}$ of $\M$-events.  There is a natural embedding of 
$\M$ into $\M^{\#}$, namely $x \mapsto \{x\}$. It is 
convenient and harmless to identify $x$ with $\{x\}$ 
so as to regard $\M$ as a subset of $\M^{\#}$. 

\begin{exercise} 
(a) Let $\{a_i\}$ and $\{b_k\}$ be events of $\M^{\#}$, and let $a = \bigcup a_i$ and $b = \bigcup b_k$, understood as events of $\M$. Then $\{a_i\} \perp \{b_k\}$ iff $a \perp b$ and $\{a_i\} \sim \{b_k\}$ iff $a \sim b$. Conclude that $\M$ is algebraic iff $\M^{\#}$ is algebraic. {\gray (b) Show that in this case $\Pi(\M^{\#}) \simeq \Pi(\M)$.}
\end{exercise}

Probability weights on $\M^{\#}$ are essentially the same as those on $\M$: any probability weight on $\M$ already assigns probabilities to events, and this gives a 
probability weight on $\M^{\#}$; conversely, every probability weight on $\M^{\#}$ restricts to one on $\M$, and is determined by this restriction. Thus, it's natural (and harmless) to identify $\Pr(\M^{\#})$ with $\Pr(\M)$. 
Under this convention, we define the coarsening of a model $A$ by $\M(A^{\#}) = \M(A)^{\#}$ and $\Omega(A^{\#}) = 
\Omega(A)$. 

\begin{exercise}  The {\em finite coarsening} of $\M$ 
is the sub-test space $\M^{\#}_{o} \subseteq \M$ consisting 
of finite such partitions. (a) Show that if $\M$ is algebraic, 
 $\M^{\#}_{o}$ and $\M^{\#}$ have canonically isomorphic logics. 
(b) Given an example to show that $\Pr(\M^{\#}_{o})$ can 
be properly larger than $\Pr(\M^{\#}) = \Pr(\M)$. 
\end{exercise}

\subsection{Sequential Measurement and Compounding}

Suppose $\A$ and $\B$ are two test spaces. If $E \in \A$ and $F : E \rightarrow \M$, 
define a {\em two-stage} test having outcome-set 
\begin{equation}
\bigcup_{x \in E} \{x\} \times F_{x}
\end{equation}
by the following rule: perform $E$; if the outcome 
secured is $x$, perform the test $F_{x}$, and, if 
this yields outcome $y$, record $(x,y)$ as the outcome 
of the two-stage test. 

The collection of all outcome-sets 
having the form (\theequation) is called the 
{\em forward product} of $\A$ and $\B$, and is denoted 
$\for{\A\B}$. A probability 
weight on $\for{\A\B}$ is uniquely determined by an 
{\em initial} weight $\alpha \in \Pr(A)$ and a {\em transition function} $\beta : X(A) \rightarrow \Omega(B)$, by 
the recipe 
\[(\alpha;\beta)(x,y) := \alpha(x)\beta_{x}(y).\]

\begin{exercise} Show that every function of the 
form $(\alpha;\beta)$ is a probability weight on 
$\M(\for{AB})$, and that every probability weight 
on $\M(\for{AB})$ has this form for a unique $\alpha$ and 
some $\beta$, uniquely determined on $\supp(\alpha) = \{x \in X(A) | \alpha(x) > 0\}$, and 
otherwise arbitrary.
\end{exercise} 

\begin{definition}\label{def: forward product} The 
{\em forward product} of probabilistic models $A$ and $B$, denoted $\for{AB}$, has 
test space $\M(\for{AB}) = \for{\M(A)\M(B)}$ and state space $\Omega(\for{AB})$ consisting of all probability weights on $\M(\for{AB})$ of the form 
$(\alpha;\beta)$ with $\alpha \in \Omega(A)$ and 
$\beta \in \Omega(B)^{X(A)}$. 
\end{definition} 

Evidently, if $\omega = (\alpha;\beta)$, then 
\[\sum_{y \in F} \omega(x,y) = \alpha(x)\]
and, if $\alpha(x) \not = 0$, 
\[\frac{\omega(x,y)}{\alpha(x)} = \beta(y).\]
Thus, we also call $\alpha$ the {\em marginal} 
state and $\beta_{x}$, the conditional state given $x$.
If we agree to write $\alpha = \omega_1$ and $\beta_{x} = \omega_{2|x}$, 
then the recipe for $(\alpha;\beta)$ reads 
\[ \omega(x,y) = \omega_{1}(x) \omega_{2|x}(y).\]
Hence, we have 
\begin{equation}
\omega_{2} \ = \ \sum_{x \in E} \omega_{1}(x) \omega_{2|x},
\end{equation} 
where $E$ is any test in $\M(A)$. This is a version of the {\em Law of Total Probability.}

{\bf Interference} 
The forward product offers some insight 
into the idea of interference, often regarded as particularly characteristic of quantum theory.  Let 
$a, b, c \in \Ev(A)$.  It's not hard 
to see that if $a \sim b$, then $c \times a \sim c \times b$ in $\Ev(\for{AA})$. However, in general we do {\em not} have $a \times c \sim b \times c$. This is evident from the fact that 
$\omega_{2|a}(c)$ depends on $\{ \, \beta_{x} \, | \, x \in a \,\}$, $\omega_{2|b}(c)$ depends on $\{ \, \beta_{y} \, | \, y \in b\, \}$, and these sets will generally be very different unless $a = b$.  

As an extreme case of this, suppose there exists some 
{\em outcome} $x \in X(A)$ with $a \sim x_{a}$. Then 
there is no guarantee that $\omega_{2|x} = \omega_{2|a}$. 
Indeed, we generally have 
\[\omega(x_a y) \not = \sum_{x \in a} \omega(xy).\]
We say that $\omega$ exhibits {\em interference} among the 
outcomes $x \in a$. See \cite{Wright} for a more detailed discussion of this point. 

\begin{exercise} Find a concrete example to show that 
$a \sim b$ does not entail $ac \sim bc$ in 
$\Ev(\for{AB})$. 
\end{exercise} 

\begin{exercise}  Show that $a \sim b$ {\em does} entail that 
$ca \sim cb$ for all $c,a,b \in \Ev(\M)$. 
\end{exercise}

We can apply the forward-product construction iteratively 
to obtain a test space that is closed under the formation 
of  branching, sequential tests of arbitrary (finite) length. 

\begin{definition}\label{compounding} If $A$ is a model, define a model 
$A^{c}$, the {\em compounding} of $A$, as follows. 
$X(A^{c})$ is the free monoid, $X(A)^{\ast}$, over $X(A)$. 
This simply means that $X(A^{c})$ consists of finite strings 
of outcomes, regarded as a semigroup under concatenation, and 
with the empty string, which I'll denote by $1$, as a unit. 
We can identify $X(A)$ with the set of one-entry strings, so 
that $X(A) \subseteq X(A)^{\ast}$. 
Call a subset $\G$ of $X(A)^{\ast}$ {\em inductive} iff, 
for all sets $E \in \G$ and any function $F : E \rightarrow \G$, $\bigcup_{x \in E} xF_{x} \in \G$.  The intersection 
of inductive subsets of $X(A)^{\ast}$ is inductive, so 
every collection $\G_o$ of subsets of $X(A)$ is contained 
in a smallest inductive family, 
\[\langle \G_o \rangle = 
\bigcap \{ \G | \G_o \subseteq \G \subseteq \Power(X(A)^{\ast},\ \G \ \mbox{inductive} \}.\]
We now define $\M(A^c)$ to be the inductive family 
generated by $\M(A)$. One can show that a probability 
weight on $\M(A^{c})$ is uniquely determined by a 
weight $\alpha$ on $\M(A)$ and a function 
$\beta : X(A)^{\ast} \to \Pr(\M(A^{c}))$ recursively 
by 
\[\omega(ax) = \omega(a)\beta_{a}(x)\]
with $\beta(1) = \alpha \in \Pr(\M(A))$.  We 
define $\Omega(A^c)$ to consist of those probability 
weights $\omega$ with $\beta(a) \in \Omega(A)$ for all 
strings $a \in X(A)^{\ast}$. 
\end{definition} 

Orthogonality in $A^{c}$ is lexicographic, in the 
sense that if $a, b \in X(A)^{\ast}$, we have 
$a \perp b$ iff $a = u x v$ and $b = u y w$ where 
$u, v, w \in X(A)^{\ast}$ and $x, y \in X(A)$ with 
$x \perp y$.


The following result \cite{RJF} is largely forgotten, but very interesting.  

\begin{theorem}[Randall, Janowitz and Foulis,1973]\label{thm: RJF} Let 
$\M$ be any semi-classical test space having at least two tests with at least two outcomes each, and let $L = \Pi(\M^{c})$.  Then $L$ is a complete, irreducible, atomless, non-Boolean orthomodular poset. Moreover, every interval $[0,a]$ in $L$ is isomorphic to $L^{\kappa}$ for some power $\kappa$. 
\end{theorem}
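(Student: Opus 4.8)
The plan is to work throughout with an explicit description of the events and the perspectivity relation of $\M^{c}$, reduce the assertion that $L$ is an OMP to orthocoherence, and then exploit the self-similarity built into the free monoid $X(A)^{\ast}$ to obtain the interval decomposition, from which the remaining adjectives fall out. First I would record that every test of $\M^{c}$ is a finite-depth \emph{tree}: by induction on the inductive generation, each test has a single root test $E\in\M$, and all of its outcome-strings begin with an outcome of $E$. An event is then a set of pairwise-orthogonal strings (leaves of a common tree). The fundamental perspectivity is that grafting a base test $E\in\M$ onto a leaf $s$ refines $\{s\}$: one checks $\{s\}\sim\{sy : y\in E\}$ (they share the complement "the rest of the tree"), so in $L$ we get the refinement law $[s]=\bigoplus_{y\in E}[sy]$, and all tests are perspective, giving $1=[E]$. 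Using semi-classicality of $\M$ (distinct base outcomes are orthogonal iff they lie in a common, hence the same, test), I would show $\M^{c}$ is algebraic and, crucially, orthocoherent --- pairwise orthogonal strings diverge at branch points into a common base test and so can be amalgamated into one tree --- whence $L=\Pi(\M^{c})$ is an orthomodular poset by the earlier fact that orthocoherent orthoalgebras are OMPs. I expect orthocoherence to use the semi-classical hypothesis in an essential way.

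The key construction is the prefix-shift: for a fixed string $s$, concatenation $w\mapsto sw$ is injective, preserves lexicographic orthogonality, and carries tests to the subtree grafted at $s$; it therefore induces an order- and $\oplus$-isomorphism $L\xrightarrow{\sim}[0,[s]]$ sending $1$ to $[s]$. Representing a general $a$ by its coarsest antichain $\{s_{i}\}_{i\in I}$ of leaves, every element below $a$ is a disjoint union of events extending the individual $s_{i}$ (the subtrees below distinct $s_{i}$ begin with divergent prefixes, so there are no perspectivities across them); hence each $[s_{i}]$ is central in the interval $[0,a]$ and $[0,a]\cong\prod_{i}[0,[s_{i}]]\cong L^{|I|}$. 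This proves the final sentence, and taking $a=1$ with $\{s_i\}=E\in\M$ gives the self-reproduction $L\cong L^{|E|}$, which is precisely why "$L^{\kappa}$ for some $\kappa$" is the correct (non-unique) formulation.

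Atomlessness is immediate from self-similarity: grafting a base test with at least two outcomes produces a strict refinement $[s]>[sx]>0$, so any nonzero $a$ dominates some leaf $[s_{1}]$ and then $0<[s_{1}x]<a$, so $a$ is not an atom. For non-Booleanness I would use the two distinct base tests $E\ni x_{1}$ and $F\ni y_{1}$: since every tree has a unique root test, and the class $[x_{1}]$ (resp.\ $[y_{1}]$) consists only of strings beginning with $x_{1}$ (resp.\ $y_{1}$), no single tree realizes both as events, so $[x_{1}]$ and $[y_{1}]$ are incompatible. Irreducibility (triviality of the center) is the most delicate of the three: a nonzero central $z$ would have to be compatible simultaneously with the incompatible root-test decompositions coming from $E$ and from $F$; I would split $z$ along $[x_{1}]$ and along $[y_{1}]$, transport the pieces back through the prefix-shift isomorphisms, and run a descent that forces $z$ to coincide with $0$ or $1$ on every branch, using the hypothesis of at least two tests to rule out a nontrivial "which test first" sector.

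For an OMP the appropriate notion of completeness is orthocompleteness: every pairwise orthogonal family $\{a_{i}\}$ has a supremum. I would prove this by amalgamating representatives into a single spanning tree $T$ containing all the $a_{i}$ as disjoint events and setting $\bigoplus_{i}a_{i}=[\bigcup_{i}a_{i}]$. The hard part will be carrying out this amalgamation for \emph{infinite} families: one must close the union of the chosen antichains under the tree structure and check that it remains a legitimate test of $\M^{c}$ (finite depth, consistent branching), and verify independence of the chosen representatives. This is where I expect the real work to lie; the finite case is straightforward from orthocoherence, and the infinite case should go through because the relevant trees have finite depth even when branching is infinite, but pinning down exactly which completeness statement is claimed (orthocompleteness versus full lattice completeness) is the point that most needs care.
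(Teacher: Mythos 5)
The paper does not actually prove this theorem --- it is quoted from Randall--Janowitz--Foulis \cite{RJF} with no argument given --- so your proposal can only be judged against the statement and the paper's definitions, not against a proof in the text. Your overall strategy (grafting perspectivity, prefix-shift isomorphisms onto intervals, semi-classicality driving orthocoherence) is surely close in spirit to RJF's. But there are two genuine gaps.

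First, your ``fundamental perspectivity'' $\{s\}\sim\{sy : y\in E\}$ is not available under Definition \ref{compounding} as literally stated. There the inductive closure extends \emph{every} outcome of a test (the continuation function $F : E \rightarrow \G$ is total), and an easy induction on the generation shows that any test of $\M^{c}$ containing a string of length one already lies in $\M$. Hence for a base outcome $x$ in a base test $E$ with $|E|\geq 2$, the only complement of $\{x\}$ is $E\setminus\{x\}$, no test of the form $(E\setminus\{x\})\cup xF$ exists, $\{x\}\not\sim xE$, and in fact $[x]$ is an atom of $\Pi(\M^{c})$ --- so the theorem is \emph{false} for the literal definition (and the paper's own remark that orthogonality in $A^{c}$ is lexicographic fails too). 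What your proof, and the theorem, require is the compound in which an arbitrary \emph{subset} of leaves may be extended (equivalently, the empty string is admitted as a trivial continuation). You use this silently; it must be made explicit, since every later step rests on that perspectivity.

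Second, and more seriously, your completeness argument fails even with the definition repaired. Pairwise-orthogonal families in $\Pi(\M^{c})$ need not be jointly orthogonal, and no amalgamating tree exists in general. Take $E = \{x,x'\}\in\M$ and $s_{n} = (x')^{n}x$, $n\geq 0$. Any two of these strings diverge at a position where one shows $x$ and the other $x'$, so the $[s_{n}]$ are pairwise orthogonal; but a test containing every $s_{n}$ would contain the infinite strictly descending branch of internal nodes $x', x'x', x'x'x',\dots$, whereas every test of $\M^{c}$ is well-founded, being built from base tests by grafting. So no event $\bigcup_{n}a_{n}$ exists whose class could serve as $\bigoplus_{n}[s_{n}]$; your expectation that ``the infinite case should go through because the relevant trees have finite depth'' is exactly backwards --- well-foundedness is the obstruction. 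The supremum of this family does exist, but it is $1$: the finite orthosums are $\bigoplus_{n\leq N}[s_{n}] = [(x')^{N+1}]'$ (via the comb test $\{x, x'x, \dots, (x')^{N}x, (x')^{N+1}\}$), and $\bigwedge_{N}[(x')^{N}] = 0$ because for any product weight $\alpha$ with $\alpha(x)=\lambda>0$, any $[c]\leq[(x')^{N}]$ for all $N$ gives $\alpha(c)\leq(1-\lambda)^{N}\rightarrow 0$, forcing $c=\emptyset$. So suprema of orthogonal families are in general \emph{not} classes of amalgamated events, and completeness needs an argument of this quite different kind.

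Two smaller points: incompatibility of $[x_{1}]$ and $[y_{1}]$ concerns their perspectivity classes, not the two strings, so ``no single tree realizes both'' is not yet a proof (you need that any common lower bound has representatives of the forms $x_{1}d$ and $y_{1}e$, plus a weight-separation argument forcing it to be $0$); and your ``descent'' for irreducibility is named rather than given.
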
  


Thus, one obtains very rich non-classical "logics" 
on the basis of elementary operational constructions, 
as long as one starts with two or more incompatible 
experiments. 


\tempout{
{\bf Coarse-Graining}  A natural operation on a classical 
experiment $E$ is to {\em coarse-grain} it by treating 
certain events as though they were single outcomes. In other 
words, one replaces the outcome-set $E$ with a {\em partition} $\{a_1,...,a_n\}$ of $E$ into non-empty 
subsets.    

{\bf Definition:} The {\em coarsening} of a test space 
$\M$ is the test space $\M^{\#}$ consisting of all partitions 
of tests $E \in \M$. 

Thus, outcomes for $\M^{\#}$ are non-empty {\em events} for $\M$: 
\[X(\M^{\#}) = \Ev(\M) \setminus \{\emptyset\}$. 

It is not hard to show that a probability weight 
on $\M^{\#}$ satisfies $\alpha(a) = \sum_{x \in a} \alpha(\{x\})$ for any nonempty event $a$. Hence, such a weight is 
uniquely determined by a probability weight on $\M$, and 
every probability weight on $\M$ defines one on $\M^{\#}$. 
In effect (that is, ignoring niceties of type), $\Pr(\M) = \Pr(\M^{\#})$.  

{\bf Definition:} The coarsening of a model $A$ 
is $^{\#}$ where $\M(A^\#) = \M(A)^{\#}$ and $\Omega(A^{\#}) 
= \Omega(A)$.  

Notice that $\V(A) \simeq \V(A^{\#})$ and $\E(A) \simeq \E(A^{\#})$. ... [Goes below]

The mapping $x \mapsto \{x\}$ gives us an embedding 
of $A$ in $A^{\#}$. 
}



\subsection{Mappings of Models} 

In order to do anything much with our probabilistic models, 
we need an appropriate notion of a mapping between models. 
There are lots of options, and we'll explore several. But 
the following is good for all-'round purposes: 

\begin{definition}\label{def: morphism} A {\em morphism} from a probabilistic 
model $A$ to a probabilistic model $B$ is a mapping 
$\phi : X(A) \rightarrow X(B)$ 
such that 
\begin{itemize} 
\item[(i)] $x \perp y \Rightarrow \phi(x) \perp \phi(y)$ for all $x, y \in X(A)$; 
\item[(ii)] $a \in \Ev(A) \Rightarrow \phi(a) \in \Ev(B)$
\item[(iii)] $a \sim b \Rightarrow \phi(a) \sim \phi(b)$ 
\item[(iv)] For every state $\beta \in \Omega(B)$, there 
is a state $\alpha \in \Omega(A)$ and a scalar $t \geq 0$ such that for all $x \in X(A)$, 
$\beta(\phi(x)) = t\alpha(x)$. 
\end{itemize} 
\end{definition} 

A morphism from one test space $\M$ to another, $\M'$, 
will be understood to mean a morphism between the corresponding full models. In 
this case, condition (iv) follows automatically from condition (iii).  

Condition (i) can be rephrased as saying that 
$\phi$ is {\em locally injective}, i.e., injective 
on every test.   

\begin{exercise} Show that if $\phi$ is a morphism 
and $a, b \in \Ev(A)$, $a \perp b \Rightarrow \phi(a) \perp \phi(b)$. 
\end{exercise} 


Condition (ii) does not require the image of a test 
to be a test, but only an event of $B$. However, 
condition (iii) requires that all events of the form 
$\phi(E)$ where $E \in \M(A)$ be {\em equi-probable} in 
all states of $B$: If $\beta \circ \phi = t\alpha$ where $\alpha \in \Omega(A)$, then $\beta(\phi(E)) = \sum_{x \in E} t \alpha(x) = t$. Indeed, 
$\beta \circ \phi$ is a positive weight on $\M(A)$. Condition (iv) requires 
that the corresponding normalized weight belong to $\Omega(A)$. 


By way of examples: (i) if $(S,\Sigma)$ and $(S', \Sigma')$ are measurable spaces and $f : S \rightarrow S'$ is a measurable surjection, then the preimage map $f^{-1} : \Sigma' \rightarrow \Sigma$ gives us a morphism $\M(S',\Sigma') \rightarrow \M(S,\Sigma)$.  (The surjectivity condition can be relaxed, but at the cost of allowing {\em partial} morphisms.)
(ii) Let $U : \H \rightarrow \K$ be an isometry (not necessarily surjective) from a Hilbert space $\H$ to a Hilbert space $\K$, and let $\phi_{U}$ denote the corresponding mapping from the unit sphere of $\H$ to that of $\K$. 

\begin{exercise} Check that $\phi_{U}$ is a morphism from $\F(\H)$ to $\F(\K)$. \end{exercise}

If $g : A \rightarrow B$ and $f : B \rightarrow C$ are morphisms, so is $f \circ g$. For any object $A$, the identity mapping $\id_{A} := \id_{X(A)} : X(A) \rightarrow X(A)$ is a morphism. Thus, probabilistic models and 
morphisms define a concrete {\em category}, which I will 
call $\Prob$.  

We will be interested below in some special classes of morphisms. Specifically, 

\begin{definition}\label{def: types of morphisms} A morphism $A \rightarrow B$ is 
\begin{mlist} 
\item[(i)] {\em test-preserving}, or an {\em interpretation}, iff 
$\phi(E) \in \M(B)$ for every test $E \in \M(A)$, and 
\item[(ii)] an {\em embedding} iff test-preserving and 
(globally) injective. 
\item[(iii)] A {\em faithful} embedding iff an embedding 
with $\phi^{\ast} : \Omega(B) \rightarrow \Omega(A)$ surjective.  
\end{mlist} 
\end{definition} 

\begin{exercise} 
Let $\phi : X^{\ast} \times X^{\ast} \rightarrow X^{\ast}$ be given by $\phi(x,y) = \phi(xy)$. Check that this defines a test-preserving morphism 
$\for{\M(A^c)\M(A^c)} \rightarrow \M(A^c)$. \end{exercise}

It is straightforward that a $\perp$-preserving and test-preserving mapping $\phi : X(A) \rightarrow X(B)$ automatically preserves events and perspectivity, and hence, is a morphism.  The following observation will also be useful:

\begin{lemma} 
Let $\phi : A \rightarrow B$ be an embedding, and let $\psi : B \rightarrow A$ be a morphism with $\psi \circ \phi = \id_{B}$. Then $\psi$ is test-preserving.
\end{lemma}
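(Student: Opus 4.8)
The plan is to prove the statement directly from the definitions, showing that for every test $F \in \M(B)$ the image $\psi(F)$ is again a test of $A$. The argument rests on two facts already recorded in the text: that any two tests are perspective (each being complementary to the empty event $\emptyset$), and that perspectivity is preserved by morphisms (condition (iii)). The real engine is a small general observation about irredundant test spaces, which I would isolate first: \emph{an event perspective to a test is itself a test}. Once that is available, it suffices to exhibit $\psi(F)$ as perspective to some genuine test of $A$, and the retraction hypothesis is exactly what supplies one.

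Concretely, I would first fix any test $E \in \M(A)$ (if $\M(A) = \emptyset$ then $X(A) = \emptyset$, hence $X(B) = \emptyset$ and $\psi$ is test-preserving vacuously). Since $\phi$ is an embedding, hence test-preserving, $\phi(E)$ is a test of $B$. Now $F$ and $\phi(E)$ are both tests of $B$, so $F \sim \phi(E)$. Applying the morphism $\psi$ and using preservation of perspectivity, I obtain $\psi(F) \sim \psi(\phi(E))$. Here the retraction hypothesis enters: reading the composition condition as $\psi \circ \phi = \id_A$ (the map $\psi \circ \phi$ lives on $X(A)$), we have $\psi(\phi(E)) = E$, a genuine test of $A$. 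Thus $\psi(F) \sim E$ with $E \in \M(A)$, and $\psi(F) \in \Ev(A)$ by condition (ii) applied to $\psi$.

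It then remains to prove the isolated fact. Suppose $a \sim E$ with $E \in \M(A)$, and let $c$ be a common complement, so $a \co c$ and $E \co c$. From $E \co c$ we get $E \perp c$ and $E \cup c \in \M(A)$; but $E$ is already a test contained in $E \cup c$, so irredundancy forces $E = E \cup c$, whence $c = \emptyset$. Then $a \co \emptyset$ says precisely that $a = a \cup \emptyset \in \M(A)$. Taking $a = \psi(F)$ yields $\psi(F) \in \M(A)$, which is the claim.

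I expect the only genuine subtlety to be interpretational rather than computational: the composition in the hypothesis must be read as $\psi \circ \phi = \id_A$ (i.e. $\psi$ is a retraction of the embedding $\phi$), and the conceptual crux is recognizing that this retraction property is exactly what promotes $\psi(\phi(E))$ to a bona fide test, after which the transfer of perspectivity and the irredundancy argument are purely formal. It is worth noting that conditions (i) and (iv) play no role here, and that injectivity of $\phi$ is never invoked (it is in any case forced by $\psi \circ \phi = \id_A$); the proof uses only (ii) and (iii) for $\psi$ together with test-preservation of $\phi$.
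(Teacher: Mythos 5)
Your proof is correct, but it takes a genuinely different route from the one in the paper. The paper argues through states: for $\alpha \in \Omega(A)$ and a test $E \in \M(A)$, the pullback $\psi^{\ast}(\alpha) = \alpha \circ \psi$ sums to $\alpha(E) = 1$ over the test $\phi(E) \in \M(B)$, which together with condition (iv) for $\psi$ forces $\psi^{\ast}(\alpha) \in \Omega(B)$; consequently $\alpha(\psi(F)) = \psi^{\ast}(\alpha)(F) = 1$ for every $F \in \M(B)$ and every state $\alpha$, and the standing positivity assumption on $\Omega(A)$ then rules out $\psi(F)$ being a proper subset of a test. Your argument never touches states: you observe that $F$ and $\phi(E)$ are both tests of $B$, hence perspective; condition (iii) for $\psi$ plus the retraction identity give $\psi(F) \sim \psi(\phi(E)) = E$; and your isolated combinatorial fact --- an event perspective to a test is itself a test, because any common complement $c$ of $a$ and $E$ satisfies $E \cup c \in \M$, whence irredundancy forces $c = \emptyset$ and so $a \in \M$ --- closes the argument. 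Your route is more elementary and slightly more general: it uses only conditions (ii)--(iii) of the morphism definition together with irredundancy, and in particular does not need condition (iv) or the positivity of $\Omega(A)$, both of which the paper's proof invokes. What the paper's approach buys in exchange is a useful by-product: it shows along the way that $\psi^{\ast}$ carries $\Omega(A)$ into $\Omega(B)$, i.e., that states pulled back along a retraction of an embedding are genuine normalized states. Finally, your reading of the hypothesis as $\psi \circ \phi = \id_{A}$ is the right one (the $\id_{B}$ in the statement is a typo, since the composite lives on $X(A)$), and it is exactly how the identity is used in the paper's own computation.
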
 

{\em Proof:} If $\alpha \in \Omega(A)$ and $E \in \M(A)$, we have $\phi(E) \in \M(B)$, so 
\[\psi^{\ast}(\alpha)(\phi(E)) = \alpha(E) = 1.\]
It follows that $\psi^{\ast}(\alpha)$ is a probability 
weight in $\Omega(B)$. Thus, if $F \in \M(B)$, we have 
\[\alpha(\psi(F)) = \psi^{\ast}(\alpha)(F) = 1,\]
so (since by our standing assumption, $\Omega(A)$ is 
positive), $\psi(F) \in \M(A)$. $\Box$

\tempout{
{\blue 
\noindent{\em Remark:} One can also consider what we might call {\em partial} morphisms of models, in which 
$\phi(x)$ is undefined for some outcomes $x \in X(A)$. 
What follows can be adapted to this more general notion, 
but we won't pursue this here.} 
\tempout{
In this case, define the {\em support} of $\phi$ 
to consist of those points at which $\phi$ is defined. 
If $\phi$ is strong, $\M_{S} = \{ E \cap S | E \in \M(A)\}$ will be irredundant, $a \sim b in \Ev(\M(A))$ implies 
$a \cap S \sim b \cap S$. We can then define 
a model $A_{S}$ with test space $\M(A_S) = \M_{S}$ 
and state space $\Omega(A_S)$ consisting of 
restrictions of normalized non-zero restrictions of 
states in $\Omega(A)$ to $S$. (Note that these 
restrictions all have the same weight, so normalization 
is possible). Then the obvious partial mapping 
$X \rightarrow S$ defines a partial morphism 
$A \rightarrow A_{S}$. 
$A \rightarrow A_{S}$ given by $x \mapsto x$ for 
$x \in S$ is a strong morphism, and $\phi : A \rightarrow B$ factors through this to give $\phi_{S} : A_{S} \rightarrow B$. {\blue [clean up]}
}
}

{\bf Symmetry}  A {\em symmetry}, or automorphism, of a model $A$ is a bijection 
$\phi : X(A) \rightarrow X(A)$ with $\phi(\M(A)) = \M(A)$ and $\phi^{\ast}(\Omega(A)) = \Omega(A))$. We write $\Aut(A)$ is the group of symmetries of $A$.  One can introduce a notion of dynamics by considering one-parameter groups of symmetries, i.e., 
homomorphisms from the group $(\R,+)$ into $\Aut(A)$. 
One can add to the structure of a model $A$ a preferred symmetry group $G(A)$, constraining the possible dynamics. One can also add topological structure, and then a natural choice is to consider the group of continuous symmetries with continuous inverses. For more on this, see \cite{AW-handbook}

{\bf Observables} 
One plausible way of modeling an {\em observable} on a probabilistic model $A$ is as an interpretation --- a test-preserving morphism --- $\phi : B \rightarrow A$, where $B = (\B_{\sigma}(S,\Sigma),\Omega)$ is the  Borel model associated with some measurable space $(S,\Sigma)$ of "values" (Real or otherwise). Thus, if $b \in \Sigma$ is a measurable subset of $S$, $\phi(s)$ would be a physical event belonging to one of the experiments in $\M(A)$, with the interpretation that obtaining an outcome in $a = \phi(b)$ "means" that the observable has a value in $b$.  There is a potential ambiguity here, but this is resolved by the following 

\begin{exercise} Show that an interpretation $\phi : \B_{\sigma}(S,\Sigma) \rightarrow \M$ is an interpretation from a Borel test space into any test space is injective on non-empty events. 
That is, if $b_1, b_2 \not = \emptyset$, then $\phi(b_1) = \phi(b_2) \Rightarrow b_1 = b_2$. 
\end{exercise} 

If $\phi : B \rightarrow A$ is an $S$-valued observable as above, and $f : S \rightarrow \R$ is a Borel-measurable real-valued random variable on $S$, we obtain an observable $f(\phi) := \phi \circ f^{-1}$ where we understand $f^{-1}$ as an interpretatation 
$\B_{\sigma}(\R, \mbox{Borel field of } \R) \rightarrow \B_{\sigma}(S,\Sigma)$ in the obvious way.  If $\alpha$ is a state of $A$, then $(f(\phi)^{\ast}(\alpha)) =: \lambda$ is a 
Borel measure on $\R$.  When the identity function on $\R$ is integrable with respect to $\lambda$ (e.g., if $f$ is bounded, so that $\lambda$ has bounded support), we can then define the expected value of $f(\phi)$ in state $\alpha$ by $\Exp_{\alpha}(f(\phi)) = \int_{\R} x d\lambda$.  In the case of where $A$ is a projective quantum model, or more generally a von Neumann model, an interpretation $\phi : B_{\sigma}(S,\Sigma) \rightarrow \M(A)$ is 
essentially the same thing as a projection-valued measure, and the construction sketched here reproduces the usual way of handling quantum observables and their expected values. 

{\gray 
{\bf Digression: Event-valued morphisms} 
An {\em event-valued} morphism $A \rightarrow B$ is simply a morphism 
$A \rightarrow B^{\#}$.\footnote{For those who know this lingo: $( \cdot )^{\#}$ is a monad in $\Prob$, and event-valued morphisms are morphisms in the corresponding Kleisli category \cite{AW-CCaM}   In the work of Foulis and Randall, "morphisms" were understood to be event-valued by default.} Notice that $\phi(x)$ is permitted to be empty 
for some $x \in X(A)$. This gives us a way of handling 
partial morphisms: they are simply event-valued morphisms 
such that $\phi(x)$ is a singleton, if non-empty. 

\begin{exercise} 
Explain how one should compose two event-valued morphisms  $A \rightarrow B$, $B \rightarrow C$ to obtain an event-valued morphism $B \rightarrow C$. Check that this composition rule is associative, and that the composition of two interpretations is an interpretation.\end{exercise}


\begin{exercise} The {\em support} of an event-valued morphism $\phi : A \rightarrow B^{\#}$, denoted $S_{\phi}$, is the set of outcomes $x \in X(A)$ with $\phi(x) \not = \emptyset$. Show that $\M_{\phi} := \{ E \cap S_{\phi} | E \in \M(A)\}$ is irredundant, hence, a test space in its own right.  Also show that the inclusion mapping $S_{\phi} \rightarrow X(A)$ 
defines a morphism (in the usual sense) from $\M_{\phi}$ to $\M$.  
\end{exercise} 
}


\tempout{
{\gray {\bf Ensembles and prepartions}

Given a probabilistic model $(\M,\Omega)$, 
We might want to construct a test space that represents the 
possible ways of {\em preparing} a state in $\Omega$. One,  
not very interesting, way to do this would be to 
define a test space $\{\{\alpha\} | \alpha \in \Omega\}$: each 
test has a single outcome, corresponding to the preparation 
of a particular state. 

Here's a more interesting approach. Given a convex set $K$, let's agree that a finite {\em ensemble} over $K$ is a finite set of pairs $(t_1,\alpha_1),....,(t_n,\alpha_n)$ 
such that $\alpha_1,...,\alpha_n \in K$, and $t_1,...,t_n$ is a list of non-negative real constants summing to $1$ --- that is, a finite probability distribution over $\{1,...,n\}$. 
If $\sum_{i} t_i \alpha_i = \alpha \in K$, we say that $\{(t_i,\alpha_i)\}$ is an 
ensemble {\em for} $\alpha$.  Let $\D(K)$ be the test space of all such finite ensembles for the convex set $K$. Note that the outcome space here is $X(K) = (0,1] \times K$. 

I've put the proof of the following in Appendix A:

\begin{theorem}\label{thm: state of the ensemble}  The only probability weight on $\D(K)$ is the weight $\rho((t,\alpha)) = t$.
 \end{theorem} 
}

\tempout{
We also seem to be proving this: if $f : [0,1]$ is $\Q$-linear and non-decreasing (more generally, monotone), it's continuous (Proof: Let $q_n, r_n$ be rationals decreasing, resp. increasing,  to $1$: then $f(q_n x) \rightarrow f(x)^{+}$ and 
$f(q_n x)  = q_n f(x) \rightarrow f(x)$, so $f(x) = f(x)^{+}$; similarly $f(r_n x) \rightarrow f(x)^{-}$ and 
$f(r_n x) = r_n f(x) \rightarrow f(x)$, so $f(x) = f(x)^{-}$. So $f(x)^{+}$ and $f(x)^{-}$ coincide, and 
$f$ is continuous at $x$.)
}
}

\section{Linearized Models and Effect Algebras
} 

{\fontsize{8}{9}\selectfont
\epigraph{\fontsize{8}{9}
It seems strange that all physical situations could be represented by points of [a unit sphere] 
}
{\fontsize{7}{9} B. Mielnik \cite{Mielnik}}
}


The framework sketched thus far can to a large extent be 
"linearized", so that outcomes, events, and states are represented by elements of suitable vector spaces, and morphisms become linear mappings between these.  Moreover, these are {\em ordered} vector spaces of special types: states end up living in what are called {\em base-normed} spaces, and effects, in their duals, which are {\em order-unit spaces}. Morphisms become positive linear mappings taking effects to effects. 

This ordered-linear setup (pioneered in the 1960s and 70s in by Davies and Lewis \cite{DL}, Edwards \cite{Edwards}, \cite{Ludwig}, and Mielnik \cite{Mielnik} among others; see also \cite{FR-ELPS}) is sufficient for many purposes, and has become the standard setting for GPTs. See \cite{Plavala} for a detailed introduction to GPTs in this style, and \cite{BW, AW-handbook} for more on how this articulates with the framework adopted here. 

In this chapter, I'll begin with a brief tutorial on ordered vector spaces 
and their connection with convex sets.  It's tempting to restrict attention to finite-dimensional spaces, but it's hard to bring this off: first, because if even in quantum theory we need infinite dimensionality to allow for continuous observables, and secondly, because simple constructions like compounding ($A \mapsto A^{c}$) take us from finite-dimensional to  infinite-dimensional models. 

In order to simplify life, in what follows I will assume that $\Omega(A)$ is large enough to {\em separate points} of $X(A)$ --- that is, for distinct outcomes $x, y \in X(A)$, there exists a state $\alpha \in \Omega(A)$ with $\alpha(x) \not = \alpha(y)$.

\subsection{Ordered Linear Spaces} 

It seems prudent to start with a crash-course on ordered vector spaces. A good source of general information on this subject is the book by Aliprantis and Tourky \cite{AT}. The books by Alfsen \cite{Alfsen} and Alfsen and Shulz \cite{AS1, AS} are more advanced (particularly the former), but also  more focused on the material that we'll need. In order to avoid too lengthy a digression, I've consigned some of the details of what follows to Appendix B.

\begin{definition}\label{def: cone} A (convex) {\bf cone} in a real vector space 
$\V$ is a set $K \subseteq \V$ closed under 
addition an multiplication by non-negative scalars: 
\[a, b \in K, t \geq 0 \ \Rightarrow \ ta, a + b \in K.\]
An immediate consequence of this is that $K$ 
is convex.
\end{definition} 

If $K$ is a cone, so is $-K = \{ -x | x \in K\}$. 
One says that $K$ is {\em pointed}, or a {\em proper cone}, iff $K \cap -K = \{0\}$, 
and {\em generating} iff $K$ spans $V$ ---  equivalently, iff 
$\V = K - K := \{ x - y | x, y \in K\}$. 

If $K$ is a pointed, generating cone, define 
\[a \leq_{K} b \ \Leftrightarrow \ b - a \in K.\]
It is easy to check that this is a partial order on $\V$, that  $K = \{ a \in V | a \geq 0\}$, 
and that 
\[a \leq b \Rightarrow a + c \leq b + c \ \mbox{ and } \ 
ta \leq tb\]
for all $a,b,c \in \V$ and all scalars $t \geq 0$. Conversely, any partial ordering satisfying this last pair of conditions determines a cone $K = \{ a | a \geq 0\}$, and then the given order relation $\leq$ coincides with $\leq_{K}$. 

\begin{exercise}[easy!] Verify all this. \end{exercise} 


\begin{definition}\label{def: ordered vector space} An {\em ordered vector space} is a pair 
$(\V, \V_{+})$ where $\V_{+}$ is a designated pointed, generating cone, called the {\em positive cone} of $\V$. 
\end{definition} 

\begin{example} The obvious source of examples is function spaces. If $\V \leq \R^{X}$ for some set $X$, the {\em natural} cone for $\V$ is $\V_{+} = \{ f \in \V ~|~ f(x) \geq 0 ~\forall x \in X\}$. We also say that $\R^{X}$, with this cone, is {\em ordered pointwise}.  \end{example} 

\begin{example}  If $\H$ is a Hilbert space, and let 
$\L_{s}(\H)$ denote the real vector space of bounded, self-adjoint operators on $\H$. That is, an operator $a$ belongs to $\H$ iff it's defined on all of $\H$, and satisfies $\langle ax, y \rangle = \langle x, ay \rangle$ for all vectors $x,y \in \H$. An operator $a \in \L_s$ is {\em positive} iff $a = b^{\ast} b$ for some operator $b$ on $\H$. One can show that this is equivalent to saying that $\langle ax, x \rangle \geq 0$ for all unit vectors $x \in \H$. Using this, one shows that the set of positive operators form a pointed, generating cone for $\L_{s}(\H)$, so we can, and will, regard the latter as an ordered vector space. 
\end{example} 

{\gray {\em Remark:} This generalizes to any 
$C^{\ast}$-algebra $\mathfrak A$: one says that 
$a = a^{\ast}$ in $\mathfrak A$ is positive iff it has 
the form $a = b^{\ast} b$. Again, the positive elements 
form a cone making ${\mathfrak A}_{s}$ an ordered vector space. 
}

\tempout{\gray Remark Let $S = S(\H)$ be $\H$'s unit sphere. The {\em quadratic form} associated with an operator $a$ n $\H$ is the function $q_{a} : S \rightarrow \C$ given by $q_{a}(x) = \langle ax, x \rangle$. For a complex Hilbert space $\H$, every operator is determined by its quadtratic form; for a real Hilbert space, the same is true for self-adjoint (though not for arbitrary) operators. In either case, $a \mapsto q_{a}$ provides a linear embedding of $\H$ into $\C^{S(\H)}$, taking the positive cone defined above to the pointwise cone. So in this sense, the operator-theoretic ordering is still a pointwise ordering. }

{\bf The Archimedean property} An ordered vector space $\V$ is {\em Archimedean} iff, 
for all $x > 0$ and every $y \in \V$, there is some $n \in \N$ with $y < nx$. Alternatively, 
if $x, y \in \V$ and $nx \leq y$ for all $n \in \N$, then $x \leq 0$. 

The classic example of a non-Archimedean ordered vector space is $\R^2$ the the {\em lexicographic} order: $(x,y) \leq (u,v)$ iff either $x \leq u$ {\em or} $x = u$ and $y \leq v$. The positive cone consists of the right half-plane, excluding the negative real axis.\footnote{\gray Note that this is a {\em linear} order on $\R^{2}$: given $(x,y), (u,v) \in \R^{2}$, either $(x,y) \leq (u,v)$ or $(u,v) \leq (x,y)$. It turns out that {\em no} linear ordering on $\R^{n}$ can be Archimedean for any $n > 1$.}

\begin{exercise}  (a) Show that the positive cone of $\R^{2}$ in the lexicographic order consists of the right half-plane, excluding the negative real axis.  (b) Give an example of points $x, y$ with $x \not \leq 0$ and $nx \leq y$ for all $n$. 
\end{exercise} 

As this might suggest, non-Archimedean ordered linear spaces are somewhat unfriendly. Luckily, most of the spaces one meets in practice are indeed Archimedean:

\begin{lemma} If $\V$ has a Hausdorff linear topology, 
then $\V$ is Archimedean if $\V_{+}$ is closed. 
\end{lemma}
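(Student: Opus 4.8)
The plan is to use the second, equivalent formulation of the Archimedean property quoted in the text: it suffices to show that whenever $x, y \in \V$ satisfy $nx \leq y$ for all $n \in \N$, we have $x \leq 0$, i.e.\ $-x \in \V_{+}$. So fix such $x$ and $y$.

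First I would unwind the order in terms of the cone. The hypothesis $nx \leq y$ says precisely that $y - nx \in \V_{+}$ for every $n \geq 1$. Since a cone is closed under multiplication by nonnegative scalars (Definition~\ref{def: cone}), and $1/n \geq 0$, this yields $\tfrac{1}{n}(y - nx) = \tfrac{1}{n} y - x \in \V_{+}$ for every $n \geq 1$.

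Next I would pass to the limit. Because $\V$ carries a linear topology, scalar multiplication $\R \times \V \rightarrow \V$ is continuous; in particular $\tfrac{1}{n} y \rightarrow 0$ as $n \rightarrow \infty$, so the sequence $z_{n} := \tfrac{1}{n} y - x$ converges to $-x$. Each $z_{n}$ lies in $\V_{+}$, and $\V_{+}$ is closed, so the limit $-x$ lies in $\V_{+}$ as well. Hence $x \leq 0$, which is exactly what the Archimedean condition requires.

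The argument is essentially forced once the cone/topology translation is made, so there is no real obstacle; the only step meriting care is the limit passage. It rests on just two ingredients: continuity of scalar multiplication (built into the definition of a linear topology) and the fact that a closed set contains the limits of its convergent sequences (valid in any topological space). Note that only the single explicit sequence $z_{n}$ is involved, so no appeal to nets or metrizability is needed, and in fact the Hausdorff hypothesis is never used in this direction; it is presumably retained only so that $\leq_{\V_{+}}$ is a genuine partial order, in keeping with the standing conventions.
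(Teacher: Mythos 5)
Your proof is correct and is essentially the argument the paper intends: the text leaves this lemma as an exercise whose hint (show the order relation is closed in $\V \times \V$) is just a repackaging of your limit computation, since closedness of $\leq$ follows from closedness of $\V_{+}$ via the continuous map $(a,b) \mapsto b-a$, and is then applied to the pairs $(x,\tfrac{1}{n}y) \to (x,0)$. Working directly with the sequence $\tfrac{1}{n}y - x \to -x$ in the closed cone, as you do, is the same idea with one less layer of wrapping, and your side remark that the Hausdorff hypothesis is not actually used in this implication is also accurate.
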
 

\begin{exercise} Prove this. (Hint: Show that if 
$\V_{+}$ is closed, so is the order relation as a subset of 
$\V \times \V$).\end{exercise} 

\tempout{
{\bf Exercise:} Show that the converse is also true: If 
$\V$ has a Hausdorff linear topology, an Archimedean cone 
in $\V$ is necessarily closed. (Hint:  If $x$ is a limit point of $(\V_{+})$, then for any $y \in \V$ exists some 
$N \in \N$ with $x + (1/n)y \in \V_{+}$ for all $n \geq N$. Hence, 
$-x \leq (1/n)y$, or $n(-x) \leq y$, for any $n \geq N$. Since $\V$ is Archimedean, $-x \leq 0$, so $x \geq 0$, i.e., 
$x \in \V_{+}$. Thus, $\V_{+}$ is closed. 
}



{In what follows I'm going to assume 
that \emph{\bf all ordered vector spaces under consideration carry a 
locally convex, Hausdorff linear topology with respect to which 
$\V_+$ is closed.} This is automatic (and the topology 
uniquely determined) if the space is finite-dimensional, and 
is also true for every concrete space we'll be interested in. 
\emph{\bf Thus, ordered vector spaces 
will always be Archimedean from now on.}


\tempout{
{\em Remark:} In the complex case, the Polarization Identity 
tells us that a bounded operator $a$ is determined by 
its {\em quadratic form}, $\alpha_{a}(x) = \langle ax, x \rangle$. 
The mapping $\L_{s}(\H) \rightarrow \R^{X(\H)}$ is a linear 
embedding, and we see that positivity of $a$ amounts to 
positivity of $\alpha_{a}$. 
}

{\bf Positive linear maps}  If $\V$ and $\W$ are ordered vector spaces, a linear mapping $T : \V \rightarrow \W$ is {\em positive} 
iff $T(\V_{+}) \subseteq \W_{+}$: that is, 
$\forall x \in \V$, if $0 \leq x$, then $0 \leq T(x)$ in $\W$. 
A special case: a linear functional $f : \V \rightarrow \R$ is 
positive iff $f(x) \geq 0$ for all $x \in \V_{+}$. 

\tempout{{\gray For many ordered spaces, positivity implies continuity 
of a linear functional.  This is true, for instance, if .... }
We write $\V^{\ast}_{+}$ for the set of positive linear functionals 
on $\V$.  It's easy to check that $\V_{+}^{\ast}$ is a cone, and 
that $\V^{\ast}_{+} \cap -\V^{\ast}_{+} = \{0\}$. But for infinite-dimensional spaces, it can happen that $\V^{\ast}_{+}$ does not 
span the space $\V'$ of continuous linear functionals. In such cases, one introduces the {\em order dual}, $\V^{\star} = \V^{\ast}_{+} - \V^{\ast}_{+}$, ordered by $\V^{\star}_{+} := 
\V^{\ast}_{+}$. }
}

{\bf Order unit spaces} 
An {\em order unit} in an ordered vector space $\E$ is an element 
$u \in \E_{+}$ with the property that, for every $x \in \E$, 
$-nu \leq x \leq nu$ for some $n \in \N$. In other words, 
\[\E \ = \ \bigcup n[-u,u].\]
If $\E$ is finite-dimensional,  one can show that this is the case if and only if $u$ belongs to the interior of $\E_{+}$. More generally, this is true for ordered Banach spaces 
with closed cones. 
\footnote{The still more general statement is that $u$ is an 
order unit iff it belongs to the {\em algebraic interior} of the cone. See Wikipedia for more on this.} 


\begin{definition}\label{def: OUS} An {\em order-unit space} (OUS) is a pair $(\E,u)$ where $\E$ is an ordered vector space\footnote{with a closed cone, as per our standing assumptions} and $u$ is an order-unit. 
\end{definition} 

\begin{example} For any set $X$, let ${\mathscr B}(X)$ be the space of bounded functions $f : X \rightarrow \R$, ordered point-wise. The constant function $1$ (or any other positive constant, or, indeed, any positive function bounded away from $0$) is an order unit.\end{example} 

{\gray 
\begin{exercise} Show that if $X$ is infinite, the space $\R^{X}$ of all real-valued functions on $X$, ordered point-wise, 
has no order unit.
\end{exercise} 
}

\begin{example}  Let $\L_{s}(\H)$ denote the ordered real vector space of bounded, self-adjoint operators on a Hilbert space $\H$. Then $\1$ is an order unit for $\L_{s}(\H)$. 
\end{example} 

An order unit space $(\E,u)$ carries a natural norm, called the {\em order-unit norm}, obtained 
by treating $[-u,u]$ as the closed unit ball: for any 
$x \in \E$, we set 
\[\|x\|_{u} = \inf\{ t \geq 0 | x \in t[-u,u]\}\footnote{\gray This construction works for any convex set $B$ that is balanced 
($x \in B \Rightarrow -x \in B$) and absorbing ($\bigcup_{t > 0} tB = \E$). The quantity 
$p_{B}(x) = \inf\{ t \geq 0 | x \in tB\}$ is called the {\em Minkowski functional} of $B$. It is generally only a semi-norm. See Appendix B for further details.}\]
In the case of ${\mathscr B}(X)$ (cf. Exercise 2.7), this is the sup norm; in the case 
of $\L_{s}(\H)$, it coincides with the operator norm. 

\begin{theorem}[Kadison, 1951]\label{thm: Kadison}  Any (Archimedean) order-unit space, with its order-unit norm, is norm and order isomorphic to a 
subspace of ${\mathscr C}(X)$ for some compact Hausdorff space $X$.
\end{theorem}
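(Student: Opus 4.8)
The plan is to realize the compact Hausdorff space $X$ as the \emph{state space} of $(\E,u)$ and to embed $\E$ into ${\mathscr C}(X)$ by evaluation. Define a \emph{state} to be a positive linear functional $f : \E \to \R$ with $f(u) = 1$, and let $S$ denote the set of all states. First I would record the boundedness fact: if $f$ is a state and $\|x\|_u \leq 1$, then $-u \leq x \leq u$ forces $-1 = -f(u) \leq f(x) \leq f(u) = 1$, so every state is norm-continuous with $\|f\| = 1$. Thus $S$ lies in the closed unit ball of $\E^{\ast}$; it is convex, and weak-$\ast$ closed (the conditions $f \geq 0$ and $f(u)=1$ pass to weak-$\ast$ limits), hence weak-$\ast$ compact by Banach--Alaoglu. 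I take $X := S$ with its weak-$\ast$ topology, a compact Hausdorff space, and define $\Phi : \E \to {\mathscr C}(X)$ by $\Phi(x) = \hat{x}$, where $\hat{x}(f) = f(x)$. Each $\hat x$ is by construction weak-$\ast$ continuous, so $\Phi$ is well defined, and it is plainly linear and positive.

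The crux is that $\Phi$ is isometric, and this is where I expect the real work to lie. The tool is the \emph{Minkowski functional} $p(x) = \inf\{ t \in \R \mid x \leq tu\}$. Because $u$ is an order unit this infimum is finite for every $x$, and $p$ is sublinear (positively homogeneous and subadditive) with $p(u) = 1$, $p(-u) = -1$, and $p(x) + p(-x) \geq p(0) = 0$. Given $x$, define a functional on $\R x$ by $g(sx) = s\,p(x)$; the inequality $p(x) + p(-x) \geq 0$ is exactly what is needed to verify $g \leq p$ on $\R x$, so Hahn--Banach extends $g$ to $f : \E \to \R$ with $f \leq p$ everywhere and $f(x) = p(x)$. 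Then $f$ is a state: for $y \geq 0$ we have $-y \leq 0$ so $p(-y) \leq 0$, whence $f(-y) \leq p(-y) \leq 0$ and $f(y) \geq 0$; and $f(u) \leq p(u) = 1$ together with $f(-u) \leq p(-u) = -1$ gives $f(u) = 1$. Hence $\sup_{f \in S} f(x) \geq p(x)$, and applying the construction to $-x$ produces a state witnessing $\sup_{f\in S} f(-x) \geq p(-x)$. Since the order-unit norm satisfies $\|x\|_u = \max\{p(x), p(-x)\}$ (the infimum defining $p$ being attained by the Archimedean property), this yields $\sup_{f \in S}|f(x)| \geq \|x\|_u$; the reverse inequality is the boundedness estimate above, so $\|\hat x\|_{\infty} = \|x\|_u$.

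It remains to check that $\Phi$ is an \emph{order} isomorphism onto its image, i.e., that $\hat x \geq 0$ in ${\mathscr C}(X)$ implies $x \in \E_{+}$. The forward implication is immediate from positivity of states. For the converse I would first use the Archimedean hypothesis to see that $\E_{+}$ is closed in the order-unit norm: if $x \geq -\epsilon u$ for all $\epsilon > 0$ then $n(-x) \leq u$ for all $n$, so $-x \leq 0$ by the Archimedean property, and norm-closedness of $\E_{+}$ follows from this. Now if $x \notin \E_{+}$, the Hahn--Banach separation theorem (in the locally convex normed space $\E$) produces a continuous linear functional strictly separating $x$ from the closed convex cone $\E_{+}$; since $\E_{+}$ is a cone, this functional may be taken positive with $g(x) < 0$, and because $u$ is an order unit one has $g(u) > 0$ (otherwise $-n g(u) \leq g(y) \leq n g(u)$ would force $g \equiv 0$). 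Normalizing, $f = g/g(u) \in S$ satisfies $f(x) < 0$, so $\hat x \not\geq 0$. Thus $\Phi$ is a positive linear isometry that also reflects positivity, i.e., a norm and order isomorphism of $\E$ onto the subspace $\Phi(\E) \subseteq {\mathscr C}(X)$.

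The main obstacle is the isometry step, and specifically the Hahn--Banach construction guaranteeing a genuine \emph{state} (not merely a bounded functional) attaining $p(x)$: the verification that the dominated extension is automatically positive and unital is what makes evaluation isometric and is the heart of Kadison's theorem. The separation argument for order-reflection is comparatively routine, but it does depend essentially on the Archimedean hypothesis to guarantee that $\E_{+}$ is norm-closed, so I would be careful to establish that closedness before invoking separation.
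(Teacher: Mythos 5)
Your proof is correct, but note that the paper itself never proves this theorem: it is stated as a classical citation (Kadison, 1951), and the closest supporting material, in Appendix B, only records the ingredients you re-derive at the outset (states are automatically bounded of norm one, the state space is weak-$\ast$ compact) before referring the reader to Alfsen--Shultz. Your argument is the standard state-space representation proof, and it is complete: the evaluation map into ${\mathscr C}(S)$, the isometry via the sublinear functional $p(x) = \inf\{t \mid x \leq tu\}$ together with a Hahn--Banach dominated extension (whose positivity and unitality you verify rather than assume), and order-reflection via norm-closedness of $\E_{+}$ plus Hahn--Banach separation. You also invoke the Archimedean hypothesis at exactly the two places it is genuinely needed --- to identify the closed unit ball with $[-u,u]$ (equivalently, to get $\|x\|_u = \max\{p(x),p(-x)\}$) and to show $\E_{+}$ is norm-closed --- which is the point most presentations gloss over. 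Since the paper offers no proof to compare against, your write-up in effect supplies the missing argument, and it is the same route Kadison's theorem is usually given in the literature the paper cites.
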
 


{\bf Base normed spaces} 
Let $\Omega$ be a convex set in some real vector space $\W$. As long as $0 \not \in \Omega$, we can define a cone 
\[\V_{+}(\Omega) = \{ t\alpha | \alpha \in K \ \& \ t \geq 0\}.\]

\begin{exercise} Using the convexity of $\Omega$, show that 
$\V_{+}(\Omega)$ is indeed a cone. 
\end{exercise} 

We say that $\Omega$ is a {\em base} for its cone, or a 
{\em cone-base}, iff every non-zero element $\upsilon \in \V_{+}(\Omega)$ has a {\em unique} representation $\upsilon = t\alpha$ with $\alpha \in \Omega$. In other words, $\Omega$ is a base iff, for $\alpha, \beta \in \Omega$ and scalars $t, s > 0$, $t\alpha = s \beta \Rightarrow \alpha = \beta$: 
{\small 
\[\begin{tikzpicture}[scale=.5]
\draw[fill=gray] (0,0) ellipse (2 cm and 1cm);
\draw (0,0) node{\Large $\Omega$};
\draw (-3,3) -- (0,-6) -- (3,3);
\draw[dashed] (0,3) ellipse(3 cm and 1.5 cm);
\draw[blue] (0,-6) -- (1,0);
\draw[blue] (1,0) -- (1.5,3);
\draw[blue] (1,0) node{$\bullet$} node[above,right]{$\alpha$};
\draw[blue] (1.5,3) node[above,right]{$t\alpha$};
\draw (1.2,-2) node[right]{$\V(A)_{+}$};
\draw (0.02,-6) node{$\bullet$} node[right]{$0$};
\end{tikzpicture}
\]
}
Just as the picture above suggests, this is equivalent to saying that there is a hyperplane 
$H$ with $\Omega = \V(A)_{+} \cap H$ and $0 \not \in H$. 
We can now define an ordered linear space just by considering 
\[\V(\Omega) := \V(\Omega)_{+} - \V(\Omega)_{+}.\]
As constructed, this is a subspace of the ambient space $\W$ 
in which we located $\Omega$, but one can show that $\V(\Omega)$  is essentially independent of this space $\W$, depending only on $\Omega$'s convex structure. 

One can define a semi-norm on $\V(\Omega)$, in something like the way we defined the order-unit norm: declare  $B := \con(\Omega \cup -\Omega)$ (the convex hull of $\Omega$ and $-\Omega$) to be the unit ball. Every $x \in \V$ is a multiple of something in this set (it's "absorbing"), so we can define $\|x\|_{B} = \inf\{t | x \in t B\}$.  

\begin{definition} When $\| ~\cdot~\|_{B}$ is a norm, we call it the {\em base norm} induced by $\Omega$, and refer to the pair $(\V,\Omega)$ as a {\em base-normed space} (BNS). 
\end{definition} 

An important sufficient condition for $\| \cdot \|_{B}$ to be a norm is that 
$\Omega$ be compact in some linear topology, and in this case,  one can show that 
$\V(\Omega)$ is complete in its base-norm.  Importantly, any bounded affine (convex-linear) mapping from $\Omega$ to any normed space $\W$ extends uniquely to a linear (and automatically, positive) mapping $\V \rightarrow \W$: 

\begin{lemma}\label{lem: unique extension lemma} If $(\V,\Omega)$ is a base-normed space, $\W$ any normed space, and 
and $f : \Omega \rightarrow \W$ is a bounded affine mapping, then there exists a unique bounded linear mapping $\hat{f} : \V(\Omega) \rightarrow \W$ with $\hat{f}(\alpha) = f(\alpha)$ for every $\alpha \in \Omega$. 
\end{lemma}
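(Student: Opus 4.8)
The plan is to build $\hat f$ in three stages: first on the base $\Omega$ (where it is prescribed), then on the cone $\V_{+}(\Omega)$ by positive homogeneity, and finally on all of $\V(\Omega) = \V_{+}(\Omega) - \V_{+}(\Omega)$ by subtraction, verifying additivity and homogeneity as I go and deferring boundedness to the very end.

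First I would define $\hat f$ on the cone. Every nonzero $\upsilon \in \V_{+}(\Omega)$ has, by the base property, a unique representation $\upsilon = t\alpha$ with $t > 0$ and $\alpha \in \Omega$, so I can unambiguously set $\hat f(\upsilon) = t\, f(\alpha)$, together with $\hat f(0) = 0$. Positive homogeneity $\hat f(r\upsilon) = r\,\hat f(\upsilon)$ for $r \geq 0$ is immediate. The key computation is additivity on the cone: given $\upsilon = s\alpha$ and $\mu = t\beta$ with $\alpha,\beta \in \Omega$ and $s,t > 0$, I would write $\upsilon + \mu = (s+t)\gamma$, where $\gamma = \tfrac{s}{s+t}\alpha + \tfrac{t}{s+t}\beta$ lies in $\Omega$ by convexity and is therefore its unique base representative. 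Affineness of $f$ gives $f(\gamma) = \tfrac{s}{s+t} f(\alpha) + \tfrac{t}{s+t} f(\beta)$, so $\hat f(\upsilon + \mu) = (s+t) f(\gamma) = s\,f(\alpha) + t\,f(\beta) = \hat f(\upsilon) + \hat f(\mu)$. This is the conceptual heart of the argument, where the base property (well-definedness) and affineness (additivity) combine.

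Next I would extend to $\V(\Omega)$ by $\hat f(\upsilon - \mu) := \hat f(\upsilon) - \hat f(\mu)$ for $\upsilon,\mu \in \V_{+}(\Omega)$. Well-definedness follows from cone additivity: if $\upsilon - \mu = \upsilon' - \mu'$, then $\upsilon + \mu' = \upsilon' + \mu$ in $\V_{+}(\Omega)$, so $\hat f(\upsilon) + \hat f(\mu') = \hat f(\upsilon') + \hat f(\mu)$, which rearranges to equality of the two candidate values. Full additivity is then routine, and homogeneity for all real scalars follows by separating $r \geq 0$ (cone homogeneity) from $r < 0$ (using $\hat f(-x) = -\hat f(x)$, which is built into the subtraction rule). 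Uniqueness requires no further work: $\Omega$ spans $\V(\Omega)$, and a linear map is determined by its values on a spanning set, so any linear extension agreeing with $f$ on $\Omega$ coincides with $\hat f$.

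The part that actually invokes the hypothesis "bounded", and where I expect to spend the most care, is the norm estimate. I would first record an explicit description of the unit ball: using convexity of $\Omega$ to collapse convex combinations, every $b \in B = \con(\Omega \cup -\Omega)$ can be written as $b = s\alpha - r\beta$ with $\alpha,\beta \in \Omega$ and $s,r \geq 0$, $s + r = 1$. If $\|f(\alpha)\| \leq M$ for all $\alpha \in \Omega$, then $\|\hat f(b)\| = \|s\,f(\alpha) - r\,f(\beta)\| \leq (s+r)M = M$. Since $\|x\|_{B} = \inf\{t : x \in tB\}$, for any $t > \|x\|_{B}$ we have $x = tb$ with $b \in B$, whence $\|\hat f(x)\| = t\|\hat f(b)\| \leq tM$; letting $t \downarrow \|x\|_{B}$ yields $\|\hat f(x)\| \leq M\|x\|_{B}$, so $\hat f$ is bounded with $\|\hat f\| \leq M$. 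The one thing to watch is precisely the reduction of a general element of $B$ to the two-term form $s\alpha - r\beta$, which is exactly where convexity of $\Omega$ does its work.
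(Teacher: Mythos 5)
Your proof is correct and is essentially the paper's argument: the heart of both is rewriting sums of cone elements as positive multiples of convex combinations in $\Omega$ and invoking affineness of $f$, and your boundedness estimate via $B = \con(\Omega \cup -\Omega)$ is exactly the one the paper spells out in Appendix B. The only difference is organizational: you define $\hat{f}$ on the cone first (using uniqueness of base representations) and then extend by subtraction, whereas the paper defines $\hat{f}(t\alpha - s\beta) = tf(\alpha) - sf(\beta)$ directly on differences and checks well-definedness in one step --- the same convexity/affineness computation either way.
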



{\em Proof:} The only possible linear extension would be given by 
\[\hat{f}(t\alpha - s\beta) 
= tf(\alpha) - s f(\beta)\] 
for all $\alpha, \beta \in \Omega$ 
and all $s, t \geq 0$.  The trick is to show that this is well-defined. But if $t\alpha - s \beta = t'\alpha' - s\beta'$ we have 
\[\tfrac{t}{t + t'}\alpha + \tfrac{t'}{t + t'}\alpha'
= \tfrac{s}{s + s'}\beta + \tfrac{s'}{s + s'}\beta\]
in $\Omega$. Applying $f$ to both sides and unwinding, we see 
that $tf(\alpha) - sf(\beta) = t'f(\alpha') - s'f(\beta')$, as 
required. It is now routine to check that $\hat{f}$ is bounded 
and linear. $\Box$ 

\begin{exercise} Do so. \end{exercise} 


There is a duality between base normed and order-unit spaces. If $(\E,u)$ is an OUS, its (continuous) dual space $\E^{\ast}$, ordered in the usual way, is a {\em complete} BNS with base given by $\Omega = \{ \alpha \in \V^{\ast}_{+} | \alpha(u) = 1\}$. 
Elements of $\Omega$ are called {\em states}, and $\Omega$ is 
$\E$'s {\em state space}.  Conversely, if $(\V,\Omega)$ is a 
base normed space, its dual is a complete OUS, with order unit 
the unique linear functional that is identically $1$ on $\Omega$.  Some further details on these matters are collected in Appendix B; see also \cite{Alfsen, AS}. 

\tempout{
{\bf Lemma:} {\em Any bounded affine mapping $f : K \rightarrow \W$, where $\W$ is a normed linear space, extends uniquely to a positive linear mapping $\overline{f} : \V(K) \rightarrow \W$.} 

{\em Proof:} Let $\Aff(K)$ be the space of bounded affine mappings $f : K \rightarrow \R$. Since $K$ is, {\em ab initio}, a subset of a vector space, say $\mathbb U$, every linear functional on $\U$ defines an affine functional on $K$, so there is no shortage of these. In fact, there are enough to separate points of $K$. Thus, we can embed $K$ in $\Aff(K)^{ast}$ 
(explain). Now $f^{\ast} : \W^{\ast} \rightarrow \Aff(K)$ is 
bounded (say why) and, dualizing again, ....

\begin{example}  For a finite-dimensional quantum system with Hilbert space $\H$, 
$\V = \L_{s}(\H)$, ordered as usual, and  $u = \Tr( \cdot )$. States are density operators. 
Identifying $\V^{\ast}$ with $\V$ (using trace duality), $u$ becomes $\1$ and effects
are positive operators between $\0$ and $\1$.\end{example}

\begin{exercise} Show that if $K$ is a cone-base, 
then (a) $0 \not \in K$, and (b) $\V_{+}(K)$ is pointed. 
\end{exercise} 

We can now define an ordered space $\V(K) = \V_{+}(K) - \V_{+}(K) = \spn(\V_{+}(K)) \leq \W$. 
}

\tempout{
{\bf Lemma:} {\em Any bounded affine mapping $f : K \rightarrow \W$, where $\W$ is a normed linear space, extends uniquely to a positive linear mapping $\overline{f} : \V(K) \rightarrow \W$.} 

{\em Proof:} Let $\Aff(K)$ be the space of bounded affine mappings $f : K \rightarrow \R$. Since $K$ is, {\em ab initio}, a subset of a vector space, say $\mathbb U$, every linear functional on $\U$ defines an affine functional on $K$, so there is no shortage of these. In fact, there are enough to separate points of $K$. Thus, we can embed $K$ in $\Aff(K)^{ast}$ 
(explain). Now $f^{\ast} : \W^{\ast} \rightarrow \Aff(K)$ is 
bounded (say why) and, dualizing again, ....
}




\subsection{Ordered vector spaces and probabilistic models}

As mentioned above, every order unit space $(\E,u)$ gives rise to a probabilistic model in a natural way.  An {\em effect} in $\E$ is any element $a \in \E_{+}$ with $a \leq u$. We write $[0,u]$ for the {\em interval} of all effects. A {\em partition of the unit} in $\E$ is any finite set $E$ of {\em non-zero} effects with $\sum_{a \in E} a = u$.  Let $\D(\E)$ denote the set of all such partitions of unity, and think of this as a test space. Recall that a {\em state} on $\E$ is any positive linear functional $f \in \E'$ with $f(u) = 1$.  Clearly, any state when restricted to $(0,u]$, defines a probability weight on $\D(\E)$. Eliding the distinction between a functional on $\V$ and its restriction to $(0,u]$, we may take $\Omega(\E)$ to be the set of normalized positive functionals, i.e., $\E$'s state-space, and 
this gives us the advertised probabilistic model $(\D(\E), \Omega(\E))$. 


Conversely, let $A$ be a probabilistic model. If 
$\Omega(A)$ is compact in some linear topology --- in particular, 
if $\M(A)$ is locally finite --- then $\V(A)$ is a complete 
BNS.  But in fact, this is true much more genally. 

\begin{theorem} Let $A$ be a probabilistic model for 
which $\Omega(A)$ is uniformly event-wise closed. Then 
$\V(A)$ is a complete BNS.
\end{theorem}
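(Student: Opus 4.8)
The plan is to check first that $(\V(A),\Omega)$ genuinely is a base-normed space, and then to prove completeness by showing that a base-norm Cauchy sequence converges, with the hypothesis entering at exactly one point. I read ``$\Omega(A)$ is uniformly event-wise closed'' as: $\Omega$ is closed under uniform convergence on events, i.e.\ in the seminorm $\alpha\mapsto\sup_{a\in\Ev(\M)}|\alpha(a)|$. First I would record that for any test $E\in\M(A)$ the total-weight functional $u(x):=\sum_{x'\in E}x(x')$ is well defined and linear on $\V(A)=\V_{+}(\Omega)-\V_{+}(\Omega)$ (on $t\alpha$ it returns $t$, so it converges and is independent of $E$) and satisfies $u\equiv 1$ on $\Omega$. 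Applying $u$ to $t\alpha=s\beta$ forces $t=s$ and then $\alpha=\beta$, so $\Omega$ is a base for its cone and $\|\cdot\|_{B}$ is at least a seminorm. It is a norm because every $\gamma\in\Omega$ satisfies $0\le\gamma\le 1$ pointwise, so any representation $x=s\alpha-r\beta$ gives $\|x\|_{\infty}\le s+r$ and hence $\|x\|_{\infty}\le\|x\|_{B}$; since elements of $\V(A)\subseteq\R^{X}$ are literally functions on $X$, $\|x\|_{B}=0$ forces $x\equiv 0$.

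For completeness I would take a base-norm Cauchy sequence, pass to a subsequence $(x_{k})$ with $\|x_{k+1}-x_{k}\|_{B}<2^{-k}$, and choose near-optimal representations $x_{k+1}-x_{k}=s_{k}\gamma_{k}-r_{k}\delta_{k}$ with $\gamma_{k},\delta_{k}\in\Omega$ and $s_{k}+r_{k}<2^{-k}$. Then $x_{k}=x_{1}+P_{k}-Q_{k}$ where $P_{k}=\sum_{j<k}s_{j}\gamma_{j}$ and $Q_{k}=\sum_{j<k}r_{j}\delta_{j}$ lie in $\V_{+}(\Omega)$ (a finite nonnegative combination of states is a positive multiple of their convex average). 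Since $\sum_{j}s_{j},\sum_{j}r_{j}<\infty$ and $\|\gamma_{j}\|_{\infty},\|\delta_{j}\|_{\infty}\le 1$, the partial sums $P_{k},Q_{k}$ converge uniformly on $X$ to nonnegative functions $p,q$.

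The crux is to show $p\in\V_{+}(\Omega)$ (and likewise $q$). With $S=\sum_{j}s_{j}$ (the case $S=0$ being trivial) the normalized partial sums $P_{k}/\sigma_{k}$, where $\sigma_{k}=\sum_{j<k}s_{j}$, are convex combinations of states, hence lie in $\Omega$. I claim they converge to $p/S$ \emph{uniformly over events}. For each event $a$ the numbers $P_{k}(a)=\sum_{j<k}s_{j}\gamma_{j}(a)$ increase to $p(a)$ with tails at most $\sum_{j\ge k}s_{j}$ uniformly in $a$; here passing $p(a)=\sum_{x\in a}p(x)=\lim_{k}P_{k}(a)$ for a possibly infinite test uses the monotone convergence theorem for series. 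A short estimate then gives $\sup_{a}\bigl|\tfrac{P_{k}(a)}{\sigma_{k}}-\tfrac{p(a)}{S}\bigr|\le\tfrac{S-\sigma_{k}}{\sigma_{k}}+\tfrac{1}{S}\sum_{j\ge k}s_{j}\to 0$. This is event-uniform convergence of members of $\Omega$, so the event-wise closure hypothesis yields $p/S\in\Omega$ and therefore $p\in\V_{+}(\Omega)$.

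Finally set $x:=x_{1}+p-q\in\V(A)$. Each tail $p-P_{k}$ lies in $\V_{+}(\Omega)$ by the same argument, and for any $v\in\V_{+}(\Omega)$ one has $\|v\|_{B}=u(v)$ (the representation $v=t\gamma$ is optimal and $|u(\cdot)|\le\|\cdot\|_{B}$). Hence $\|x-x_{k}\|_{B}\le\|p-P_{k}\|_{B}+\|q-Q_{k}\|_{B}=u(p-P_{k})+u(q-Q_{k})=\sum_{j\ge k}s_{j}+\sum_{j\ge k}r_{j}\to 0$, so the subsequence converges to $x$ in base norm, and being Cauchy the whole sequence does too. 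The main obstacle is exactly the crux step: one must control the return of the normalized partial sums to $\Omega$, which forces working in the event-uniform topology rather than the coarser outcome-wise sup norm, and handling infinite tests by monotone convergence. This is precisely what the ``event-wise'' strengthening of the closure assumption is there to supply.
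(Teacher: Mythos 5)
Your proof is correct, and it takes a genuinely different route from the paper's. The paper proceeds modularly: it first embeds $\V(A)$ in the space $\W$ of \emph{all} bounded weights on $\M(A)$, proves $(\W,\|\cdot\|_1)$ is complete (Cook's theorem), then invokes a Klee-type proposition --- a closed cone $K$ in a Banach space makes $K-K$ complete under the Minkowski functional of $(B\cap K)-(B\cap K)$ --- to settle the full model $\Omega=\Pr(\M)$, and finally descends to an arbitrary closed convex $\Omega$ via a closed-subcone lemma. You never construct $\W$ at all: you run the rapid-subsequence/telescoping-series device (the same device that appears \emph{inside} the paper's proof of its Klee proposition) directly on a base-norm Cauchy sequence in $\V(A)$, and you replace the appeal to ``closed cone in a complete ambient space'' by a direct verification that the limit $p$ of the positive-part series lies in $\V_{+}(\Omega)$: the normalized partial sums $P_k/\sigma_k$ are convex combinations of states, hence states, and converge event-uniformly to $p/S$, so the closure hypothesis puts $p/S\in\Omega$. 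This fuses into one step what the paper distributes over its Klee proposition and its subcone lemma (whose proof likewise rests on a normalization trick), and your observation that $\|v\|_{B}=u(v)$ on the cone then closes the estimate cleanly. The trade-off: the paper's route yields reusable infrastructure --- completeness of $\W$ (of independent interest, and extending to vector-valued weights), the equivalence of the variation and event-sup norms, and a subcone lemma valid in any complete BNS --- while your argument is shorter, self-contained, and makes it completely transparent that the hypothesis (convexity plus event-uniform closure of $\Omega$) is used exactly once, to certify that the normalized limits are states.
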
 

It isn't entirely trivial to extract this from the literature, so a proof is given in Appendix C. The case in which 
$A$ is full (i.e,. $\Omega(A)$ consists of all probability 
weights on $\M(A)$) is due to Cook \cite{Cook}. 

As discussed above, $\V(A)^{\ast}$ is an order-unit space. 
We have a natural mapping 
\[X(A) \mapsto \V(A)^{\ast}\]
sending $x \in X(A)$ to the evaluation functional $\hat{x} : \alpha \mapsto \alpha(x)$.  This is an {\em effect-valued weight}, in that 
\[\sum_{x \in E} \hat{x} = u\]
for every $E \in \M(A)$. Moreover, if $\Omega(A)$ is large enough 
to separate outcomes in $X(A)$, the mapping $x \mapsto \hat{x}$ is injective. In other words, we have an embedding 
of $A$ into the model $\D(\V^{\ast},\Omega)$. 


{\bf No-restriction hypotheses} 
The model  $(\D(\V^{\ast}),\Omega)$ has the same states as $A$, but a great many more outcomes and tests. The question arises: are all of these extra outcomes (effects in $[0,u]$ that are not of the form $\hat{a}$ for any event $a \in \Ev(A)$) realizable in practice? 

Certainly some of them are. For example, suppose $E, F$ are two tests in $\M(A)$ of the same size. Let $f : E \rightarrow F$ be a bijection matching up their outcomes in some way. Flip a coin and choose to measure $E$ or $F$, depending on whether you get heads or tails. For any paired outcomes $x$ and $y = f(x)$ in $E$ and $F$, if the system is in state $\alpha$, you'll obtain $x$ with probability $\alpha(x)/2$, and $y$ with probability $\alpha(x)/2$. The effect $\frac{1}{2}(\hat{x} + \hat{f(x)})$ can be realized operationally in this way, and the collection of these, as $x$ ranges over $E$, is a test in 
$\D(\V^{\ast})$. Other convex combinations of outcomes associated with various events of $A$ are achievable in a similar way. 

 In general, however, the convex hull of the event-effects $\hat{a}$, $a \in \Ev(A)$, is not equal to the full effect interval $[0,u]$. The question of which effects in $[0,u]$, and which tests from $\D(\V^{\ast})$, should count as "physical" does not have a very clean answer. It is mathematically convenient to admit them all --- this is called the {\em no restriction} hypothesis --- but as far as I know, no one has ever proposed a good physical or operational justification for doing so.  
 
 {\em Remark:} Even granted that all effects are operationally admissible, it is not obvious that all decompositions of the unit in $\D(\V^{\ast})$ correspond to legitimate experiments.  The assumption  that this is so can be called the {\em no restriction hypothesis for tests}. 

{\bf Sub-normalized States and Processes} A {\em sub-normalized state} of a probabilistic model $A$ is an element of $\V(A)$ 
having the form $p \alpha$ where $p \in [0,1]$ and $\alpha \in \Omega(A)$.  Let $\nabla(A)$ denote the set of sub-normalized states of $A$, that is,, 
\[\nabla(A) = \con(\Omega(A) \cup \{0\}).\]
We can actually represent $\nabla(A)$ as the space of {\em normalized} states of a model $A_{\ast}$, by adjoining a common "failure outcome" to all the tests in $\M(A)$. 
Formally, let $\ast$ be any symbol not belonging to $X(A)$, and define 
\[\M(A_{\ast}) = \{ E \cup \{\ast\} | E \in \M(A)\}\]
It is easy to see that every probability weight $\beta$  on $\M(A_{\ast})$ has a unique decomposition 
\begin{equation} \beta = p \beta_{1} + (1 - p)\delta_{\ast}\end{equation} 
where $\delta_{\ast}$ is the unique state with $\delta_{\ast}(\ast) = 1$, and $\beta$ is a state with $\beta_1(\ast) = 0$.  Reading $\ast$ as, say, "system failure", the coefficient $p$ in (\theequation) is the probability that the system {\em does not} fail (e.g., is not destroyed). Note here that 
$1-p = \beta(\ast)$, that is, the probability of observing the "failure" outcome.  

{\gray {\em  Remark:} Since the decomposition above is unique, we can consistently interpret this as the probability for the system to be (or to end up) in the failure state, $\delta_{\ast}$. If we adopt this interpretation, then once the system is in state $\delta_{\ast}$, it remains there.  But keep in mind that this is an additional dynamical assumption, not enforced by the formalism.}

To complete the description of $A_{\ast}$, define $\Omega(A_{\ast})$ to be the set of probability weights $\beta$ on $\M(A_{\ast})$ with $\beta_1 \in \Omega(A)$. 
It is clear that $\Omega(A_{\ast})$ is canonically isomorphic to $\nabla(A)$. 

\begin{definition} A {\em process} or {\em channel} from a model $A$ to a model $B$ 
is a positive linear mapping $\Phi : \V(A) \rightarrow \V(B)$ such that 
$u_{B}(\Phi(\alpha)) \leq 1$ for all $\alpha \in \Omega(A)$.  Equivalently, 
the dual mapping $\Phi^{\ast} : \V(B)^{\ast} \rightarrow \V(A)^{\ast}$ takes 
effects to effects. 
\end{definition}

In the case of two quantum models, associated with Hilbert spaces $\H$ and $\K$, a process amounts to a positive, trace-nonincreasing map.  


The requirement that $u_{B}(\phi(\alpha)) \leq 1$ for all states $\alpha \in \Omega(A)$ 
tells us that $\phi(\Omega(A)) \subseteq \nabla(B) \simeq \Omega(B_{\ast})$. 
{ For every $\alpha \in \Omega(A)$, 
let  $\beta = \phi(\alpha)$ have the decomposition 
\[\Phi(\alpha) = \beta = p \beta_{1} + (1 - p) \delta_{\ast}\]
as above. Then $u_{B}(\Phi(\alpha)) = p u_{B}(\beta_{1}) = p$.  That is, $u_{B}(\Phi(\alpha))$ is the probability that we will not see the failure outcome 
in the ouput state $\Phi(\alpha)$ --- or, on the interpretation discussed above, 
that the output state {\em is not} the failure state $\delta_{\ast}$. On either 
interpretation, we say that the  
 process $\phi$ {\em succeeds}, or {\em occurs}, with probability $p = u_{B}(\Phi(\alpha))$ when the input state is $\alpha$. }
 
 {\gray {\em Remark:} Of course, in quantum theory it is standard to require channels to be 
not merely positive, but completely positive. This concept becomes available once we 
have made the choice of a rule for composing models. We will return to this issue in the
Chapter 4. For the moment, we make the following observation (some assembly required): 

\begin{exercise} Show that if $\Phi : \V(A) \rightarrow \V(B)$ is a channel, then so is 
$\Phi \otimes \1 : \V(A) \maxtensor \V(C) \rightarrow \V(B) \maxtensor \V(C)$ for 
any model $C$. \end{exercise} 
}

\subsection{Effect Algebras} 

The interval $[0,u]$ in an order unit space is the prime example of an structure called an {\em effect algebra} \cite{FB} that significantly generalizes the concept of an orthoalgebra. The axioms are the same, with one exception: 

\begin{definition}\label{def: EA} An {\em effect algebra} is a structure $(L,\perp,\oplus, 0,1)$ where $\perp \subseteq L \times L$ is symmetric binary relation, $\oplus : \perp \rightarrow L$ is a 
partially-defined binary operation on $L$, and $0,1 \in L$, 
such that 
\begin{mlist} 
\item[(i)] $p \perp q \Rightarrow p \oplus q = q \oplus p$; 
\item[(ii)] $p \perp q$ and $(p \oplus q) \perp r$ imply 
$q \perp r$, $p \perp (q \oplus r)$, and 
\[(p \oplus q) \oplus r = p \oplus (q \oplus r)\]
\item[(iii)] For all $p \in L$ $\exists! p' \in L$ with 
$p \oplus p' = 1$
\item[(iv)] $p \perp 1 \Rightarrow p = 0$
\end{mlist} 
Notice that there is no prohibition here against an element 
being self-orthogonal. 
\end{definition} 

In the case of the order interval $[0,u]$ in an OUS, the effect-algebra structure is given buy $a \perp b$ iff $a + b \leq u$, 
in which case $a \oplus b = a + b$. (There are, however, 
effect algebras not of this form.) 

{\gray {\em Remark:}  A particularly important special case is the unit interval $[0,1] \subseteq \R$! Note that if 
$0 \leq t, s$ and $t + s \leq 1$, then for any $a \in [0,u] \subseteq \E$, $ta \perp sa$ and $ta \oplus sa = (t + s)a$. 
Thus, effect algebras of the particular form $[0,u]$ are 
in some sense "modules" over $[0,1]$.}

A {\em state} on an effect algebra is a function 
$\alpha : L \rightarrow \R$ with $\alpha(p) \geq 0$ for 
all $p \in L$, $\alpha(p \oplus q) = \alpha(p) + \alpha(q)$ 
whenever $p \perp q$, and $\alpha(1) = 1$. If 
$L = [0,u_{A}]$ for a probabilistic model $L$, every 
state in $\Omega(A)$ defines a state in this sense, but 
in general, there may be states on $L$ not arising from those in $\Omega(A)$. 

Joint orthogonality is defined exactly as for orthoalgebras, and just as for orthoalgebras, we write $\D(L)$ for the set of 
orthopartitions of the unit in $L$: this is a test space, but 
unless $L$ is an orthoalgebra, it is not algebraic.

\tempout
{\gray 
\subsection{Digression on continuous mixtures}

For many purposes, it's convenient to consider mixtures 
of families of states indexed by some continuous, or more generally, measurable, parameter. Likewise, one often wants to 
consider observables with values in a topological or measureable space. Here, I'll briefly indicate how both can be done. 

\begin{definition}\label{weakly integrable}  Let $\V$ be a vector space with a locally convex Hausdorff topology, and let $(S,\Sigma,\mu)$ be a 
measure space. A function $\phi : S \rightarrow \V$ is 
{\em weakly integrable} with respect to a set of functionals $\F \subseteq \V^{\ast}$ iff for every $f \in \F$, (i) 
$f \circ \phi$ is integrable with respect to $\mu$, and (ii)
there exists a vector 
$v \in \V$ such that 
\[f(v) = \int_{S} f(\phi(s)) d\mu\]
When this is the case, $v$ is unique, and we write 
\[v \ = \ \int_{S} \phi(s) d\mu.\]

The application that most interests us is where 
$\V = \V(A)$ and $\F$ is either the set of functionals $\hat{x}$ associated with outcomes $x \in X(A)$, or 
the set of functionals $\hat{a}$ associated with 
events $a \in \Ev(A)$. If $A$ is locally finite, 
it's enough to assume weak integrability with respect 
to the former. In that case, we have 
\[(\int \phi(s) ds)(x) = \int_{S} \phi(s)(x) d\mu\]
for every $x \in X(A)$.  

If $\mu$ is a probability measure, then we can regard 
$\int \phi(s) d\mu$ as a generalized convex combination 
of the vectors $\phi(s)$. Suppose $\phi(s) \in \Omega(A)$ 
for all $s \in S$. Suppose $\Pr(\M)$ is closed, hence compact, 
in the product topology on $[0,1]^{X}$ (as it is, e.g., when 
$\M$ is locally finite). Then 
if $E \in \M(A)$ we have 
\[\sum_{x \in E} \left ( \int_{S} \phi(s) d\mu \right )(x) 
\ = \ \int_{S} (\sum_{x \in E} \phi(s)(x)) d\mu 
\ = \ \int_{S} 1 d\mu \ = \ 1.\]
It follows that $\int \phi(s) d\mu$ is a probability weight 
on $\M(A)$. If $\Omega(A)$ is closed, which we generally assume, then one can show that $\int_{S} \phi(s) d\mu \in \Omega(A)$. See \cite{Alfsen} for details. 

\tempout{A weakly integrable function $\phi : S \rightarrow \V^{\ast}$ 
is {\em Pettis integrable} iff $\chi_{A} \otimes \phi$ 
is weakly integrable for every $A \in \Sigma$. This means 
that }
}

\subsection{Linearization and Sequential Measurement}

A limitation of the linear framework is that it doesn't play well with sequential tests. As we already know, if $a, b \in \Ev(A)$ with $a \sim b$, $\alpha(a) = \alpha(b)$ 
for all states $\alpha$ of $A$. However, if we afterwards perform 
an experiment on the same or another system $B$, and $c \in \Ev(B)$, then $ac \not \sim bc$, in general. Thus, we can't expect $\V(\for{AB})$ to depend straightforwardly on $\V(A)$ and $\V(B)$. Rather, just as with the logic, the $\V(\for{AB})$ depends 
only on $\V(B)$, but on the detailed test-space structure of 
$A$, that is, on $\M(A)$.  We can linearize $\omega \in \Omega(\for{AB})$ in the second 
argument, since we have $\hat{\omega} : X(A) \rightarrow \V(B)$, but even where 
$\Omega(A)$ separates points of $X(A)$, so that we can effectively replace 
$x \in X(A)$ by $\hat{x}$, the dependence 
of $\hat{\omega}(x) = \alpha(x)\beta_{x}$ on $\hat{x}$ is not linear in general. 

\tempout{
\begin{definition}\label{def: vector-valued weight} If $\M$ is a test space, a {\em vector-valued weight} on $\M$ with value in a topological vector space $\W$, is a 
function $\omega : X := \bigcup \M \rightarrow \W$ such that
(i) for 
all $E \in \M$, $\sum_{x \in E} \omega(x)$ converges in $\W$, and (ii) for all $E, F \in \M$, 
\[\sum_{x \in E} \omega(x) = \sum_{y \in F} \omega(y).\]
\end{definition} 

\begin{definition} Let $\W$ be a normed linear space. A function $f : X = \bigcup \M \rightarrow \W$ is {\em summable} iff for every test $E \in \M$, $\sum_{x \in E} f(x)$ 
converges in norm to an element of $\W$. 
\end{definition} 

\begin{proposition}\label{prop: weights on TP as vector valeud weights} $\V(\for{AB})_{+}$ is isomorphic to the space of positive summable $\V(B)$-valued functions on $A$, {\blue where $\V(B)$ is understood as a normed space.}
\end{proposition} 

{\blue For a proof, see Appendix B.} 
}

It's natural to wonder 
about states on $\for{AB}$ that {\em do} linearize in the first argument. The basic requirement is that such a state should satisfy condition (a) in the following. Recall that 
if $a \in \Ev(A)$, $\hat{a} \in \V(A)^{\ast}$ is defined by $\hat{a}(\alpha) = \alpha(a)$. 

\begin{proposition}\label{prop: linearizing in first argument} Let $\omega$ be a state of $\for{AB}$. 
Consider the following statements:
\begin{mlist} 
\item[(a)] For all $a,b \in \Ev(A)$,  
\[\hat{a} = \hat{b} \ \Rightarrow \ \omega(a,c) = \omega(b,c)\] for all $c \in \Ev(B)$; 
\item[(b)] $\omega(E,y) = \omega(F,y)$ for all $E, F \in \M(A)$ and $y \in X(B)$; 
\item[(c)] $a \sim b$ implies $\omega(a,y) = \omega(b,y)$ for all $a, b \in \Ev(A)$ and $y \in X(B)$.  
\end{mlist} 
Then (a) implies (b) and (c), and the latter two are equivalent. 
\end{proposition}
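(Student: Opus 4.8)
The plan is to lean on two facts established earlier in the notes. First, perspective events are equiprobable in every state, so $a \sim b$ implies $\hat a = \hat b$. Second, any two tests are perspective (both being complementary to $\emptyset$), and more generally $a \sim b$ means precisely that $a$ and $b$ admit a common complement. Throughout I use that, for a fixed second-stage outcome $y \in X(B)$, the map $a \mapsto \omega(a,y) = \sum_{x \in a}\omega(x,y)$ is finitely additive over disjoint unions of $A$-events, since $\omega(x,y) = \alpha(x)\beta_x(y) \geq 0$. The logical skeleton is that (a) is the strongest hypothesis and yields (b) and (c) by direct specialization, while the equivalence of (b) and (c) is obtained by unwinding the definition of perspectivity.

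For (a) $\Rightarrow$ (b): given tests $E, F \in \M(A)$, every state has total probability $1$ on each, so $\hat E = \hat F$. Hypothesis (a) then gives $\omega(E,c) = \omega(F,c)$ for every $c \in \Ev(B)$, and taking $c = \{y\}$ yields (b). (Specializing to a singleton is legitimate because every outcome lies in some test, so $\{y\} \in \Ev(B)$.) For (a) $\Rightarrow$ (c): if $a \sim b$ then $\hat a = \hat b$, so (a) gives $\omega(a,c) = \omega(b,c)$ for all $c$, and once more the choice $c = \{y\}$ delivers (c).

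The equivalence (b) $\Leftrightarrow$ (c) is where the (modest) work lives. For (c) $\Rightarrow$ (b), observe that any two tests $E, F$ are perspective, so (c) applied to $a = E$, $b = F$ immediately gives $\omega(E,y) = \omega(F,y)$. For the converse (b) $\Rightarrow$ (c), suppose $a \sim b$ and pick a common complement $d$, i.e. $a \co d \co b$, so that $a \cup d = E$ and $d \cup b = F$ are tests with both unions disjoint. Finite additivity in the first slot gives $\omega(a,y) = \omega(E,y) - \omega(d,y)$ and $\omega(b,y) = \omega(F,y) - \omega(d,y)$; subtracting and invoking (b), which forces $\omega(E,y) = \omega(F,y)$, yields $\omega(a,y) = \omega(b,y)$, i.e. (c).

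The only step that requires more than reading off definitions is (b) $\Rightarrow$ (c), and even there the sole idea is to realize $a$ and $b$ as complements of one and the same event $d$ inside two tests, after which additivity does everything. I anticipate no genuine obstacle; the main thing to keep straight is the bookkeeping between events of $A$ and single outcomes $y$ of $B$, together with the observation that all three conditions implicitly use additivity of $\omega(\,\cdot\,,y)$ in its first argument.
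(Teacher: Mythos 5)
Your proof is correct and follows essentially the same route as the paper's: (a) $\Rightarrow$ (b) via $\hat{E} = \hat{F}$ for tests, and (b) $\Leftrightarrow$ (c) by writing $a \cup d$ and $d \cup b$ as tests sharing the event $d$ and using additivity of $\omega(\,\cdot\,,y)$ in the first slot. The only cosmetic difference is that you prove (a) $\Rightarrow$ (c) directly (via $a \sim b \Rightarrow \hat{a} = \hat{b}$), whereas the paper gets it by composing (a) $\Rightarrow$ (b) with (b) $\Rightarrow$ (c).
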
 

{\em Proof:} (a) immediately implies (b), since $\hat{E} = \hat{F}$ for 
any $E, F \in \M(A)$. (b) implies (c), since if 
$a \co c$ and $c \co b$, we can let $E = a \cup c$, $F = c \cup b$, and then 
\[\omega(a,y) + \omega(c,y) 
= \omega(E,y) = \omega(F,y) = \omega(c,y) + \omega(b,y),\]
whence $\omega(a,y) = \omega(b,y)$. 
Reversing this argument shows that (c) implies (b). $\Box$ 

Condition (b) tells us that $\omega$ has a well-defined 
marginal state on $B$, i.e., that the probability of observing 
$y \in X(B)$ is independent of which measurement we make on 
$A$. This is a kind of "no signaling from the past" requirement, 
and is obviously very restrictive. However, when 
$A$ and $B$ are not thought of as causally connected, 
with $A$ "earlier" than $B$, but rather as causally disconnected, 
perhaps spatially widely separated, the idea that there should 
be no signaling between $A$ and $B$ in either direction becomes 
very attractive. We'll explore idea further in the next Section, 
where we'll discuss "non-signaling" composite systems in some detail.



\section{Joint Probabilities and Composite Systems} 
\epigraph{\fontsize{8}{9} Our experience with the
everyday world leads us to believe that ... a state of the joint system is just an ordered pair
of states of its parts. .... One of the more shocking discoveries of the twentieth century is that this is {\em \fontsize{11}{12}\selectfont wrong}. }{J. Baez \cite{Baez-QQ}}

By a composite system, I mean a collection of two or more systems, taken together as a single unit.  In quantum theory, composite systems are constructed using the tensor product of Hilbert spaces. In classical probability theory, one uses the tensor product of Boolean algebras --- in the simplest case, this  amounts to the Cartesian product of sets. In this section, we'll discuss how one might model a composite of two arbitrary systems, 
focusing on the idea that states for such a composite model should be, or at any rate should give rise to, joint probabilities for events associated with each of the two systems. 

We will also generally impose a constraint called {\em no-signaling}, which in brief is the principle that the choice of which experiment is performed on one system should have no effect on the {\em probability} of obtaining a given outcome on the other. Both classical and quantum-mechanical composites obey this non-signaling principle. More or less generically, non-signaling composites of non-classical systems turn out to support 
analogues of entangled quantum states, enjoying many of the same properties. It was this observation, more than anything else, that sparked the widespread interest in GPTs, starting with Barrett's paper \cite{Barrett}.\footnote{though it had been pointed out earlier by Kl\"{a}y \cite{Klay} in the early 1980s, prior to the advent of quantum information theory. Needless to say, Kl\"{a}y's paper was largely ignored at the time.}

However, as we'll see, there is generally no one single non-signaling composite of two models. Rather, the choice of such a composite is part of what goes into building a probabilistic theory.

\subsection{Joint probability weights and the no-signaling property}

Suppose $\A$ and $\B$ are test spaces, with outcome-spaces 
$X = \bigcup \A$ and $Y = \bigcup \B$, respectively.  A {\em joint probability weight} on $\A$ and $\B$ is a function 
\[\omega : X \times Y \rightarrow \R\]
such that, for all tests $E \in \A$ and $F \in \B$, 
\[\sum_{(x,y) \in E \times F} \omega(x,y) = 1.\]
In other words, as restricted to $E \times F$, $\omega$ is a joint probability weight in the usual sense.  

We can think of $E \times F$ as the outcome-set for an experiment in which one party, Alice, performs $E$ and another, Bob, performs  $F$, and they later collate their results.  Call this a {\em product experiment} or {\em product test}. The collection of all of these is a test space, denoted (with some abuse of notation) 
$\A \times \B$. In other words, 
\[\A \times \B \ = \ \{ E \times F \ | \ E \in \A, F \in \B\}.\]
{\em Joint probability weights} are simply probability weights 
on $\A \times \B$. 

Given a joint probability weight $\omega$, a test $E \in \A$, and an outcome $y \in Y$, we can define the {\em marginal probability} of $y$ with respect to $\omega$ and $E$ by 
\[\omega_{2|E}(y) = \omega(Ey) = \sum_{x \in E} \omega(x,y).\]
It's easy to see that this must sum to $1$ over every test 
$F \in \B$, so it defines a probability weight $\omega_{2|E}$ on $\B$. Marginals $\omega_{1|F} \in \Pr(\A)$ are defined similarly.

\begin{definition}\label{def: joint state} {\em A joint state on models $A$ and $B$ is a joint state on $\M(A) \times \M(B)$ such that the marginals  $\omega_{1|E}$ and $\omega_{2|F}$ belong to $\Omega(A)$ and $\Omega(B)$, respectively, for all tests $E \in \M(A)$ and $F \in \M(B)$. }
\end{definition} 

In general, the marginals $\omega_{2,E}$ and 
$\omega_{1,F}$ of a joint state will very much depend on the choice of the tests $E \in \A$ and $F \in \B$, so that Alice's {\em choice} of which test to perform will influence the probabilities of Bob's outcomes. In this situation, 
Alice can send (possibly very noisy) {\em signals} to Bob, modulated by her different choices of $E \in \M(A)$.

If Alice and Bob occupy space-like separated locations (that is, if they are constrained to perform their experiments outside of each other's  light cones), then this sort of signaling should not be possible.

\begin{definition}[\cite{FR-TP}]\label{def: no signaling} A probability weight $\omega$ on 
$\A \times \B$ {\em allows no signaling}, or {\em exhibits no influence}, from $\A$ to $\B$ iff $\omega(Ey) = \omega(E'y)$ for all tests $E, E' \in \A$. Similarly, 
$\omega$ exhibits no influence from $B$ to $A$ iff 
$\omega(xF) = \omega(xF')$ for all $F, F' \in \B$. 
If $\omega$ exhibits no influence in either direction, we say that it is {\em influence-free} or 
{\em non-signaling} (NS). 
\end{definition} 

We've seen this behavior before. Recall from 
Section 2 that 
\[\for{\A\B} = \left \{ \bigcup_{x \in E} \{x\} \times F_{x} | E \in \A, F \in \B^{E}\right \}.\]
Define 
\[\back{\A\B} = \sigma(\for{\A,\B})\]
where $\sigma : X \times Y \rightarrow Y \times X$ is the mapping 
$\sigma(x,y) = (y,x)$. 
Finally, let $\bilat{\A\B} = \for{\A\B} \cup \back{\A\B}$. Note that all three of these 
test spaces contain $\A \times \B$, and all three 
have total outcome-set $X \times Y$.  It follows 
that 
\[\Pr(\bilat{\A\B}) = \Pr(\for{\A\B} \cup \back{\A\B}) = \Pr(\for{\A\B}) \cap \Pr(\back{\A\B}).\]
Applying what we learned earlier about probability 
weights on $\for{\A\B}$, we have

\begin{lemma}\label{influence and classical communication}
A probability weight $\omega$ on $\A \times \B$ exhibits no influence from $B$ to $A$ iff $\omega \in \Pr(\for{\A\B})$, no influence from $\A$ to $\B$ iff 
$\omega \in \Pr(\back{\A\B})$, and is no-signaling iff $\omega \in \Pr(\for{\A\B} \cup \back{\A\B}) = \Pr(\bilat{\A\B})$.
\end{lemma}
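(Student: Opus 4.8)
The plan is to reduce all three claims to a single biconditional about the forward product, and then obtain the backward and two-sided statements by a symmetry argument together with the set-theoretic remark (already recorded in the text preceding the lemma) that $\Pr$ of a union of test spaces sharing one outcome-set is the intersection of the $\Pr$'s. The core assertion to establish is
\[
\omega \in \Pr(\for{\A\B}) \ \Longleftrightarrow \ \omega \text{ is a joint weight on } \A\times\B \text{ with no influence from } B \text{ to } A.
\]

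For the forward ($\Rightarrow$) direction I would first observe that $\A\times\B \subseteq \for{\A\B}$ (take the transition function $x \mapsto F_x$ constantly equal to a fixed $F$, so that $\bigcup_{x\in E}\{x\}\times F = E\times F$), whence any $\omega \in \Pr(\for{\A\B})$ restricts to a joint probability weight. To extract no-influence I would invoke the factorization of forward-product weights recalled in Section 2: every $\omega \in \Pr(\for{\A\B})$ has the form $\omega(x,y) = \alpha(x)\beta_x(y)$ with $\alpha \in \Pr(\A)$ and each $\beta_x \in \Pr(\B)$, so $\omega(xF) = \sum_{y\in F}\alpha(x)\beta_x(y) = \alpha(x)$ is manifestly independent of $F$. (One can sidestep the factorization by comparing the two forward tests $E\times F'$ and $(\{x\}\times F)\cup\bigcup_{x'\in E,\,x'\neq x}\{x'\}\times F'$, each summing to $1$; subtracting gives $\omega(xF)=\omega(xF')$.)

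For the converse ($\Leftarrow$) I would take a joint weight $\omega$ with no influence from $B$ to $A$ and set $\alpha(x) := \omega(xF)$, which is well-defined precisely because of no-influence. The load-bearing computation is that a generic forward test $G = \bigcup_{x\in E}\{x\}\times F_x$ sums correctly:
\[
\sum_{(x,y)\in G}\omega(x,y) \ = \ \sum_{x\in E}\omega(xF_x) \ = \ \sum_{x\in E}\alpha(x) \ = \ \sum_{(x,y)\in E\times F}\omega(x,y) \ = \ 1,
\]
the middle equality being exactly where no-influence enters — each inner marginal $\omega(xF_x)$ collapses to $\alpha(x)$ regardless of the chosen $F_x$ — and the last step using that $\omega$ is a joint weight on $E\times F$ for any fixed $F$. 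Nonnegativity is inherited, so $\omega \in \Pr(\for{\A\B})$. This direction, and specifically the observation that the collapse of the inner sums is precisely the content of the no-signaling condition, is the one genuinely nontrivial step; everything else is bookkeeping.

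The remaining claims then follow formally. Conjugating by the swap $\sigma(x,y)=(y,x)$ and interchanging the roles of $\A$ and $\B$ turns the forward product into $\back{\A\B}$ and trades the two no-influence conditions, so the same biconditional yields: no influence from $\A$ to $\B$ iff $\omega \in \Pr(\back{\A\B})$. Finally, $\omega$ is non-signaling iff it exhibits no influence in either direction iff it lies in both $\Pr(\for{\A\B})$ and $\Pr(\back{\A\B})$; since $\for{\A\B}$, $\back{\A\B}$ and their union $\bilat{\A\B}$ all share the single outcome-set $X\times Y$, a function on $X\times Y$ sums to $1$ over every test of the union iff it does so over every test of each piece, giving $\Pr(\bilat{\A\B}) = \Pr(\for{\A\B})\cap\Pr(\back{\A\B})$, which closes the argument.
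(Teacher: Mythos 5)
Your proposal is correct and takes essentially the same route as the paper: the paper obtains the lemma by combining the factorization $(\alpha;\beta)$ of probability weights on $\for{\A\B}$ established in Section 1.3 with the identity $\Pr(\bilat{\A\B}) = \Pr(\for{\A\B}) \cap \Pr(\back{\A\B})$ noted just before the statement, and the swap symmetry for $\back{\A\B}$ --- exactly your reduction. The only difference is that you reprove the forward-product characterization inline (both via the factorization and via the direct summation argument) rather than citing it, which makes the argument more self-contained but not substantively different.
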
 

{\gray {\em Remark:} The term "signaling" here is potentially confusing. Where we associate $\A$ and $\B$ with two 
parties (Alice and Bob) at remote locations, the performance 
of a two-stage test in $\for{AB}$ generally {\em requires} some 
form of "classical" signaling from Alice to Bob, in order for 
her to communicate her measurement outcome, on the basis of 
which Bob's measurement is to be selected. The possibility 
of this classical communication in the Alice-to-Bob direction 
rules out the possibility of Bob's signaling Alice by 
means of his measurement choices alone. We should probably use a term like "measurement-signaling" to refer to this, but "signaling", without adjectives, is the accepted terminology. 
}

The non-signaling states --- those exhibiting no influence in either direction --- form a convex subset of $\Pr(\for{\A\B})$. Any non-signaling state $\omega$ has well-defined marginal probability weights, given by 
\[\omega_{1}(x) = \omega(xF) \ \ \mbox{and} \ \ \omega_{2}(x) = \omega(Ey)\]
where $E \in \A$ and $F \in \B$ can be chosen arbitrarily. Using these, we can also define {\em bipartite conditional probability weights} 
\[\omega_{2|x}(y) = \frac{\omega(x,y)}{\omega_{1}(x)} \ \  \mbox{and} \ \ 
\omega_{1|y}(x) = \frac{\omega(x,y)}{\omega_{2}(y)}.\]
It's easy to see that marginal and conditional probability weights are related by 
\begin{equation} 
\omega_{2}(y)  = \sum_{x \in E} \omega_{1}(x) \omega_{2|x}(y) 
\ \ \mbox{and} \ \ \omega_{1}(x) = \sum_{y \in F} \alpha_{2}(y) \omega_{1|y}(x),
\end{equation} 
which are bipartite versions of the {\em law of total probability}. 

\begin{definition}\label{def: joint states}  
A non-signaling joint {\em state } for models $A$ and $B$ is a non-signaling joint probability weight 
$\omega$ with conditional states $\omega_{1|y}$ 
and $\omega_{2|x}$ belonging to $\Omega(B)$ and $\Omega(A)$, respectively, for all $y \in X(B)$ and 
$x \in X(A)$. This implies the marginals 
also live in the correct state spaces, by the bipartite Law of 
Total Probability (equation (\theequation)).   
Write $\Omega_{NS}(A,B)$ for 
the set of all such non-signaling states, and 
let $A \times_{NS} B$  be the model with 
\[\M(A \times_{NS} B) = \M(A) \times \M(B) 
\ \ \mbox{and} \ \ \Omega(A \times_{NS} B) = \Omega_{NS}(A,B).\]
\end{definition} 

The model $A \times_{NS} B$ is the simplest "non-signaling composite" of the models $A$ and $B$, a term I'll define formally in Section 3. First, however, I want to explore some consequences 
of the no-signaling restriction. 

\tempout{

{\bf Exercise:} Suppose $(\alpha;\beta)$ is a probability 
weight on $\for{\A\B}$.  (a) Show that this exhibits no influence from $\B$ to $\A$.  (b) What additional conditions 
are required in order that it also exhibit no influence 
from $\A$ to $\B$? 

{\bf Exercise:} It's briefly tempting to try to 
define bipartite conditional states 
\[\omega_{1|x,F}(y) = \omega(x,y)/\omega_{1|F}(x), 
\ \omega_{2|E,y}(x) = \omega(x,y)/\omega_{2|E}(y)\] 
for a general (not necessarily non-signaling) state on 
$\A \times \B$. What goes wrong? 
}

\subsection{Entanglement} 

A {\em product} state on models $A$ and $B$ is 
one of the form 
\[(\alpha \otimes \beta)(x,y) := \alpha(x)\beta(y)\]
where $\alpha \in \Omega(A)$ and $\beta \in \Omega(B)$. Such a state is always non-signaling, as is any limit of convex combinations (mixtures) of 
such states.  Borrowing lingo from quantum theory:

\begin{definition} A joint state on models $A$ and $B$  is {\em separable} iff it belongs to the closed convex hull of the set of product states. A non-signaling joint state  that is not separable is {\em entangled}. 
\end{definition} 

As we'll see, entangled states exist abundantly in virtually any composite of non-classical models. The basic properties of entangled states in quantum mechanics are actually rather generic features of entangled joint states of non-classical probabilistic models. In particular, we have the following 

\begin{lemma}\label{lemma: pure marginals}  Let $\omega$ be any non-signaling joint state on $A \times B$, and let $\alpha \in \Omega(A)$ and $\beta \in \Omega(B)$.  Then 
\begin{mlist} 
\item[(a)]  If $\alpha \otimes \beta$ is pure, then so are $\alpha$ and $\beta$; 
\item[(b)] If either $\omega_1 \in \Omega(A)$ or $\omega_2 \in \Omega(B)$ is pure,  then $\omega = \omega_1 \otimes \omega_2$;
\item[(c)] Hence, if $\omega$ is entangled, the marginals $\omega_1$ and $\omega_2$ are mixed.  
\end{mlist} 
\end{lemma}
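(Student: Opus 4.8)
I would establish the three parts in order, since (c) is a direct consequence of (b), and (a) is an easy standalone warm-up. Throughout, ``pure'' means extreme in the relevant convex set ($\Omega_{NS}(A,B)$ for joint states, $\Omega(A)$ or $\Omega(B)$ for marginals), and I would lean on the standing assumptions that the state spaces are convex and uniformly closed.

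\emph{Part (a).} My plan is to argue by contraposition, using that $\otimes$ is injective in each argument. Suppose $\alpha$ is not pure, say $\alpha = t\alpha' + (1-t)\alpha''$ with $\alpha' \neq \alpha''$ in $\Omega(A)$ and $0 < t < 1$. Then
\[\alpha\otimes\beta = t(\alpha'\otimes\beta) + (1-t)(\alpha''\otimes\beta),\]
a convex combination of two product (hence non-signaling) states. These are distinct: if $\alpha'\otimes\beta = \alpha''\otimes\beta$, then picking any $y$ with $\beta(y) > 0$ (one exists since $\beta$ sums to $1$ over each test) forces $\alpha'(x) = \alpha''(x)$ for all $x$, a contradiction. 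So $\alpha\otimes\beta$ is not pure, and symmetrically for $\beta$.

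\emph{Part (b).} This is the crux. Assume $\omega_1$ is pure (the case of $\omega_2$ is symmetric). Fixing any test $F \in \M(B)$, the bipartite law of total probability writes the marginal as a convex combination $\omega_1 = \sum_{y\in F}\omega_2(y)\,\omega_{1|y}$ of states $\omega_{1|y} \in \Omega(A)$. The idea is to peel off a single term: for $y_0$ with $0 < \omega_2(y_0) < 1$, set $\rho := (1-\omega_2(y_0))^{-1}\bigl(\omega_1 - \omega_2(y_0)\,\omega_{1|y_0}\bigr)$, so that $\omega_1 = \omega_2(y_0)\,\omega_{1|y_0} + (1-\omega_2(y_0))\,\rho$. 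The main obstacle is verifying $\rho \in \Omega(A)$. For finite $F$ this is immediate, as $\rho$ is then a finite convex combination $(1-\omega_2(y_0))^{-1}\sum_{y\neq y_0}\omega_2(y)\,\omega_{1|y}$ of states and $\Omega(A)$ is convex. For infinite $F$ I would invoke uniform closure: the partial sums over finite subsets converge to $\rho$ in sup norm, since the tails are bounded by $(1-\omega_2(y_0))^{-1}\sum_{y\notin F'}\omega_2(y) \to 0$, and a uniform limit of states is a state. Granting $\rho \in \Omega(A)$, purity of $\omega_1$ forces $\omega_{1|y_0} = \rho = \omega_1$; the degenerate case $\omega_2(y_0) = 1$ gives $\omega_{1|y_0} = \omega_1$ outright. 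Thus $\omega_{1|y} = \omega_1$ for every $y$ with $\omega_2(y) > 0$, whence $\omega(x,y) = \omega_2(y)\,\omega_{1|y}(x) = \omega_1(x)\omega_2(y)$; and when $\omega_2(y) = 0$ we have $0 \le \omega(x,y) \le \omega_2(y) = 0 = \omega_1(x)\omega_2(y)$. Therefore $\omega = \omega_1\otimes\omega_2$.

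\emph{Part (c).} This is just the contrapositive of (b), together with the observation (noted en route) that a product state $\alpha\otimes\beta$ has marginals exactly $\alpha$ and $\beta$, so being a product state is equivalent to $\omega = \omega_1\otimes\omega_2$. An entangled $\omega$ is not separable, hence in particular not a product state, so $\omega \neq \omega_1\otimes\omega_2$; by (b) neither $\omega_1$ nor $\omega_2$ is pure, i.e.\ both marginals are mixed. The single genuinely delicate point in the whole argument is the closure step in (b) for infinite tests, which is precisely where the standing uniform-closedness assumption earns its keep.
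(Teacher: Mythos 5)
Your proposal is correct and takes essentially the same route as the paper: part (b) is the extreme-point argument applied to the bipartite Law of Total Probability (the paper assumes $\omega_2$ is pure and decomposes $\omega_2 = \sum_{x\in E}\omega_1(x)\,\omega_{2|x}$ over a test $E\in\M(A)$; you do the mirror-image case with $\omega_1$ pure), and (a) and (c) are handled the same way, with your peeling-off and infinite-test details merely making explicit what the paper leaves implicit. One small polish: in the infinite-test closure step your partial sums are subnormalized rather than states, so renormalize each finite partial sum (which then lies in $\Omega(A)$ by convexity) before invoking uniform closedness.
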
 

{\em Proof:} (a) {\bf Exercise}, but with these  
hints: (i) for any fixed $y$ with $\beta(y) > 0$, 
$\alpha = (\alpha \otimes \beta)_{1|y}$; 
(ii) for any fixed $y$ with $\omega_{2}(y) > 0$, 
$\omega \mapsto \omega_{1|y}$ is linear. 

(b) Suppose $\omega_2$ is pure. By the bipartite Law of Total Probability (1), we have, for any test $E \in \M(A)$,
\[\omega_{2} = \sum_{x \in E} \omega_{1}(x) \omega_{2|x}. \]
Note that the sum on the right is a convex sum. 
Since $\omega_2$ is pure, for every $x$ we have 
either $\omega_{1}(x) = 0$ or $\omega_{2|x} = \omega_{2}$. In either case, we have 
\[\omega(x,y) = \omega_1(x) \omega_{2}(y)\]
for every $y \in X(B)$. Since this holds 
for every $x \in E$, and $E$ is arbitrary, 
it hold for every $x \in X(A)$.  This 
proves (b), and (c) is an immediate consequence. 
$\Box$

{\gray {\em Historical note:} These points were first noted in this generality, but without any reference to entanglement, in a pioneering paper of Namioka and Phelps \cite{NP} on tensor products of compact convex sets. They were rediscovered, and connected with entanglement, by Kl\"{a}y \cite{Klay}.
} 

Let's agree that a model $A$ is {\em semi-classical} iff 
its test space $\M(A)$ is semi-classical, which, recall, just 
means that distinct tests never overlap. When $\A$ and $\B$ are semi-classical, the test space $\A \times \B$ defined above is again semi-classical. In this situation, we have lots of dispersion-free joint states. However:

\begin{lemma}\label{lemma: semiclassical composites} Let $\A$ and $\B$ be semi-classical test spaces, and let $\omega \in \Pr(\A \times \B)$. If $\omega$ is both non-signaling and dispersion-free, then $\omega = \alpha \otimes \beta$ where $\beta$ and $\\beta$ are dispersion-free.\
\end{lemma}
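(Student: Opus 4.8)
The plan is to exploit how rigid a $\{0,1\}$-valued joint weight must be on each product test. Write $X = \bigcup \A$ and $Y = \bigcup \B$. Since $\omega$ is dispersion-free, $\omega(x,y) \in \{0,1\}$ for every $(x,y) \in X \times Y$, and since $\sum_{(x,y) \in E \times F}\omega(x,y) = 1$ for every product test $E \times F$, exactly one pair in $E \times F$ receives the value $1$ while all the rest receive $0$. This ``unique hit per product test'' is the combinatorial engine of the whole argument, and I would establish it first, as it is used repeatedly below.

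Next I would check that the two marginals are themselves dispersion-free. Because $\omega$ is non-signaling, Definition~\ref{def: no signaling} supplies well-defined marginal weights $\omega_1(x) = \omega(xF)$ and $\omega_2(y) = \omega(Ey)$, independent of the chosen tests $F \in \B$ and $E \in \A$; these lie in the appropriate state spaces by the bipartite Law of Total Probability. To see $\omega_1$ is $\{0,1\}$-valued, fix a test $F \in \B$ and a test $E \ni x$ in $\A$, and note that $\omega_1(x) = \sum_{y \in F}\omega(x,y)$ is a sum of $\{0,1\}$-terms bounded above by $\sum_{(x',y') \in E \times F}\omega(x',y') = 1$; hence at most one summand equals $1$ and $\omega_1(x) \in \{0,1\}$. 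The identical argument applies to $\omega_2$. This also disposes of the only genuinely analytic point, namely that the marginal sums make sense even when tests are infinite, since every partial sum is dominated by the total mass $1$.

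Finally I would prove the factorization $\omega = \omega_1 \otimes \omega_2$ pointwise. Fix $(x,y)$ and pick tests $E \ni x$ and $F \ni y$. If $\omega(x,y) = 1$, then the term $\omega(x,y)$ appears in $\omega_1(x) = \sum_{y' \in F}\omega(x,y')$, forcing $\omega_1(x) \ge 1$ and hence $\omega_1(x)=1$ (it is a probability-weight value, so $\le 1$); likewise $\omega_2(y) = 1$, so $\omega_1(x)\omega_2(y) = 1 = \omega(x,y)$. If instead $\omega(x,y) = 0$, I argue by contradiction: were $\omega_1(x) = \omega_2(y) = 1$, there would be a unique $y^* \in F$ with $\omega(x,y^*) = 1$ and a unique $x^* \in E$ with $\omega(x^*,y) = 1$, and since $\omega(x,y) = 0$ we would have $y^* \ne y$ and $x^* \ne x$; but then $(x,y^*)$ and $(x^*,y)$ are two distinct pairs in $E \times F$ both carrying the value $1$, contradicting the unique-hit property. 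Hence $\omega_1(x)\omega_2(y) = 0 = \omega(x,y)$. Combining the two cases yields $\omega = \alpha \otimes \beta$ with $\alpha := \omega_1$ and $\beta := \omega_2$ both dispersion-free (the evident typo ``$\beta$ and $\beta$'' in the statement should read ``$\alpha$ and $\beta$'').

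The step I expect to demand the most care is this last, $\omega(x,y)=0$, direction: the clean way to close it is exactly the contradiction via two distinct unit-valued pairs inside a single product test, which is where the unique-hit property established at the outset does the real work. It is worth flagging that semi-classicality of $\A$ and $\B$ is never actually invoked in this argument; it is the natural ambient hypothesis, being the setting in which dispersion-free joint states are abundant (and hence the point of the preceding ``However''), but the implication \emph{non-signaling and dispersion-free implies a product of dispersion-free weights} in fact holds for arbitrary test spaces.
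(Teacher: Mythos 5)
Your proof is correct, but it takes a genuinely different route from the paper's. The paper's proof is two lines: the marginals of a non-signaling dispersion-free weight are themselves dispersion-free, hence \emph{pure} (Exercise 1.8(a)), and then Lemma~\ref{lemma: pure marginals}(b) --- a non-signaling state with a pure marginal is the product of its marginals --- does all the work. You instead give a self-contained combinatorial argument: the ``unique hit per product test'' property, dispersion-freeness of the marginals by domination, and a pointwise two-case verification of $\omega(x,y)=\omega_1(x)\omega_2(y)$, with the $\omega(x,y)=0$ case closed by exhibiting two distinct unit-valued pairs in one product test. What the paper's route buys is brevity and a conceptual moral (purity of a marginal forces factorization, of which this lemma is a special case); what yours buys is elementarity and, in fact, slightly more robustness: the proof of Lemma~\ref{lemma: pure marginals}(b) applies extremality to the possibly \emph{infinite} convex decomposition $\omega_2 = \sum_{x\in E}\omega_1(x)\omega_{2|x}$, a step that strictly speaking needs an extra argument when tests are infinite (extremality is defined via finite mixtures), whereas your $\{0,1\}$-bookkeeping never touches convexity at all. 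Your closing observation that semi-classicality is nowhere used is also consistent with the paper: its proof doesn't use it either --- the hypothesis only serves the surrounding discussion, since semi-classicality is what guarantees an abundance of dispersion-free joint states for the lemma to apply to.
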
 

{\em Proof:} Suppose $\omega$ is dispersion-free. Since $\omega$ is also non-signaling, it has well-defined marginal states, which must obviously also be dispersion-free, hence, pure. 
But by Lemma \ref{lemma: pure marginals} (b), a non-signaling state with pure marginals is the product of these marginals. $\Box$ 

It follows that any average of non-signaling, dispersion-free states on semi-classical test spaces is separable.

{\em Remarks:}

(1) There exist non-signaling joint 
states on pairs of quantum systems that do not 
correspond to density operators on the composite 
quantum system. A simple example: 
if $\H$ is any complex Hilbert space, 
let $S : \H \otimes \H \rightarrow \H \otimes \H$ 
be given by 
$S(x \otimes y) = y \otimes x$. Then for 
any unit vectors $x$ and $y$ 
$\langle S x \otimes y, x \otimes y \rangle 
= |\langle x, y \rangle|^2$, which clearly 
sums to $1$ over any product basis. The marginals, taken over any orthonormal basis, 
are the maximally mixed state, so this is a 
non-signaling state. It is not, however, a 
quantum state, because $S$ is not even a positive 
operator, much less a density operator. For 
more on this, see \cite{BFRW}. 

{\gray (2) There exist entangled quantum states (certain Werner states \ref{Werner}) that are entangled but nevertheless support a local hidden-variables model. This might seem to be in tension with Lemma \ref{lemma: semiclassical composites}. But note 
that the models in the Lemma are semi-classical, which quantum models are not. }

\subsection{Composites} 

We now try to define a reasonably general notion of 
a composite of two models $A$ and $B$. 
One important consideration is that 
such a composite will generally 
admit outcomes that are {\em not} simply 
ordered pairs $(x,y)$ of outcomes belonging to the two 
models: the outcome-space will need to be bigger than $X(A) \times X(B)$. 
This is clear when we consider the natural 
composite models in classical and quantum probability theory. 

\begin{example} In the case of two (full) Kolmogorovian classical models, $\M(A) = \M(S_A,\Sigma_A)$ and $\M(B) = \M(S_B,\Sigma_B)$, we have a composite given by 
\[\M(AB) = \M(S_A \times \S_B, \Sigma_A \otimes \Sigma_B).\] 
Here 
$\Sigma_A \otimes \Sigma_B$ is the smallest $\sigma$-algebra on $\S_A \times S_B$ containing all 
product sets $a \times b$ with $a \in \Sigma_A, b \in \Sigma_B$. This is vastly larger than 
$\Sigma_A \times \Sigma_B$, 
It is nevertheless true 
that every probability measure $\mu$ on $\Sigma_A \otimes \Sigma_B$ is uniquely determined by its restriction to product sets $a \times b$, and, thus restricted, gives us a joint probability weight on $\M(A) \times \M(B)$ via 
the recipe $a,b \mapsto \mu(a \times b)$. Note that 
this weight is non-signaling: if 
$E \in \M(S_A)$ we have $\sum_{a \in E} \mu(a \times b) 
= \mu((\bigcup_{a \in E} a) \times b) = \mu(S_{A} \times b)$, 
which is independent of $E$, and similarly for $F \in \M(S_B)$. 
\end{example}

\begin{example} If $A$ and $B$ are quantum models, with Hilbert spaces $\H_A$ and $\H_B$, we have a natural composite model 
with Hilbert space $\H_{A} \otimes \H_{B}$. The relevant test space is $\M(AB) = \F(\H_A \otimes \H_B)$, the set of 
all orthonormal bases of $\H_{A} \otimes \H_{B}$. The outcome-space $X(AB)$ is the unit sphere of $\H_A \otimes \H_B$ that is, the set of all unit vectors in $\H_{A} \otimes \H_{B}$, which is vastly larger than the set of vectors of the form $x \otimes y$ 
where $x \in X(A)$ and $y \in X(B)$. Nevertheless, we have a natural mapping $X(A) \times X(B) \rightarrow X(AB)$ sending $(x,y)$ to $x \otimes y$, and this is a test-preserving morphism 
of test spaces. Accordingly, any state in $\Omega(AB) = \Omega(\H_A \otimes \H_B)$ --- that 
is, any state associated with a density operator 
$W$ on $\H_A \otimes \H_B$ --- defines a joint state on $\M(A) \times \M(B)$ by $x,y \mapsto \langle W (x \otimes y), x \otimes y\rangle$. 

\begin{exercise} Check that this state is non-signaling.\end{exercise} 

In the complex case, the joint state above {\em determines} $\rho$, thanks to the polarization identity; but in the real case, it does not.  We'll come back to this point below.)
\end{example} 

{\bf Remark:} Notice that $\dim(\H_{A} \otimes \H_{B}) \geq 4$, so Gleason's Theorem applies: this model is full, even if $\H_A$ and $\H_B$ are qubits.

These examples suggest the following definition:

\begin{definition}\label{def: composites} A {\em non-signaling composite} of models $A$ and $B$ is a model $AB$, together with a test-preserving morphism 
\[\pi : A \times_{NS} B \rightarrow AB\]
such that $\pi^{\ast}(\Omega(AB))$ contains all product 
states $\alpha \otimes \beta$ for $\alpha \in \Omega(A)$ and 
$\beta \in \Omega(B)$.
\end{definition} 

We'll consider some examples below. First, 
let's unpack the definition a bit. The condition that $\pi$ is a test-preserving morphism means, in the first place, that 
$\pi$ is a mapping $X(A) \times X(B) \rightarrow X(AB)$. 
Let's agree to write $\pi(x,y) = xy$, and to call this a {\em product outcome } of $AB$. Then we have 
\[\pi(E \times F) = \{ xy \ | \ x \in E, y \in F\} \in \M(AB)\] 
for all $E \in \M(A)$ and $F \in \M(B)$. Note here that 
$(x,y) \mapsto xy$ will be injective on $E \times F$, 
since morphisms are locally injective. Thus, $\M(AB)$ contains representations of all possible product tests.
The second condition in the definition of a 
composite requires that for any pair of states $\alpha \in \Omega(A)$ and  $\beta \in \Omega(B)$, there must exist {\em at least one} state $\omega \in \Omega(AB)$ with $\omega(xy) = \alpha(x) \beta(y)$. I will say 
a bit more about this below. 

It's time for the promised examples. As mentioned earlier, 
$A \times_{NS} B$ is the simplest non-signaling composite of $A$ and $B$. 

\begin{example} We can extend the definition 
of the bilateral product from test spaces to 
models. Simply define 
\[\M(\bilat{AB}) = \bilat{\M(A)\M(B)}\]
and take $\Omega(\bilat{AB})$ to be the set of 
all probability weights on this that 
are joint states for $A$ and $B$, i.e., 
satisfy $\omega_{2|x} \in \Omega(A)$ 
and $\omega_{1|y} \in \Omega(B)$ for all 
$x \in X(A), y \in X(B)$. Note that if $A$ and $B$ are full, the restrictions on the conditional states are automatic, and $\bilat{AB}$ is thus also full. 
\end{example} 
\vspace{-.1in} 

Like $A \times_{NS} B$, the bilateral product 
is too simple to be of much interest in its own right. But it does have its uses: 

\begin{lemma} If  $\pi : \bilat{AB} \rightarrow C$ is a test-preserving morphism, then the product 
$(C, \pi)$ is a non-signaling composite of $A$ and $B$
\footnote{Or, more exactly, $(C, \pi_{A \times B})$ is a non-signaling composite of $A$ and $B$, where $\pi_{A \times B}$ is $\pi$ understood as a morphism $A \times B \rightarrow AB$}
\end{lemma}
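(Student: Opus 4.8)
The plan is to verify the two clauses of Definition \ref{def: composites} for the pair $(C,\pi_{A\times B})$: first, that $\pi_{A\times B}:A\times_{NS}B\to C$ is a test-preserving morphism, and second, that $\pi_{A\times B}^{\ast}(\Omega(C))$ contains every product state $\alpha\otimes\beta$. The foundational observation, which I would record at the outset, is that $A\times_{NS}B$ and $\bilat{AB}$ are built on the \emph{same} outcome set $X(A)\times X(B)$; that $\M(A)\times\M(B)\subseteq\bilat{\M(A)\M(B)}$ (the text already notes all three bilateral-type test spaces contain $\M(A)\times\M(B)$); and that, by Lemma \ref{influence and classical communication} together with Definition \ref{def: joint states}, the two state spaces coincide as sets of functions on $X(A)\times X(B)$, both being exactly the non-signaling joint states whose conditionals lie in $\Omega(A)$ and $\Omega(B)$. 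Consequently $\pi_{A\times B}$ is literally the same map of outcomes as $\pi$, merely with its domain regarded as the smaller test space $\M(A)\times\M(B)$.

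Granting this, the test-preserving-morphism clause is inherited from $\pi$. For test-preservation: every product test $E\times F$ is already a test of $\bilat{AB}$, so $\pi(E\times F)\in\M(C)$. For axioms (i)--(iii) of Definition \ref{def: morphism}: every event, orthogonal pair, and perspective pair of $A\times_{NS}B$ is (being a subset of, an orthogonal splitting of, or a shared complement inside a product test) also an event, orthogonal pair, or perspective pair of $\bilat{AB}$, where $\pi$ already preserves all of these; hence $\pi_{A\times B}$ does too. For axiom (iv): given $\gamma\in\Omega(C)$, the morphism property of $\pi$ yields $\omega\in\Omega(\bilat{AB})$ and $t\ge 0$ with $\gamma\circ\pi=t\,\omega$, and since $\Omega(\bilat{AB})=\Omega(A\times_{NS}B)$ this $\omega$ already lies in $\Omega(A\times_{NS}B)$, which is precisely what (iv) requires. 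So $\pi_{A\times B}$ is a test-preserving morphism.

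The hard part will be the product-state clause: for each $\alpha\in\Omega(A)$ and $\beta\in\Omega(B)$ I must produce a \emph{single} $\gamma\in\Omega(C)$ with $\gamma\circ\pi=\alpha\otimes\beta$. The first two paragraphs give no leverage here, because a morphism transports states \emph{backward}, from $C$ to $\bilat{AB}$, whereas what is wanted is that the product weight---which is certainly a state of $\bilat{AB}$---be \emph{lifted forward} into $\Omega(C)$. My approach would be to show that each product preparation is realized in $C$, i.e. that $\alpha\otimes\beta$ lies in the image $\pi^{\ast}(\Omega(C))$: concretely, to construct a section of $\pi^{\ast}$ over product states by checking that the values forced by $\gamma(\pi(x,y))=\alpha(x)\beta(y)$ extend to a genuine member of $\Omega(C)$ (using test-preservation to pin $\gamma$ down on the tests $\pi(E\times F)$, and the standing convexity and closure assumptions on $\Omega(C)$ to place the extension inside the state space). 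This existence-of-a-lift is the one point not formally delivered by ``$\pi$ is a test-preserving morphism'' alone, and it is where I expect the genuine work---and any additional hypothesis needed on $C$---to be concentrated.
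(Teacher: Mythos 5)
You are right to be suspicious at the end, and that suspicion turns out to be the heart of the matter. Note first that the paper supplies no argument here (the lemma is immediately followed by an exercise asking the reader to prove it), so your attempt must be judged on its own terms. Your first two paragraphs are correct and essentially complete for the morphism clause: $A \times_{NS} B$ and $\bilat{AB}$ share the outcome set $X(A)\times X(B)$; $\M(A)\times\M(B) \subseteq \bilat{\M(A)\M(B)}$, so every event, orthogonal pair, perspective pair, and test of $A\times_{NS}B$ is one for $\bilat{AB}$; and $\Omega(\bilat{AB}) = \Omega(A\times_{NS}B)$ by Lemma \ref{influence and classical communication} together with Definition \ref{def: joint states} and the definition of the model $\bilat{AB}$. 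Hence conditions (i)--(iv) of Definition \ref{def: morphism}, and test-preservation, all descend from $\pi$ to $\pi_{A\times B}$.

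The product-state clause, however, is not merely ``the hard part'': it is not a consequence of the stated hypotheses at all, so the lifting construction you sketch in your last paragraph cannot be carried out. Counterexample: let $A = B$ be the classical bit, $\M(A) = \{\{x_1,x_2\}\}$ with all probability weights as states. Then $\bilat{AB}$ is the single four-outcome test $\{x_1,x_2\}\times\{x_1,x_2\}$ with all probability weights as states. Let $C$ have this same test space, let $\pi = \id$, and let $\Omega(C)$ be the convex hull of $\delta_{(x_1,x_1)}$, $\delta_{(x_1,x_2)}$, $\delta_{(x_2,x_1)}$ and the uniform weight. This $\Omega(C)$ is convex, uniformly closed, positive, and separating, and $\pi$ is a test-preserving morphism: conditions (i)--(iii) of Definition \ref{def: morphism} are trivial for the identity map, and (iv) holds because every probability weight on this test space is already a state of $\bilat{AB}$. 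Yet the product state $\delta_{x_2}\otimes\delta_{x_2} = \delta_{(x_2,x_2)}$ does not lie in $\pi^{\ast}(\Omega(C)) = \Omega(C)$, since every element of $\Omega(C)$ assigns the outcome $(x_2,x_2)$ probability at most $\tfrac{1}{4}$; so $(C,\pi)$ fails Definition \ref{def: composites}. The moral is that a morphism constrains $\Omega(C)$ only from above (pullbacks of its states are multiples of states of the domain), whereas the product-state clause demands that $\Omega(C)$ be large, and the convexity and closure assumptions you hoped to exploit cannot manufacture missing states. The lemma becomes true --- and your first two paragraphs then constitute a complete proof --- once one adds a hypothesis of exactly this kind, e.g.\ that $\pi^{\ast} : \Omega(C) \rightarrow \Omega(\bilat{AB})$ is surjective (cf.\ the notion of a faithful embedding in Definition \ref{def: types of morphisms}), for then $\pi^{\ast}(\Omega(C))$ contains every $\alpha\otimes\beta$; such surjectivity is what actually holds in the classical and quantum examples this lemma is meant to package.
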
 

\begin{exercise} Prove this. \end{exercise} 

The usual composite models from classical probability theory and non-relativistic QM are also examples:

\begin{example}[Example 3.8 revisited]  In the case of two (full) Kolmogorovian classical models, $\M(A) = \M(S_A,\Sigma_A)$ and $\M(B) = \M(S_B,\Sigma_B)$ 
(with their full sets of probability weights as state spaces), It is not hard to see that if $E$ is 
a partition of $S_{A}$ by sets in $\Sigma_A$ and, 
for every $a \in E$, $F_{a}$ is a partition of $S_B$ by sets from $\Sigma_B$, then 
$\bigcup_{a \in E, b \in F_{a}} a \times b$ is a partition 
of $S_{A} \times S_{B}$. Thus, $\pi : a,b \mapsto a \times b$ maps tests in $\for{\M(A)\M(B)}$ to tests in $\M(AB)$, and 
similarly for tests in $\back{\M(A)\M(B)}$. Thus, 
composite classical models are non-signaling. 
\end{example} 

\begin{example}[Example 3.9 revisited] If $A$ and $B$ are quantum models, with Hilbert spaces $\H_A$ and $\H_B$, then its straightforward 
if $E$ is an orthonormal basis for $\H_A$ and, for every 
$x \in E$, $F_{x}$ is an ONB for $\H_B$, then 
$\{ x \otimes y | x \in E, y \in F_{x}\}$ is an 
ONB for $\H_{A} \otimes \H_{B}$, so $\pi : x, y \mapsto x \otimes y$ takes tests in $\for{\M(A)\M(B)}$ to tests in $\M(AB)$. Similarly for tests in $\back{\M(A)\M(B)}$, so composite quantum models are non-signaling. 
\end{example}

{\bf Product States and Strong Composites} 
The condition that $\pi^{\ast}(\Omega(AB))$ contain 
all product states $\alpha \otimes \beta$ means that we can prepare states of $AB$ that look like product 
states, but possibly not in any unique, or even canonical, way. A more restrictive view of a composite system might add the requirement that there be such a canonical choice of product states, and indeed, this is pretty often done in the literature. In order to keep track of these two notions, we'll add an adjective:

\begin{definition}\label{strong composite} A {\em strong} non-signaling composite 
is one equipped with an additional bi-affine mapping 
\[m : \Omega(A) \times \Omega(B) \rightarrow \Omega(AB)\]
such that, for all $\alpha \in \Omega(A), \beta \in \Omega(B)$, 
$x \in X(A)$ and $y \in X(B)$, we have 
\[m(\alpha, \beta) (\pi(x,y)) = \alpha(x)\beta(y).\]
\end{definition} 
\vspace{-.1in}
In this case, I'll generally write $m(\alpha,\beta)$ as $\alpha \otimes \beta$, leaving it to context whether this is to be 
thought of as belonging to $\Omega(A \times_{NS} B)$ or 
$\Omega(AB)$.

\tempout{
{\bf Example 3:} The {\em projective} model associated with a Hilbert space $\H$ has, as outcomes, non-empty closed subspaces of $\H$, or, what comes to the same thing, non-zero (bounded) projection operators in 
$\L(\H)$. The test space, $\M_{\P}(\H)$, consists of all maximal pairwise-orthogonal 
finite collections $\{P_1,...,P_n\}$ of non-zero projections, or, 
equivalently, all finite collections $\{P_i\}$ with $\sum_i P_i = \1$. 
 The state space, $\Omega_{\P}(\H)$, is the set of weights $\alpha_{W} : P \mapsto \Tr(WP)$ for any projection $P$. 
 
Now  let $\H$ and $\K$ be finite-dimensional complex Hilbert spaces 
of dimensions $n$ and $k$, respectively. These can be turned into real Hilbert spaces $\H_{\R}$ and $\K_{\R}$ 
just by restricting scalars to $\R$. These, in turn, can be combined 
using the real tensor product to form $\H_{\R} \otimes \K_{\R}$. 
Note this has dimension $4nk$. This, in turn, can be "complexified": we 
can define a complex scalar multiplication. This gives us 
\[\H \odot \K := (\H_{\R} \otimes \K_{\R})^{\C},\]
a complex Hilbert space of dimension $2nk$.  Set

Recall now that the (basic) quantum model associated with a Hilbert space $\H$ is $(\F(\H),\Omega(\H))$ where $\F(\H)$ is the set of frames (unordered ONBs) of $\H$, and $\Omega(\H)$ is the set of probability weights $\alpha_{W} : x \mapsto \langle Wx,x \rangle$ where $W$ is a density operator on $\H$. This applies equally to real and complex Hilbert spaces. 
}

{\bf Local Tomography} 
There is an important special case in which the distinction between 
ordinary and strong composites vanishes.

\begin{definition}\label{def: LT commposite} A composite $AB$ is {\em locally tomographic} iff 
$\pi^{\ast}$ is injective, i.e., every state 
on $AB$ is determined by the corresponding joint probability 
weight on $A \times B$. 
\end{definition} 

\vspace{-.1in}
If $(AB, \pi)$ is locally tomographic, the mapping $\pi^{\ast} : \Omega(AB) \rightarrow \Omega(A \times_{N} B)$ is injective, and the uniqueness of product states is no longer an issue. In other words, locally tomographic composites are automatically strong. 

\tempout{
Here are two general recipes for constructing locally tomographic non-signaling composites of arbitrary models:
}

\begin{exercise} Let $\H$ and $\K$ be Hilbert spaces, either both 
real or both complex. 

(a) Show that the mapping $\pi : X(\H) \times X(\K) \rightarrow X(\H \otimes \K)$ makes the model associated 
with $\H \otimes \K$ into a non-signaling composite of 
the models associated with $\H$ and $\K$. 

(b) Show that if $\H$ and $\K$ are complex, this composite 
model is locally tomographic. (Hint: use the 
polarization identity twice.)

(c)  Show that if $\H$ and $\K$ are real, the composite 
quantum model is generally {\em not} locally tomographic. 
(Hint: To make life simple, assume $\H = \K$ is  
finite dimensional, and use the decomposition $\L(\H) = \L_{s}(\H) \oplus 
\L_{a}(\H)$ where $L_s(\H)$ and $L_{a}(\H)$ are the 
(real) spaces of symmetric and antisymmetric operators 
on $\H$, and $\L(\H)$ is the space of all operators on $\H$.) 
\tempout{$\H$ and $\K$ be real Hilbert spaces. 
Recall that $\L(\H) = \L_{s}(\H) \oplus \L_{a}(\H)$ where 
$\L_{s}(\H)$ is the space of self-adjoint (symmetric) and 
$\L_{a}(\H)$ is the space of anti-self adjoint (anti-symmetric) operators on $\H$. (a) Using this, show that 
\[\L_{s}(\H \otimes \K) = (\L_{s}(\H) \otimes \L_{s}(\K)) 
\oplus (\L_{a}(\H) \otimes L_{a}(\K)).\]

}

\end{exercise} 


{\gray {\bf Digression: Tensor products of quantum logics}  
In \cite{RF-NoGo}, Randall and Foulis showed that 
there is no non-signaling, locally-tomographic 
composite of a certain simple (finite) orthomodular 
lattice $L_5$ with itself that again yields an 
orthomodular lattice (see also \cite{FR-TP, AW-handbook}). Such a tensor product {\em can} be found (as they also showed), but it is a non-orthocoherent  orthoalgebra, hence, 
not even an orthomodular poset. 

There are several possible responses to this. One (see, e.g. \cite{Coecke-LQMII}) is that quantum logic "failed" because it could not accommodate composite systems. But this is far too hasty. A more reasonable take is that the traditional classes of models of quantum logics (orthomodular lattices and posets) are too small, and that the meaning of "quantum logic" should be enlarged to include (at least) orthoalgebras. Another is that the category of orthomodular lattices is {\em too big}. We know there are sub-categories of OMLs that {\em do} have perfectly serviceable composites --- the category of projection lattices, for one!.  This points towards what I think is the correct response (more obvious in hindsight than at the time): that composites of probabilistic models --- including quantum logics as a special case --- are not canonical, and need to be constructed to fit the specific physical theory at hand. }

\subsection{Linearized Composites} 

Naturally, we'd like to know how composites comport with the linearized picture of probabilistic models. As we've seen, the latter can be represented in 
terms of a pair $(\E,\V)$ consisting of an order-unit space $\E$ and a 
base-normed space $\V$, with $\E \leq \V^{\ast}$, and  
conversely, 
any such 
pair gives us a probabilistic model. 

Let's start by considering tensor products of ordered 
vector spaces. For the balance of this section, except for a 
few remarks, I'm going to assume all spaces and models are finite-dimensional. 

If $\V$ and $\W$ are ordered vector spaces, 
a bilinear form $f : \V \times \W \rightarrow \R$ is {\em positive} iff $f(a,b) \geq 0$ for all $(a,b) \in \V_+ \times \W_+$. Write $\B_{+}(\V,\W)$ for the set of positive bilinear forms. 
\tempout{One can show [CHECK!!!] that this {\em is} a 
cone in $\B(\V,\W) 
= \B(\V^{\ast \ast}, \W^{\ast \ast}) = 
\V^{\ast} \otimes \W^{\ast}.$  Also note that there's 
a bilinear form $\V^{\ast} \otimes \W^{\ast} \rightarrow 
\B(\V,\W) = (\V \otimes \W)^{\ast}$ taking $(a,b)$ to the functional 
\[(a \otimes b)(\alpha,\beta) = a(\alpha)b(\beta).\]
}
We can view the tensor product of two finite-dimensional 
vector spaces either as a space dual to the space 
$\B(\V, \W)$, i.e., $\V \otimes \W = \B(\V,\W)^{\ast}$, 
or as a space {\em of} bilinear forms, i.e, 
$\V \otimes \W = \B(\V^{\ast}, \W^{\ast})$. This 
largely boils down to the decision whether to write 
\[(\alpha \otimes \beta)(a \otimes b) \ \ \mbox{vs} \ \  (a \otimes b)(\alpha \otimes \beta)\]
for $a(\alpha) b(\beta)$, where $\alpha \in \V$, $\beta \in \W$, $a \in \V^{\ast}$ and $b \in \W^{\ast}$.  
We end up with canonical isomorphisms 
\begin{equation} \B(\V^{\ast},\W^{\ast}) \simeq \V \otimes \W \simeq  \B(\V,\W)^{\ast} \end{equation}
\vspace{.1in}
so we can use either of the spaces $\B(\V^{\ast},\W^{\ast})$ 
or $\B(\V,\W)^{\ast}$ to represent (or even define) $\V \otimes \W$. 

\begin{exercise} Establish (\theequation) for finite-dimensional $\V$ and $\W$. (Ideally, try to do this by writing down the canonical isomorphisms in question explicitly (and without choosing bases!) rather than just counting dimensions.)
\end{exercise} 

If we are only interested in the linear structure of $\V \otimes \W$, it doesn't matter which of these representations we use. But if we are interested in {\em ordered} vector spaces, they come with different natural cones:

\begin{definition} Let $\V$ and $\W$ be finite-dimensional ordered 
vector spaces.  Their {\em minimal tensor product}, 
$(\V \mintensor \W)$  is $\V \otimes \W$ ordered by 
the dual cone in $\B(\V,\W)^{\ast}$. 
Their {\em maximal tensor product}, $\V \maxtensor \W$, 
is $\V \otimes \W$ ordered by the cone  $\B_{+}(\V^{\ast},\W^{\ast})$.  
\end{definition} 

In more detail:  a tensor $\omega$ belongs to $(\V \mintensor \W)_{+}$ iff $\omega(F) \geq 0$ for all positive bilinear forms $F \in \B(\V,\W)$, and $\omega$ belongs to $(\V \maxtensor \W)_+$ iff 
$\omega(a \otimes b) \geq 0$ for all positive functionals 
$a \in \V^{\ast}$ and $b \in \W^{\ast}$. 


\begin{lemma}\label{lem: minimal tensor cone} The minimal tensor cone is the cone spanned 
by positive pure tensors. That is, $\omega \in (\V \mintensor \W)_{+}$ iff 
\[\omega = \sum_{i} t_i \alpha_i \otimes \beta_i\]
where $\alpha_i \in \V_{+}$, $\beta_{i} \in \W_{+}$, 
and the coefficients $t_i$ are all non-negative.\footnote{Since $\otimes$ is bilinear, we don't really need the coefficients here: $t(\alpha \otimes \beta) 
= (t\alpha) \otimes \beta$, and if $t$ and $\alpha$ are positive, so is $t\alpha$. }
\end{lemma}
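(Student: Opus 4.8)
The plan is to prove the two descriptions of the cone agree by a bipolar (dual cone) argument, the only genuinely non-formal ingredient being the closedness of the cone of positive pure tensors. Write $K^{\circ}$ for the dual cone of a cone $K$, and set $C := \{\sum_i \alpha_i \otimes \beta_i \mid \alpha_i \in \V_+,\ \beta_i \in \W_+\}$; by the footnote the coefficients $t_i$ can be absorbed into the $\alpha_i$, so $C$ is precisely the right-hand side of the statement. One inclusion is immediate: under the identification $\V \otimes \W \simeq \B(\V,\W)^{\ast}$ a positive pure tensor acts on $F \in \B(\V,\W)$ by $(\alpha \otimes \beta)(F) = F(\alpha,\beta)$, which is $\geq 0$ whenever $F \in \B_{+}(\V,\W)$ and $\alpha \in \V_+,\ \beta \in \W_+$. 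Hence each positive pure tensor lies in $\B_{+}(\V,\W)^{\circ} = (\V \mintensor \W)_+$, and since the latter is a cone it contains all finite sums, giving $C \subseteq (\V \mintensor \W)_+$.

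For the reverse inclusion I would set up the duality explicitly. Viewing $\B(\V,\W) = (\V \otimes \W)^{\ast}$, the dual cone of $C$ is exactly $\B_{+}(\V,\W)$: a form $F$ pairs nonnegatively with every positive pure tensor iff $F(\alpha,\beta) \geq 0$ for all $\alpha \in \V_+,\ \beta \in \W_+$, i.e. iff $F$ is positive. Thus $C^{\circ} = \B_{+}(\V,\W)$, and by definition of the minimal tensor product $(\V \mintensor \W)_+ = \B_{+}(\V,\W)^{\circ} = C^{\circ\circ}$. In finite dimensions the bipolar theorem gives $C^{\circ\circ} = \overline{C}$, so it suffices to show $C$ is closed; then $(\V \mintensor \W)_+ = \overline{C} = C$. (Equivalently, once $C$ is closed, any $\omega \in (\V \mintensor \W)_+ \setminus C$ could be separated from the closed convex cone $C$ by some $F$ with $F \geq 0$ on $C$, so $F \in \B_{+}(\V,\W)$, and $F(\omega) < 0$, contradicting $\omega \in \B_{+}(\V,\W)^{\circ}$.)

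The main obstacle is precisely the closedness of $C$, since the conical hull of an arbitrary set of generators need not be closed. Here I would use compact bases. Because $\V_+,\W_+$ are pointed, there exist strictly positive functionals $f \in \V^{\ast}$ and $g \in \W^{\ast}$; then $B_{\V} := \{x \in \V_+ \mid f(x)=1\}$ and $B_{\W} := \{y \in \W_+ \mid g(y)=1\}$ are compact bases of the two cones. Every nonzero positive pure tensor is a positive multiple of some $b \otimes b'$ with $b \in B_{\V},\ b' \in B_{\W}$, so $C$ is the conical hull of $S := \{b \otimes b' \mid b \in B_{\V},\ b' \in B_{\W}\}$, the continuous, hence compact, image of $B_{\V} \times B_{\W}$. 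Now $(f \otimes g)(b \otimes b') = f(b)g(b') = 1$ on all of $S$, so the linear functional $f \otimes g$ is identically $1$ on $\con(S)$, whence $0 \notin \con(S)$. By the standard lemma that the conical hull of a compact set whose convex hull omits the origin is closed, $C$ is closed.

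Combining the pieces, $(\V \mintensor \W)_+ = C^{\circ\circ} = \overline{C} = C$, which is the claimed equality. As a remark, Carathéodory's theorem for cones additionally bounds the number of summands needed by $\dim(\V \otimes \W)$, which matches the finite-sum form of the statement, though this refinement is not needed for the argument above; the delicate point throughout is the closedness step, everything else being formal duality.
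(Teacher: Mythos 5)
Your proof is correct. There is nothing in the paper to compare it against directly: the lemma is immediately followed by an exercise (``Prove this''), so the text supplies no argument of its own. Your route is the natural one that the exercise presumably intends --- the easy inclusion by evaluating positive pure tensors against positive bilinear forms, and the converse by bipolar/separation duality, using $(\V \mintensor \W)_{+} = \B_{+}(\V,\W)^{\circ}$ and $C^{\circ} = \B_{+}(\V,\W)$ --- and its genuine content is exactly the point a casual separation argument glosses over: one can only separate a point from a \emph{closed} convex cone, so the cone $C$ of sums of positive pure tensors must be shown closed. You handle this correctly via compact bases: strictly positive functionals $f, g$ give compact bases $B_{\V}, B_{\W}$, the set $S = \{b \otimes b' \mid b \in B_{\V},\, b' \in B_{\W}\}$ is compact, $f \otimes g \equiv 1$ on $\con(S)$ keeps the origin out of the convex hull, and the conical hull of a compact convex set avoiding the origin is closed. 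One small point of hygiene: pointedness alone does not guarantee a strictly positive functional or a compact base; you also need the cones to be closed (and the space finite-dimensional). Both are part of the paper's standing assumptions in this section, so your argument goes through, but the justification should invoke closedness rather than pointedness alone.
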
 

\begin{exercise} Prove this.  \vspace{-.1in} \end{exercise} 

Thus, in the minimal tensor cone, all normalized states are separable.

\tempout{{\em Proof:} Let $C$ denote the {\blue [closed?]} cone spanned by 
pure tensors $\omega = \alpha \otimes \beta$ with 
$\alpha, \beta$ positive. For any such pair $\alpha, \beta$ 
and any positive bilinear 
form $F$ we have $\omega(F) = F(\alpha,\beta) \geq 0$. It that $C$ is contained 
in $(\V \mintensor \W)_{+}$. Conversely, suppose 
$\tau$ is not in the cone $C$. Find a separating positive 
functional: this amounts to a bilinear form, so not 
in min. cone. $\Box$}


\begin{corollary}\label{cor: duality for min and max TP} Let $\V$ and $\W$ be finite-dimensional 
ordered vector spaces. Then (up to canonical order-isomorphisms),
\begin{mlist} 
\item[(a)] $(\V \maxtensor \W)^{\ast} = \V^{\ast} \mintensor \W^{\ast}$, 
\item[(b)] $(\V \mintensor \W)^{\ast} = \V^{\ast} \maxtensor \W^{\ast}$, 
and hence 
\item[(c)] $(\V \maxtensor \W) = (\V^{\ast} \mintensor \W^{\ast})^{\ast}$ 
\end{mlist} 
\end{corollary}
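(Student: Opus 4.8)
Since all the spaces in question have, as underlying vector space, either $\V\tensor\W$ or $\V^{\ast}\tensor\W^{\ast}$ — and these are canonically dual to one another via $(\V\tensor\W)^{\ast}\cong\V^{\ast}\tensor\W^{\ast}$ — the entire statement reduces to matching up \emph{positive cones}. Throughout I would use that for a finite-dimensional ordered space $X$ the positive cone of $X^{\ast}$ is the dual cone $X_{+}^{\ast}=\{f:f(X_{+})\geq 0\}$, and that a closed convex cone equals its own bidual ($X_{+}^{\ast\ast}=X_{+}$), so that under our standing assumption of closed cones $X^{\ast\ast}=X$ as ordered spaces. The strategy is to read off (c) and (b) almost directly from the definitions together with Lemma \ref{lem: minimal tensor cone}, and then to obtain (a) from (c) by dualizing.

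First I would prove (c). By definition $\omega\in(\V\maxtensor\W)_{+}$ iff $\omega(a\tensor b)\geq 0$ for all $a\in\V^{\ast}_{+}$ and $b\in\W^{\ast}_{+}$; that is, $(\V\maxtensor\W)_{+}$ is exactly the dual cone, inside $\V\tensor\W$, of the set of positive pure tensors in $\V^{\ast}\tensor\W^{\ast}$. By Lemma \ref{lem: minimal tensor cone} applied to $\V^{\ast}$ and $\W^{\ast}$, those positive pure tensors generate precisely $(\V^{\ast}\mintensor\W^{\ast})_{+}$. Hence $(\V\maxtensor\W)_{+}=\bigl((\V^{\ast}\mintensor\W^{\ast})_{+}\bigr)^{\ast}$, which is the assertion $\V\maxtensor\W=(\V^{\ast}\mintensor\W^{\ast})^{\ast}$ of (c). Statement (b) is equally direct: by Lemma \ref{lem: minimal tensor cone}, $(\V\mintensor\W)_{+}$ is the cone generated by the $\alpha\tensor\beta$ with $\alpha\in\V_{+}$, $\beta\in\W_{+}$, so its dual cone is the set of bilinear forms $\phi\in\B(\V,\W)$ with $\phi(\alpha,\beta)\geq 0$ for all such $\alpha,\beta$ — that is, $\B_{+}(\V,\W)$. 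By reflexivity of the cones, $\B_{+}(\V,\W)=\B_{+}(\V^{\ast\ast},\W^{\ast\ast})=(\V^{\ast}\maxtensor\W^{\ast})_{+}$, giving (b).

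With (c) in hand, (a) follows formally: dualizing $\V\maxtensor\W=(\V^{\ast}\mintensor\W^{\ast})^{\ast}$ gives $(\V\maxtensor\W)^{\ast}=(\V^{\ast}\mintensor\W^{\ast})^{\ast\ast}=\V^{\ast}\mintensor\W^{\ast}$, where the last equality is the bidual identity $C^{\ast\ast}=C$ applied to $C=(\V^{\ast}\mintensor\W^{\ast})_{+}$.

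The one genuinely non-formal point — and the step I expect to be the main obstacle — is that this bidual identity requires $(\V^{\ast}\mintensor\W^{\ast})_{+}$ to be \emph{closed}; the maximal cone, being a dual cone, is automatically closed, but closedness of a minimal tensor cone is not evident from its description as a span of pure tensors. I would handle it as follows. Since $\V^{\ast}_{+}$ and $\W^{\ast}_{+}$ are closed, pointed, generating cones, choose strictly positive functionals $f,g$ and the resulting compact bases $B=\{a\in\V^{\ast}_{+}:f(a)=1\}$ and $B'=\{b\in\W^{\ast}_{+}:g(b)=1\}$. The set $\{a\tensor b:a\in B,\ b\in B'\}$ is compact, and its convex hull lies in the hyperplane $\{(f\tensor g)=1\}$, hence misses the origin. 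By Lemma \ref{lem: minimal tensor cone} the minimal cone is exactly the conic hull of this compact convex set, and the conic hull of a compact convex set not containing $0$ is closed. This closes the gap, so the separation (Hahn–Banach) argument underpinning the bidual identity goes through, completing (a) and with it the corollary.
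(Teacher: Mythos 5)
Your proof is correct. There is, however, nothing in the paper to compare it against: the text states this corollary and immediately says ``Prove this one, too'' --- it is an exercise, just like Lemma \ref{lem: minimal tensor cone} before it. So your proposal stands as a solution to that exercise, and it is a valid one, built from exactly the ingredients the text sets up.

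One structural comment on where the real work lies. The ``main obstacle'' you identify --- closedness of the minimal tensor cone --- is actually automatic from the paper's definition: $(\V \mintensor \W)_{+}$ is \emph{defined} as the dual cone of $\B_{+}(\V,\W)$ inside $\B(\V,\W)^{\ast}$, and a dual cone, being an intersection of closed half-spaces, is always closed. What genuinely requires your compact-base/conic-hull argument is Lemma \ref{lem: minimal tensor cone} itself: to show the dual cone of $\B_{+}(\V,\W)$ is no bigger than the cone generated by positive pure tensors, one separates a point outside the pure-tensor cone by a hyperplane, and that separation needs the pure-tensor cone to be closed. So your closedness argument is not wasted, but it belongs to the proof of the Lemma (also left as an exercise) rather than to the Corollary. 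Relatedly, part (a) needs neither the Lemma nor any closedness: $(\V \maxtensor \W)_{+} = \B_{+}(\V^{\ast},\W^{\ast})$ by definition, while $(\V^{\ast} \mintensor \W^{\ast})_{+}$ is by definition the dual cone of $\B_{+}(\V^{\ast},\W^{\ast})$, and these two statements together \emph{are} (a). Dualizing (a) then gives (c) (the maximal cone is closed, being cut out by the inequalities $\omega(a \otimes b) \geq 0$), and (b) follows from the bidual identity applied to the closed cone $\B_{+}(\V,\W)$ plus cone reflexivity $\V^{\ast\ast}_{+} = \V_{+}$. Your route --- (c) and (b) first via Lemma \ref{lem: minimal tensor cone}, then (a) by dualizing --- is equally sound and arguably closer to the paper's intent, since the Corollary is placed right after the Lemma; the definitional route just shows that part of the Lemma's force can be bypassed.
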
 

\begin{exercise} Prove this one, too. \vspace{-.1in} \end{exercise} 

Returning to probabilistic models, suppose 
$\omega$ is a joint state on $\A \times \B$. 
As we will now see, this is non-signaling 
if and only if it extends to a bilinear form 
on $\V(A) \times \V(B)$. More exactly, 

\begin{theorem}\label{thm: no signaling and bilinearity} A joint probability 
weight $\omega$ on $\M(A) \times \M(B)$ is non-signaling 
iff there exists a bilinear form 
$F_{\omega} : \V(A) \times \V(B) \rightarrow \R$ such that 
\[F_{\omega}(\hat{x},\hat{y}) = \omega(x,y)\]
for all outcomes $x \in X(A)$, $y \in X(B)$.
\end{theorem}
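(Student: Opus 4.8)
The plan is to prove the two implications separately, using the fact established earlier that the evaluation map $x \mapsto \hat{x}$ is an effect-valued weight, so that $\sum_{x \in E}\hat{x} = u_A$ in $\V(A)^{\ast}$ for every test $E \in \M(A)$, and symmetrically for $B$. Since $F_{\omega}(\hat{x},\hat{y})$ only typechecks when $F_{\omega}$ is a bilinear form on $\V(A)^{\ast} \times \V(B)^{\ast}$, I would read the conclusion that way — equivalently, as producing the element of $\V(A) \otimes \V(B) \cong \B(\V(A)^{\ast},\V(B)^{\ast})$ represented by $\omega$.

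I would dispatch the direction ``bilinear form exists $\Rightarrow$ non-signaling'' first, as it is immediate. Given such an $F_{\omega}$, for any tests $E, E' \in \M(A)$ and any $y \in X(B)$ bilinearity gives
\[\omega(Ey) = \sum_{x \in E} F_{\omega}(\hat{x},\hat{y}) = F_{\omega}\Big(\sum_{x\in E}\hat{x},\ \hat{y}\Big) = F_{\omega}(u_A,\hat{y}),\]
which does not depend on $E$; hence $\omega(Ey) = \omega(E'y)$, and the symmetric computation with $\sum_{y \in F}\hat{y} = u_B$ handles the other direction, so $\omega$ is non-signaling.

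For ``non-signaling $\Rightarrow$ bilinear form'', I would first record the structural content of non-signaling: fixing $y$, the partial function $\omega(\cdot,y) : X(A) \to \R$ is pointwise nonnegative and sums over every test $E$ to the $E$-independent value $\omega(Ey) =: \omega_2(y)$, so it is a nonnegative multiple of a probability weight on $\M(A)$, hence lies in $\V(A)$ (and symmetrically $\omega(x,\cdot) \in \V(B)$). Because $\Omega(A)$ separates the outcomes, the functionals $\{\hat{x}\}$ separate points of the finite-dimensional $\V(A)$ and therefore span $\V(A)^{\ast}$; so I can pick $x_1,\dots,x_n$ with $\{\hat{x_i}\}$ a basis of $\V(A)^{\ast}$, and likewise $y_1,\dots,y_m$ for $\V(B)^{\ast}$. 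Setting $F_{\omega}(\hat{x_i},\hat{y_j}) := \omega(x_i,y_j)$ and extending bilinearly produces a bilinear form with no further ado; the only work is to verify $F_{\omega}(\hat{x},\hat{y}) = \omega(x,y)$ for all $x,y$. Writing $\hat{x} = \sum_i a_i \hat{x_i}$ in $\V(A)^{\ast}$, the relation $\hat{x} - \sum_i a_i \hat{x_i} = 0$ says $\sum_i a_i\, v(x_i) = v(x)$ for every $v \in \V(A)$; pairing it against $v = \omega(\cdot,y_j) \in \V(A)$ yields
\[\omega(x,y_j) = \sum_i a_i\, \omega(x_i,y_j) = F_{\omega}(\hat{x},\hat{y_j}),\]
and repeating the identical step in the second slot, pairing against $\omega(x,\cdot) \in \V(B)$, gives the general identity.

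The main obstacle is the single membership claim in the forward direction, namely that each partial weight $\omega(\cdot,y)$ lies in $\V(A)$ rather than merely in the span of \emph{all} probability weights on $\M(A)$: this is precisely where non-signaling must be combined with the hypothesis that the relevant conditional weights are admissible states (automatic for full models, and built into the joint-state reading otherwise). Once that is secured, the rest is linear algebra powered by the single identity $\sum_{x \in E}\hat{x} = u_A$, and the separation hypothesis guaranteeing that the evaluation functionals span the duals.
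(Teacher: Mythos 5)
Your proposal is correct, and your reading of the statement---as asserting a bilinear form on $\V(A)^{\ast} \times \V(B)^{\ast}$ rather than on $\V(A) \times \V(B)$---is the right repair of the statement's type mismatch; it is exactly what the paper's own proof produces. The forward implication is the same in both treatments (the paper delegates it to the exercise following the theorem). For the converse your mechanics genuinely differ from the paper's, though both pivot on the same key fact. The paper argues coordinate-free: non-signaling puts each slice $\hat{\omega}(x) := \omega(x,\cdot)$ in $\V_{+}(B)$, and then, for every positive functional $b$ on $\V(B)$, the function $b \circ \hat{\omega}$ is nonnegative with test-independent sum $b(\omega_{2})$, hence lies in $\V_{+}(A)$; this makes $b \mapsto b \circ \hat{\omega}$ a positive linear map $\V(B)^{\ast} \rightarrow \V(A)$, and one sets $F_{\omega}(a,b) := a(b \circ \hat{\omega})$. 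You instead fix bases of evaluation functionals in the two dual spaces (using finite-dimensionality), define $F_{\omega}$ on those, and verify agreement at all pairs $(\hat{x},\hat{y})$ by pairing linear relations among evaluation functionals against the slices $\omega(\cdot,y_j) \in \V(A)$ and $\omega(x,\cdot) \in \V(B)$. Both proofs rest on the same membership claim and carry the caveat you flag: a slice of $\omega$ is, a priori, only a nonnegative multiple of a \emph{probability weight}, and placing it in $\V(A) = \spn(\Omega(A))$ requires either fullness or the hypothesis that the conditional weights are states (as in the definition of a non-signaling joint state); the paper's proof silently elides this step, so your explicitness is a point in your favor---indeed, for non-full models the theorem as literally stated (for arbitrary non-signaling probability weights) can fail, so the caveat is substantive. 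What the paper's construction buys: it is basis-free, adapts (with modifications) to infinite dimensions, and exhibits $F_{\omega}$ as arising from a positive operator, namely the conditioning map that is reused in the next subsection. What yours buys: elementarity---well-definedness is automatic from the basis construction, and positivity of functionals never enters---at the cost of being tied to the finite-dimensional standing assumption of the section. One small correction: the evaluation functionals $\hat{x}$ separate points of $\V(A)$ automatically, because $\V(A) \subseteq \R^{X(A)}$ is a space of functions on $X(A)$; the standing assumption that $\Omega(A)$ separates \emph{outcomes} is what makes $x \mapsto \hat{x}$ injective, and is not what your spanning argument needs.
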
 

{\em Proof:} If $\omega$ corresponds to a bilinear form 
$F$ in the indicated way, it's straightforward that $\omega$ is non-signaling; see Exercise 
\ref{exc: no signaling from bilinearity} below. For the converse, 
suppose that $\omega$ is 
non-signaling. Define a mapping 
\[\hat{\omega} : X(A) \rightarrow \V_{+}(B)\]
by setting 
\[\hat{\omega}(x)(y) = \omega(x,y)\]
for every $x \in X(A)$ and all $y \in X(B)$. 
Then $\sum_{x \in E} \hat{\omega}(x) = \omega_{2} \in \Omega(B)$ for every $E \in \M(A)$.  Thus, if 
$b \in \V^{\ast}_{+}$, we have 
\[\sum_{x \in E} b(\hat{\omega}(x)) = b(\omega_{2}) \geq 0,\]
a constant. Thus, $b \circ \hat{\omega} \in \V_{+}(A)$, 
and we have a positive linear mapping $\V^{\ast}(B) \rightarrow \V(A)$ given by $\hat{\omega}^{\ast}(b) = b \circ \hat{\omega}$ for all $b \in \V^{\ast}$. This, in turn, 
defines a bilinear form on $\V^{\ast}(A) \times \V^{\ast}(B)$, 
given by $F_{\omega}(a,b) = a(\hat{\omega}^{\ast}(b))$. $\Box$

\begin{exercise}\label{exc: no signaling from bilinearity} Suppose $F : \V(A)^{\ast} \times \V(B) \rightarrow \R$ 
is a positive bilinear form with $F(u_A, u_B) = 1$, 
and define $\omega(x,y) = F(\hat{x},\hat{y})$ for all 
$x \in X(A), y \in X(B)$. Show that $\omega$ is a non-signaling state on $A$ and $B$. 
\end{exercise} 

\begin{corollary} If $AB$ is a finite-dimensional locally tomographic composite, 
then 
\[(\V(A) \otimes \V(B))_{+} \subseteq \V(AB)_{+} \subseteq (\V(A) \maxtensor \V(B))_{+}.\] 
Hence, as vector spaces (ignoring order), we have \[\V(AB) = \V(A) \otimes \V(B).\]
\end{corollary}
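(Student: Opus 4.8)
The plan is to use local tomography to realize $\V(AB)$ concretely as the tensor product $\V(A) \otimes \V(B)$ (viewed as bilinear forms on $\V(A)^{\ast} \times \V(B)^{\ast}$), and then to read off the two cone inclusions from the characterizations of the minimal and maximal cones already established.

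First I would construct the identifying map. Every state $\omega \in \Omega(AB)$ pulls back along the structural morphism to a non-signaling joint state $\pi^{\ast}(\omega)$ on $A \times B$, and by Theorem \ref{thm: no signaling and bilinearity} this corresponds to a bilinear form $F_{\omega}$ on $\V(A)^{\ast} \times \V(B)^{\ast}$ with $F_{\omega}(\hat{x},\hat{y}) = \omega(xy)$ for all product outcomes, i.e.\ to an element of $\B(\V(A)^{\ast},\V(B)^{\ast}) \simeq \V(A) \otimes \V(B)$ under the canonical isomorphisms (\ref{thm: no signaling and bilinearity}'s ambient space). The assignment $\omega \mapsto F_{\omega}$ is affine, so by the unique-extension Lemma \ref{lem: unique extension lemma} it extends to a linear map $\iota : \V(AB) \to \V(A) \otimes \V(B)$. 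Local tomography is precisely the statement that $\pi^{\ast}$ is injective on states; since $F_{\omega}$ records exactly the values $\omega(xy)$ on product outcomes, $\iota$ is injective.

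Next I would show $\iota$ is onto. By the definition of a non-signaling composite, $\pi^{\ast}(\Omega(AB))$ contains every product state $\alpha \otimes \beta$, so for each such pair there is $\omega \in \Omega(AB)$ with $\iota(\omega) = \alpha \otimes \beta$. Since $\Omega(A)$ linearly spans $\V(A)$ and $\Omega(B)$ spans $\V(B)$, the pure tensors $\alpha \otimes \beta$ span $\V(A) \otimes \V(B)$; hence $\iota$ is surjective and therefore a linear isomorphism, proving the final assertion $\V(AB) = \V(A) \otimes \V(B)$. From here I identify the two spaces via $\iota$ and turn to the cones. For the lower inclusion, Lemma \ref{lem: minimal tensor cone} says the minimal cone is spanned by the positive pure tensors $\alpha \otimes \beta$ with $\alpha \in \V(A)_{+}$, $\beta \in \V(B)_{+}$; scaling to states, each such tensor is $\iota(\omega)$ for a genuine state $\omega \in \Omega(AB) \subseteq \V(AB)_{+}$, and since $\V(AB)_{+}$ is a cone the whole minimal cone lies inside it. For the upper inclusion it suffices, because $\V(AB)_{+}$ is the cone over the base $\Omega(AB)$, to verify membership of $F_{\omega}$ in the maximal cone for $\omega \in \Omega(AB)$. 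Recalling from the proof of Theorem \ref{thm: no signaling and bilinearity} that $F_{\omega}(a,b) = a(b \circ \hat{\omega})$ with $b \circ \hat{\omega} \in \V(A)_{+}$ whenever $b \in \V(B)^{\ast}_{+}$, we get $F_{\omega}(a,b) \geq 0$ for all $a \in \V(A)^{\ast}_{+}$, $b \in \V(B)^{\ast}_{+}$, which is exactly the defining condition for $(\V(A) \maxtensor \V(B))_{+}$.

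I expect the main obstacle to be the bookkeeping of dualities rather than any deep difficulty: keeping straight that outcomes give effects $\hat{x} \in \V(A)^{\ast}$ while states live in $\V(A)$, that the representing bilinear forms are defined on the dual spaces (so $\V(AB)$ sits inside $\B(\V(A)^{\ast},\V(B)^{\ast}) = \V(A) \otimes \V(B)$), and that the two hypotheses play complementary roles — local tomography giving injectivity of $\iota$, the product-state condition giving surjectivity. Once these identifications are pinned down, both cone inclusions are immediate from the respective descriptions of the minimal and maximal cones.
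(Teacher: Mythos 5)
Your proposal is correct and matches the route the paper intends: the corollary is stated there as an immediate consequence of Theorem \ref{thm: no signaling and bilinearity} (each state of $AB$ pulls back along $\pi$ to a non-signaling joint state, hence corresponds to a bilinear form on $\V(A)^{\ast} \times \V(B)^{\ast}$ that is positive on pairs of positive functionals), combined with Lemma \ref{lem: minimal tensor cone} and the product-state clause in the definition of a composite, which is exactly how you assemble it. One small point to tighten: injectivity of $\pi^{\ast}$ on $\Omega(AB)$ does not by itself yield injectivity of the linear extension $\iota$ (a linear map can be injective on a convex set but not on its span); you also need that pulled-back states are normalized --- they sum to $1$ over every product test --- so that $\iota(s\omega_1 - t\omega_2) = 0$ first forces $s = t$ by evaluating at $(u_A, u_B)$, and then $\omega_1 = \omega_2$ by local tomography.
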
 

\begin{corollary}[Corollary to Corollary] A finite-dimensional composite $AB$ is locally tomographic iff 
$\dim(\V(AB)) = \dim(\V(A))\cdot\dim(\V(B))$. 
\end{corollary}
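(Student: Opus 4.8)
The plan is to reduce the whole statement to a rank count for a single linear map, so that the equivalence becomes the observation that a surjection between equal-dimensional spaces is a bijection. The forward implication is then almost free, and justifies the name: if $AB$ is locally tomographic, the preceding Corollary already gives $\V(AB) = \V(A) \otimes \V(B)$ as vector spaces, whence $\dim \V(AB) = \dim(\V(A) \otimes \V(B)) = \dim \V(A) \cdot \dim \V(B)$. All the real work is in the converse.

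For the converse I would first linearize the pullback $\pi^{\ast}$. Because $\pi : A \times_{NS} B \to AB$ is a morphism, $\pi^{\ast}$ sends each state $\omega \in \Omega(AB)$ to the non-signaling joint weight $\omega \circ \pi \in \Omega_{NS}(A,B)$. By Theorem~\ref{thm: no signaling and bilinearity}, every non-signaling weight is the restriction to product outcomes of a (unique) bilinear form on $\V(A)^{\ast} \times \V(B)^{\ast}$, that is, of an element of $\B(\V(A)^{\ast}, \V(B)^{\ast}) \cong \V(A) \otimes \V(B)$. Thus $\pi^{\ast}$ is an affine map $\Omega(AB) \to \V(A) \otimes \V(B)$, and by Lemma~\ref{lem: unique extension lemma} it extends uniquely to a linear map $T : \V(AB) \to \V(A) \otimes \V(B)$ restricting to $\pi^{\ast}$ on $\Omega(AB)$. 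Local tomography is exactly the injectivity of $T$ on $\Omega(AB)$, and hence is implied by injectivity of $T$.

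The key step is that $T$ is always surjective, for any non-signaling composite. Indeed, the defining condition of a composite (Definition~\ref{def: composites}) requires that $\pi^{\ast}(\Omega(AB))$ contain every product state $\alpha \otimes \beta$ with $\alpha \in \Omega(A)$, $\beta \in \Omega(B)$; so the image of $T$ contains all these pure tensors. Since $\Omega(A)$ spans $\V(A)$ and $\Omega(B)$ spans $\V(B)$, the tensors $\alpha \otimes \beta$ span $\V(A) \otimes \V(B)$, and $T$ is onto. Now I would invoke the hypothesis $\dim \V(AB) = \dim \V(A) \cdot \dim \V(B) = \dim(\V(A) \otimes \V(B))$: a surjective linear map between finite-dimensional spaces of equal dimension is bijective, hence injective. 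Restricting to states, $\pi^{\ast}$ is injective, i.e.\ $AB$ is locally tomographic.

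I expect the only delicate point to be the bookkeeping in the linearization step---confirming that $T$ really lands in $\V(A) \otimes \V(B)$ rather than in some larger space of joint weights, and that its restriction to $\Omega(AB)$ recovers $\pi^{\ast}$---which rests entirely on Theorem~\ref{thm: no signaling and bilinearity} identifying non-signaling weights with tensors. After that the argument is just the elementary rank observation, so no serious obstacle remains; it is the surjectivity of $T$, forced by the product-state clause in the definition of a composite, that makes the dimension count decisive.
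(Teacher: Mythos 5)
Your proof is correct and is exactly the argument the paper intends (the statement is left unproved as a ``Corollary to Corollary''): the forward direction is immediate from the preceding Corollary's identification $\V(AB) = \V(A)\otimes\V(B)$, and the converse rests on the observation that the linearized pullback $T : \V(AB) \rightarrow \V(A)\otimes\V(B)$ is surjective for \emph{any} non-signaling composite---its image contains all product states $\alpha\otimes\beta$, which span---so equal dimensions force injectivity by rank-nullity, which is local tomography. Your handling of the one delicate point (that pullbacks of states are genuine non-signaling joint states, hence correspond to elements of $\V(A)\otimes\V(B)$ via the bilinearity theorem) is also sound, since test-preserving morphisms carry $\Omega(AB)$ into $\Omega_{NS}(A,B)$.
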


Using this, it's not hard to prove the following \cite{AW-TP}

\begin{theorem}\label{thm: State space of bilateral}  If $A$ and $B$ are finite-dimensional probabilistic models, then 
\[\V(\bilat{AB}) = \V(A) \hat{\otimes}_{\mbox{max}} \V(B).\]
\end{theorem}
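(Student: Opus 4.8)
The plan is to identify both sides as base-normed (hence ordered) vector spaces by exhibiting an affine bijection between their distinguished bases, and to recognize that this bijection is precisely the correspondence $\omega \mapsto F_\omega$ furnished by Theorem \ref{thm: no signaling and bilinearity}. By construction, $\V(\bilat{AB})$ is the ordered vector space generated by the cone over $\Omega(\bilat{AB})$, the set of non-signaling joint states of $A$ and $B$, so its base is $\Omega(\bilat{AB})$. On the other side, unwinding the definition of $\maxtensor$ together with the duality $(\V(A) \maxtensor \V(B))^{\ast} = \V(A)^{\ast} \mintensor \V(B)^{\ast}$ from Corollary \ref{cor: duality for min and max TP}, the space $\V(A) \maxtensor \V(B)$ is base-normed with order unit $u_A \otimes u_B$ and base equal to the \emph{normalized positive bilinear forms} on $\V(A)^{\ast} \times \V(B)^{\ast}$ — forms $F$ with $F(a,b) \geq 0$ for $a \in \V(A)^{\ast}_+$, $b \in \V(B)^{\ast}_+$ and $F(u_A,u_B) = 1$.

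First I would set up the map. Given $\omega \in \Omega(\bilat{AB})$, Lemma \ref{influence and classical communication} shows $\omega$ is non-signaling, so Theorem \ref{thm: no signaling and bilinearity} produces a bilinear form $F_\omega$ with $F_\omega(\hat x, \hat y) = \omega(x,y)$. Two finite-dimensional facts keep this clean: since $\Omega(A)$ separates the points of $X(A)$, the evaluation functionals $\{\hat x : x \in X(A)\}$ separate points of $\V(A)$ and therefore span $\V(A)^{\ast}$ (and likewise for $B$), so $F_\omega$ is the \emph{unique} such form and $\omega \mapsto F_\omega$ is well-defined and linear. The proof of Theorem \ref{thm: no signaling and bilinearity} in fact gives more: it exhibits $b \circ \hat\omega \in \V(A)_+$ for every $b \in \V(B)^{\ast}_+$, whence $F_\omega(a,b) = a(b \circ \hat\omega) \geq 0$ for all $a \in \V(A)^{\ast}_+$, so $F_\omega$ is \emph{positive}. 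Normalization is automatic, as $\sum_{(x,y) \in E \times F}\omega(x,y) = 1$ reads $F_\omega(u_A,u_B) = 1$ after summing $\hat x$ over $E$ and $\hat y$ over $F$. Thus $\omega \mapsto F_\omega$ lands in the base of $\V(A) \maxtensor \V(B)$.

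For the converse I would start from a normalized positive bilinear form $F$ and set $\omega(x,y) = F(\hat x, \hat y)$; by Exercise \ref{exc: no signaling from bilinearity} this is a non-signaling probability weight. What remains is the constraint built into $\Omega(\bilat{AB})$, that all conditional (hence marginal) states lie in $\Omega(A)$ and $\Omega(B)$, and the step I expect to carry the most weight is showing that this constraint is \emph{automatic} from positivity. Indeed, for each $x$ with $\omega_1(x) > 0$ the functional $b \mapsto F(\hat x, b)$ is positive on $\V(B)^{\ast}$, hence (using finite-dimensional reflexivity, $\V(B)^{\ast\ast} = \V(B)$ and $\V(B)^{\ast\ast}_+ = \V(B)_+$) defines an element of $\V(B)_+$; normalizing by its $u_B$-value lands it in the base of $\V(B)_+$, which is exactly $\Omega(B)$ because $\V(B)$ was built as the span of $\Omega(B)$. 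So $\omega_{2|x} \in \Omega(B)$, symmetrically $\omega_{1|y} \in \Omega(A)$, and $\omega \in \Omega(\bilat{AB})$. The maps $\omega \mapsto F_\omega$ and $F \mapsto \omega$ are mutually inverse, since a form is determined by its values on the spanning pairs $(\hat x,\hat y)$ and a weight by its values on the $(x,y)$.

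Finally I would extend this affine bijection of bases linearly to a vector-space map $\V(\bilat{AB}) \to \V(A) \otimes \V(B)$. Carrying one base onto the other, it carries the cone over each base onto the cone over the other, i.e.\ $\V(\bilat{AB})_+$ onto $(\V(A) \maxtensor \V(B))_+$, and preserves the order unit; a base- and cone-preserving linear bijection is an isomorphism of base-normed spaces, giving the claimed identity (in finite dimensions $\hat\otimes_{\mathrm{max}}$ and $\maxtensor$ agree, no completion being needed). The only residual bookkeeping is to confirm the max-tensor cone is generating, so its base genuinely spans $\V(A)\otimes\V(B)$ and the extended map is onto; this holds because the max cone contains the min cone, which is already generating by Lemma \ref{lem: minimal tensor cone}.
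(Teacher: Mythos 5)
Your proof is correct and is essentially the paper's own (sketched) argument: like the paper, you identify $\Omega(\bilat{AB})$ with the base of $(\V(A) \maxtensor \V(B))_{+}$ by combining Theorem \ref{thm: no signaling and bilinearity} for the forward direction (with positivity read off from its proof), Exercise \ref{exc: no signaling from bilinearity} together with a bidual-cone argument for the converse, and the duality of Corollary \ref{cor: duality for min and max TP}, then extend the affine bijection of bases to a linear order-isomorphism. The one caveat --- inherited from the paper's proof of Theorem \ref{thm: no signaling and bilinearity} rather than introduced by you --- is that the step $b \circ \hat{\omega} \in \V(A)_{+}$ tacitly assumes that a probability weight on $\M(A)$ lies in the cone generated by $\Omega(A)$, which is automatic for full models (the setting of the sources the paper cites) but can fail when the state spaces are restricted.
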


{\gray Theorem (\thetheorem) was first proved in the finite-dimensional case by Kl\"{a}y Foulis and Randall \cite{KFR}, using a dimension-counting argument; 
the approach sketched here, which also works (with suitable 
modifications) in infinite dimensions, is from \cite{Wilce92}. This was based on earlier work on tensor products of compact convex sets \cite{NP} and ordered linear spaces \cite{Wittstock} 
It is not hard to show that if $\Omega(A)$ or $\Omega(B)$ is a simplex, then $\V(A) \mintensor \V(B) = \V(A) \maxtensor \V(B)$. A question raised in \cite{NP} 
was whether the converse is true. This was only settled recently, 
in the affirmative, by Aubrun, Lami, Palazuelos and Pl\'{a}vala in \cite{ALPP}. }

\tempout{
{\em Remark:} At this point, we can condense our definition of a non-signaling composite in a nice way. The bilinear 
mapping $m : \V(A) \times \V(B) \rightarrow \V(AB)$ 
extends to a linear mapping $\V(A) \otimes \V(B) \rightarrow \V(AB)$, and since product states go to states, this must 
be positive with respect to the minimal tensor cone. 
We also have an outcome-preserving morphism $\pi : X(A) \times X(B) \rightarrow X(AB)$. This takes (non-signaling) states 
in $\Omega(AB)$ back to non-signaling joint states 
on $A \times B$, giving us a positive linear mapping $\V(AB) \rightarrow \V(A) \maxtensor \V(B)$. 
}

\subsection{Non-signaling states and effects as Mappings} 

In finite dimensions, a bilinear form $F : \V(A)^{\ast} \times \V(B)^{\ast} \rightarrow \R$ is effectively the same thing as a linear mapping $\phi : \V(A)^{\ast} \rightarrow \V(B) = \V(B)^{\ast \ast}$: for 
$a \in \V(A)^{\ast}$,  just define $\phi(a)(b) = F(a,b)$ for all 
$b \in \V(B)^{\ast}$.  
Similarly, an element $\omega$ of $\V \otimes \W$ defines a positive linear mapping 
\[\hat{\omega}: \V^{\ast} \rightarrow \W \simeq \W^{\ast \ast}\]
where, for all $a \in \V(A)^{\ast}$ and $b \in \W^{\ast \ast}$, 
\[\hat{\omega}(a)(b) = (a \otimes b)(\omega).\]
Dually, if $f \in (\V \otimes \W)^{\ast} = 
\V^{\ast} \otimes \W^{\ast}$, we 
have an associated linear mapping 
\[\hat{f} : \V \rightarrow \W^{\ast}\] 
given by 
\[\hat{f}(\alpha)(\beta) = f(\alpha \otimes \beta).\]
If $\omega$ belongs to the maximal tensor cone 
of $\V$ and $\W$, then $\hat{\omega}$ is positive, and similarly if $f$ belongs to the maximal 
cone of $\V^{\ast} \otimes \W^{\ast}$,  then 
$\hat{f}$ is positive. 

Suppose now that 
$\V = \V(A)$ and $\W = \V(B)$ for some 
probabilistic models $A$ and $B$, and that 
$AB$ is a locally tomographic composite of these models. If $\omega \in \Omega(AB)$, 
we can represent $\omega$ as an element 
of $\V(A) \otimes \V(B)$. In this case, for 
any effect $a \in \V(A)^{\ast}$, 
$\hat{\omega}(a) = \omega(a,~ \cdot~)$ 
is $\omega_{1}(a) \omega_{2|a} \in \V(B)_+$  --- what 
is sometimes called the {\em un-normalized} 
conditional state of $\omega$ given $a$. For 
this reason, $\hat{\omega}$ is called the 
{\em conditioning map} associated with $\omega$.   
If $f$ is a bipartite effect, we call the mapping $\hat{f}$ the {\em co-conditioning map} 
associated with $f$. 

\begin{exercise} Show that, similarly, if 
$f$ is an effect in $\V(AB)^{\ast} \simeq \V(A)^{\ast} \otimes \V(B)^{\ast}$, then 
$\hat{f}(u_{A}) \leq u_{B}$.  
\end{exercise} 

If $\phi : A \rightarrow B$ is a morphism of models, then 
$\phi^{\ast} : \V(B) \rightarrow \V(A)$ is a positive linear mapping with the property 
that $u_{B}(\phi^{\ast}(\beta)) \leq 1 \ \ \forall \beta \in \Omega(A)$, that is, 
a channel, in the language of Definition 2.9. 

\tempout{ 
 More generally, 
we refer to any positive mapping $\Phi : \V(B) \rightarrow \V(A)$ satisfying (\theequation) as a {\em channel} from $A$ to $B$, and interpret this as some kind of (perhaps purely hypothetical) physical process 
whereby "input" states from $B$ are transformed into "output" states of $A$.  
As an example, any positive linear mapping $\V(\H) \rightarrow \V(K)$ with 
$\Tr(\Phi(a)) \leq \Tr(a)$ for all positive trace-class operators $a$ on $\H$, is a channel 
in this sense. 

{\gray {\em Remark:} Of course, in quantum theory it is standard to require channels to be 
not merely positive, but completely positive. This concept becomes available once we 
have made the choice of a rule for composing models. We will return to this issue in the next chapter. For the moment, we make the following observation (some assembly required): 

\begin{exercise} Show that if $\Phi : \V(A) \rightarrow \V(B)$ is a channel, then so is 
$\Phi \otimes \1 : \V(A) \maxtensor \V(C) \rightarrow \V(B) \maxtensor \V(C)$ for 
any model $C$. \end{exercise} 
}
}

Given a channel $\Phi : \V(B) \rightarrow \V(A)$, we can dualize to obtain a positive 
mapping $\Phi : \V(A)^{\ast} \rightarrow \V(B)^{\ast}$ where 
$\Phi^{\ast}(a)(\beta) = a(\Phi(\beta))$. Note that $\Phi^{\ast}(u_A) \leq u_B$. 
We refer to $\Phi^{\ast}$ as the {\em dual channel} associated with $\Phi$. 

\begin{lemma} Let $\omega$ be a non-signaling state on $A \times B$, and 
let $f$ be an effect in $\V(C)^{\ast} \maxtensor \V(A)^{\ast}$. Then 
$\hat{\omega} \circ \hat{f}$ is a channel from $C$ to $B$, and 
$\hat{f} \circ \hat{\omega}$ is a dual channel from $B$ to $C$. 
\end{lemma}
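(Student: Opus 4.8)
The plan is to unwind the two hat-maps, read off their positivity from the material just developed, and then verify the defining conditions for a channel (positivity together with $u_B(\Phi(\gamma))\le 1$ for states $\gamma$) by a single pairing computation that exploits the minimal--maximal duality of Corollary~\ref{cor: duality for min and max TP}.

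First I would pin down the types. Since $\omega$ is a non-signaling state, Theorem~\ref{thm: no signaling and bilinearity} together with the identification $(\V(A)\maxtensor\V(B))_{+}=\B_{+}(\V(A)^{\ast},\V(B)^{\ast})$ shows that $\omega\in(\V(A)\maxtensor\V(B))_{+}$; hence the conditioning map $\hat{\omega}:\V(A)^{\ast}\to\V(B)$, $\hat{\omega}(a)(b)=(a\otimes b)(\omega)$, is positive. Likewise, an effect $f\in\V(C)^{\ast}\maxtensor\V(A)^{\ast}$ satisfies $0\le f\le u_C\otimes u_A$ in the maximal cone, so the co-conditioning map $\hat{f}:\V(C)\to\V(A)^{\ast}$, $\hat{f}(\gamma)(\alpha)=f(\gamma\otimes\alpha)$, is positive, and for $\gamma\in\Omega(C)$ order-preservation gives $0\le\hat{f}(\gamma)\le\widehat{u_C\otimes u_A}(\gamma)=u_A$, so $\hat{f}(\gamma)$ is an effect in $\V(A)^{\ast}$. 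The composite $\hat{\omega}\circ\hat{f}:\V(C)\to\V(B)$ is then positive, being a composite of positive maps.

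Next I would check the normalization condition $u_B\big((\hat{\omega}\circ\hat{f})(\gamma)\big)\le 1$ for $\gamma\in\Omega(C)$. Unwinding the definitions, $u_B(\hat{\omega}(\hat{f}(\gamma)))=\hat{\omega}(\hat{f}(\gamma))(u_B)=(\hat{f}(\gamma)\otimes u_B)(\omega)$. Since $0\le\hat{f}(\gamma)\le u_A$, the tensor $(u_A-\hat{f}(\gamma))\otimes u_B$ is a positive pure tensor, hence lies in $(\V(A)^{\ast}\mintensor\V(B)^{\ast})_{+}$ by Lemma~\ref{lem: minimal tensor cone}; by Corollary~\ref{cor: duality for min and max TP}(a) this is exactly the cone dual to $(\V(A)\maxtensor\V(B))_{+}$, so pairing it against $\omega\ge 0$ gives $(\hat{f}(\gamma)\otimes u_B)(\omega)\le(u_A\otimes u_B)(\omega)$. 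Finally $(u_A\otimes u_B)(\omega)=\sum_{x\in E,\,y\in F}\omega(x,y)=1$ because $\omega$ is a joint state, so the bound is $1$. Thus $\hat{\omega}\circ\hat{f}$ is a channel from $C$ to $B$.

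For the second assertion, the key observation is that $\hat{f}\circ\hat{\omega}$, read with the symmetric interpretation of the hat-maps (so that $\hat{\omega}$ denotes the transpose conditioning map $\V(B)^{\ast}\to\V(A)$ and $\hat{f}$ the transpose $\V(A)\to\V(C)^{\ast}$), is literally the dual $(\hat{\omega}\circ\hat{f})^{\ast}:\V(B)^{\ast}\to\V(C)^{\ast}$ under the finite-dimensional identifications $\V^{\ast\ast}=\V$. I would therefore simply invoke the dualization discussion: the dual of a channel is a positive map taking effects to effects with $\Phi^{\ast}(u_B)\le u_C$, which is precisely a dual channel from $B$ to $C$, so this part follows automatically from the first. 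The main obstacle is not any hard estimate but the careful bookkeeping of which dual space each map inhabits and making the identification that renders ``$\hat{f}\circ\hat{\omega}$'' a well-typed map $\V(B)^{\ast}\to\V(C)^{\ast}$; the one genuine computation, the pairing bound above, is immediate once the min--max duality is in hand.
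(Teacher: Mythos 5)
Your proof is correct and follows essentially the same route as the paper's, which reads: $\hat{f}(\gamma)$ is an effect on $A$, hence $\hat{\omega}(\hat{f}(\gamma))$ is a sub-normalized state of $B$ --- you simply make explicit the positivity and min/max-cone pairing bounds ($0 \le \hat{f}(\gamma) \le u_A$ and $(\hat{f}(\gamma)\otimes u_B)(\omega) \le (u_A \otimes u_B)(\omega) = 1$) that the paper leaves implicit. The only divergence is in the second claim, which the paper dispatches with ``proved similarly'' and you obtain by noting that, with the transposed readings of the hat-maps, $\hat{f}\circ\hat{\omega} = (\hat{\omega}\circ\hat{f})^{\ast}$ in finite dimensions; this is an equally valid, arguably tidier, way to finish.
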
 

\begin{proof} Let $\gamma$ be a state of $C$. Then $\hat{f}(\gamma)$ is an 
effect on $A$, so $\hat{\omega}(\hat{f}(\gamma))$ is a sub-normalized 
state on $B$.  The second statement is proved similarly. $\Box$ 
\end{proof} 



{\bf Remote evaluation and teleportation} 
One of the most striking applications of quantum information theory is the possibility of using 
an entangled state to construct channel through which 
the state of a system at one location can be 
"teleported" to a second, remote location, 
the original state being destroyed in the process. As it turns out, this possibility is not specifically quantum-mechanical, being available in a wide range of "post-quantum" GPTs. 

This is really an application of the following 
simple observation from linear algebra.  
If $\omega \in \V \otimes \W$, define a 
linear mapping
$\hat{\omega} : \V^{\ast} \rightarrow \W$ by 
$\hat{\omega}(a)(b) = (a \otimes b)(\omega)$. 
Similarly, if $f \in (\V \otimes \W)^{\ast}$, 
define $\hat{f} : \V \rightarrow \W^{\ast}$ 
by $\hat{f}(\alpha)(\beta) = f(\alpha \otimes \beta)$. Of course, in our setting, 
these are the conditioning and co-conditioning maps associated with a bipartite 
state and effect. But the following is independent of this interpretation:

\begin{lemma}[Remote Evaluation] \label{lemma: remote eval} Let $\U, \V, \W$ be any three finite-dimensional vector spaces, and 
let $\alpha \in \U$,  $\omega \in \V \otimes \W$,  $f \in (\U \otimes \V)^{\ast}$, and 
$e \in \W^{\ast}$. Then 
\[(f \otimes e)(\alpha \otimes \omega) 
= e(\hat{\omega} \circ \hat{f})(\alpha).\]
\end{lemma}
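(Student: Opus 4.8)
The plan is to exploit the fact that every object in sight depends linearly on $\omega$, so that it suffices to verify the identity on a single pure tensor $\omega = v \otimes w$ (with $v \in \V$, $w \in \W$) and then extend by linearity. Both sides are manifestly linear in $\omega$: the left-hand side because $\omega \mapsto \alpha \otimes \omega$ followed by evaluation against $f \otimes e$ is linear, and the right-hand side because $\omega \mapsto \hat{\omega}$ is linear (immediate from the defining formula $\hat{\omega}(a)(b) = (a \otimes b)(\omega)$), while $\hat{f}$ and $e$ do not involve $\omega$ at all. Writing a general $\omega$ as a finite sum of pure tensors therefore reduces the whole claim to the pure-tensor case.

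First I would compute the relevant maps on $\omega = v \otimes w$. Straight from the definition, for $a \in \V^{\ast}$ we have $\hat{\omega}(a)(b) = (a \otimes b)(v \otimes w) = a(v)\,b(w)$ for all $b \in \W^{\ast}$; under the canonical identification $\W \cong \W^{\ast\ast}$ this says $\hat{\omega}(a) = a(v)\,w$. Next, $\hat{f}(\alpha) \in \V^{\ast}$ is given by $\hat{f}(\alpha)(\beta) = f(\alpha \otimes \beta)$. Composing, $(\hat{\omega} \circ \hat{f})(\alpha) = \hat{\omega}\big(\hat{f}(\alpha)\big) = \hat{f}(\alpha)(v)\,w = f(\alpha \otimes v)\,w$, and hence $e\big((\hat{\omega} \circ \hat{f})(\alpha)\big) = f(\alpha \otimes v)\,e(w)$.

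For the left-hand side I would invoke the canonical associativity and duality isomorphisms $\U \otimes (\V \otimes \W) \cong (\U \otimes \V) \otimes \W$ and $(\U \otimes \V)^{\ast} \otimes \W^{\ast} \cong \big((\U \otimes \V) \otimes \W\big)^{\ast}$, both valid because the spaces are finite-dimensional, so that $f \otimes e$ pairs with $\alpha \otimes v \otimes w$ as $(f \otimes e)(\alpha \otimes v \otimes w) = f(\alpha \otimes v)\,e(w)$. This is exactly the expression obtained above, which settles the identity on pure tensors and hence, by the linearity noted at the outset, in general.

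The only real care needed — and the single place an error could creep in — is bookkeeping: keeping straight the identification $\W \cong \W^{\ast\ast}$ used in defining $\hat{\omega}$, and tracking the tensor factors across the associativity isomorphism so that the $\U \otimes \V$ slot of $f$ receives precisely $\alpha \otimes v$. There is no analytic or structural difficulty here; the entire content of the lemma is the naturality of these identifications in finite dimensions.
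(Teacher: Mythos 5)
Your proof is correct and follows essentially the same route the paper intends (the paper leaves the proof as an exercise with the hint: verify for pure tensors and extend by linearity). Your version is in fact slightly leaner, since you observe that the identity $(f \otimes e)\big((\alpha \otimes v) \otimes w\big) = f(\alpha \otimes v)\,e(w)$ holds by definition for arbitrary $f$, so only $\omega$ needs to be decomposed into pure tensors, not $f$ as the paper's hint suggests.
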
 

Note: $\hat{f}(\alpha) \in \V^{\ast}$, and 
$\hat{\omega}(\hat{\alpha}) \in \W$, so 
this type-checks. 

\begin{exercise} Prove lemma \ref{lemma: remote eval} with the following hint: verify the equation when 
$\omega$ and $f$ are pure tensors (say, 
$\omega = \beta \otimes \gamma$ and 
$f = a \otimes b$). Then extend by linearity. 
\end{exercise} 

In the particular case of three probabilistic models $A$, $B$, and $C$, representing three physical systems, suppose Alice controls a composite system $AB$ and Clovis controls $C$.
Suppose, further, that we have composites $(AB)C$ And $A(BC)$, 
and that these are isomorphic, so that effects on one can be evaluated at states on the other\footnote{This condition, which holds both classically and quantumly, will always 
be satisfied when our models sit inside a monoidal probabilistic theory, a topic we take up in the next chapter.} 
System $A$ in an unknown state $\alpha$, while $B$ and $C$ share a known state $\omega$. If Alice performs a measurement on 
$AB$ and obtains an outcome represented by an effect $f$, then Clovis's state, conditional on Alice's having obtained this outcome, is $\hat{\omega}(\hat{f}(\alpha))$. Let us call this {\em remote evaluation} of the function $\hat{\omega} \circ \hat{f}$. 

\begin{definition} $\omega$ is an {\em isomorphism} state iff $\hat{\omega}$ is a positive isomorphism $\V(B)^{\ast} \rightarrow \V(C)$. An {\em isomorphism effect} is defined dually. The inverse of an isomorphism state is an isomorphism effect, and vice versa. 
\end{definition}

Suppose now that $C = A$, that is, Clovis's system is a copy of Alice's system $A$. Let $f$ be an isomorphism effect  on $AB$ and let $\omega$ be an isomorphism state of $BA$, and let $g = \hat{\omega} \circ \hat{f} : A \simeq A$ (noting that this is invertible). Then if Alice performs a measurement of $f$ when the shared system is in state $\alpha \otimes \omega$, Bob's conditional state $(\alpha \otimes \omega)_{B|f}$ is $g\alpha$. 
Assuming that Alice knows the effect $f$ and the state $\omega$ 
(perhaps having engineered both), she also knows $g$. Telephoning Bob, she asks him to implement the symmetry $g^{-1}$ on 
his copy of $A$. The result is that Bob's state is 
now $\alpha$. Alice has (conclusively) teleported 
$\alpha$ from her copy of system $A$ to Bob's. 

A stronger form of teleportation requires us to find an entire test's worth of isomorphism effects on $AB$. Suppose $\{f_i\}$ is a collection of isomorphism effects with $\sum_{i} f_i = u_{AB}$, and that this corresponds to a measurement that Alice can make: upon obtaining outcome $f_i$, she instructs Bob to implement 
the symmetry corresponding symmetry $g_{i}^{-1}$, 
so as to leave his system in state $\alpha$. This is called {\em deterministic teleportation}, since in this scenario it is {\em certain} that the state will be successfully teleported. 
The existence of a partition of $u_{AB}$ by isomorphism effects is a stronger constraint on $AB$, but one can still find examples that are neither classical nor quantum in which 
this is possible. See \cite{BBLW-Teleportation} for the details.

{\gray {\em Gratuitous Remark:} For what it's worth, I found it very hard to understand what I read about teleportation protocols (which always seemed to place heavy emphasis on state collapse) until I realized it all comes down to Lemma \ref{lemma: remote eval}. }

\section{Probabilistic Theories}
{\fontsize{8}{9}\selectfont
\epigraph{\fontsize{9}{11}\selectfont A model is a mathematical structure in the same sense that the Mona Lisa is a painted piece of wood.}{\textit{\fontsize{9}{11}\selectfont B. van Fraassen \cite{van Fraassen}}}
}

Having described probabilistic models and composites thereof, we are in a position finally to say what we mean by a {\em probabilistic theory.} Presumably, this should involve some specified collection of models, representing the kinds of systems one wants to study, but also a specification of certain mappings --- let us say, {\em processes} --- connecting these models. This immediately suggests that a probabilistic theory might be thought of as a {\em category} of probabilistic models. While this is not {\em quite} the picture I will ultimately advocate, it is a good first approximation. In any case, it will be helpful to start with a short review of basic category-theoretic ideas.  A good general reference for this material  is the book by Emily Riehl \cite{Riehl}. For a lighter but very nice overview, see \cite{FS}. 

\subsection{Categorical Fundamentals} 

A {\em category} consists of a class $\Cat$ of 
{\em objects} and, for every pair $(A,B)$ of objects,  a set $\Cat(A,B)$ of {\em morphisms}, or {\em arrows}, from $A$ to $B$.\footnote{\green The words {\em class} and {\em set} are to be taken literally here: the objects 
may form a proper class, but the morphisms between two objects 
form a set. Some authors refer to this as a {\em locally small} category, allowing for more general structures in which 
$\Cat(A,B)$ can also be a proper class.} 
There is also specified
\begin{mlist} 
\item[(a)] A composition rule 
\[\circ : \Cat(A,B) \times \Cat(B,C) \rightarrow \Cat(A,C)\]
such that $f \circ (g \circ h) = (f \circ g) \circ h$ whenever one side is (and hence, both are) defined; 
\item[(b)] For each object $A$, an {\em identity morphism} 
$\id_{A} \in \Cat(A,A)$ such that $\id_{A} \circ f = f$ 
for any $f \in \Cat(B,A)$ and 
$f \circ \id_{A} = f$ for any $f \in \Cat(A,B)$. 
\end{mlist} 

Familiar examples include the category $\RVec$ of real vector 
spaces and linear mappings; the category $\Set$ of sets 
and mappings;  the category $\Grp$ of groups and 
group homomorphisms. Also, any poset $(L,\leq)$ 
can be regarded as a category having elements of $L$ 
as objects, with a unique morphism in $\Cat(a,b)$ if $a \leq b$, and $\Cat(a,b) = \emptyset$ otherwise.  A monoid (a semigroup with identity) is essentially the same thing as a one-object category. 

{\em Notation and Terminology:}  I will write $A \in \Cat$ to mean that $A$ is an object of $\Cat$ (though technically this is 
an abuse of notation). Sets of the form $\Cat(A,B)$ are usually called {\em Hom-sets}, reflecting the once more common notation $\Hom_{\Cat}(A,B)$ (itself an echo of the set of homomorphisms between two algebraic structures). 

{\bf The Category $\Prob$} In Section 1.4, we defined a morphism from a probabilistic model $A$ to a probabilistic model $B$ to be a mapping $\phi : X(A) \rightarrow X(B)$ such that $\phi(x) \perp \phi(y)$ for all $x, y \in X(A)$ with $x \perp y$, $\phi(a) \in \Ev(B)$ for all $a \in \Ev(A)$, and $\beta \circ \phi \in \Omega(A)$ for every state $\beta \in \Omega(B)$. It's clear that the composition of morphisms $A \rightarrow B$ and $C \rightarrow D$ is still a morphism, and that the identity mapping $X(A) \rightarrow X(A)$ is a morphism. Hence, we have a pretty general category of probabilistic models and morphisms.  We'll denote this 
by $\Prob$. 

It's tempting at this point to define a probabilistic theory to be a sub-category of $\Prob$. There are two reasons to resist this temptation. The first is that we'd like our probabilistic theories to be equipped with a compositional structure, and usually one that admits both entangled states and entangled effects.  The "native" compositional structures on 
$\Prob$, $\times_{NS}$ and $\bilat{\,  \, }$, allow for the former but not the latter, and are therefore not usually suitable; so we end up having to define our composition rule in a theory-specific way. At best, then, a probabilistic theory would be a sub-category of $\Prob$ with extra structure: a chosen compositional rule, specific to that sub-category. The second reason is that we might sometimes want to consider non process-tomographic theories: those in which there exist distinct {\em physical} processes $A \rightarrow B$ that are represented by the same morphism  between certain 
probabilistic {\em models of} $A$ and $B$.  This suggests 
that probabilistic theories should assign models in $\Prob$ 
to "systems" in a process theory $\Cat$. In other words, 
it should be a {\em functor} --- a term I'll now review.

{\bf Functors} It's often important to be able to shift from one category to another in a structure-preserving way. A {\em functor} from a category $\Cat$ to another, $\Dat$, is an assignment of an object $FA \in \Dat$ to 
every object $A \in \Cat$, and of a morphism 
$Ff \in \Cat(FA, FB)$ for every $f \in \Cat(A,B)$, so 
that 
\begin{itemize} 
\vspace{-.1in}
\item[(a)] $F(\id_{A}) = \id_{FA}$;
\item[(b)] $F(f \circ g) = Ff \circ Fg$ 
for all morphisms $f$, $g$ with $f \circ g$ defined. 
\end{itemize} 
\vspace{-.1in}
\tempout{
\[
\begin{tikzcd}
A \arrow[dd, "f" left]  
& &  
F(A)\arrow[dd, "Ff"] \\
 &  \Longrightarrow & \\
B  & & F(B)
\end{tikzcd}
\]
}
A {\em contravariant} functor $\Cat \rightarrow \Dat$ is defined 
in the same way, except that it reverses the order of composition: $F(f \circ g) = Fg \circ Ff$. To emphasize the distinction, functors as defined above are often called {\em covariant} functors. \footnote{\gray An equivalent
way to put things is to say that a contravariant 
functor $\Cat \rightarrow \Dat$ is a covariant functor $\Cat^{\op} \rightarrow \Dat$, where $\Cat^{\op}$ denotes the {\em opposite category} of $\Cat$. This has the same objects, but morphisms in $\Cat^{\op}(A,B)$ are in fact morphisms in $\Cat(B,A)$, and composition is reversed: $f \circ_{\op} g = g \circ f$.}

\begin{example} The power set construction gives us two functors, one covariant and one contravariant, on $\Set$. In both, each set $X$ is taken to its power set,  $\Pow(X)$. In the covariant version, a mapping $f : X \rightarrow Y$ is taken to the set mapping $\Pow(X) \rightarrow \Pow(Y)$ given by $a \mapsto f(a)$ 
For the contravariant version,  $f$ is taken to the mapping 
$\Pow(Y) \rightarrow \Pow(X)$ given by $a \mapsto f^{-1}(a)$.
\end{example} 

\begin{example}  There is a canonical contravariant "linearization" functor $\Set \rightarrow \Vec$ 
given by 
\[X \mapsto \R^{X} \ \ \mbox{and} \ \ f \in Y^{X} \ \mapsto \ f^{\ast} : \R^{Y} \rightarrow \R^{X}\]
where $f^{\ast}(\beta) = \beta \circ f$. There is also a covariant functor, defined as follows. Let $\R^{[X]}$ be the set of all {\em finitely supported} functions (functions taking the value $0$ at all but finitely many points) $\phi : X \rightarrow \R$. Any such function has a unique expression $\phi = \sum_{x \in F} \phi(x) \delta_{x}$ where $F$ is a finite subset of $X$, $\phi(x) \not = 0$ for all $x \in F$, and $\delta_{x}$ is the point-mass at $x$. Given a mapping $f : X \rightarrow Y$, we define $f_{\ast}(\phi) = \sum_{x \in F} \phi(x) \delta_{\phi(x)}$. It's easy to check that $f_{\ast}$ is linear, and that $(f \circ g)_{\ast} = f_{\ast} \circ g_{\ast}$. 
\end{example} 

\begin{exercise} Show that $\V(~\cdot~)$ defines a contravariant functor $\Prob \rightarrow \Vec$, while $\V^{\ast}(~ \cdot ~)$ is a covariant such functor. 
\end{exercise} 


In general, the image of a category $F : \Cat \rightarrow \Dat$ --- that is, the collection of objects $F(A)$ for $A \in \Cat$, and of morphisms $Ff$ for $f \in \Cat(A,B)$ --- is not a sub-category of $\Cat$, because one can have a situation in which $F(A) = F(B) =: D$, so that for morphisms $f : C \rightarrow A$ and $g : B \rightarrow C'$, $Ff : F(C) \rightarrow D$, $Fg : D \rightarrow F(C')$, but $g \circ f$ is not defined, and $Fg \circ Ff$ is not of the form $F(h)$ for any $h : C \rightarrow C'$. 

\begin{exercise} Find an example illustrating this possibility. (Hint: think small.) \end{exercise} 

\begin{exercise} Show that if $F : \Cat \rightarrow \Dat$ is injective on objects, then $F(\Cat)$ is a subcategory of $\Dat$.
\end{exercise} 

{\bf Natural Transformations} 
The single most important idea in Category theory is the following:

\begin{definition} A {\em natural transformation} 
from a functor $F : \Cat \rightarrow \Dat$ to a functor $G : \Cat \rightarrow \Dat$  is an assignment, 
to all pairs of objects $a \in A$, of 
a morphism $\phi_{a} : Fa \rightarrow Ga$ such that for 
every $f \in \Cat(A,B)$, 
\[
\begin{tikzcd}
FA \arrow[r,"\phi_A"] \arrow[d,"Ff"] & GA \arrow[d, "Gf"]\\
F(B) \arrow[r,"\phi_B"] & GB
\end{tikzcd}
\]
commutes.  The definition for contravariant functors is the same.  The morphism $\phi_{A}$ is called the {\em component} of the natural transformation {\em at} $A$.  
\end{definition}

\begin{example} Suppose $\Cat$ is a category with a single object, $1$. Then the collection $\Cat(1,1) =: \Cat(1)$ is a monoid (semigroup with identity). A functor $F : \Cat \rightarrow \Set$ picks out a set $S := F(1)$, and assigns to every $g \in \Cat(1)$, a mapping $Fg : S \rightarrow S$, such that $F(gh) = F(g) \circ F(h)$. This is just to say that $F$ specifies a set,  and an action of the monoid $\Cat(1)$ on this set. Given two such 
functors, say $F$ and $G$, with $F(1) = S$ and $G(1) = T$, 
we see that a natural transformation $\phi  : F \rightarrow G$ 
is just an mapping $\phi : S \rightarrow T$ such that 
$\phi(gx) = g\phi(x)$ for every $x \in s$. In other words, 
$\phi$ is an equivariant mapping. 
\end{example} 

An {\em isomorphism} in a category $\Cat$ is a morphism 
$\phi : A \rightarrow B$ in $\Cat$ having an inverse: that 
is for which there exists a morphism $\psi : B \rightarrow A$ 
with $\psi \circ \phi = \id_{A}$ and $\phi \circ \psi = \id_{B}$. 
It is easy to check that such an inverse, if it exists, 
is unique, so we can write $\psi = \phi^{-1}$. 
If a natural transformation $\phi : F \rightarrow G$ is 
such that for every object $A \in \Cat$, $\phi_{A}$ is 
an isomorphism in $\Cat$, then we say that $\phi$ is a 
{\em natural isomorphism} from $F$ to $G$, writing 
$\phi : F \simeq G$. In this case we also have a natural isomorphism $\phi^{-1} : G \simeq F$ with components $(\phi^{-1})_{A} = (\phi_{A})^{-1}$.

\begin{exercise}  Let $\Expo$ and $\Pow$ be the contravariant functors $\Set \rightarrow \Set$ given on objects by 
$\Expo(X) = 2^{X}$ and $\Pow(X)$ = the power set of $X$, respectively, and on morphisms (mappings) by $\Expo(f) = f^{\ast} : \alpha \mapsto \alpha \circ f$, and $\Pow(f) : a \mapsto f^{-1}(a)$, respectively. Find a natural isomorphism $\Expo \rightarrow \Pow$, and carefully check that it actually is one. 
\end{exercise}

{\gray
Suppose now that we have functors $F : \Cat \rightarrow \Dat$ 
and $G : \Dat \rightarrow \Cat$. Then we have two composite functors, $G \circ F : \Cat \rightarrow \Cat$ and $F \circ G : \Dat \rightarrow \Dat$. If these are respectively the 
identity functors on $\Cat$ and $\Dat$, then we say that 
$F$ and $G$ are mutually inverse {\em isomorphisms of categories}. But a weaker notion is available that is often 
more useful: we say that $F$ and $G$ define an {\em equivalence} between $\Cat$ and $\Dat$ iff there exist natural isomorphisms 
$\phi : G \circ F \simeq \id_{Cat}$ and $\psi : F \circ G \simeq \id_{\Dat}$. The standard example of categories that are 
equivalent is as follows: 

\begin{exercise} Let $\Cat$ be the category having natural 
numbers as objects, with a morphism from $n$ to $m$ being 
an $m \times n$ matrix, with matrix multiplication as 
composition. Let $\Dat$ be the category of finite-dimensional 
vector spaces having a preferred ordered basis. 
Let $F(n) = \R^{n}$ and for an $m \times n$ matrix $A$, 
let $F(A) : \R^{n} \rightarrow \R^{m}$ be multiplication by $A$. 
(a) Find a functor $G : \Dat \rightarrow \Cat$, and 
(b) find natural isomorphisms $G \circ F \simeq \id_{\Cat}$ 
and $F \circ G \simeq \id_{\Dat}$. Finally, (c), explain why these categories can't be isomorpic. (Hint: how many objects 
does $\Cat$ have?) 
\end{exercise} 
}

{\gray {\em Remark:} Let $\Func(\Cat,\Dat)$ denote the class of all functors $\Cat \rightarrow \Dat$: we can think of this as a category with Hom-sets $\Nat(F,G)$.  Note this will seldom be locally small!}

\subsection{Monoidal Categories and Process Theories}

Given a category $\Cat$, we can construct a new 
category $\Cat \times \Cat$, in which objects are 
pairs $(A,B)$ of objects in $\Cat$, and a morphism 
$(A,B) \rightarrow (C,D)$ is a pair $(f,g)$ of morphisms $f \in \Cat(A,C)$ and $g \in \Cat(C,D)$.    A functor 
$\Cat \times \Cat \rightarrow \D$ is called a {\em bi-functor} on $\Cat$. 

A {\em symmetric monoidal category} (SMC) is a category $\Cat$, 
equipped with a bifunctor $\Box : \Cat \times \Cat \rightarrow \Cat$, 
plus, for all objects $A, B \in \Cat$, a {\em swap}(or {\em symmetrizer}) morphism 
$\sigma_{A,B} : A \Box B \rightarrow B \Box A$, for all 
triples of objects $A,B,C \in \Cat$, an {\em associator} morphism 
$\alpha_{A;B,C} : A \Box (B \Box C) \rightarrow (A \Box B) \Box C$, 
and, finally, a {\em unit object} $I \in \Cat$ and, for 
all objects $A$, morphisms $\lambda_{A} : I \Box A \rightarrow A$ 
and $\rho_{A} : A \Box I \rightarrow A$, called left and right 
{\em unitors}, such that various diagrams --- called 
"coherences" --- all commute.  One of these is the 
{\em associator coherence}
(here, I've suppressed the subscripts, which can be pencilled in from context):
\[
\begin{tikzcd}
& (A \Box B) \Box (C \Box D) \arrow[rd, "\alpha"] \arrow[ld,"\alpha"] & \\
 ((A \Box B) \Box C) \Box D \arrow[d, "\alpha"] & & 
 A \Box (B \Box (C \Box D)) \arrow[d, "\alpha"] \\
(A \Box (B \Box C)) \Box D \arrow[rr,"\alpha"]  & &    A \Box ((B \Box C) \Box D)  
\end{tikzcd}
\]
\tempout{Another is the {\em hexagon coherence}, which relates $\alpha$ and 
$\sigma$ (again, I suppress indices):
\[
\begin{tikzcd}
(A \Box B) \Box C \arrow[rr, "\alpha"] \arrow[d,"\sigma \otimes C"] & & A \Box (B \Box C) \arrow[d, "\sigma_{A, B \Box C}"]\\
(B \Box A) \Box C  \arrow[d,"\alpha"]  & &    (B \Box C) \Box A 
\arrow[d,"\alpha"] \\
B \Box (A \Box C) \arrow[rr,"B \otimes \sigma"]   & & B \Box (C \Box A)
\end{tikzcd}
\]
}
Effectively, this says that $\Box$ is associative 
up to a natural isomorphism, so that for all objects $A, B, C$, 
we a canonical isomorphism 
$A \Box (B \Box C) \simeq (A \Box B) \Box C$. Two further coherences (which I have not written down) 
guarantee that $I$ behaves like a unit: $I \Box A \simeq A \Box I \simeq A$, and that $\Box$ is commutative, $A \Box B \simeq B \Box A$, again up to the given natural isomorphisms. 

One says that a SMC is {\em strict} if the associators and left and right unitors are identities, so that 
$A \Box (B \Box C) = (A \Box B) \Box C$ and 
$I \Box A = A \Box I = A$ for all objects $A, B, C$.  (In general, even when this is so, the swap morphisms will be non-trivial.)  Every SMC $\Cat$ has a canonically equivalent "strictification". 
See \cite{FS, Riehl} for further details.

\begin{example} $\Set$, with $A \Box B = A \times B$ and 
$\Vec$ with $A \Box B = A \otimes B$ are SMCs. So 
is any join-semilattice, regarded as a category, 
under $a \Box b = a \vee b$. 
In particular, the collection of open subsets of a topological space is a SMC in this way.
\end{example} 

{\gray 
\begin{exercise}[Dull Exercise in Bookkeeping] Pencil in the missing subscripts on all the $\alpha$-s in the diagram above.
\end{exercise} 

\begin{exercise}[for Obsessives] (a) Look up the remaining coherences in the definition of a SMC. (b) Figure out what the associator, swap, and unitors are for the category $\Set$ with Cartesian product, and tediously check that the coherences are all satisfied. (Or perhaps this isn't so much tedious as relaxing, somewhat like playing solitaire.)
\end{exercise}
}
 An elementary but very important consequence of the definition of an SMC is that if we have morphisms $f \in \Cat(A,B)$, $h \in \Cat(B,C)$ 
and $g \in \Cat(A',B')$, $k \in \Cat(B',C')$, then 
\begin{equation} 
(f \Box g) \circ (h \Box k) = (f \circ h) \Box (g \circ k).
\end{equation} 

{\bf Scalars} An important consequence of (\theequation) concerns the set $\Cat(I,I)$. As with any object 
in the form $\Cat(A,A)$ in any category, this is a monoid under $\circ$. However, 
if $(\Cat, \Box)$ is a strict SMC, then (\theequation) implies that, for all  $s, t \in \Cat(I,I)$, 
$s \circ t = s \Box t$. Using the symmetry of $\Box$, one can show that 
$s \circ t = t \circ s$. In other words, $\Cat(I,I)$ is a {\em commutative} monoid. 
See \cite{AC} or \cite{CK} for the details. It is usual to refer to elements of $\Cat(I,I)$ as {\em scalars}. 


{\em Remark:} It is very common to see the monoidal product in an abstract SMC denoted by $\otimes$. When the category is one in which objects are finite-dimensional vector spaces, however, I will always reserve $\otimes$ for the usual tensor product, using a different symbol if I want to discuss a different monoidal product. 

{\bf Process Theories} 
In parts of the GPT-adjacent quantum-foundational 
literature, SMCs are referred to as {\em process theories}. 
The idea is that objects are physical systems, and morphisms 
are physical processes having these systems as inputs and outputs. 
One understands $I$ as "nothing", i.e., the absence of a system. 
Thus, a morphism $\alpha : I \rightarrow A$ is a morphism that 
produces something from nothing; this is usually interpreted as the preparation of a "state" of $A$. Similarly, a morphism $a : A \rightarrow I$ is a process with no output-system; this is usually understood as a (destructive) measurement outcome or "effect". We will need 
to be careful with this language, however as such "states" and 
effects do not always correspond exactly to states and effects as defined earlier. I will come back to this point below.

 Finally, 
$A \Box B$ is understood as a composite system obtained by setting $A$ and $B$ "side by side", and $f \Box g : A \Box B \rightarrow C \Box D$ represents the processes $f : A \rightarrow C$ and $g : B \rightarrow D$ operating "in parallel". Process theories become probabilistic if we are given a rule for assigning probabilities to "circuits", a term I'll explain 
presently.  For more on the connection between this point of view and the GPT framework, see \cite{AW-shortcut}

{\bf Graphical Language} If $(\Cat, \Box, \mbox{etc})$ is a {\em strict} SMC, one can 
represent expressions involving the compositional and monoidal structure 
in terms of certain diagrams in a visually appealing way. 
The convention is that systems (objects) are represented by 
"wires" (lines or other curves), and processes of various sorts, by "boxes" of various shapes. Composite systems are represented 
by parallel wires, and boxes can have any number of input or output wires; e.g,. a box representing a morphism $A \Box B \rightarrow C \Box  D \Box E$ will have two input wires, labeled $A$ and $B$, and three output wires, labeled $C, D$ and $E$ (see Figure (a) below). 
Composition of processes is represented by a sequential hooking together of boxes via wires, and the flow of 
"time" (the order of composition) is upwards, from the bottom of the page towards the top. Identity morphisms are not drawn, but it's handy to think of a wire labeled by, say, $A$, as standing equally for the object {\em and} its identity morphism. States are usually drawn as triangular boxes, "pointing down", and 
effects, as triangular boxes "pointing up", as in figure (b):
\[
\begin{array}{ccc}
\begin{tikzpicture} 
\draw (0,0) rectangle (2,1);
\draw (.5,-1) -- (.5,0);
\draw (1.5,-1) -- (1.5,0);
\draw (.3,1) -- (.3,2);
\draw (1,1) -- (1,2);
\draw (1.7,1) -- (1.7,2); 
\draw (1,.5) node {$\phi$};
\draw (.3,-.5) node {$A$}; 
\draw (1.3,-.5) node {$B$};
\draw (.5,1.5) node {$C$}; 
\draw (1.2,1.5) node {$D$};
\draw (1.9,1.5) node {$E$};
\end{tikzpicture}
& \hspace{.5in} & 
\begin{tikzpicture} 
\draw (0,1) -- (3,1) -- (1.5,0) -- (0,1);
\draw (1.5, .55) node {$\alpha$};
\draw (0,2) rectangle (1,3);
\draw (.5,1) -- (.5,2);
\draw (.5,3) -- (.5,4);
\draw (2.5,1) -- (2.5,2);
\draw (2,2)  -- (5,2) -- (3.5,3) -- (2,2);
\draw (4.5,2) -- (4.5,0);
\draw (3.5,2.45) node {$f$};
\draw (.5,2.5) node {$\phi$};
\draw (.2,1.5) node {$A$}; 
\draw (.2,3.5) node {$A'$};
\draw (2.2,1.5) node {$B$}; 
\draw (4.2,1.5) node {$C$};
\end{tikzpicture} \\
\mbox{(a)} & & \mbox{(b)} \\
\end{array} 
\]
This illustrates a system in which we have a 
state $\alpha : I \rightarrow A \Box B$, an effect 
$f : B \Box C \rightarrow I$, and a process $\phi : A \rightarrow A'$, 
combined to form 
\begin{equation} \phi \circ (\id_{A} \Box f) \circ (\alpha \Box \id_{C}) : C \longrightarrow A'. 
\end{equation} 
(This expression makes sense because $C = I \Box C$ and
$A' = A' \Box I$.) 
Whether the diagram or the one-line expression above is easier on the eye will depend on the eye in question. However, when the category $(\Cat, \Box, I)$ has some extra structure --- specifically, if it is compact closed or, even better, dagger-compact \cite{AC} --- this graphical notation, and various modifications of it, support a powerful graphical {\em calculus} in which the identities defining the compact structure are replaced by simple graph-rewriting rules. I will not discuss that here, but refer you to the paper \cite{AC} of Abramksy and Coecke, or the book \cite{CK} by Coecke and Kissinger. 

{
{\bf Circuits} Let us say that a diagram like this is a {\em circuit} iff it has only the trivial input and output system, $I$.  For example, the diagram in figure (b) above becomes a circuit if we add an effect $a' : A' \rightarrow I$ for $A'$ and an initial state $\gamma : I \rightarrow C$ for $C$: 
\[\begin{tikzpicture} 
\draw (0,1) -- (3,1) -- (1.5,0) -- (0,1);
\draw (1.5, .55) node {$\alpha$};
\draw (0,2) rectangle (1,3);
\draw (.5,1) -- (.5,2);
\draw (.5,3) -- (.5,4);
\draw (2.5,1) -- (2.5,2);
\draw (2,2)  -- (5,2) -- (3.5,3) -- (2,2);
\draw (4.5,2) -- (4.5,0);
\draw (3.5,2.45) node {$f$};
\draw (.5,2.5) node {$\phi$};
\draw (.2,1.5) node {$A$}; 
\draw (.2,3.5) node {$A'$};
\draw (2.2,1.5) node {$B$}; 
\draw (4.2,1.5) node {$C$};
\draw (4,0) -- (5,0) -- (4.5,-1) -- (4,0);
\draw (0,4) -- (1,4) -- (0.5,5) -- (0,4);
\draw (.5,4.34) node {$a$};
\draw (4.5,-.35) node {$\alpha$};
\end{tikzpicture} \]
If $\Phi : C \rightarrow A'$ is the mapping defined in (\theequation), then the diagram above represents 
$a \circ \Phi \circ \alpha : I \rightarrow I$.  In other words, a circuit is a collection of processes that compose (using $\otimes$ and $\circ$) to yield a process in $\Cat(I,I)$, i.e., 
a scalar. 
}

\subsection{Probabilistic Theories}  

At this point, we can make official the proposal from the end of Section 4.1:


\begin{definition}\label{def: probabilistic theories} A {\em probabilistic theory} is a 
functor $F : \Cat \rightarrow \Prob$ where $\Cat$ is a category, understood as a 
 theory of physical systems and processes.   For purposes of these notes, we will also require that $F$ be injective on objects.   
\end{definition} 

The idea is that objects in $\Cat$ are physical systems, or 
perhaps mathematical proxies for these (sites in a spin lattice, regions of spacetime, etc.) and that for a given system $A$, $F(A)$ is the probabilistic model assigned to that system by the theory.  In supposing that $F$ is injective on objects, we are assuming that different physical systems are to be represented by distinct probabilistic models. This is a weak requirement, since we can simply label models by the names of the systems they are to represent.  The payoff is that we then have an image category, $F(\Cat)$.  If $\Cat$ is an SMC --- a 
"process theory" in the usual sense --- then we can impose the further requirement that $F$ map $\Cat$'s monoidal composition rule to a reasonable non-signaling compositional rule on $F(\Cat)$. I will return to this below. First, however, let's consider a few examples of probabilistic theories without worrying yet about monoidal structure.

\begin{example}\label{ex: Mackey functor} Suppose $\Cat$ is a category in which objects are (say, complex) Hilbert spaces and morphisms are isometries, that is, not-necessarily surjective linear mappings preserving inner products. Then $\H \mapsto (\F(\H), \Omega(\H))$ --- with its obvious action on isometries --- is a probabilistic theory. I will call this the {\em Mackey functor}. This is one (very simple) version of quantum theory, which we might call {\em unitary QM} (since the symmetries, i.e., the invertible processes, are given by unitaries).  We can also consider {\em projective unitary QM}, in which we map $\H$ to $(\F_{\P}(\H),\Omega_{\P}(\H))$ where $\F_{\P}(\H)$ is the collection of maximal families of rank-one projections, 
$\Omega_{\P}(\H)$ is the set of probability weights 
of the form $\alpha(p) = \Tr(Wp)$ where $p$ is a rank-one projection, and with an isometry $U : \H \rightarrow \H'$ going to the morphism $\phi_{U}(p) = UpU^{\ast}$. Call this the {\em projective Mackey functor}. 
\end{example} 

\tempout{
\begin{example}\label{ex: absolutely continuous example} Let $\Meas$ denote the category having, as objects, finite measure spaces, that is, pairs $(S,\mu)$ where $S$ is a measurable 
space and $\mu$ is a fixed, finite, positive measure on $S$.  Recall that a measure $\nu$ on $S$ is {\em absolutely continuous} with respect to $\mu$, written $\nu \ll \mu$, iff $\mu(a) = 0$ implies $\nu(a) = 0$ for all measurable sets $a \subseteq S$. A morphism $f : (S,\mu) \rightarrow (T,\nu)$ will be  a measurable function $f : S \rightarrow S'$ such that $f^{\ast}(\nu) \ll \mu$. Let $\M(S)$ be the collection of countable measurable partitions of $S$, as usual, and let 
$\Omega(S,\mu)$ be the collection of probability measures on $S$ absolutely continuous with respect to $\mu$, interpreted as probability weights on $\M(S)$, and let 
$M(S,\mu) = (\M(S),\Omega(S,\mu))$. For each morphism $f : (S,\mu) \rightarrow (T,\nu)$, let $M(f) : \M(T) \rightarrow \M(S)$ be the usual test-space morphism of 
{\blue Example [..].}
\begin{mlist}  
\item[(a)] Show that $M$ is a probabilistic theory. 
\item[(b)] Is it monoidal? 
\end{mlist} 
\end{example} 
}

\begin{example}\label{ex: quantization functor} Suppose $\Cat$ is a category of finite measure spaces $(A,\mu_A)$ and measure-preserving 
mappings (perhaps the category of configuration-spaces of some classical mechanical systems, each with its Liouville measure). For each $A \in \Cat$, let $\H(A) = L^2(A,\mu_{A})$.  This is a functor from $\Cat$ to the category $\Hilb$ of Hilbert spaces and isometries. Composing this with either of the functors $F$ from Example \ref{ex: Mackey functor}, we have a simple version of "quantization". \end{example} 

\begin{example}\label{ex: line example} Let $\Cat = (\R, \leq)$, the linearly ordered set of real numbers, thought of as representing time --- and 
thought of as a category. A probabilistic theory over $\Cat$ would assign a model $(\M_{t}, \Omega_{t})$ to each $t \in \R$. This would give us a picture of some system evolving over time. 
\end{example} 


{\gray 
\begin{example}[\cite{AW-SC}] Let $G : \FinSet \rightarrow \Grp$ be 
a functor from the category of finite sets and mappings to that of groups and 
group-homomorphisms. For each finite set $E$, let $\sigma : S(E) \rightarrow G(E)$ be an embedding of the 
symmetric group on $E$ into $G(E)$ (that is, suppose such an 
embedding exists for each finite set $A$, and that one such 
embedding has been selected.) We can build a probabilistic 
theory this way: think of each $E$ as a reference experiment. 
Choose a reference outcome $x_o \in E$ (in any way you like), 
and define $K(E) = G(E \setminus x_o)$. This is embedded 
in $G(E)$.  Let $X(E) = G(E)/K(E)$, and let 
${\mathscr G}(E) = \{ [gx_o] | g \in G(E) \}$, where 
$[gx_o] = g K(E) \in G(E)/K(E)$. Let $\Omega(E) = \Pr({\mathscr G}(E))$. 
Then $E \mapsto (\mathscr{G}(E),\Omega(E))$ defines a 
probabilistic theory in which every model is highly symmetric: 
$G(E)$ acts transitively on the set $X(E)$ of outcomes, in such a way as 
to act transitively also on the set $\G(E)$ of tests. Moreover, 
the stabilizer of any test acts transitively on the outcome-set of that 
test. \end{example}

\begin{example}[Example of Previous Example] Let $G(E) = \Unitary(\C^{E})$, with the obvious behavior on mappings. The above construction returns a version of finite-dimensional quantum theory in which every quantum system has a preferred (say, computational) basis. \end{example} 
}

{
{\bf Process Tomography} We've required that a probabilistic theory be injective on objects, so that distinct systems have 
distinct models.  However, we've not imposed any corresponding injectivity condition on morphisms.  This allows for a situation in which two physically distinct processes $f, g : A \rightarrow B$  in $\Cat$ may give rise to probabilistically (or operationally) identical  morphisms $F(f) = F(g)$ between 
$F(A)$ and $F(B)$.

\begin{definition}\label{def: process-tomography} A probabilistic theory $F$ is {\em process-tomographic} iff $F$ is injective on morphisms. 
\end{definition} 

Unitary complex QM is clearly process tomographic. 
A prime example of a non-process tomographic theory is 
projective unitary QM. With notation as in Example 
\ref{ex: Mackey functor}, if 
$U : \H \rightarrow \K$ is an isometry and $\phi_{U}$ 
is given by  
$\phi_{U} : p \mapsto UpU^{\ast}$ for each rank-one projection 
$p$, then for any complex number $z \in \C$ with $|z| = 1$, we have $\phi_{zU} = \phi_{U}$: our functor is not injective 
on morphisms. }

{\gray {\em Remarks:} When a probabilistic theory 
$F : \Cat \rightarrow \Prob$ is process-tomographic, 
$\Cat$ is isomorphic to the image category $F(\Cat)$, and hence there is no harm in taking the theory 
to {\em be} the latter: a category of probabilistic models and morphisms between these. 
When $F : \Cat \rightarrow \Prob$ is not process-tomographic, it may be helpful to think in terms of the slightly more concrete category 
$\Cat_{F}$ having objects $F(A)$ where $A \in \Cat$ 
and morphisms $f \in \Cat(A,B)$. This will 
give us a picture of our theory as one in which 
objects "are" probabilistic models, but morphisms 
are not {\em just} morphisms of these models. }

{\bf Monoidal Probabilistic Theories}   As noted earlier the large ambient category $\Prob$ carries no universally serviceable monoidal structure. Rather, the monoidal structures that 
arise in practice are theory-specific.  If $\Cat$ is a 
symmetric monoidal category and $F : \Cat \rightarrow \Prob$ is a probabilistic theory, 
then as indicated above, we can use the fact that $F$ is injective on objects to carry the monoidal product on $\Cat$ 
across to $F(\Cat)$, simply defining, for any $A, B \in \Cat$, 
\begin{equation} 
F(A)F(B) := F(A \otimes B)
\end{equation} 
If $F$ is process-tomographic, that is, injective on morphisms as well, 
 we can also define $F(f)\otimes F(g) = F(f \otimes g)$; but in general, this will not be well-defined.  Thus, we need to add a condition, namely, that 
 if $f, f' \in \Cat(A,C)$ and $g, g' \in \Cat(C,D)$, 
\begin{equation} 
F(f) = F(f'), F(g) = F(g') \ \Rightarrow \ 
F(f \otimes g) = F(f' \otimes g')
\end{equation} 
When this holds, we say that $F$ is {\em monoidal}. 
Equations (10) and (11) then define a symmetric monoidal 
product on $F(\Cat)$, making $F$ a strict monoidal functor, as you can check. 

\begin{exercise} Do in fact check this. \end{exercise} 

\begin{exercise} Show that projective unitary real QM is 
monoidal. \end{exercise} 

Of course, we want a bit more: we want 
$F(A) F(B)$ to be a {\em non-signaling composite} of $F(A)$ and $F(B)$. That is, for every pair of objects $A, B \in \Cat$, 
we want a morphism $\pi_{A,B} : F(A) \times_{N} F(B) \rightarrow F(AB)$ satisfying Definition \ref{def: composites} 
But we can ask for even a bit more than this, and, in so doing, simplify things. Both 
$(A,B) \mapsto F(A) \times_{NS} F(B)$ and 
$(A,B) \mapsto F(A \otimes B)$ are actually 
bifunctors on $\Cat \times \Cat$ --- 
indeed, the former is exactly $\times_{N} \circ (F \times F)$ 
and the latter, $F \circ \otimes$. 

\begin{exercise}  Check that $\times_{NS}$ really 
is a bifunctor on $\Prob$.
\end{exercise} 
\vspace{-.1in} 

This observation makes it natural to adopt the following

\begin{definition}\label{def: NS theory} A {\em  Non-Signaling} probabilistic theory based 
on a SMC $\Cat$ is a pair $(F,\pi)$ where $F : \Cat \rightarrow \Prob$ is a monoidal probabilistic theory  and $\pi$ is a natural transformation 
\[ \times_{NS} \circ (F \times F)  \longrightarrow F \circ \otimes\]
such that $(F(A)F(B),\pi_{A,B})$ is a NS composite of 
$F(A)$ and $F(B)$, for all objects $A, B$ in $\Cat$. 
\end{definition} 

Note that the last line in the definition is required 
{\em only} to guarantee that product states are 
implementable in $F(A)F(B)$: the remaining conditions in the definition of a 
non-signaling composite follow automatically. 

\begin{exercise} Show that each of the theories described in Examples \ref{ex: Mackey functor}, \ref{ex: quantization functor}, and \ref{ex: line example} is monoidal, and describe its monoidal product. 
\end{exercise} 



{\em Remark:} As noted earlier, the terms "state" and "effect" for morphism of the form $I \rightarrow A$ and $A \rightarrow I$, respectively, in a SMC, do not always agree 
with the notions of state and effect as we've defined them. 
To illustrate this, consider the SMC $(\Prob, \times_{NS})$: 
the tensor unit is the trivial model $I$ with $\M(I) = \{\{\bullet\}\}$ and $\Omega(I) = \{1\}$, where $1$ is mapping 
$\bullet \mapsto 1 \in \R$. For an arbitrary model $A$, 
there are lots of morphisms $\phi : I \rightarrow A$, but 
these amount to selections of outcomes $x = \phi(\bullet)$, 
not to states. And, in general, there are no morphisms 
$A \rightarrow I$ at all.  



\tempout{If we want $F(A)F(B)$ to be a strong composite of 
$A$ and $B$, then for every $A, B \in \Cat$, we need 
a bi-affine mapping 
\[m_{A,B} : \Omega(F(A)) \times \Omega(F(B)) \rightarrow \Omega(F(A \otimes B))\]
This should be the dual of a morphism 
\[F(A \otimes B) \rightarrow A \times_{\sep} B\]
Hm. So: we want ...
a canonical choice of $\alpha \otimes \beta$ --- the stronger notion --- then we need more. Idea:  Add a morphism 
\[m^{\ast} : AB \rightarrow \D(A \times_{S} B)\]
where $A \times_{S} B$ has test space $\M(A) \times \M(B)$, but states restricted to separable ones. Here, $\D(A)$ is 
$\D(\Omega(A))$, test space of partitions of unit in 
$[0,u_{A}]$. 
}

\subsection{Other Frameworks}

Let's now take a look at several well-known frameworks for GPTs: the very simple one known variously as {\em Boxworld} and the {\em device-independent} framework, the approach based on taking a compact convex set as an abstract state space, the {\em circuit framework} due to Hardy \cite{Hardy-Reconstructing}, and the closely related framework of {\em operational probabilistic theories} as developed by Chiribella, D'Ariano and Perinotti in \cite{CDP-Purification, CDP}, 

{\bf Boxworld} 
The best-known example of a "post-quantum" (non-classical but also non-quantum) GPT considers agents 
--- Alice, Bob, Clovis, {\em et alia} ---  each equipped with a black box having a display --- say, a pair of lights, one red and one green --- 
and a switch with two settings, plus a start/reset button. When the button is pressed, one and only one of the lights flashes. 
We can understand this as a test space: the switch can be in one of two positions, say up or down.  This gives us a 
semi-classical test space 
\[\B = \{E_{u}, E_{d}\}\]
where $E_{u} = \{(r,u), (g,u)\}$ and $E_{d} = \{(r,d), (g,d)\}$ 
Letting $\Omega = \Pr(\M)$ (geometrically, a square), 
we have a probabilistic model. 

It's sometimes helpful to encode both the outcomes and the switch as bits, writing, e.g., $(0|0)$ for the outcome of seeing the red light when the setting is down, $(0|1)$ for the red light when the setting is up, and so on. Then our test space has outcomes  
$X = \{(i|j) | i, j \in \{0,1\}\}$, and 
tests $E_{0} := \{(i|0) | i \in \{0,1\}\}$ and $E_{1} = \{(i|1) | i \in \{0,1\}\}$. Each of $E_0$ and $E_1$ can be regarded 
as a classical one-bit measurement, this two-bit example is arguably the simplest imaginable non-classical generalization of a 
classical bit. Accordingly, this model --- or more generally, any model of this form (two tests, two-outcomes each, with 
all probability weights allowed) is usually called a {\em gbit}. 

We can combine two gbits $\B_1$ and $\B_2$ --- both written in binary form, as above --- as follows: construct $\B_1 \times \B_2$, and re-arrange and regroup the entries of each outcome according the scheme 
\[((i|j),(k|l)) \mapsto (i,k | j, l)\]
We'll write $E_{j,l}$ for the image of $E_{j} \times E_l$ under this scheme, i.e., 
$E_{j,l} := \{ (i,k|j,l) | i,k \in \{0,1\}\}$.  Let $\B_1 \tensor \B_2 = \{ E_{j,l} | j, l \in \{0,1\}\}$. Thus, we have 
four tests, each with four outcomes. Note that this is just an isomorphic copy of $\B_1 \times \B_2$, and the need to reshuffle the indices is just a consequence of the way we've decided to label outcomes. In particular, $\B_1 \otimes \B_2$ is still semi-classical.

Continuing in this way, we can build up larger test spaces of the form 
\[\M = \B_1 \otimes \cdots \otimes \B_n\]
($n$ times) with $2^{n}$ tests, each with $2^{n}$ outcomes.  We 
form models by allowing all probability weights to count as states, and finally, a category by allowing all possible morphisms between models of this form to count as processes.  This is $\Boxworld$.  If we linearize, we find that, for models $A$ and $B$ in $\Boxworld$, 
\[\V(A \otimes B) = \V(A) \maxtensor \V(B)\]
for all $A, B \in \Boxworld$, and hence, $\V^{\ast}(AB) = \V^{\ast}(A)\mintensor \V^{\ast}(B)$. In other words, 
Boxworld is locally tomographic, and allows arbitrarily strong correlations, but permits no entangled effects. It follows \cite{GMCD} that it is not possible to carry out protocols like teleportation and entanglement-swapping in this theory. 

\tempout
{\gray {\bf Semi-classical Models}  The test spaces arising in Boxworld are all {\em semi-classical}, meaning that 
distinct tests don't overlap. This has a strong corollary. 

\begin{lemma} Let $\M$ be semi-classical. Then every probability weight on $\M$ is a weighted (integral) average of dispersion-free states.
\end{lemma} 

This is easy to see if $\M$ consists of finitely many finite tests, so that $X = \bigcup \M$ is finite (as in boxworld). In this case, the integral can be replaced by a sum, and every state is a convex combination of dispersion-free states. For  a proof in the general case, see [ref]. 

Semi-classical test spaces, and models based on these, while in some respects very close to being classical, are still pretty interesting.  
}

{\bf Convex Operational Theories}   
As we've already discussed, any order-unit space $(A,u)$ is associated with a probabilistic 
model: one takes $\M(A,u)$ to be the collection of all sets $E \subseteq (0,u]$ with $\sum_{a \in E} a = u$; 
states are restrictions to $(0,u]  = \bigcup\M(A,u)$ of positive linear functionals $f \in A^{\ast}$ with 
$f(u) = 1$.   Let $\OUS$ stand for the category of order-unit spaces and positive linear mappings 
$\phi : A \rightarrow B$ with $\phi(u_{A}) \leq u_{B}$. Such a mapping restricts (and co-restricts) to a 
mapping $(0,u_{A}] \rightarrow (0,u_B]$, and it's easy to check that this is a morphism of models.  
In other words, we've constructed a functor $\M  : \OUS \rightarrow \Prob$. 

A bit more generally, if $F : \Cat \rightarrow \OUS$ is any (covariant) functor, we obtain a probabilistic theory 
$\M \circ F$. 

An important special case: let $\Cat$ be the category of compact convex sets. For any such set $K$, 
the space $\Aff(K)$ of bounded affine functionals $f : K \rightarrow \R$ is an order-unit space, with order 
taken pointwise on $K$ and the order unit the constant functional with value $1$. Any bounded 
affine mapping $\phi : K \rightarrow K'$ defines a bounded linear mapping $\phi^{\ast} : \Aff(K') \rightarrow \Aff(K)$ 
namely $a' \mapsto a' \circ \phi$.  This gives us a {\em contravariant} functor, which I'll just call $\Aff$, from 
the cagegory $\Conv$ of compact convex sets and affine mappings, to $\OUS$. Composing this 
with $\M$ above gives us a contravariant functor $\Conv \rightarrow \Prob$.   If we 
have any SMC $\Cat$ and a {\em contravariant} functor $F : \Cat \rightarrow \Conv$, we obtain 
a covariant probabilistic theory $\M \circ \Aff \circ F : \Cat \rightarrow \Prob$.

\tempout{\blue We also have a natural (covariant) functor $\Prob \rightarrow \OUS$ taking $A = (\M,\Omega)$ to $\V^{\ast}(A)$.  
If $\phi : A \rightarrow \M(E,u)$ is a morphism of models, then 
$\phi : X(A) \rightarrow E$ is a vector-valued state on $A$, and thus 
gives us $\phi^{\ast} : E^{\ast} \rightarrow \V(\Pr(\M(A)))$. Since $\phi$ is 
a model-morphism, however, $\phi^{\ast} : E^{\ast} \rightarrow \V(A)$. Dualizing again, 
we have $\phi^{\ast \ast} : \V(A)^{\ast} \rightarrow E^{\ast \ast}$. 
Is the image weak-$\ast$ continuous? $\alpha_n (a) \rightarrow \alpha(a)$ for 
$\alpha_n, \alpha \in E^{\ast}$ and $a \in E$ gives us 
\[\phi^{\ast \ast}(f)(\alpha_n) =  f(\phi^{\ast}(\alpha_n)) = 
f(\alpha_n \circ \phi^{\ast})\]
where $f \in \V(A)^{\ast}$. 

Combining this with $\M$, we have an endofunctor 
$\Prob \rightarrow \Prob$ taking $A$ to $(\M(\V(A)^{\ast}, u_A), \Omega(\V(A)^{\ast},u))$.   }


{\gray {\bf Convex Operational Theories from SMCs} Here is a useful special case of the above construction.  
As discussed earlier, in any SMC category $\Cat$, the 
monoid $S := \Cat(I,I)$ of scalars is is commutative.  
Suppose now that we are given a monoid homomorphism 
$p : S \rightarrow [0,1]$, where we regard $[0,1]$ as a monoid under multiplication. That is, 
for all scalars $s, t$,  $p(s \circ t) = p(s)p(t)$.  Given this one piece of data, we can now construct 
an entire probabilistic theory based on $\Cat$, as follows: given any object $A$ and any 
$\alpha \in \Cat(I,A)$ and $a \in \Cat(A,I)$, let 
$\hat{a}(\alpha) := p(a \circ \alpha)$. 
For each $\alpha$, then, we have a mapping $\hat{\alpha} \in [0,1]^{\Cat(A,I)}$. 
Let $\Omega(A)$ be the closed convex hull of these mappings $\hat{\alpha}$:  
Note that this is compact, since $[0,1]^{\Cat(A,I)}$ is compact by Tychonoff's Theorem. Each $a \in \Cat(A,I)$ defines an effect (a bounded affine functional on $\Omega(A)$) by evaluation: $\hat{a}(\hat{\alpha}) = \hat{\alpha}(a)$.  
One can now proceed as above to obtain a probilistic theory.  In particular, one can show that this is monoidal. 
However, whether it is a non-signaling theory, seems to be a bit delicate. This is true if the resulting monoidal theory is locally tomographic.  See \cite{AW-shortcut} for some details.\\
}


{\bf Operational Theories 
\emph{\`{a} la Pavia}} 
The framework developed by the Pavia school (G. M. D'Ariano and his students, G. Chiribella and P. Perinotti) around 2010 has been particularly popular and influential.\footnote{A similar {\em circuit framework} was proposed by Lucien Hardy \cite{Hardy-Foilable, Hardy-Reconstructing} at about the same time. I limit the discussion here to the Pavia approach, with which I am more familar.} This begins with a notion of {\em test} that is similar to the one we've been using, but with an added bit of structure: first, each test has an {\em input} and {\em output} system. Secondly, they distinguish between 
the {\em outcomes} of the test and the physical event corresponding to it. Pavia represent such a thing with a diagram like this: 
\[
\begin{tikzpicture} 
\draw (0,1) -- (2,1)    (4,1) -- (6,1) ; 
\draw (1,1.2) node{$A$}   (3,1) node{$\{T_x\}_{x \in E}$}   (5,1.2) node{$B$};
\draw (2,.5) rectangle (4,1.5);
\end{tikzpicture} 
\]
Here, $A$ and $B$ are the input and output system, respectively, $E$ is a test in our sense, and for each outcome $x \in E$, $T_{x}$ is the corresponding physical event or process.  The idea is that when the test is performed and outcome $x$ is secured, the experimenter knows that the process $T_x$ has taken place.  Regarding all of this, they say 

\begin{quote} Each test represents one use of a physical
device, like a Stern-Gerlach magnet, a beamsplitter, or
a photon counter. [...] When the physical device is used, it
produces an outcome ... e.g. the outcome
could be a sequence of digits appearing on a display, a
light, or a sound emitted by the device. The outcome
produced by the device heralds the fact that some event
has occurred.
\end{quote} 
They also note that the input and output labels essentially serve to dictate which tests can be composed sequentially. Regarding this, they posit that a test $\{T_{x}\}$ with output space $B$ and a test $\{S_{y} \}$ with input space $B$ {\em can} 
be composed (in that order) to yield a test that they write as $\{T_{x} \circ T_{y}\}$.  Tests are also allowed to compose 
in parallel: given any tests $\{T_{x}\}$ from $A$ to $B$ and $\{T_{y}\}$ from $C$ to $D$, there is a test $\{T_{x} \otimes T_{y}\}$ from a {\em composite system} $AC$ to a composite system $BD$. They proceed to enforce enough structure 
on this machinery to guarantee that the set of systems and "events" between them form a strict symmetric monoidal category. 
In particular, there is a unique {\em trivial system} $I$ such that $IA = AI = I$. 
Finally, they call tests of the form $I \rightarrow \{T_x\} \rightarrow A$, {\em preparation tests} and tests of the 
form $A \rightarrow \{R_{x}\} \rightarrow I$, {\em observation tests}. Composing these will give a test 
$I \rightarrow \{S_{y} \circ T_{x}\} \rightarrow I$. Each $S_{y} \circ T_{x}$ is then an "event" from 
$I$ to $I$ --- a {\em scalar}, in the SMC jargon.  CDP assume that here, 
each scalar {\em is} a probability, i.e., every process or event of type $I \rightarrow I$ 
{\em is} a real number in $[0,1]$, and also assume that $p \otimes q = pq$ for any two such events.  
A {\em circuit} is a collection of tests that compose, sequentially or in parallel, to yield a test with input and output $I$: 
any such circuit now has a defined probability. 

Now let's see if we can paraphrase, and perhaps also slightly generalize, this set-up in our language. Effectively, the Pavia school  
has a symmetric monoidal category $\Cat$ (which they take to be strict, but let's not), along with an assignment of 
a test space $\M(A,B)$ to every pair of objects $A, B \in \Cat$.  We are also given a mapping $X(A,B) := \bigcup \M(A,B) \rightarrow \Cat(A,B)$ assigning a process $T_x : A \rightarrow B$ to every outcome $x \in X(A,B)$.  
 It's also required that we have two test-space morphisms  
\[\M(A,B) \times X(B,C) \rightarrow \M(A,C) \ \ \ (x,y) \mapsto xy\]
and 
\[\M(A,B) \times \M(C,D) \rightarrow \M(A \otimes C, B \otimes D) \ \ \ (x,y) \mapsto x \otimes y\]
such that 
 \[T_{x,y} = T_{y} \circ T_{x} \ \ \mbox{and} \ \ T_{x \otimes y} = T_{x} \otimes T_{y}. \]
Finally, we want a mapping $p : \Cat(I,I) \rightarrow [0,1]$ such that 
\vspace{.1in}
\begin{mlist} 
\item[(i)] $p(s \circ t) = p(s)p(t)$ for all $\alpha, \beta \in \Cat(I,I)$, and $p(\id_{I}) = 1$;
\item[(ii)] $s \mapsto p(s)$ is a probability weight 
on $\M(I,I)$. 
\end{mlist} 

A triple $(\Cat, \M, p)$ satisfying these 
requirements is a very general schema for an operational 
probabilistic theory, in something close to Pavia's sense. 
But it's reasonable to impose some further restrictions. 



For one thing, it's reasonable to take the mapping $X(A,B) \mapsto \Cat(A,B)$ to be surjective, on the argument that if there are 
processes in $\Cat$ that correspond to no outcome at all, these are in some sense unobservable, and can be elided. 
The requirement that $T_{x,y} = T_{y} \circ T_{x}$ makes the set of "outcomed" processes  closed under 
composition, so we still end up with a perfectly good category after such an elision.  Hence, we'll assume going forward 
that $x \mapsto T_{x}$ is surjective. 

This leaves open the possibility that the set $X(A,B) := \bigcup \M(A,B)$ may be quite a bit larger than $\Cat(A,B)$: there may, in other words, be many different outcomes $x$ that map to the same test $T_{x}$. However, in \cite{CDP} and elsewhere, it seems that the authors are assuming that the map $x \mapsto T_{x}$ is injective, in which case we may as 
well simply take $X(A,B)$ to {\em be} $\Cat(A,B)$.   We will save ourselves time, and also some trouble, if we adopt 
this point of view. So let's do that. \emph{\bf From now on, $\M(A,B)$ consists of sets of morphisms, and $\bigcup \M(A,B) = \Cat(A,B)$. } Accordingly, we'll suppress the mapping $T$, writing $x$ rather than $T_{x}$ for a generic 
morphism-{\em qua}-outcome in $\Cat(A,B)$.  Let's further simplify notation a bit further by writing $\M(A)$ for $\M(A,A)$ for all $A \in \Cat$.

We now want to ask: how is a Pavian theory $(\Cat, \M, p)$  a probabilistic theory in our sense? The answer is that if 
$\Cat$ is a symmetric monoidal category, so is $\Cat^{\op} \times \Cat$ (more on this shortly), and $\Cat( -, -)$ is a functor 
$\Cat^{\op} \times \Cat \rightarrow \Set$. Every pair of morphisms $u : A' \rightarrow A$ and 
$v : B \rightarrow B'$ define a morphism $(A,B) \rightarrow (A',B')$ in $\Cat^{\op} \times \Cat$, 
and these determine a morphism of test spaces $\M(A,B) \rightarrow \M(A',B')$ given by 
\[\phi(x) = v \circ x \circ u. \] 
So we can regard $\M$ as a functor from $\Cat^{\op} \times \Cat$ to test spaces and morphisms. To obtain a functor into $\Prob$, 
we need to assign a state-space to each pair $(A,B)$ in $\Cat^{\op} \times \Cat$.   There are natural candidates for 
states on $\M(A,B)$: for each $\alpha \in \Cat(I,A)$ and $b \in \Cat(B,I)$, we could consider 
\[p_{\alpha,b}(x) := p(b \circ x \circ \alpha).\]
This will assign a probability to each $x \in \Cat(A,B)$. However, in general these probabilities won't sum 
correctly --- that is, they won't generally sum to the same value --- over the various tests in $\M(A,B)$. 

\begin{definition}[\cite{CDP}] The theory $(\Cat, \M, p)$ is {\em causal} iff for every $\alpha : I \rightarrow A$ and every 
pair of tests $E, F \in \M(A,I)$, we have $\sum_{x \in E} p(x  \circ \alpha)   = \sum_{y \in F} p (y \circ \alpha)$. 
\end{definition} 


\tempout{
The following is straightforward:

\begin{lemma} The following are equivalent: 
\begin{mlist} 
\item[(a)] $(\Cat, \M, p)$ is causal; 
\item[(b)] For all one-outcome tests $\{e\}, \{f\}$ in $\M(A,I)$, and for all $\alpha \in \M(I,A)$, 
$\pr(e \circ \alpha) = \pr(f \circ \alpha)$; 
\end{mlist} 
\end{lemma}
}

There is an important sufficient condition for $(\Cat, \M, p)$ to be causal. The assumptions made thus far tell us that $\M(A,B) \times \M(B,C) \subseteq \M(A,C)$, but it would be natural 
to allow branching sequential measurements as well (Pavia call these "conditional measurements"). In other words, 
we'd like to have $\for{\M(A,B)\M(B,C)} \subseteq \M(A,C)$.  Let us say that $(\Cat, \M)$ {\em allows branching measurements} 
when this is so for all $A,B,C \in \Cat$. 

The following is \cite[Lemma 7]{CDP-Purification},  
but we can give a shorter proof. 
 
\begin{lemma}\label{lemma: branching implies causal} If $(\Cat, \M)$ allows branching measurements, it's causal.\end{lemma}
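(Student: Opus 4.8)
The plan is to read the desired equation as a statement about marginals of a probability weight on a forward product, and to obtain it from the single fact that $p$ is a probability weight on $\M(I,I)$. Fix a morphism $\alpha : I \to A$ and two observation tests $E, F \in \M(A,I)$; causality asks exactly that $\sum_{x\in E} p(x\circ\alpha) = \sum_{y\in F} p(y\circ\alpha)$. The conceptual content is that the first-stage weight of a two-stage (branching) measurement cannot depend on which second-stage test is chosen — precisely the defining feature of weights on $\for{\M(I,A)\,\M(A,I)}$ discussed in Section 1.3.

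First I would observe that, since $\bigcup\M(I,A) = \Cat(I,A)$, the morphism $\alpha$ occurs as an outcome of at least one preparation test $E_0 \in \M(I,A)$. I then pass to the forward product $\for{\M(I,A)\,\M(A,I)}$, whose branched tests, under the identification of the pair $(w,y)$ with the composite $y\circ w : I\to I$, are exactly the conditional measurements "prepare via $E_0$, then (depending on the outcome $w$) observe via some test in $\M(A,I)$." The hypothesis that $(\Cat,\M)$ allows branching measurements says that every such branched test lies in $\M(I,I)$, and since $p$ is a probability weight there, it sums to $1$ over each one.

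The heart of the argument is then a one-line subtraction. Choose any fixed assignment $w \mapsto H_w \in \M(A,I)$ for the branches $w \in E_0\setminus\{\alpha\}$ (such tests exist, since $E,F\in\M(A,I)$ witness that $\M(A,I)\neq\emptyset$). Let $T_E$ be the branched test built from $E_0$ by putting the conditional test $E$ on the $\alpha$-branch and $H_w$ on every other branch, and let $T_F$ be defined identically except that the $\alpha$-branch carries $F$. Both $T_E$ and $T_F$ are tests in $\M(I,I)$, so $\sum_{g\in T_E} p(g) = 1 = \sum_{g\in T_F} p(g)$. The two sums agree term-by-term off the $\alpha$-branch, where the conditional tests coincide, so subtracting leaves exactly $\sum_{x\in E} p(x\circ\alpha) = \sum_{y\in F} p(y\circ\alpha)$, which is causality. (Equivalently, one invokes the uniqueness of the first-stage marginal of a weight on a forward product and evaluates it at $w=\alpha$.)

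The main point needing care is bookkeeping about outcomes versus composites: the forward product is literally a space of pairs $(w,y)$, and "allows branching measurements" must be understood through the identification $(w,y)\mapsto y\circ w$, so I would check that composition is locally injective on each branched test — it is, being a test-space morphism — to guarantee that the two displayed sums really cancel off the $\alpha$-branch rather than merely matching up to multiplicity. It is worth flagging that, on this route, multiplicativity of $p$ (condition (i)) is never used; only the probability-weight property (ii) together with branching drives the proof, which is what makes it shorter than the argument in \cite{CDP-Purification}.
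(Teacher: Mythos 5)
Your proof is correct and is essentially the paper's own argument: your two branched tests $T_E$ and $T_F$, which share all branches off $\alpha$, exhibit exactly the perspectivity $\{\alpha\}\times E \sim \{\alpha\}\times F$ that the paper invokes, and your term-by-term subtraction is precisely the standard fact that probability weights agree on perspective events, applied to the weight $p$ on $\M(I,I)$. The bookkeeping point you flag (identifying pairs $(w,y)$ with composites $y\circ w$, with local injectivity coming from the test-space morphism) is a real gap the paper glosses over, but filling it does not change the route.
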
 

\vspace{-.1in}
{\em Proof:} $\{\alpha\} \times E \sim \{\alpha\} \times F$ for any $E, F \in \M(A,I)$ and $\alpha \in \M(I,A)$, 
so as $s \mapsto p(s)$ is a probability weight on $\M(I,I)$ we have $\sum_{x \in E} p(x \circ \alpha) 
= \sum_{y \in F} p(y \circ \alpha)$. $\Box$ 


{\bf \emph {From now on, let's assume that $(\Cat, \M, p)$ allows branching measurements (and hence, is causal).}}

\begin{definition} Let $\M$ be any test space. A one-outcome test $\{u\}$ is a {\em unit test}, and its single outcome 
is a {\em unit outcome}. Note that the probability of a unit outcome is $1$ for every state.\footnote{
\cite{CDP} calls such a thing a {\em deterministic} "event", since it occurs with certainty. But the term "deterministic" has so many other connotations that I think it's best to avoid it here. Much earlier, Foulis and Randall called one-outcome tests "transformations", but this, too, is a freighted word.  I think "unit test" and "unit outcome" are preferable.} 
\end{definition}

In the context of a Pavian theory $(\Cat, \M, p)$, say that $u \in \Cat(A,B)$ is a unit iff $u$ is a unit outcome of $\M(A,B)$. 

{\bf{\em Further Assumption:}} \emph{\bf \emph{ In what follows, every test space $\M(A,B)$ contains at least one unit.}}

\begin{lemma}  For every $\alpha \in \Cat(I,A)$ and every unit $u \in \Cat(B,I)$, $p_{\alpha,u}$ is a sub-normalized 
state on $\M(A,B)$.  Moreover, $p_{\alpha,u} = p_{\alpha,u'}$ for any two units $u, u' \in \Cat(B,I)$. 
\end{lemma}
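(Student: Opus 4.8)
The plan is to unpack what it means for $p_{\alpha,u}$ to be a sub-normalized state on $\M(A,B)$ — namely that $p_{\alpha,u}$ is a non-negative function on $X(A,B) = \Cat(A,B)$ whose sum over every test is one and the same value $c$, with $c \le 1$ — and to verify these three properties in turn, treating the ``moreover'' clause separately at the end. Non-negativity is immediate: $p_{\alpha,u}(x) = p(u \circ x \circ \alpha)$ is a value of $p$, and $p$ takes values in $[0,1]$.

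For constancy of the test-sum I would exploit the standing assumption that $(\Cat,\M)$ allows branching measurements. Given a test $E \in \M(A,B)$, form the two-stage test that performs $E$ and then, on every branch, the unit test $\{u\} \in \M(B,I)$; this is the element $\bigcup_{x \in E}\{x\}\times\{u\}$ of $\for{\M(A,B)\M(B,I)} \subseteq \M(A,I)$. Its image under the composition morphism is $uE := \{\, u \circ x \mid x \in E \,\} \in \M(A,I)$, and local injectivity of that morphism guarantees $x \mapsto u\circ x$ is injective on $E$, so that $\sum_{x\in E} p(u\circ x\circ\alpha) = \sum_{z\in uE} p(z\circ\alpha)$. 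Since $(\Cat,\M,p)$ is causal (Lemma~\ref{lemma: branching implies causal}), the right-hand sum is the same for every test in $\M(A,I)$; as $uE$ ranges over such tests as $E$ ranges over $\M(A,B)$, the quantity $\sum_{x\in E} p_{\alpha,u}(x)$ is independent of $E$. Call this common value $c$.

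To see $c \le 1$, recall that $\alpha$ is an outcome of some test $E_0 \in \M(I,A)$. I would form the branching circuit that performs $E_0$, runs $uE \in \M(A,I)$ on the $\alpha$-branch, and runs arbitrary tests in $\M(A,I)$ on the remaining branches; this yields a test $G \in \M(I,I)$ whose outcomes are scalars of the form $z \circ w$. Because $s \mapsto p(s)$ is a probability weight on $\M(I,I)$, the $p$-values over $G$ sum to $1$, and since all terms are non-negative, the partial sum over the single $\alpha$-branch, which is exactly $\sum_{z\in uE} p(z\circ\alpha) = c$, is at most $1$.

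The ``moreover'' clause falls out of causality directly. Fixing $x \in \Cat(A,B)$ and setting $\beta := x\circ\alpha \in \Cat(I,B)$, the singletons $\{u\}$ and $\{u'\}$ are one-outcome (unit) tests in $\M(B,I)$, so causality applied to $\beta$ with these two tests gives $p(u\circ\beta) = p(u'\circ\beta)$, i.e. $p_{\alpha,u}(x) = p_{\alpha,u'}(x)$; as $x$ was arbitrary, $p_{\alpha,u} = p_{\alpha,u'}$. The only real obstacle is bookkeeping: one must be careful that each branching composite genuinely lands in the intended $\M(-,-)$ — which is precisely what the branching-measurement hypothesis supplies — and that local injectivity of the composition morphisms preserves the sums rather than collapsing distinct outcomes. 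Once those points are nailed down, causality and the probability-weight property of $p$ on $\M(I,I)$ do all the work.
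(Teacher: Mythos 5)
Your proof is correct and follows essentially the same route as the paper: the constancy of the test-sum is obtained by composing $E$ with the unit test $\{u\}$ to get a test in $\M(A,I)$ and invoking causality (via Lemma \ref{lemma: branching implies causal}), and the moreover clause by applying causality to $x \circ \alpha \in \Cat(I,B)$ with the two unit tests $\{u\}, \{u'\}$. The one point where you go beyond the paper is the bound $c \le 1$: the paper's proof only establishes that the sums agree across tests, whereas your branching circuit through a test $E_0 \ni \alpha$ in $\M(I,A)$, together with the assumption that $p$ is a probability weight on $\M(I,I)$, supplies the sub-normalization explicitly — a worthwhile addition, since without it the statement that $p_{\alpha,u}$ is a \emph{sub-normalized} state is not fully justified.
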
 

{\em Proof:} Let $E, F \in \M(A,B)$. We need to show that $p_{\alpha,u}$ sums to the same value over both.  
But $E \times \{u\}$ is a test in $\M(A,B) \times \M(A,I) \subseteq \M(A,I)$, so this follows from Lemma \ref{lemma: branching implies causal} 
and the definition of causality.  For the second claim, note that for every $x \in X(A,B) = \Cat(A,B)$, 
$x \circ \alpha \in \Cat(I,B)$, so $p(u \circ (x \circ \alpha)) = p(u' \circ (x \circ \alpha))$ by 
the definition of causality, and the fact that $\{u\}$ and $\{u'\}$ are tests. $\Box$ 

It follows that we can write $p_{\alpha,u}$ as $p_{\alpha}$, 
which we now do.

Let us say that $\alpha \in \Cat(I,A)$ is {\em null} iff $p(x \circ \alpha) = 0$ for all 
$x \in \Cat(A,I)$. Equivalently, $\alpha$ is null iff $p_{\alpha}$ is identically zero on $\Cat(A,B)$ for 
every $B$. Thus, if $\alpha$ is non-null, we can normalize $p_{\alpha}$ it by setting 
\[\hat{p}_{\alpha}(x) = \frac{p(x \circ \alpha)}{p(u \circ \alpha)}.\]
Write $\Cat_{+}(A,I)$ for the set of 
non-null processes in $\Cat(A,I)$. We now define, for every pair $(A,B)$, a state-space 
\[\Omega(A,B) = \{ \hat{p}_{\alpha} | \alpha \in \Cat_{+}(I,A) \}\]
where $p_{\alpha}(x) = p(u \circ x \circ \alpha)$, $u$ any unit in $\Cat(B,I)$. 

At this point, subject to the conditions imposed above, we 
have a probabilistic theory, albeit not generally convex. 
Replacing $\Omega(A,B)$ if necessary by its convex hull, we can linearize as usual by applying the $\V$ and $\V^{\ast}$ functors, 
or the functors $\V$ and $\E$ if we prefer (the latter is the Pavia approach). 

In earlier chapters, we made it a standing assumption that 
probabilistic models have separating, positive sets of states. 
Imposing these conditions constrains the structure of the test spaces $\M(A,B)$, and of 
the SMC $\Cat$.  

\begin{lemma} Suppose $(\Cat, \M, p)$ is a causal Pavian theory 
in which, for all objects $A, B$, $\Omega(A,B)$ is positive 
and separating for $\M(A,B)$. Then 
\begin{mlist} 
\item[(a)] For every pair of objects $A, B$, $\M(A,B)$ 
contains a unique unit outcome.
\item[(b)] The semigroup of scalars of $\Cat$ has a unique idempotent (namely, the identity); 

\end{mlist} 
\end{lemma}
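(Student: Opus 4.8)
The plan is to treat the two parts separately, since (a) is essentially immediate while (b) requires first unwinding what the state space $\Omega(I,I)$ actually looks like.

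For (a), existence of a unit outcome in $\M(A,B)$ is guaranteed by our standing assumption, so only uniqueness is at issue. I would recall that if $u$ is a unit outcome then $\{u\}$ is a one-outcome test, whence $\alpha(u)=1$ for every state $\alpha\in\Omega(A,B)$. So if $u,u'$ are both units, then $\alpha(u)=1=\alpha(u')$ for all $\alpha\in\Omega(A,B)$; since $\Omega(A,B)$ separates the outcomes of $\M(A,B)$, this forces $u=u'$. No further input is needed, and in particular this does not use causality or branching.

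For (b), the main idea is that at the trivial system the state space degenerates to a single point. First I would fix the unique unit $u_I\in\Cat(I,I)$ from part (a) and note $p(u_I)=1$, since a unit outcome has probability $1$ under the probability weight $p$ on $\M(I,I)$. Then, because composition in $\Cat(I,I)$ is just multiplication of scalars and $p$ is a monoid homomorphism, for any non-null $\alpha\in\Cat_{+}(I,I)$ and any $x\in\Cat(I,I)$ one computes $p_\alpha(x)=p(u_I\circ x\circ\alpha)=p(x)\,p(\alpha)$. Summing over any test $E\in\M(I,I)$ and using that $p$ is a probability weight on $\M(I,I)$ identifies the normalizing factor as $p(\alpha)$, so $\hat p_\alpha(x)=p(x)$ for every $x$. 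Hence every state on $\M(I,I)$ is equal to $p$, i.e. $\Omega(I,I)=\{p\}$.

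With this collapse in hand the endgame is short. Positivity of $\Omega(I,I)$ now says exactly that $p(x)>0$ for every scalar $x$, and separation says exactly that $p$ is injective on $\Cat(I,I)$. Given an idempotent $s$ (so $s\circ s=s$), applying $p$ gives $p(s)^2=p(s)$, hence $p(s)\in\{0,1\}$; positivity excludes $0$, so $p(s)=1=p(\id_{I})$, and injectivity of $p$ then forces $s=\id_{I}$. This shows simultaneously that $\id_{I}$ is an idempotent and the only one, which is (b). The step I expect to be the real obstacle is the normalization in the third paragraph: I must verify carefully that the defining recipe for $\hat p_\alpha$ collapses to $p$ at $A=B=I$, which rests on correctly reading off $\sum_{x\in E}p(u_I\circ x\circ\alpha)=p(\alpha)$ from the probability-weight and homomorphism properties of $p$. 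Once that collapse is secured, the positivity/separation/idempotent argument is routine.
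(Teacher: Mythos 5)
Your proof is correct and follows essentially the same route as the paper's: part (a) is separation applied to the fact that every state assigns the same value to any two unit outcomes, and part (b) rests on the same three ingredients in the same order --- $p(s)^2=p(s)$ from the homomorphism property, positivity to exclude $p(s)=0$, and separation to force $s=\id_I$. Your intermediate observation that $\Omega(I,I)=\{p\}$ is just a tidier packaging of the paper's direct computation $p_\alpha(s)=p(s)p(\alpha)$, not a different argument.
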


{\em Proof:} (a) Let $u, v \in \Cat(A,B)$ be units. Since 
the theory is causal, $p_{\alpha}(u) = p_{\alpha}(v)$ for 
all $\alpha$, whence, since $\Omega$ is separating, 
$u = v$.

(b) If $s^2 = s$ in $S := \Cat(I,I)$, then 
$p(s^2) = p(s)^2$, so either $p(s) = 0$ or $p(s) = 1$.  
If the former holds, then  for 
every $\alpha \in \Cat(I,I)$ we have $p(s \circ \alpha) 
= p(s) p(\alpha) = 0$, so $p_{\alpha}(s) = 0$. This is 
impossible if the set of states is positive. Hence, 
$p(s) = 1$. It follows that for every $\alpha \in \Cat(I,I)$, $p_{\alpha}(s) = p(s) p(\alpha) = p(\alpha) = p_{\alpha}(1)$, whence, by separation (or by part (a)), $s = 1$. $\Box$ 

To finish this story, we need to say something about monoidality. 
As mentioned above, if $\Cat$ is a symmetric monoidal category, then $\Cat^{\op} \times \Cat$ inherits this structure: define 
\[(A,B) \otimes (C,D) := (A \otimes C, B \otimes D)\]
and, for $a : A' \rightarrow A$, $b : B \rightarrow B'$, $c : C' \rightarrow C$ and $d : D \rightarrow D'$, let 
\[(a,b) \otimes (c,d) : (A \otimes C, B \otimes D) \rightarrow (A' \otimes C', B' \otimes D')\]
be given by 
\[(a,b) \otimes 
(c,d) = (a \otimes c, b \otimes d).\]

Scalars in $\Cat^{\op} \times \Cat$ are pairs $(s, t) \in \Cat^{\op}(I,I) \times \Cat(I,I)$, and these compose as 
\[(s,t) \circ (s',t') = (s's,tt') = (ss',tt') =  (s,t) \otimes (s',t')\]
(notice in the penultimate expression we use the fact that the monoid $\Cat(I,I)$ is commutative; see, e.g., 
\cite{HV} for this). 
We have a mapping $S(\Cat^{\op} \times \Cat) \rightarrow S(\Cat)$ given by $(s,t) \mapsto st$. Composing this with 
the given function $p : S(\Cat) \rightarrow [0,1]$, we have a canonical probability assignment 
for $\Cat^{\op} \times \Cat$.  

{\bf Conjecture:} $(\M(A \otimes B, C \otimes D), \Omega(A \otimes B, C \otimes D))$ is a (strong) non-signaling 
composite of $(\M(A,C), \Omega(A,C))$ and $(\M(B,D), \Omega(B,D))$. 

I will be surprised if this is not true, but I have not yet sat down to do the necessary book-keeping. The reader should feel free to give it a try -- and please let me know either way! 


\begin{appendix}

{\gray \section{State of the Ensembles}

Recall that if $K$ is a convex set, a finite {\em ensemble} for $K$ is a finitely-supported probability weight (or distribution) on $K$, which we can represent as a set of pairs 
$(t_i, \alpha_i)$, $i = 1,...,n$, where $t_i \in (0,1]$ with $\sum_{i} t_i = 1$ and $\alpha_i \in K$.  The set $\D(K)$ of all finite ensembles for $K$ is irredundant, and can therefore 
be understood as a test space with outcome-set $X(K) = \bigcup \D(K) = (0,1] \times K$.  We wish to prove that 
the only probability weight on $\D(K)$ is the weight $\rho((t,\alpha)) = t$.

Let $f : (0,1] \times K \rightarrow [0,1]$ be a probability weight on $\D(K)$. 
For any $t \in (0,1]$ and $\alpha \not = \beta$, we have
$\{(t,\alpha),(1-t,\beta)\}$ and $\{(1-t,\beta),(t,\beta)\} \in \D(K)$, so 
$(t,\alpha) \sim (t,\beta)$. Thus, $f(t,\alpha)$ is independent 
of $\alpha$, and we can write  $f(t) := f(t,\alpha)$ with $\alpha$ free to vary. 

Note that as $\{(1,\alpha)\} \in \D(K)$ for all $\alpha \in K$, $f(1) = 1$.  We can extend $f$ to $[0,1]$ by setting $f(0) = 0$.  Now note that if $0 < t \leq 1$, we have $\{(t,\alpha), (1 - t,\alpha)\} \in \D(K)$, so $f(1 - t) = 1 - f(t)$. The extension to $0$ above makes this work for all $t \in [0,1]$.  Moreover, if $r,s,t \in (0,1)$ are distinct and $t + s + r = 1$, then both $\{(t + s,\alpha), (r,\alpha)\}$ and $\{(t,\alpha), (s,\alpha), (r,\alpha)\} \in \D(K)$, so $f(t + s) + f(r) = 1 = f(t) + f(s) + f(r)$, so we have 
\[f(t + s) = f(t) + f(s)\]
for all distinct $s,t \in (0,1]$ with $s + t < 1$. But if $s + t = 1$, we already have this, and it's trivially true if $s$ or $t$ is $0$; so it works in general: $f$ is additive on distinct pairs of values in $[0,1]$. It follows that $f$ is increasing. 

Let $s, t \in (0,1]$ with $s + t = 1$. Then for 
any choice of $\alpha, \beta \in K$ with $\alpha \not = \beta$, we have 
\[f(s) + f(t) = f(s,\alpha) + f(t,\beta) = 1,\]
so $f(s) = 1 - f(t)$. If $n \in \N$, choose 
$n+1$ distinct elements $\alpha_1,...,\alpha_{n}, \beta \in K$, 
and form the ensemble $\{(\tfrac{1}{n}t,\alpha_1),...,(\tfrac{1}{n}t,\alpha_n), s \beta\}$: we have 
\[n f(\tfrac{1}{n}) = f(\tfrac{1}{n},\alpha_1) + \cdots + 
f(\tfrac{1}{n}, \alpha_n) + f(s) = 1\]
so $nf(\tfrac{1}{n}) = 1 - f(s) = f(t)$, whence 
$f(\frac{1}{n} t) = \tfrac{1}{n} f(t)$. This extends to $t \in [0,1]$ since $f(0) = 0$. 

Suppose now that $k < n$: then for distinct 
$\alpha_1,...,\alpha_k, \beta$, we can construct 
ensembles 
$\{(\frac{1}{n},\alpha_1),...,(\frac{1}{n},\alpha_k), ((1 - \frac{k}{n}),\beta)\}$ 
and $\{(\frac{k}{n},\alpha), (1 - \frac{k}{n}, \beta)\}$: both belong to $\D(K)$, so 
\[f\left (\tfrac{k}{n} \right ) = k f \left (\tfrac{1}{n} \right ) = \tfrac{k}{n}.\]
So $f$ is the identity on rational points in $[0,1]$. Since $f$ is increasing, it follows that $f$ is the identity at all points of $[0,1]$. (For suppose $f(t) \not = t$ for some some $t \in [0,1]$. Choose a rational $q$ lying 
strictly between $t$ and $f(t)$: if $t < f(t)$, we have 
$f(t) \leq q$, a contradiction, and similarly if 
$f(t) < t$.) $\Box$ 


\tempout{
Now extend $f$ to $\R$ by setting $f(n + t) = n + f(t)$ for any $n \in \Z$ and any $t \in (0,1]$. Then $f$ is the identity on all rational points. Since $f$ is increasing, it's got at most countably many jump discontinuities. Suppose $x$ is one such.  We have 
\[f((1 - 1/n)x) = (1 - 1/n)f(x) \leq f(x)\]
The left-hand side approaches $\lim_{t \rightarrow x^{-}} f(t)$, while the right hand side approaches $f(x)$. Since 
the two sides are equal, so are the two limits, a contradiction.  Conclusion: $f$ is left-continuous.  Hence, $f$ is the identity.  
We have proved the following
}

{\em Remark:} We have not  used the convex structure of $K$ at all here. It could just be a set! So this is really a result about the free simplex on an infinite set $K$.  

\tempout{
We also seem to be proving this: if $f : [0,1]$ is $\Q$-linear and non-decreasing (more generally, monotone), it's continuous (Proof: Let $q_n, r_n$ be rationals decreasing, resp. increasing,  to $1$: then $f(q_n x) \rightarrow f(x)^{+}$ and 
$f(q_n x)  = q_n f(x) \rightarrow f(x)$, so $f(x) = f(x)^{+}$; similarly $f(r_n x) \rightarrow f(x)^{-}$ and 
$f(r_n x) = r_n f(x) \rightarrow f(x)$, so $f(x) = f(x)^{-}$. So $f(x)^{+}$ and $f(x)^{-}$ coincide, and 
$f$ is continuous at $x$.)
}
}

\section{Base-normed and order-unit spaces} 

This appendix collects some basic facts about base-normed and order-unit spaces. The book \cite{Alfsen} by Alfsen is a standard source for this material, but goes into far more detail than we need, and assumes far more background than most readers will have. 

{\bf Conebase Spaces} 
A {\em conebase} in a vector space $\V$ is a convex set $K \subseteq \V$ such that (i) $K$ spans $\V$, and (ii) $K$ 
is sparated from $0$ by a hyperplane; equivalently, there exists is a linear 
functional $u$ on $\V$, which we call the {\em unit functional}, with $K \subseteq u^{-1}(1)$. 
An example is the set of density operators, as embedded 
in the space of trace-class self-adjoint operators 
on a Hilbert space. In this case, the functional $u$ is 
the trace.

The {\em cone generated} by $K$ is $\V_{+} = \R_{+}K$. 
It is straightforward to show that this is indeed a 
convex, generating, pointed cone, so $(\V,\V_+)$ is an 
ordered vector space. Note that if $ta \in \V(K)_{+}$ with $t \geq 0, a \in K$, then $t = u(ta)$. Also note that 
every vector in $\V$ has a decomposition of the form 
$sa - tb$ where $s, t$ are non-negative reals and $a, b \in K$

\begin{definition} A {\em conebase space} is a pair $(\V,K)$ where $\V$ is an ordered vector space, $K$ is a conebase, and $\V_{+} = \R_{+} K$.  \end{definition} 

{\gray {\em Remark:} An equivalent notion 
is a pair $(\V,u)$ where $\V$ is an ordered vector space 
and $u$ is a positive linear functional with the property 
that $u(a) = 0$ implies $a = 0$ for $a \in \V_{+}$. 
This is called an {\em abstract state space} in \cite{BW-Information}. 
Clearly, every conebase space is associated with an abstract state space; conversely, given an abstract state space $(\V,u)$, let $K = u^{-1}(1) \cap \V_{+}$; then 
$(\V,K)$ is a conebase space. 
}

In what follows, $(\V,K)$ is a conebase space, and $u$ is 
its unit functional.  

\begin{proposition}\label{prop: extension} Let 
$\phi : K \rightarrow \W$ be an affine 
maping from $K$ into a vector space $\W$. Then $\phi$ has a unique extension to a linear mapping $\tilde{\phi} : \V(K) \rightarrow \W$. 
\end{proposition}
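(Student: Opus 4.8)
The plan is to follow the pattern of Lemma~\ref{lem: unique extension lemma}, but to organize everything around the unit functional $u$ rather than a norm, since here $\W$ is merely a vector space. First I would record the forced formula. Because $\V(K)_{+} = \R_{+}K$ and $\V(K) = \V(K)_{+} - \V(K)_{+}$, every $v \in \V(K)$ admits a decomposition $v = sa - tb$ with $s,t \geq 0$ and $a,b \in K$ (this is exactly the decomposition noted just before the Proposition). Any linear extension must then satisfy $\tilde{\phi}(v) = s\phi(a) - t\phi(b)$, which simultaneously proves uniqueness and dictates the definition of $\tilde{\phi}$.

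The main obstacle is well-definedness: I must show that if $sa - tb = s'a' - t'b'$ with all coefficients $\geq 0$ and $a,b,a',b' \in K$, then $s\phi(a) - t\phi(b) = s'\phi(a') - t'\phi(b')$. Rearranging the hypothesis gives $sa + t'b' = s'a' + tb$ inside $\V(K)$. Applying $u$ and using that $u \equiv 1$ on $K$ yields $s + t' = s' + t =: r$. If $r = 0$ then all four coefficients vanish and both sides are $0$. If $r > 0$, I divide through by $r$: the left side $\tfrac{s}{r}a + \tfrac{t'}{r}b'$ and the right side $\tfrac{s'}{r}a' + \tfrac{t}{r}b$ are now convex combinations of points of $K$ (the coefficients sum to $1$ in each case), hence equal elements of the convex set $K$. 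Applying the affine map $\phi$, which preserves convex combinations, and then multiplying back by $r$, gives precisely $s\phi(a) + t'\phi(b') = s'\phi(a') + t\phi(b)$, the desired equality.

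Finally I would verify linearity. For homogeneity, if $\lambda \geq 0$ then $\lambda(sa - tb) = (\lambda s)a - (\lambda t)b$ is again of the admissible form, while for $\lambda < 0$ one rewrites $\lambda(sa - tb) = (-\lambda t)b - (-\lambda s)a$; in either case $\tilde{\phi}(\lambda v) = \lambda \tilde{\phi}(v)$ follows from the formula together with well-definedness. For additivity, given $v = sa - tb$ and $w = s'a' - t'b'$, I combine positive parts via $sa + s'a' = (s+s')c$ with $c = \tfrac{s}{s+s'}a + \tfrac{s'}{s+s'}a' \in K$ (and symmetrically for the negative parts), so that $v+w$ acquires an admissible decomposition; affineness of $\phi$ then collapses $(s+s')\phi(c)$ back to $s\phi(a) + s'\phi(a')$, giving $\tilde{\phi}(v+w) = \tilde{\phi}(v) + \tilde{\phi}(w)$, with the degenerate cases $s+s' = 0$ or $t+t' = 0$ handled trivially. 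That $\tilde{\phi}$ restricts to $\phi$ on $K$ is immediate from the formula on taking $t = 0$. The only genuinely delicate step is well-definedness, and everything there hinges on the twin facts that $u$ is constantly $1$ on $K$ and that $K$ is convex, so that rescaling by $r$ converts the linear relation into a relation between convex combinations on which the affineness of $\phi$ can act.
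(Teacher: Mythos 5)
Your proof is correct and follows essentially the same route as the paper's: the forced formula $\tilde{\phi}(sa-tb)=s\phi(a)-t\phi(b)$, well-definedness via applying $u$ to get $s+t'=s'+t=:r$, the $r=0$ degenerate case, and rescaling by $r$ to convert the linear relation into an equality of convex combinations on which affineness of $\phi$ acts. The only difference is that you also write out the linearity verification (homogeneity and additivity), which the paper dismisses as "straightforward to check" --- a welcome bit of extra care, not a divergence in method.
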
 

{\em Proof:} The only candidate is 
\[\tilde{\phi}(sa - tb) = s\phi(a) - t \phi(b)\]
where $s, t \in \R_{+}$ and $a, b \in K$. To see that 
this is well-defined, let 
$sa - tb = s'a' - t'b' = v \in \V$, where 
$s,t,s', t' \in \R_{+}$ and $a,b,a',b' \in K$. Then 
\[sa + t'b = s'a' + tb =: w \in \V(K)_{+}.\]
We wish to show that 
\begin{equation} 
s\phi(a) + t'\phi(b) = s'\phi(a') + t\phi(b).
\end{equation} 
Applying the functional $u$, we have $s + t' = s' + t =: r \geq 0$. If $r = 0$, $s, t, s'$ and $t'$ are all zero, and 
there is nothing to prove. Otherwise, we have 
\[\frac{1}{r}w = \frac{s}{r} a + \frac{t'}{r}b' = \frac{s'}{r} a' + \frac{t}{r} b.\]
Since the combinations on the right are convex, they belong 
to $K$, and we can apply $\phi$ to obtain 
\[\frac{s}{r} \phi(a) + \frac{t'}{r} \phi(b') = 
\frac{s'}{r} \phi(a') + \frac{t}{r} \phi(b)\]
which immediately yields (\theequation). It is now 
straightforward to check that $\tilde{\phi}$ is linear. $\Box$

If $\W$ is an ordered vector space and $\phi : K \rightarrow \W$ 
is an affine mapping with $\phi(K) \subseteq \W_{+}$, then the 
linear mapping $\tilde{\phi}$ is positive. 
The following shows (what is more or less obvious) that the ordered linear structure of $(\V,K)$ is entirely determined by the convex structure of $K$. 

\begin{corollary}\label{cor: regular embedding} With notation as above, suppose $\W$ is 
an ordered vector space, that $\phi : K \simeq \phi(K)$ is injective, and that $\phi(K) \subseteq \W_{+}$ is a base for the 
cone of $\W_{+}$. Then $\tilde{\phi}$ is an order-isomorphism. 
\end{corollary}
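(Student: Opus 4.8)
The plan is to recognize that Proposition~\ref{prop: extension} already hands us $\tilde{\phi}$ as the unique linear extension of $\phi$, and that the remark following it already shows $\tilde{\phi}$ is positive (since $\phi(K)\subseteq \W_{+}$). So the task reduces to two things: show $\tilde{\phi}$ is a linear bijection, and upgrade positivity to the statement that $\tilde{\phi}$ carries $\V(K)_{+}$ \emph{onto} $\W_{+}$. An order-isomorphism is precisely a linear bijection with $\tilde{\phi}(\V(K)_{+})=\W_{+}$, and I would organize the argument around the decomposition $\V(K)=\R_{+}K-\R_{+}K$ noted in the text.

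First I would dispatch surjectivity. Because $\W$ is an ordered vector space, $\W_{+}$ is generating, so $\W=\W_{+}-\W_{+}$. Since $\phi(K)$ is a base for $\W_{+}$, every element of $\W_{+}$ has the form $t\phi(a)$ with $t\geq 0$ and $a\in K$. Hence an arbitrary $w\in\W$ can be written $w=s\phi(a)-t\phi(b)=\tilde{\phi}(sa-tb)$ with $a,b\in K$ and $s,t\geq 0$, giving surjectivity immediately.

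Next comes injectivity, which is the one step needing a little care. Using the representation $v=sa-tb$ ($s,t\geq 0$, $a,b\in K$), suppose $\tilde{\phi}(v)=0$, i.e.\ $s\phi(a)=t\phi(b)$ in $\W_{+}$. If this common value is nonzero, the uniqueness clause in the definition of the base $\phi(K)$ forces $s=t$ and $\phi(a)=\phi(b)$; injectivity of $\phi$ then yields $a=b$, so $v=s(a-b)=0$. If the common value is zero, then since the base $\phi(K)$ omits the origin we must have $s=0$ and likewise $t=0$, so again $v=0$. (Equivalently, one may apply the unit functional $u_{\W}$ attached to the base to conclude $s=t$ before invoking injectivity of $\phi$.) Thus $\tilde{\phi}$ is injective, hence a linear bijection.

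Finally I would close the order argument. We already have $\tilde{\phi}(\V(K)_{+})\subseteq\W_{+}$. For the reverse inclusion, take $w\in\W_{+}$ and write $w=t\phi(a)=\tilde{\phi}(ta)$ with $t\geq 0$, $a\in K$; since $ta\in\R_{+}K=\V(K)_{+}$, this shows $\W_{+}\subseteq\tilde{\phi}(\V(K)_{+})$. Together with bijectivity this gives $\tilde{\phi}(\V(K)_{+})=\W_{+}$ and hence $\tilde{\phi}^{-1}(\W_{+})=\V(K)_{+}$, so $\tilde{\phi}^{-1}$ is positive too, and $\tilde{\phi}$ is an order-isomorphism. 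Apart from the injectivity step, which leans on the uniqueness-of-representation property of a cone-base plus the injectivity hypothesis on $\phi$, everything is bookkeeping on the decompositions of $\V(K)$ and $\W$ into their positive cones.
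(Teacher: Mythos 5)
Your proof is correct and takes essentially the same route as the paper's: the extension and its positivity come from Proposition~\ref{prop: extension} and the remark following it, surjectivity from the fact that the base $\phi(K)$ spans $\W$ (as $\W_{+}$ is generating), injectivity from the base structure of $\phi(K)$ plus injectivity of $\phi$, and positivity of $\tilde{\phi}^{-1}$ because every element of $\W_{+}$ has the form $t\phi(a) = \tilde{\phi}(ta)$ with $ta \in \V(K)_{+}$. The only cosmetic difference is in the injectivity step, where you argue from uniqueness of representation in the base (treating the zero and nonzero cases separately, which is slightly more careful than the paper), while the paper applies the unit functional $w$ with $w(\phi(K)) = 1$ directly --- an alternative you yourself note parenthetically.
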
 


{\em Proof:} By the remark above, $\tilde{\phi}$ is 
positive. It is surjective because $\phi(K)$ spans $\W$. 
To see that it's injective, suppose $\tilde{\phi}(sa - tb) = s\phi(a) - t \phi(b) = 0$. 
Let $w$ be the functional on $\W$ with $w(\phi(K)) = 1$: 
applying this, we see that $s - t = 0$, i.e., $s = t$. 
Now $\tilde{\phi}(sa - tb) = s(\phi(a) - \phi(b)) = 0$, 
so $\phi(a) - \phi(b) = 0$, so $\phi(a) = \phi(b)$. 
But $\phi$ is injective, so $a = b$. Finally, note 
that since $\tilde{\phi}^{-1}$ takes $\phi(K)$ to $K$, 
it is positive. $\Box$ 

More generally, if $\phi : K \rightarrow \W$ is an affine 
injection and $\phi(K)$ is a base for the cone 
$\W_{+} \cap \spn(\phi(K))$, then $\spn(\phi(K))$, ordered 
by this cone, is linearly and order isomorphic to $\V(K)$. 
We will now establish a canonical representation for 
$\V$. 

Let $\Aff(K)$ denote the space of affine functionals 
$f : K \rightarrow \R$, ordered pointwise on $K$.  
For any vector space $\V$, let $\V'$ denote its algebraic dual space. 

\begin{corollary}\label{cor: dual of V(K)} $\V(K)' \simeq \Aff(K)$. \end{corollary}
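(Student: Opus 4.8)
The plan is to exhibit the isomorphism concretely as the restriction-and-extension correspondence, leaning entirely on Proposition \ref{prop: extension}. First I would define the restriction map $R : \V(K)' \to \Aff(K)$ sending a linear functional $F$ on $\V(K)$ to its restriction $F|_{K}$. This is well-defined because the restriction of a linear map to the convex set $K$ is affine, and it is patently linear in $F$, since restriction respects pointwise sums and scalar multiples.

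Next I would produce the candidate inverse. Given an affine functional $f \in \Aff(K)$, Proposition \ref{prop: extension} (applied with $\W = \R$) furnishes a unique linear extension $\tilde{f} : \V(K) \to \R$, i.e.\ an element of $\V(K)'$. Writing $E : \Aff(K) \to \V(K)'$ for the map $f \mapsto \tilde{f}$, I would check that $E$ is linear: since $\tilde{f} + \tilde{g}$ is a linear functional extending $f + g$, the uniqueness clause of Proposition \ref{prop: extension} forces $\widetilde{f+g} = \tilde{f} + \tilde{g}$, and similarly $\widetilde{cf} = c\tilde{f}$.

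The core step is then to verify that $R$ and $E$ are mutually inverse. On the one hand $R \circ E = \id_{\Aff(K)}$, because by construction $\tilde{f}$ agrees with $f$ on $K$, so $(\tilde{f})|_{K} = f$. On the other hand $E \circ R = \id_{\V(K)'}$: given $F \in \V(K)'$, the functional $F$ is itself a linear extension of $F|_{K}$, so by uniqueness it coincides with the extension $\widetilde{F|_{K}}$ produced by the proposition. Uniqueness here is underwritten by the fact that $K$ spans $\V(K)$, which is part of the definition of a conebase space; this is also what makes $R$ injective directly, since a linear functional vanishing on $K$ vanishes on $\spn(K) = \V(K)$. Hence $R$ is a linear bijection with linear inverse $E$, giving the canonical isomorphism $\V(K)' \simeq \Aff(K)$.

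I do not expect any real obstacle: all the analytic content was already spent in proving Proposition \ref{prop: extension}, and this corollary is a purely formal consequence. The only points demanding a moment's care are confirming that the restriction of a linear functional is genuinely affine (immediate) and that the extension assignment is linear (a one-line application of uniqueness rather than a fresh computation).
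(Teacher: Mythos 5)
Your proposal is correct and follows essentially the same route as the paper: both prove the corollary by pairing the restriction map $\phi \mapsto \phi|_{K}$ with the unique linear extension $f \mapsto \tilde{f}$ supplied by Proposition \ref{prop: extension}, and checking these are mutually inverse (uniqueness of the extension doing all the work). The only thing you omit is that the paper also verifies \emph{positivity} of both maps --- if $f \geq 0$ on $K$ then $\tilde{f}(ta) = tf(a) \geq 0$ on $\V(K)_{+}$, and conversely restriction clearly preserves positivity --- which matters because $\Aff(K)$ is ordered pointwise and $\simeq$ here is meant as an order-isomorphism, not merely a linear one; this is a one-line addition to your argument.
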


{\em Proof:}  By Proposition \ref{prop: extension} above, every $f \in \Aff(K)$ 
extends uniquely to a linear functional $\tilde{f} \in \V'$. 
Conversely, if $\phi \in \V(A)'$, then 
$\phi|_{K} \in \Aff(K)$. We have $\tilde{\phi|_K} = \phi$ 
and $\tilde{f}|_{K} = f$, so $f \mapsto \tilde{f}$ 
defines a linear isomorphism $\Aff(K) \rightarrow \V(A)'$. 
This is positive, since if $f \geq 0$ on $K$, then 
$\tilde{f}(ta) = tf(a) \geq 0$ for all $ta \in \V(A)_{+}$. 
The inverse mapping $\V(A)' \rightarrow \Aff(K)$ sending 
$\phi$ to $\phi|_{K}$ is clearly positive. $\Box$. 

We now have an embedding 
\[\V(K) \leq \V(K)" \simeq \Aff(K)'\]
taking $a \in K$ to $\hat{a} \in \Aff(K)'$, namely, 
$\hat{a}(f) = f(a)$ for all $f \in \Aff(K)$. This is a regular embedding: the constant function $1$ on $K$ 
defines a linear functional $u$ in $\Aff(K)"$ by 
$u(\phi) = \phi(1)$, and $u(a) = \hat{a}(1) = 1(a) = 1$ 
for all $a \in K$. Identifying $a$ with $\hat{a}$, we can 
tret $K$ as a subset of $\Aff(K)'$, and can now identify $\V$ with its span in $\Aff(K)'$.

{\bf Seminorms and Minkowski functionals} Our aim now is to 
put a norm on a conebase space $(\V,K)$. Before turning to 
this, it will be helpful to start with a bit of background on 
the way in which, by specifying a suitable convex neighborhood of the origin as a "unit ball", we can construct a seminorm 
on any vector space. 

Suppose $B$ is a convex subset of a vector space $\V$. For every real number $r \geq 0$, 
let $rB = \{ ra \ | \ a \in B\}$. We say that $B$ is {\em absorbing} iff $\bigcup_{r \geq 0} rB = \V$, and {\em balanced} iff $B = -B$. 
Note that then $0 \in B$, by convexity. If $0 \in B$, we say 
that $B$ is {\em radially bounded} iff for every 
$x \in B$, the set $\{ r \in \R | rx \in B\}$ is bounded, 
and {\em radially compact} iff this set is also closed (thus, compact). 

\begin{exercise} Show that if $B$ is convex and absorbing, 
it spans $\V$. \end{exercise} 


\begin{definition} Let $B \subseteq \V$ be absorbing. The {\em Minkowski functional} of $B$ 
is the mapping $\| ~\cdot~ \|_{B} : \V \rightarrow \R$ 
defined by 
\[\|a\|_{B} \ = \ \inf \{\ r \ | \ \exists a_o \in B \ \alpha = r a
\ \} = \inf \{ \ r \geq 0 \ | \ a  \in rB \ \}.\]
\end{definition} 


\vspace{.1in}

\begin{lemma}\label{lemma: Minkowski} Let $B$ be convex, balanced, and absorbing. 
\begin{mlist} 
\item[(a)] 
$\| \cdot \|_{B}$ is a seminorm. 
\item[(b)] $\| \cdot \|_{B}$ is a norm iff $B$ is 
radially compact. 
\end{mlist} 
\end{lemma}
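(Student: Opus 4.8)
The plan is to route everything through a single computational device. For a nonzero vector $x$ introduce the \emph{ray-set} $R_x := \{r \in \R : rx \in B\}$, which is a symmetric interval about $0$ (convexity, $B = -B$, and $0 \in B$) and is nonempty beyond $0$ (absorbency). The first step I would carry out is the identity
\[ \|x\|_B \;=\; \frac{1}{\sup R_x} \qquad (x \neq 0), \]
with the convention $1/\infty = 0$: for $r > 0$ one has $x \in rB \iff (1/r) \in R_x$, so $\{r > 0 : x \in rB\}$ is a half-line with left endpoint $1/\sup R_x$, and its infimum is $\|x\|_B$. I would also record the scaling remark that, because $B$ is absorbing, every nonzero $x$ equals $\mu^{-1}y$ for some $y \in B$ and $\mu > 0$, and then $R_x = \mu R_y$; hence the radial conditions, although stated for $x \in B$, in fact govern all nonzero $x$, with boundedness and closedness preserved under positive scaling.

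For part (a), homogeneity and subadditivity are routine consequences of the set-up. Since $B$ is convex, symmetric and contains $0$, one has $\mu B = |\mu|\,B$ for every real $\mu$, so the substitution $r \mapsto r/|\lambda|$ inside the infimum gives $\|\lambda x\|_B = |\lambda|\,\|x\|_B$. For the triangle inequality I would fix $r > \|a\|_B$ and $s > \|b\|_B$, write $a = rb_1$ and $b = sb_2$ with $b_1,b_2 \in B$, and use convexity to put $\tfrac{r}{r+s}b_1 + \tfrac{s}{r+s}b_2 \in B$, so that $a + b \in (r+s)B$ and $\|a+b\|_B \le r + s$; taking infima over $r,s$ yields subadditivity. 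With $\|\cdot\|_B \ge 0$ this makes $\|\cdot\|_B$ a seminorm, and a seminorm is a norm exactly when $\|x\|_B = 0 \Rightarrow x = 0$.

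For part (b), the displayed identity gives immediately $\|x\|_B = 0 \iff \sup R_x = \infty \iff R_x$ unbounded. The \emph{if} direction then drops out: if $B$ is radially compact, each $R_x$ ($x \neq 0$) is bounded, so $\sup R_x < \infty$ and $\|x\|_B > 0$, whence $\|\cdot\|_B$ is a norm. For the \emph{only if} direction, assume $\|\cdot\|_B$ is a norm; then $\|x\|_B > 0$ for $x \neq 0$, so $\sup R_x = 1/\|x\|_B < \infty$ and every $R_x$ is bounded. To reach radial compactness it remains to show each $R_x$ is \emph{closed}, i.e.\ that $M := \sup R_x = 1/\|x\|_B$ is attained, equivalently that the boundary point $x/\|x\|_B$ lies in $B$; granting that, $R_x = [-M,M]$ is compact.

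The attainment step is the main obstacle, and it is where I would concentrate the work. Since $\{y : \|y\|_B < 1\} \subseteq B$ holds automatically (convexity and $0 \in B$), closedness of $R_x$ is equivalent to the single inclusion $\{y : \|y\|_B = 1\} \subseteq B$ — that $B$ contain its own unit sphere. This is precisely where norm-ness and radial compactness must be made to meet: I would establish $x/\|x\|_B \in B$ by observing that $R_x$ is the preimage of $B$ under the continuous map $r \mapsto rx$, so that as soon as $B$ is closed the bounded interval $R_x$ is automatically compact. The upshot of the plan is thus a clean proof of the \emph{if} half and a reduction of the \emph{only if} half to the closedness of $B$ — which is exactly the feature present in the intended application, where $B = \con(\Omega \cup -\Omega)$ with $\Omega$ compact, so that the two radial notions coincide and the biconditional holds.
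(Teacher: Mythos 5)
Your part (a) and the forward half of (b) are correct, and in (a) you are actually more careful than the paper: the paper's printed argument writes $a = s a_o$ with $s = \|a\|_B$ and $a_o \in B$, tacitly assuming the infimum defining the Minkowski functional is attained (which fails for, e.g., an open ball), whereas your strict choices $r > \|a\|_B$, $s > \|b\|_B$ avoid this. Your proof that radial compactness implies norm-ness, routed through the identity $\|x\|_B = 1/\sup R_x$, is in substance the same argument the paper gives in contrapositive form: if $\|a\|_B = 0$ with $a \neq 0$, the nestedness $tB \subseteq t'B$ for $0 < t \leq t'$ puts the entire ray $\R_{+}a$ inside $B$, so $R_a$ is unbounded and radial boundedness (hence radial compactness) fails.

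The gap is in the converse, and you have correctly located it but not closed it --- because it cannot be closed. Norm-ness yields boundedness of every ray set $R_x$, as you show; but closedness of $R_x$, i.e.\ $x/\|x\|_B \in B$, is a genuinely additional property, and your appeal to closedness of $B$ imports a hypothesis the lemma does not contain. In fact the stated biconditional is false: take $\V = \R$ and $B = (-1,1)$. This $B$ is convex, balanced, and absorbing, and $\|x\|_B = |x|$ is a norm; yet $R_x = (-1/|x|,\, 1/|x|)$ is bounded and open, so $B$ is not radially compact under the paper's definition. The correct statement of (b) is ``$\|\cdot\|_B$ is a norm iff $B$ is radially \emph{bounded}''; radial compactness is sufficient but not necessary. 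Note that the paper itself proves only the sufficient direction and relegates the (false) converse to an embedded exercise, so your proposal in fact establishes everything the paper does, and your closing observation --- that when $B$ is (radially) closed, as for $B = \con(\Omega \cup -\Omega)$ with $\Omega$ compact in the intended application, boundedness of the ray sets upgrades to compactness and the biconditional is restored --- is exactly the right repair of the statement.
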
 

{\em Proof:} 

(a) Let $a, b \in \V$ and set $s = \|a\|, t = \|b\|$, so that 
$a = sa_o$ and $b = t b_o$ for $a_o, b_o \in B$.  Let 
\[c_o = \frac{s}{s + t} c_o + \frac{t}{s + t}c_o \in B\]
and note that $(s + t)c_o = a + b$. Hence, $\|a + b\| \leq s + t = \|a\| + \|b\|$. 
Since $a \in B$ iff $-a \in B$, we have $\|a\| = \|-a\|$, and clearly 
$\|ra\| = r\|a\|$ for any $r \geq 0$, $\| r a \| = |r| \|a\|$ for any $r \in \R$, 
and $\| \, \cdot \, \|$ is a seminorm. 



(b) Suppose $a \not = 0$ and $\|x\| = 0$. Then $a \in tB$ for any $t > 0$, whence (setting $r = 1/t)$, $ra \in B$ for any $r > 0$. Hence, the ray $\R_{+}a \subseteq B$, and 
thus $B$ is not radially compact.  I leave the converse 
as 

{\bf Exercise:} Show that if $B$ is not radially compact, 
$\| \cdot \|_{B}$ is not a norm. $\Box$ 

\tempout{

\begin{lemma} $\| \cdot \|_{B}$ is a seminorm. \end{lemma} 
\vspace{-.1in} 

{\em Proof:} Let $a, b \in \V$ and set $s = \|x\|, t = \|y\|$, so that 
$a = sa_o$ and $b = t b_o$ for $a_o, b_o \in B$.  Let 
\[c_o = \frac{s}{s + t} c_o + \frac{t}{s + t}c_o \in B\]
and note that $(s + t)c_o = a + b$. Hence, $\|a + b\| \leq s + t = \|a\| + \|b\|$. 
Since $a \in B$ iff $-a \in B$, we have $\|a\| = \|-a\|$, and clearly 
$\|ra\| = r\|a\|$ for any $r \geq 0$, $\| r a \| = |r| \|a\|$ for any $r \in \R$, 
and $\| \, \cdot \, \|$ is a seminorm. $\Box$


A convex set $B$ containing the origin is {\em radially compact} iff for every one-dimensional subspace $L$ of $\V$, $L \cap B$ is compact. 

\begin{lemma}\label{lemma: B radially compact implies norm} If $B$ is radially compact,  $\| \, \cdot \, \|$ is a norm. \end{lemma} 

\vspace{-.1in} 
{\em Proof:} 
Suppose $a \not = 0$ and $\|x\| = 0$. Then $a \in tB$ for any $t > 0$, whence (setting $r = 1/t)$, $ra \in B$ for any $r > 0$. Hence, the ray $\R_{+}a \subseteq B$, and 
thus $B$ is not radially compact. 
$\Box$ 

\begin{exercise} Show that the converse to the above lemma also holds. 
\end{exercise} 
}

A topology $\tau$ on $\V$ is {\em linear} iff 
it renders addition and scalar multiplication continuous 
as mappings $\V \times \V \rightarrow \V$ and 
$\R \times \V \rightarrow \V$, respectively.  
It can be shown \cite[Theorem 7.3]{Kelly-Namioka} that 
every finite-dimensional subspace of $\V$ is closed in every 
linear topology on $\V$. 



\begin{proposition}\label{prop: compact implies radially bounded} Suppose that $B$ is compact in 
some Hausdorff linear topology on $\V$. Then $B$ is radially compact and 
$\V$ is complete in $\| \, \cdot \, \|_{B}$. 
\end{proposition}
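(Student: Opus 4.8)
The plan is to prove the two assertions in turn, using the cited fact \cite[Theorem 7.3]{Kelly-Namioka} that finite-dimensional subspaces are closed in any linear topology to reduce the first to a statement about $\R$, and then to exploit $\tau$-compactness (where $\tau$ denotes the given Hausdorff linear topology) together with the closedness of convex translates to force a norm-Cauchy sequence to converge.

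First I would establish radial compactness. Fix $x \neq 0$ and let $L = \R x$, a one-dimensional subspace. By \cite[Theorem 7.3]{Kelly-Namioka}, $L$ is $\tau$-closed, so $B \cap L$ is a $\tau$-closed subset of the $\tau$-compact set $B$ and hence is itself $\tau$-compact. On the one-dimensional space $L$ the subspace topology must coincide with the Euclidean topology (the unique Hausdorff linear topology on a line), so $B \cap L$ is a compact, hence closed and bounded, subset of $L \cong \R$. Transporting along the homeomorphism $r \mapsto rx$ shows that $\{\, r \in \R \mid rx \in B \,\}$ is a compact interval; this is exactly radial compactness. By Lemma~\ref{lemma: Minkowski}(b) it now follows that $\|\cdot\|_B$ is a genuine norm, so the second assertion is at least well-posed.

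For completeness, let $(a_n)$ be a sequence that is Cauchy for $\|\cdot\|_B$. Being Cauchy it is norm-bounded, say $\|a_n\|_B \leq M$, so all $a_n$ lie in $(M+1)B$, which is $\tau$-compact (scalar multiplication is a $\tau$-homeomorphism). Regarding the sequence as a net into the compact space $(M+1)B$, it has a $\tau$-cluster point $a$. I claim $a_n \to a$ in norm. Given $\epsilon > 0$, choose $N$ so that $a_m - a_n \in \epsilon B$ whenever $m,n \geq N$. Fix $n \geq N$: the set $a_n + \epsilon B$ is $\tau$-compact, hence $\tau$-closed since $\tau$ is Hausdorff, and it contains $a_m$ for every $m \geq N$. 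Because $a$ is a cluster point of the sequence, every $\tau$-neighborhood of $a$ meets $\{a_m \mid m \geq N\} \subseteq a_n + \epsilon B$; were $a$ outside the closed set $a_n + \epsilon B$, a neighborhood of $a$ would be disjoint from it, a contradiction. Hence $a - a_n \in \epsilon B$, i.e. $\|a - a_n\|_B \leq \epsilon$, for all $n \geq N$, which is norm convergence.

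The step I expect to be the main obstacle is precisely this last one, because the norm topology is strictly finer than $\tau$ (an infinite-dimensional $\tau$-compact $B$ has empty $\tau$-interior, so $\epsilon B$ is never a $\tau$-neighborhood), and so one cannot simply read off norm convergence from $\tau$-convergence of a subsequence. The resolution is to avoid subsequences altogether: use the net characterization of compactness to obtain a cluster point, and use the $\tau$-closedness of the convex sets $a_n + \epsilon B$ to trap that cluster point inside successively smaller norm-balls around the tail of the sequence. A minor point to get right is that compactness only supplies a cluster point of the sequence-as-net, not a convergent subsequence, but a cluster point is all the argument needs.
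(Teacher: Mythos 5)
Your proof is correct and follows essentially the same route as the paper's, which is left there as guided exercises: radial compactness via $\tau$-closedness of one-dimensional subspaces (the Kelly--Namioka fact) plus uniqueness of the Hausdorff linear topology on a line, and completeness by placing the bounded Cauchy sequence in a $\tau$-compact multiple of $B$, extracting a $\tau$-cluster point, and trapping it in the $\tau$-closed (compact) sets $a_n + \epsilon B$ to get norm convergence. The only difference is packaging — the paper abstracts the second step into a metric-space lemma about topologies in which closed balls are compact, while you argue directly with the Minkowski norm — but the underlying argument is identical.
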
 

{\gray 
The proof is DIY, with hints:

\begin{exercise} Let $(X,d)$ be a metric space. For 
every $r > 0$ and $a \in X$, let $B_{r}(a) = \{ x \in X | d(x,a) \leq r\}$. This is the {\em closed ball} of radius $r$ 
at $a$. Suppose $\tau$ is a topology on $X$ in which every closed ball is compact.  Show that $(X,d)$ is complete. 
(Hints: Let $(x_n)$ be a Cauchy sequence in 
$X$. 
\begin{mlist} 
\item[(a)] Show that $(x_n)$ is bounded, hence, contained in some closed ball $B$. 
\item[(b)] Show that $(x_n)$ has a $\tau$-limit point $x$ in $B$. 
\item[(c)] Show that for every $\epsilon > 0$ there is a 
closed ball of radius $\delta < \epsilon/2$ and a natural 
number $N$ with 
$\overline{B}_{\delta}(x_N) \subseteq B$
\item[(c)] Conclude that the limit point $x$ also belongs 
to $B_{\delta}(x_N)$. Conclude that $d(x_n,x) < \epsilon$ for all $n \geq N$, hence, $(x_n)$ converges 
to $x$ in the metric $d$.) 
\end{mlist} 
\end{exercise} 

\begin{exercise} Show that if the unit ball in a normed space $\V$ is compact in some linear topology $\tau$, then $\V$ is complete. \end{exercise} 
}

{\bf Base-normed spaces} Suppose $(\V,K)$ is a conebase space. is a compact, convex subset of a topological vector space. Define 
\[B := \con(K \cup -K):\]
It's easy to see that $B$ is convex, balanced, and aborbing in $\V$, so we can define a seminorm $\| \, \cdot \, \|_{B}$ as above. If $B$ is radially compact, then its Minkowski functional defines a norm on $\V$, called the {\em base norm}, and in this case, we say that $(\V,K)$ is a {\em base-normed space}. We say that $\V$ is a {\em complete} base-normed space iff it is complete, i.e., a Banach space, in its base-norm.  Proposition B.5 supplies a useful sufficient condition for this:

\begin{lemma}\label{lemma: compact base} Let $(\V,K)$ be a conebase space and let 
$K$ be compact in some linear topology on $\V$. Then 
$\V$ is a complete base-normed space. 
\end{lemma}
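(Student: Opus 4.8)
The plan is to reduce everything to the single fact that $B := \con(K \cup -K)$ is compact, after which Proposition \ref{prop: compact implies radially bounded} does all the remaining work. Recall that $B$ is already known to be convex, balanced, and absorbing, so by Lemma \ref{lemma: Minkowski}(a) its Minkowski functional $\| \, \cdot \, \|_{B}$ is at least a seminorm; the content of the lemma is precisely that this seminorm is a \emph{complete norm}, and both upgrades will flow from compactness of $B$.

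The main step is therefore to show that $B$ is compact in the given linear topology. First I would establish the elementary description
\[ B = \{\, sa - (1-s)b \ : \ a, b \in K,\ s \in [0,1] \,\}. \]
This rests on the convexity of $K$ (and hence of $-K$): any convex combination $\sum_i s_i y_i$ of points $y_i \in K \cup -K$ may be split according to whether $y_i$ lies in $K$ or in $-K$, and the two sub-combinations renormalized, so the whole combination collapses to $sa + (1-s)(-b)$ for suitable $a,b \in K$ and $s \in [0,1]$. With this description in hand, $B$ is exactly the image of the compact product $[0,1] \times K \times K$ under the map $(s,a,b) \mapsto sa - (1-s)b$, which is continuous because addition and scalar multiplication are continuous in any linear topology. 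A continuous image of a compact set is compact, so $B$ is compact.

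Once $B$ is known to be compact, Proposition \ref{prop: compact implies radially bounded} yields immediately that $B$ is radially compact and that $\V$ is complete in $\| \, \cdot \, \|_{B}$. Radial compactness then upgrades $\| \, \cdot \, \|_{B}$ from a seminorm to a genuine norm by Lemma \ref{lemma: Minkowski}(b), so that $(\V, K)$ really is a base-normed space; and completeness in this norm is exactly the assertion that it is a \emph{complete} base-normed space. So the only mathematically substantive point is the compactness argument above.

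The one place needing care, and what I expect to be the main (if minor) obstacle, is that Proposition \ref{prop: compact implies radially bounded} is stated for a \emph{Hausdorff} linear topology, whereas the lemma only posits some linear topology. Hausdorffness enters solely through radial compactness: to conclude that $B \cap L$ is compact for each line $L$ one uses that $L$ is closed, which the quoted theorem of Kelly and Namioka guarantees in a Hausdorff linear topology. The cleanest honest fix is to read the hypothesis as a Hausdorff linear topology, which is harmless since the lemma is only ever applied to weak-$\ast$-type topologies, and these are automatically Hausdorff; failing that, one would pass to the Hausdorff quotient $\V / \overline{\{0\}}$, where $K$ still has compact image under the quotient map. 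I would flag the verification that this image remains a cone base in the quotient as the one genuinely fiddly check, and I would lean on the former reading rather than the latter in the actual write-up.
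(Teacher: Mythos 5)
Your proof follows exactly the paper's route: the paper obtains this lemma as a corollary of Proposition \ref{prop: compact implies radially bounded}, leaving the compactness of $B = \con(K \cup -K)$ as an exercise, and your argument (writing $B$ as the continuous image of the compact set $[0,1] \times K \times K$ under $(s,a,b) \mapsto sa - (1-s)b$) is a correct solution of precisely that exercise. Your remark about the Hausdorff hypothesis is also well taken, since Proposition \ref{prop: compact implies radially bounded} assumes a Hausdorff linear topology and the lemma would genuinely fail for, say, the indiscrete topology, so reading the paper's standing Hausdorff assumption into the statement is the right resolution.
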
 

This becomes a corollary to Proposition \ref{prop: compact implies radially bounded}
once we establish that the compactness of $K$ 
implies that of $B$.  I'll leave this as 

\begin{exercise} Show that if $K$ is $\tau$-compact for a 
linear topology $\tau$ on $\V$, then so is $B = \con(K \cup -K)$. \end{exercise} 

Lemma \ref{lemma: compact base} covers a lot of ground. For instance, the state space of any 
von Neumann or Jordan model is compact, and thus, the corresponding space 
$\V(A)$ is complete. However, some of our most important infinite-dimensional 
state spaces are {\em not} compact. In particular, the set of density operators 
on a finite-dimensional Hilbert space, or, more generally, the space of normal 
states on a von Neumann algebra, are not compact in any useful topology. 
Luckily, a stronger completeness theorem is available that covers these cases. 
This is discussed in Appendix C.

Let $\phi : \V \rightarrow \W$, where $\W$ is a normed space, 
and suppose $\phi$ is bounded on $K$, say with $\sup_{a \in K} \|\phi(a)\| = M$. 
Any point $v \in B$ has the form 
$v = sa + (1-s)b \in B$ where $0 \leq s \leq 1$ 
and $a, b \in K$, so 
\[\|\phi(sa - (1 - s)b)\| \leq s\|\phi(a)\| + (1 - s) \|\phi(b)\| 
\leq M.\]
Thus, $\phi$ is bounded, with $\|\phi\| \leq M$. Since 
{\em a priori} $M \leq \|\phi\|$, we have $M = \|\phi\|$. 
We now have the following bounded version of \ref{prop: extension}:

\begin{lemma}\label{lem: unique extension bis} If 
$(\V,K)$ is a conebase space and $\W$ is a normed vector 
space, any bounded affine mapping $\phi : K \rightarrow \W$ 
extends uniquely to a bounded linear mapping $\tilde{\phi} : \V \rightarrow \W$ with $\|\tilde{\phi}\| = \sup_{a \in K} \|\phi(a)\|$. 
\end{lemma}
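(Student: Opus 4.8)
The plan is to separate the algebraic content, which is already available, from the (minor) normative content. First I would invoke Proposition~\ref{prop: extension}: since $\phi : K \to \W$ is affine, it has a unique linear extension $\tilde{\phi} : \V \to \W$, namely $\tilde{\phi}(sa - tb) = s\phi(a) - t\phi(b)$ for $s,t \geq 0$ and $a,b \in K$. Uniqueness \emph{among bounded} linear extensions is then free: any bounded linear extension is in particular a linear extension agreeing with $\phi$ on $K$, and Proposition~\ref{prop: extension} shows there is only one of those. So everything reduces to checking that $\tilde{\phi}$ is bounded and to computing its norm.

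Set $M = \sup_{a \in K}\|\phi(a)\|$, finite by hypothesis. Recall that the base (semi)norm on $\V$ is the Minkowski functional of $B = \con(K \cup -K)$, so that $B$ serves as the unit ball of $(\V, \|\cdot\|_B)$ and $\|\tilde{\phi}\| = \sup_{v \in B}\|\tilde{\phi}(v)\|$. Now I would observe that every $v \in B$ has the two-term form $v = sa - (1-s)b$ with $a,b \in K$ and $0 \leq s \leq 1$: grouping a convex combination of points of $K \cup -K$ by sign and using convexity of $K$ collapses the positive and negative parts each to a single point of $K$. The estimate displayed just before the statement then applies verbatim, giving $\|\tilde{\phi}(v)\| \leq s\|\phi(a)\| + (1-s)\|\phi(b)\| \leq M$, hence $\|\tilde{\phi}\| \leq M$. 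Since $K \subseteq B$ forces $\|\tilde{\phi}\| \geq \sup_{a \in K}\|\tilde{\phi}(a)\| = M$, we conclude $\|\tilde{\phi}\| = M$, as required.

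There is no substantive obstacle here: the lemma is essentially a packaging of Proposition~\ref{prop: extension} together with the convexity estimate already recorded in the preceding paragraph. The only point deserving a line of care is the reduction of an arbitrary element of $B = \con(K \cup -K)$ to the two-term form $sa - (1-s)b$, since that is what lets the two-point estimate cover all of the unit ball; once that reduction is in hand, both the upper bound (boundedness by $M$ on $B$) and the lower bound (from $K \subseteq B$) are immediate.
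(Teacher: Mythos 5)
Your proof is correct and follows essentially the same route as the paper's: existence and uniqueness of the linear extension come from Proposition~\ref{prop: extension}, and boundedness with $\|\tilde{\phi}\| = \sup_{a \in K}\|\phi(a)\|$ follows from the two-term decomposition $v = sa - (1-s)b$ of elements of $B = \con(K \cup -K)$ together with the convexity estimate recorded just before the lemma. Your added care in justifying that decomposition (grouping a convex combination of points of $K \cup -K$ by sign) and in noting the lower bound from $K \subseteq B$ only makes explicit what the paper leaves implicit.
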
 

{\em Proof:} The existence of a unique linear extension 
was established in Proposition \ref{prop: extension}, so 
we need only show that this is bounded. But 
\[\sup_{v \in B} \|\tilde{\phi}(v)\|_{\W} 
= \sup_{a \in K} \|\tilde{\phi}(a)\|_{\W} = 
\sup_{a \in K} \|\phi(a)\|_{\W}\]
and by assumption, this last is finite. $\Box$ 


\begin{exercise}  Boundedness of linear extensions. 
Banach dual of $\V$ is  $\simeq \Aff_{b}(K)$. Additivity of 
base-norms. 
\end{exercise} 

{\bf Order-Unit normed spaces} An {\em order unit} in an ordered vector space $\E$ is an element $u \in E^{+}$ with the property that, for every $a \in E$, $a \leq nu$ for some $n\in\mathbb{N}$. 

The unit effect $u$ in $\E(A)$, for any probabilistic model $A$, is clearly an order unit. If $\E$ is finite-dimensional, one can show that any $u$ belonging to the interior of the positive cone $\E_{+}$ is an order unit. This is also true for ordered Banach spaces with closed cones. But not every ordered vector space has 
an order unit. For example, if $X$ is infinite, $\R^{X}$, ordered pointwise, has no order unit. 

\begin{exercise} Show that the constant function $1$ is 
an order-unit for the space $B(X)$ of bounded linear functionals. Show that the identity operator on a Hilbert 
space $\H$ is an order unit for $\L_{\sa}(\H)$.
\end{exercise}

Suppose $(\E,u)$ is an order unit space. For any $a \in \E$, 
we have natural numbers $m,k$ with $a \leq  m u$ and $-a \leq ku$, whence, $-mu \leq a \leq nu$. Taking $n = \max(k,m)$, 
it follows that $-n u \leq a \leq n u$. Thus, $E \ = \ \bigcup_{n} ~[-nu,nu]$ --- in other words, the set 
$[-u, u ]$ is absorbing. It is clearly convex and balanced, 
so its Minkowski functional defines a seminorm $\| \cdot \|_{u}$. 

\begin{exercise} Show that $[-u, u]$ is radially compact 
iff $\E$ is Archimedean. 
\end{exercise}


\begin{definition} An {\em order-unit space} is a pair $(\E,u)$ where $\E$ is an Archimedean ordered vector space and 
$u$ is an order unit. The norm $\| \cdot \|_u$ is the 
{\em order-unit norm} on $\E$. 
\end{definition}

\begin{exercise} Show that the order-unit norm on 
$(B(X),1)$ is the usual supremum norm on $B(X)$. 
\end{exercise} 

A {\em state} on an OUS $(\E,u)$ is a positive linear 
functional $\alpha : \E \rightarrow \R$ with 
$\alpha(u) = 1$. One can show that such a function is automatically bounded with respect to the order-unit norm, so the set $K$ of states --- the {\em state-space} of $(\E,u)$ ---  
lies in $\E^{\ast}$, where it is a base for the dual cone 
$\E^{\ast}_{+}$. Moreover, $K$ is closed, and hence compact, 
in the weak-$\ast$ topology on $\E^{\ast}$.

For details on this, and a proof of the following, see 
\cite{AS}:

\begin{theorem}[Ellis]\label{thm: Ellis} If 
$(\E,u)$ is an order-unit space with state-space $K$, 
then $(\E^{\ast},K)$ is a complete BNS, and the base 
norm is equivalent to the dual norm. If $(\V,K)$ is a base-normed space, then $(\V^{\ast},u)$ is a complete OUS, and the order-unit 
norm is equivalent to the dual norm. 
\end{theorem}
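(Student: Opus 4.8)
The plan is to prove the two dual statements separately, the second being the easier and the first requiring one genuinely analytic input. Both rest on a single key lemma, which I would establish first: \emph{the states norm $\E$}, i.e. for every $a \in \E$,
\[
\|a\|_{u} = \sup_{\alpha \in K} |\alpha(a)|, \qquad\text{equivalently}\qquad -u \leq a \leq u \iff |\alpha(a)| \leq 1 \ \ \forall \alpha \in K.
\]
The forward implication is immediate from $\alpha(u)=1$ and positivity. For the reverse, if $a \not\leq u$ then $u - a \notin \E_{+}$; since $\E$ is Archimedean its cone is closed (the lemma relating the Archimedean property to a closed cone), so Hahn--Banach separates $u-a$ from the closed convex cone $\E_{+}$ by a functional that must be positive, yielding a positive $g$ with $g(a) > g(u)$; as $u$ is an order unit, $g(u) > 0$, and $\alpha := g/g(u) \in K$ violates $\alpha(a) \leq 1$. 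This is the crux of the whole theorem, and the step I expect to be the main obstacle.

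For the first statement, I would first record that every positive functional $f$ is bounded with $\|f\|^{\ast} = f(u)$ (from $-u \leq a \leq u \Rightarrow |f(a)| \leq f(u)$), so every state has dual norm $1$, and $K$ is weak-$\ast$ closed and bounded, hence weak-$\ast$ compact by Banach--Alaoglu. Since $K$ is a base for $\E^{\ast}_{+} = \R_{+}K$ (write $f = f(u)\cdot f/f(u)$, with uniqueness forced by applying $u$), the base-norm unit ball is $B = \con(K \cup -K) = \{ s\alpha - (1-s)\beta : s \in [0,1],\ \alpha,\beta \in K\}$. The point I would stress is that $B$ is the image of the weak-$\ast$ compact set $K \times K \times [0,1]$ under the jointly weak-$\ast$ continuous map $(\alpha,\beta,s) \mapsto s\alpha-(1-s)\beta$, hence is itself weak-$\ast$ compact, so no closure subtleties arise. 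Combining the trivial inclusion $B \subseteq \{ f : \|f\|^{\ast} \leq 1 \}$ with the key lemma (which identifies $\sup_{\alpha \in K}|\alpha(a)|$ with $\|a\|_{u}$ and so lets a separation argument place any $f$ with $\|f\|^{\ast}\leq 1$ inside the compact convex set $B$), I conclude that $B$ equals the dual unit ball exactly. Thus the base norm coincides with the dual norm; and since $\E^{\ast}$ is complete in the dual norm (the continuous dual of a normed space is Banach), $(\E^{\ast},K)$ is a complete base-normed space.

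The converse is cleaner, and I would carry it out by direct computation. Given a base-normed $(\V,K)$ with unit functional $u \equiv 1$ on $K$, boundedness of $f \in \V^{\ast}$ on $B \supseteq K$ gives $M := \sup_{\alpha \in K} f(\alpha) < \infty$, and then $n u - f \geq 0$ for $n \geq M$ (testing on the cone $\R_{+}K$), so $u$ is an order unit; Archimedeanness follows since $\V^{\ast}_{+}$ is weak-$\ast$ closed. Here the order-unit and dual unit balls are literally the same set: $[-u,u] = \{f : |f(\alpha)| \leq 1\ \forall \alpha \in K\} = (\con(K\cup -K))^{\circ}$ is exactly the polar of the base-norm unit ball, i.e. the dual unit ball. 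Hence the order-unit norm equals the dual norm, and $\V^{\ast}$, being a continuous dual, is complete; so $(\V^{\ast},u)$ is a complete order-unit space. The only delicate ingredients are the separation lemma above and the appeal to Banach--Alaoglu and weak-$\ast$ compactness in the first half; everything else is bookkeeping with Minkowski functionals (Lemma \ref{lemma: Minkowski}) and the definitions.
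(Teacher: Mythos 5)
The paper does not actually prove this theorem: it is quoted with a pointer to Alfsen--Shultz \cite{AS}, so there is no internal proof to compare against. Judged on its own, your proposal is correct, and it is essentially the standard argument one finds in the cited literature: the norming lemma $\|a\|_{u} = \sup_{\alpha \in K}|\alpha(a)|$ obtained by Hahn--Banach separation of $u-a$ from the closed cone, weak-$\ast$ compactness of $K$ (Banach--Alaoglu) and hence of $B = \con(K \cup -K)$ as a continuous image of $K \times K \times [0,1]$, the identification of $B$ with the dual unit ball by a second separation (using that weak-$\ast$ continuous functionals on $\E^{\ast}$ are evaluations at points of $\E$), and polar duality $[-u,u] = B^{\circ}$ for the converse. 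A pleasant byproduct you get for free, worth stating explicitly, is that identifying the dual ball with $\con(K\cup -K)$ shows $\E^{\ast}_{+}$ is generating, which is needed before one can even speak of $(\E^{\ast},K)$ as an ordered vector space; likewise $[-u,u] =$ dual ball gives that $\V^{\ast}_{+}$ generates $\V^{\ast}$. Two small points to tighten. First, the closed-cone fact you invoke runs in the opposite direction from the paper's lemma (which says closed cone $\Rightarrow$ Archimedean); what you need is that in an Archimedean order-unit space the cone is closed in the order-unit norm topology. This is true and easy --- if $a_n \geq 0$ and $\|a - a_n\|_{u} \to 0$, then $a + \epsilon u \geq 0$ for every $\epsilon > 0$, whence $a \geq 0$ by Archimedeanness --- or you can simply use the paper's standing assumption that $\E_{+}$ is closed in some locally convex Hausdorff topology and note that the resulting separating positive functional is automatically order-unit-norm bounded, hence lies in $\E^{\ast}$. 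Second, in the converse half the order-unit property requires both $nu - f \geq 0$ and $nu + f \geq 0$; your argument gives the first, and the second follows by applying it to $-f$, so take $n \geq \sup_{\alpha \in K}|f(\alpha)|$. Note also that you prove equality of the norms, which is stronger than the stated equivalence.
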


\section{Completeness results for $\V(A)$}

In the following, $A = (\M,\Omega)$ is a fixed probabilistic model with outcome-set $\bigcup \M = X$, event space $\Ev$, and with $\Omega$ convex. As usual, $\V(A)$ is the span of $\Omega$ in $\R^{X}$.  As discussed in Appendix B, if $\Omega$ is compact in some linear topology, then $\V(A)$ is compact in its base-norm. Unfortunately, many infinite-dimensional examples --- prominently including infinite Borel models and infinite-dimensional Hilbert models! --- have non-compact state spaces. Fortunately, more general completeness results are available that cover these and many other examples.  


We start with the case in which $\Omega = \Pr(\M)$ (this covers both Borel and Hilbert models).  Here, the result we are after is due to Tim Cook \cite{Cook}. The main idea 
is to embedd $\V(A)$ into the following, {\em a priori larger}, Banach space. 

\begin{definition} Let $\W$ be the space of functions $X \rightarrow \R$ such that 
\[ \|\mu\|_{1} := \sup_{E} \sum_{x \in E} |\mu(x)| < \infty\]
and there exists a constant $K$ with $\sum_{x \in E} \mu(x) = K$ for all $E \in \M$. 
\end{definition}

It is straightforward that $\W$ is, indeed, a subspace of $\R^{X}$. We refer to $\|\mu\|_1$ the {\em variation norm} of $\mu \in \W$. That it {\em is} a norm is also straightforward to check. Note that $\sum_{x \in E} |\mu(x)| = \sup\{ \, \sum_{x \in a} \, |\mu(x)| \ | \ a \subseteq E , \ a \ \mbox{finite} \, \}$, so we can equally well define $\|\mu\|_{1}$ as $\sup_{a \in \Ev(A)_{o}} \sum_{x \in a} |\mu(x)|$ where $\Ev(A)_{o}$ is the set of finite events of $\M$.   We will show that $\W(A)$ is complete in $\|~\cdot~\|_1$, but first we consider two examples. 


{\em Remark:} Elements of $\W$ are called {\em bounded weights} on $\M$.  Normed spaces of bounded complex-valued weights or, more generally, of bounded weights with values in a normed vector space, can be defined in the same way. 

\begin{example} In the case in which $\M = \M(S,\Sigma)$ is the test space of finite or countable partitions of a measurable space $S$ by measurable subsets, $\W$ is the space of 
bounded countably additive signed measures on $S$. It is well known that this is complete in the variation norm  (cf, e.g., Dunford and Schwartz \cite{Dunford-Schwartz}, p. 161.)  
\end{example}

\begin{example} Let $X$ be the unit sphere of an infinite-dimensionional Hilbert space $\H$, and let $\M = {\F}(\H)$ be the corresponding frame manual.  It is a standard fact (see, e.g., \ref{Conway}) that the space $\V(\H)$ of self-adjoint trace-class operators is a Banach space in the trace norm, given by $\|T\|_{\tr} = \Tr(|T|)$, where $|T| = (T^{\ast} T)^{1/2}$.  Every such operator determines a weight on $\M$, call it $\mu_{T}$, given by $\mu_{T}(x) = \Tr(TP_{x})$. This gives us a linear mapping $T \mapsto \mu_{T}$ from $\V(\H)$ into $\W$; this mapping is clearly injective, and the Bunce-Maitland-Wright extension of Gleason's theorem shows it's surjective. It's also an isometry. 
Using the Spectral Theorem, it's not hard to see that the trace-norm of $T$ is the same as 
the norm $\|\mu_{T}\|_1$ introduced above. Indeed, if $|T| = \sum_{x \in E} t_{x} P_{x}$ 
$E$ is an ONB, $t_{x} > 0$, and $P_{x}$ is the rank one projection attached to the unit vector $x \in E$, we have 
\[\sum_{x \in E} |\Tr(|T|P_{x})| = \sum_{x \in E} |t_{x}| = \Tr(|T|),\]
so $\|T\|_{\tr} \leq \|T\|_1$. On the other hand, of $F$ is any other orthonormal 
basis, then since $\Tr(P_{x}P_{y}) = |\langle x, y \rangle|^2$, we have 
\begin{eqnarray*}
\sum_{y \in F} |\mu_{T}(y)| & = & \sum_{y \in F} \left | \sum_{x \in E} t_{x} \Tr(P_{x} P_{y}) \right |\\
&  \leq & 
\sum_{y \in F } \sum_{x \in E} |t_{x}| |\Tr(P_{x}P_{y})| \\
& \leq & 
\sum_{x \in E}  |t_{x}| \left (\sum_{y \in F}  \Tr(P_x P_y) \right ) \\ 
& \leq & 
\sum_{x \in E} |t_{x}| = \Tr(|T|).
\end{eqnarray*}
Thus, $\|\mu_T\|_{1} \leq \|T\|_{\tr}$.  $\Box$ 

\end{example}

\tempout{
L. J. Bunce and J. D. M. Wright showed that, for any von Neumann algebra 
${\mathfrak A}$ with no type-$I_2$ factor --- notably, for ${\mathfrak A} = \B(\H)$ --- any 
finitely-orthogonally-additive mapping from projections into {\em any} Banach space 
extends uniquely to a bounded linear operator from ${\mathfrak A}$ to this Banach space. 
It is not hard to see that any weight $\mu \in \W(\H)$ satisfies has a well-defined 
value $\mu(a) := \sum_{x \in a} \mu(x)$ on every orthonormal set in $\H$, that 
is, on every event of $\F(\H)$, and that if $a \sim b$, then $\mu(a) = \mu(b)$. 
Thus, if $p$ is a projection, and $a$ is an orthonormal basis for the range of $p$, 
we can set $\mu(p) = \mu(a)$. The resulting function is additive on orthogonal 
projections, and thus, there is a bounded operator $\phi \in {\mathcal B}(\H)^{\ast}$ with 
$\mu(p) = \phi(p)$.  Since  $\mu$ is absolutely summable over maximal orthogonal sets, 
it's summable over countable pairwise disjoint sets of projections, and therefore $\phi$ 
shares this property. This means that $\phi$ is a normal functional [ref] whence, 
arises from a trace-class operator $T$ by $\phi(p) = \Tr(Tp)$ for all $p$. It 
follows that $\mu = \mu_T$. }

We are now going to show, following Cook, that $\W(A)$ is always complete in the variation norm. If $\mu \in \W$, then 
\[\mu(a) \ = \ \sum_{x \in a} \mu(x)\] 
is well-defined for all events $a$, with $|\mu(a)| \leq \|\mu\|_1$. Thus, we can define 
a linear functional $\hat{a} : \W \rightarrow \R$ by setting $\hat{a}(\mu) = \mu(a)$, 
and we see that $\hat{a}$ is bounded with $\|\hat{a}\| \leq 1$. Since these functionals 
separate points of $\W$, this gives us another norm on $\W$, namely,  
\[\|\mu\|_{\Ev} := \sup_{a \in \Ev} |\mu(a)|.\]
Note that convergence with respect to this norm is the same thing as uniform convergence on events.


\begin{lemma} For every $\mu \in \W$, 
\[\|\mu\|_1 = \sup_{a \subseteq E \in \M} \mu(a) - \mu(E \setminus a).\]
Hence, $\| ~\cdot~ \|_{\Ev}$ is equivalent to $\|~\cdot~\|_1$. 
\end{lemma}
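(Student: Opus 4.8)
The plan is to establish the displayed identity for a single fixed test and then take a supremum; the norm equivalence then falls out immediately. Fix a test $E \in \M$ and write $E_{+} = \{x \in E : \mu(x) \ge 0\}$. Since $\sum_{x \in E}|\mu(x)| \le \|\mu\|_1 < \infty$, every subset of $E$ is an event on which $\mu$ is defined by an absolutely convergent sum, so all the quantities below make sense. First I would record the upper bound: for any $a \subseteq E$,
\[
\mu(a) - \mu(E \setminus a) \ \le \ \sum_{x \in a}|\mu(x)| + \sum_{x \in E \setminus a}|\mu(x)| \ = \ \sum_{x \in E}|\mu(x)|.
\]
Next I would show this bound is attained by taking $a = E_{+}$: then $\mu(E_{+}) = \sum_{x \in E,\, \mu(x) \ge 0}|\mu(x)|$ while $-\mu(E \setminus E_{+}) = \sum_{x \in E,\, \mu(x) < 0}|\mu(x)|$, so that $\mu(E_{+}) - \mu(E \setminus E_{+}) = \sum_{x \in E}|\mu(x)|$. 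Hence $\sup_{a \subseteq E}\,[\mu(a) - \mu(E\setminus a)] = \sum_{x \in E}|\mu(x)|$, and taking the supremum over all $E \in \M$ gives the first assertion, $\|\mu\|_1 = \sup_{a \subseteq E \in \M}[\mu(a) - \mu(E \setminus a)]$. Note that the constant-sum condition defining $\W$ plays no role here, though of course it holds for $\mu \in \W$.

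For the equivalence of norms, the inequality $\|\mu\|_{\Ev} \le \|\mu\|_1$ has already been observed (each $\hat{a}$ has norm at most $1$). For the reverse, I would use the identity just proved together with the fact that both $a$ and $E \setminus a$ are themselves events: for any $a \subseteq E$,
\[
\mu(a) - \mu(E\setminus a) \ \le \ |\mu(a)| + |\mu(E \setminus a)| \ \le \ 2\|\mu\|_{\Ev}.
\]
Taking the supremum over all such $a$ and $E$ yields $\|\mu\|_1 \le 2\|\mu\|_{\Ev}$, so that $\|\mu\|_{\Ev} \le \|\mu\|_1 \le 2\|\mu\|_{\Ev}$ and the two norms are equivalent.

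There is no serious obstacle in this argument; it is essentially the Hahn--Jordan decomposition of a signed measure, carried out one test at a time. The only points requiring a moment's care are the well-definedness of $\mu(a)$ on possibly infinite events (supplied by $\|\mu\|_1 < \infty$) and the fact that the optimizing set $E_{+}$ and its complement $E \setminus E_{+}$ are genuine events, i.e.\ subsets of a single test, so that they are admissible in the supremum defining $\|\mu\|_{\Ev}$. Both are immediate from the definitions.
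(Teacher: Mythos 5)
Your proof is correct and follows essentially the same route as the paper's: both use the positive-part set $a = \{x \in E : \mu(x) \geq 0\}$ within a test to show that the supremum of $\mu(a) - \mu(E \setminus a)$ equals $\|\mu\|_1$, and both obtain the equivalence from the two-sided estimate $\|\mu\|_{\Ev} \leq \|\mu\|_1 \leq 2\|\mu\|_{\Ev}$. The only cosmetic difference is that you establish the exact identity test-by-test before taking the supremum over $\M$, whereas the paper proves the two inequalities between the global suprema directly.
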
 
 
Thus, $\mu$ has finite variation norm iff it is bounded on events, and a sequence $(\mu_n)$ 
in $\W$ converges to $\mu \in \W$  with respect to the norm $\| \cdot \|_1$ iff $\mu_n \rightarrow \mu$ uniformly on events. 
 
\begin{proof} For the moment, write $\|\mu\|$ for 
$\sup_{a \in E \in \M} \mu(a) - \mu(E \setminus a)$ (noting that this is 
necessarily non-negative).  Let $E \in \M$ and let $\mu \in \W$. Set 
$a = \{ x \in E | \mu(x) > 0\}$. Then 
\[\sum_{x \in E} |\mu(x)| = \mu(a) - \mu(E \setminus a) \leq \|\mu\|.\]
Taking the sup over $E \in \M$ gives us $\|\mu\|_1 \leq \|\mu\|$. 
On the other hand, 
\[|\mu(a) - \mu(E \setminus a)| \leq |\mu(a)| + |\mu(E \setminus a)| \leq \sum_{x \in E} |\mu(x)|.\]
Taking the supremum over all $a \subseteq E \in \M$ gives us 
$\|\mu\| \leq \|\mu\|_1$, proving the first statement. For the second statement, simply note that or any event $a$, we have 
\[|\mu(a)|  \leq |\mu(a)| + |\mu(E \setminus a)| \leq 2\|\mu\|_{\Ev},\] so 
$\|\mu\|_{\Ev} \leq \|\mu\| \leq 2\|\mu\|_{\Ev}$. Since $\|\mu\| = \|\mu_1\|$, 
we are done. $\Box$ 
\end{proof} 

For the sake of convenience, except in cases where there might be some ambiguity, from 
now on we'll write $\| \cdot \|$ for $\| \cdot \|_{1}$.

 
 
 

\tempout{



{\em Proof:} For any event $a$, we have 
\[|\mu(a)\| \leq |\mu(a)| + |\mu(E \setminus a)| \leq 2\|\mu\|_{s},\] so 
$\|\mu\|_{s} \leq \|\mu\|_{1} \leq 2\|\mu\|_{s}$.  $\Box$ 
}

\begin{proposition}\label{prop: W complete} $(\W,\|\cdot\|_1)$ is complete.
\end{proposition}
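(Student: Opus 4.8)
The plan is to follow the standard pattern for proving completeness of a normed space of additive set functions, exploiting the two equivalent descriptions of the norm ($\|\cdot\|_1$ and $\|\cdot\|_{\Ev}$) established in the preceding Lemma. Let $(\mu_n)$ be a Cauchy sequence in $(\W, \|\cdot\|_1)$. The first step is to produce a candidate limit pointwise. Since every singleton $\{x\}$ with $x \in X$ is a finite event---being a subset of some test---we have $|\mu_n(x) - \mu_m(x)| \le \|\mu_n - \mu_m\|_1$, so for each fixed $x$ the real sequence $(\mu_n(x))$ is Cauchy; define $\mu(x) := \lim_n \mu_n(x)$.

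Next I would verify that $\mu \in \W$, which has two parts. For the finite-variation bound, a Cauchy sequence is bounded, say $\sup_n \|\mu_n\|_1 = M < \infty$; for any finite event $a$ one may pass the limit through the finite sum to get $\sum_{x\in a}|\mu(x)| = \lim_n \sum_{x \in a}|\mu_n(x)| \le M$, and taking the supremum over finite events (which, as noted just after the definition of $\W$, computes $\|\cdot\|_1$) yields $\|\mu\|_1 \le M$. The second requirement---that $\mu$ sum to a single constant over \emph{every} test---is the crux, and is handled simultaneously with the convergence claim.

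The genuine obstacle is interchanging $\lim_n$ with the possibly infinite sum over a test $E$, which is precisely what forces us to use the uniformity built into the $\|\cdot\|_1$-Cauchy condition rather than mere pointwise convergence. Given $\epsilon > 0$, choose $N$ so that $\sum_{x\in E}|\mu_n(x)-\mu_m(x)| \le \|\mu_n-\mu_m\|_1 < \epsilon$ for all $n,m \ge N$ and all tests $E$. Fixing $E$ and a finite subset $a \subseteq E$, let $m \to \infty$ through the finite sum to get $\sum_{x\in a}|\mu_n(x)-\mu(x)| \le \epsilon$, then take the supremum over finite $a \subseteq E$ to obtain $\sum_{x\in E}|\mu_n(x)-\mu(x)| \le \epsilon$ for all $n \ge N$, uniformly in $E$. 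This one estimate does double duty: it gives $\mu_n \to \mu$ in $\|\cdot\|_1$ once $\mu \in \W$ is confirmed, and, combined with the observation that the constants $K_n := \sum_{x\in E}\mu_n(x)$ form a Cauchy sequence (since $|K_n - K_m| \le \|\mu_n - \mu_m\|_{\Ev} \le \|\mu_n-\mu_m\|_1$) with some limit $K$, it lets me bound $|\sum_{x\in E}\mu(x) - K| \le \sum_{x\in E}|\mu(x) - \mu_n(x)| + |K_n - K| \le \epsilon + |K_n - K|$ for $n \ge N$. Letting $n \to \infty$ shows $\sum_{x\in E}\mu(x) = K$ for every test $E$, so $\mu \in \W$ with constant $K$, and the uniform estimate then delivers $\|\mu_n - \mu\|_1 \to 0$, completing the argument.
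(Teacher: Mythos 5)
Your proof is correct and complete: the pointwise limit exists because every singleton is an event, the finite-variation bound follows from boundedness of Cauchy sequences plus the characterization of $\|\cdot\|_1$ as a supremum over finite events, the uniform estimate $\sum_{x\in E}|\mu_n(x)-\mu(x)|\le\epsilon$ is obtained by the classical $\ell^1$ device of passing to the limit through finite partial sums, and the constant-sum property then follows from that estimate together with the convergence of the constants $K_n$. Where you differ from the paper is in the scaffolding. The paper first transfers the problem to the event-wise supremum norm: Cauchy in $\|\cdot\|_1$ implies Cauchy in $\|\cdot\|_{\Ev}$, so the sequence converges in the complete space $B(\Ev,\R)$ of bounded real functions on events; the nontrivial direction of the preceding Lemma (that $\|\cdot\|_1$ is controlled by $\|\cdot\|_{\Ev}$) is then used to see that the limit has finite variation, and the final convergence in $\|\cdot\|_1$ is left to the equivalence of the two norms. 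The remaining step there---showing the limit sums to the same constant $m$ over every test---is done by choosing a finite sub-event $a\subseteq E$ with $|\mu_n(a)-m_n|$ small, a three-epsilon argument structurally parallel to yours. Your route works entirely inside the variation norm and never invokes the nontrivial half of the norm-equivalence Lemma (you use only the trivial inequality $\|\cdot\|_{\Ev}\le\|\cdot\|_1$ for the constants), so it is essentially self-contained; it also makes explicit that the limit is determined by its values on outcomes, a point that the paper's passage through $B(\Ev,\R)$ leaves implicit (one must still check that the uniform limit on events is given by summing its values on singletons over infinite events). What the paper's approach buys in exchange is a faster construction of the limit object, since completeness of $B(\Ev,\R)$ comes for free.
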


{\em Proof:} Let $(\mu_n)$ be Cauchy with respect to $\|~ \cdot~\|_1$. Then 
$(\mu_n)$ is Cauchy with respect to the supremum norm on events, so 
$\mu_n \rightarrow \mu$ in the space $B(\Ev,\R)$ of bounded real-valued functions 
on $\Ev$. Since $\mu$ bounded on events, it is also bounded in the 1-norm. Let $\mu_n(E) = m_n$ for all $E \in \M$ (
since $\mu_n \in \W$, this is independent of $E$). Then $|m_n - m_k| 
\leq \|\mu_n - \mu_k\|_{\Ev}$; since the latter approaches $0$, $(m_n)$ is Cauchy, 
and thus, converges to some constant $m$, independent of $E$. It remains 
to show that, for each $E \in \M$, $\sum_{x \in E} \mu(x) = m$.  Let $\epsilon > 0$, and 
choose $n$ with $\|\mu - \mu_1\|_{s} < \epsilon$ and $|m_n - m| < \epsilon/3$. Also choose a finite event 
$a \subseteq E$ with $|\mu_n(a) - m_n| < \epsilon/3$. Then 
\[|\mu(a) - m| 
\ < \ |\mu(a) - \mu_n(a)| + |\mu_n(a) - m_n| + |m_n - m| < \epsilon.\]
Thus, $\sum_{x \in E} \mu(x) = m$, as promised. $\Box$



\tempout{
{\em Proof \#1 (Cookish):} Suppose $(\mu_n)$ is Cauchy in $\W$ w.r.t. this norm. Then for any $a \in \Ev(A)$, $\mu_{n}(a) = \hat{a}(\mu_n)$ is Cauchy in $\R$, and thus, converges to a value $\mu(a)$. In particular, we define $\mu(x) = \mu(\{x\})$. We claim that $\mu \in \W$ and $\mu_n \rightarrow \mu$ in norm. 

It is clear that $\mu(a \cup b) = \mu(a) + \mu(b)$ for all events $a, b$ with $a \perp b$. In particular, $\mu(a) = \sum_{x \in a} \mu(x)$ for all finite events $a$. Note also that $\mu(E) = \lim_{n} \mu_{n}(E) = \lim_{n} u(\mu_n)$, that is, 
$\mu(E)$ is independent of $E \in \M$.  Now choose $N$ large enough so that $|\mu(E) - \mu_N(E)|$ and $|\mu_N(a) - \mu(a)|$ are both less than $\epsilon/3$, and let $a \subseteq N$ be 
a finite event such that $|\mu_N (E) - \mu_N(a)| < \epsilon$. Then 
\[|\mu(E) - \mu(a)|  \leq  
|\mu(E) - \mu_N(E)| + |\mu_{N}(E) - \mu_{N}(a)| + |\mu_{N}(a) - \mu(a)| 
\leq  3\epsilon/3 = \epsilon.\]
It follows that $\mu$ is summable over $E$ with $\sum_{x \in E} \mu(x) = \mu(E)$ independent of $E$. 

To see that $\mu$ has finite variation norm, let $\epsilon > 0$ and for $a \subseteq E$ be finite, choose for each $x \in a$ a constant $\epsilon_x$ so that $\sum_{x \in a} \epsilon_{x} = \epsilon$. For each $x \in a$, choose $N_x$ such that 
$x \in E$ choose $n$ with $|\mu_n(x) - \mu(x)| < \epsilon_x$ for all $n \geq N_x$. Let $n \geq \max_{x \in a} N_x$ (remembering here that $a$ is finite): then $\|\mu_n (x) - \mu(x)| < \epsilon_x$ for ever $x \in a$, and thus, 
\[\sum_{x \in a} |\mu(x)| \leq \sum_{x \in a} |\mu_{n}(x)| + |\mu(x) - \mu_n(x)| 
\leq \sum_{x \in a} |\mu_n(a)| + \epsilon \leq \sum_{x \in E} |\mu_n(a)| + \epsilon.\]
It follows that $\|\mu\|_{1} < \infty$, and thus, that $\mu \in \W$. 

It remains to show that $\mu_n \rightarrow \mu$ in norm. Let 
$\epsilon > 0$ and choose $N$ so that $\|\mu_m - \mu_n\| < \epsilon$ for all $m, n \geq N$. Let $a \in \Ev$. 
For a fixed $m \geq N$, $\mu - \mu_m \in \Ev$, so we can find a finite set 
$a_o \subseteq a$ with 
\[(\mu - \mu_{m})(a) < (\mu - \mu_m)(a_o) + \epsilon.\]
Since $a_o$ is finite and $\mu_n(x) \rightarrow \mu(x)$, we can choose $m$ large enough that $|\mu(a_o) - \mu_m(a_o)| < \epsilon$. Then we have 
\begin{eqnarray*} 
|(\mu - \mu_n)(a)\| & \leq & |(\mu - \mu_{m})(a)| + |\mu_m(a) - \mu_n(a)| \\
& \leq & |\mu(a_o) - \mu_m(a_o)| + \epsilon + \|\mu_{m} - \mu_{n}\|\\
& < & 3\epsilon
\end{eqnarray*}
completing the proof. $\Box$

{\em Proof:}  Let $\ell^1(E)$ denote the space of absolutely summable 
functions $f : E \rightarrow \R$, understood as a Banach space in the usual 
norm $\|f\|_1 = \sum_{x \in E} |f(x)|$.  Let $\X$ be the 
set of vectors in $\Pi_{E \in \M} ~\ell^1(E)$ 
with 
\[\|\phi\| := \sup_{E \in \M} \|\phi_{E}\|_1 < \infty.\]
This last quantity is a norm on $\X$, with respect to which 
$\X$ is a Banach space.  
Now let $\W_o$ be the subset of $\X$ consisting of functions $\phi$ satisfying  
\begin{itemize} 
\item[(a)] $\phi(E)(x) = \phi(F)(x)$ for all $E, F \in \M$ and all $x \in E \cap F$, and 
\item[(b)] $\phi(E)(E)$ independent of $E$. 
\end{itemize} 
Clearly, $\W_o \leq \X$. Since $\hat{(x,E)} : \phi \mapsto \phi(E)(x)$ is a continuous linear 
functional on $\X$ for all $x \in E \in \M$, we see that the set of functions $\phi$ satisfying 
(a) is simply 
\[\bigcap \{\ker(\hat{(x,E)} - \hat{(x,F)}) \ | \ E, F  \in \M, \ x \in E \cap F\},\]
which is closed.  Similarly, the set of functions satisfying (b) is closed: for each $E \in \M$, the functional $f_{E} : \phi \mapsto \phi(E)(E)$ is continuous, since 
\[|\phi(E)(E)| \leq \sum_{x \in E} |\phi(E)(x)| = \|\phi(E)\|_1 \leq \|\phi\|.\]
Thus, the set of functions satisfying (b) is $\bigcap_{E, F \in \M} \ker(f_{E} - f_{F})$, which 
is closed. Now observe that there is a natural mapping $\W_o \rightarrow \W$ given by 
$\phi \mapsto \mu_{\phi}$ where $\mu_{\phi}(x) = \phi(E)(x)$ for any $E \in \M$ with 
$x \in E$. Conversely, for any $\mu \in \W$, we have $\mu = \mu_{\phi}$ 
where $\phi(E,x) = \mu(x)$ for all $x \in E \in \M$. Evidently, $\phi \mapsto \mu_{\phi}$ 
is a norm-preserving bijection, hence, an isometry. Since $\W_o$ is complete, so is 
$\W$. $\Box$ 
}

{\em Remark:} The proof extends almost verbatim to show that the space $\W(A,\X)$ of bounded $\X$-valued weights is complete for any Banach space $\X$. 

\begin{exercise} Extend the proof of Proposition \ref{prop: W complete}, almost verbatim, to show that for any Banach space 
$\X$, $\W(A,\X)$ is complete. 
\end{exercise} 

set $E$, let $\ell^2(E,\X)$ be the space of functions 

\begin{example} (a) If $(S,\Sigma)$ is a measurable space and 
$\M = \M(S,\Sigma)$ is the test space of finite or countable partitions of $S$ by $\Sigma$-measurable sets, then $\W$ is the space of countably-additive real signed measures, 
with the usual variation norm.  This is well known to be complete:   (b) If $\M = \F(\H)$, the frame manual of a Hilbert space $\H$, then $\W$ is the space of self-adjoint trace-class operators on $\H$, and $\|~\cdot~\|_1$ is the trace norm. More on this below.
\end{example}

{\bf The space $\V$}  Now let $\W_{+}$ be the cone in $\W$ consisting of 
non-negative weights. Note that this is closed, owing to Lemma 4.  We define 
\[\V = \W_{+} - \W_{+}\]
and order this by $\V_{+} = \V \cap \W_{+} = \W_{+}$.  We aim to show that if 
$\Omega \subseteq \Pr(\M)$ is any convex set of positive weights that is closed 
in the variation norm on $\W$, the conebase space $\V(\Omega)$ generated by 
$\Omega$ is complete in its base-norm. 

If $\V$ is any ordered normed space with closed unit ball 
$B$ and closed cone $\V_{+}$, the set $B \cap \V_{+} - B \cap \V_{+}$ is closed and convex. 

To prove this, we use the following special case of 
a theorem due to Klee (cf. \cite[p.194]{Peressini}), for which we include a self-contained proof. 

\begin{proposition}\label{prop: Special Klee} Let $K$ be a closed cone in a Banach space $\W$, and let $\V = K - K$. Set 
\[B_{K} = (B \cap K) - (B \cap K)\]
where $B$ is the unit ball of $\W$. Then the Minkowski 
functional of $B_{K}$ is a complete norm on 
$\V$.  
\end{proposition}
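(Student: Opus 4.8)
The plan is to work directly with the explicit form of the Minkowski functional of $B_K$. First I would record that, since $K$ is a cone, $t(B\cap K) = tB\cap K$ for every $t>0$, so that $tB_K = (tB\cap K)-(tB\cap K)$. Writing $v = t(a-b)$ with $a,b\in B\cap K$ and setting $a'=ta,\ b'=tb$, this gives the workhorse identity
\[
\|v\|_K \;=\; \inf\{\,t\ge 0 : v=a'-b',\ a',b'\in K,\ \|a'\|\le t,\ \|b'\|\le t\,\}\;=\;\inf\{\max(\|a'\|,\|b'\|) : v=a'-b',\ a',b'\in K\}.
\]
Everything below rests on this characterization.

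Next I would check that $\|\cdot\|_K$ really is a norm. The set $B_K$ is convex (an intersection of convex sets is convex, and $C-C$ is convex for convex $C$), balanced ($-B_K=B_K$ by symmetry of the difference), and absorbing in $\V=K-K$ (any $v=a-b$ with $a,b\in K$ lies in $tB_K$ once $t\ge\max(\|a\|,\|b\|)$), so by the Minkowski-functional lemma of Appendix B, $\|\cdot\|_K$ is a seminorm on $\V$. For definiteness, if $\|v\|_K=0$ then the identity supplies decompositions $v=a_t-b_t$ with $a_t,b_t\in K$ and $\|a_t\|,\|b_t\|\le t$ for every $t>0$, whence $\|v\|\le\|a_t\|+\|b_t\|\le 2t\to 0$ in the ambient norm, forcing $v=0$. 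The same inequality $\|v\|\le 2\|v\|_K$ shows that $\|\cdot\|_K$ dominates the ambient norm on $\V$.

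For completeness I would use the standard criterion that a normed space is complete iff every absolutely convergent series converges. Given $(w_n)$ in $\V$ with $\sum_n\|w_n\|_K<\infty$, the identity lets me choose, for each $n$, a decomposition $w_n=a_n-b_n$ with $a_n,b_n\in K$ and $\max(\|a_n\|,\|b_n\|)<\|w_n\|_K+2^{-n}$, so that $\sum_n\|a_n\|$ and $\sum_n\|b_n\|$ are both finite. As $\W$ is a Banach space, $a:=\sum_n a_n$ and $b:=\sum_n b_n$ converge in $\W$; because $K$ is closed under addition, the partial sums lie in $K$, and because $K$ is a closed set, $a,b\in K$, so $a-b\in\V$. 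To get convergence in the \emph{stronger} norm, I look at the tails: $\sum_{n>N}a_n$ and $\sum_{n>N}b_n$ lie in $K$ with ambient norms bounded by $\sum_{n>N}\|a_n\|$ and $\sum_{n>N}\|b_n\|$, both tending to $0$. The workhorse identity then gives $\bigl\|\sum_{n>N}w_n\bigr\|_K\le\max\bigl(\|\sum_{n>N}a_n\|,\|\sum_{n>N}b_n\|\bigr)\to 0$, so the partial sums of $\sum_n w_n$ converge to $a-b$ in $\|\cdot\|_K$.

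The main obstacle is precisely the completeness step: I must ensure the reconstructed limit stays inside $\V=K-K$ rather than merely in $\W$, and I must control the tail in the finer norm $\|\cdot\|_K$, not just the ambient norm. Both are secured by the closedness of $K$ together with the decomposition identity — the closedness pins the limits into $K$, and the identity converts an ambient-norm bound on each summand of a $K$-decomposition into a $\|\cdot\|_K$-bound on the remainder. Everything else (convexity, balance, the absorbing property, and the seminorm estimates) is routine.
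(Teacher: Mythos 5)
Your proposal is correct and takes essentially the same route as the paper's proof: decompose each term as a difference of elements of $K$ with controlled ambient norms, sum the two positive parts separately in the Banach space $\W$, use the closedness of $K$ to keep the limits in the cone, and exploit the fact that the Minkowski norm of a difference of cone elements is dominated by the larger of their ambient norms. The only difference is packaging — you invoke the absolutely-convergent-series criterion for completeness where the paper extracts a rapidly convergent subsequence from a Cauchy sequence and telescopes, which is the same argument unfolded (your explicit identity $\|v\|_{K}=\inf\{\max(\|a\|,\|b\|) : v=a-b,\ a,b\in K\}$ also cleanly repairs the paper's harmless slip of claiming $B_{K}\subseteq B$ rather than $B_{K}\subseteq 2B$).
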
 

{\em Proof:} The Minkowski functional of $B_K$ is given by 
\[\|z\| = \inf \{ t \geq 0 | z \in tB_{K}\}\]
for all $z \in \V$. 
We note first that the set above is non-empty, i.e., 
$B_{K}$ is absorbing for $\V$. To see this, let 
$z = x - y$ where $x, y \in K$. Since $B$ is absorbing, 
there is some $t$ with $x, y \in tB \cap K = t(B \cap K)$ 
(using $tK \subseteq K$), whence 
\[z \in tB \cap V - tB \cap V = t(B \cap V - B \cap V) = tB_{K}.\]
Now, $\| ~\cdot~\|_{K}$ is generally a seminorm. However, 
as $B_K \subseteq B$, 
 we have 
$\|z\| \leq \|z\|_{K}$, and since $\|z\|$ vanishes only for 
$z = 0$, the same holds for $\|z\|_{K}$.  Notice that 
the norms $\|~\cdot~\|$ and $\|~\cdot~\|_K$ agree on the 
cone $K$, since if $x \in K$ then $x \in \|x\|(B \cap K)$, whence 
$\|x\|_{K} \leq \|x\|$. 

It remains to  see that $(z_{n})_{K}$ is complete. Let $(z_n)$ be Cauchy in $\V$ with respect to $\|~\cdot~\|_{K}$.  It follows that $(z_n)$ is also Cauchy w.r.t. $\|~\cdot\|_{K}$, and hence, 
as $\W$ is complete, converges to some point $z \in \W$. 
We need to show that $z \in \V$ and that $z_{n} \rightarrow z$ with respect to $\|~\cdot~\|_{K}$. To this end, it will suffice 
to show that $(z_n)$ has some $\|~\cdot~\|$-convergent subsequence. Since $(z_n)$ is Cauchy, we can always find a subsequence of $(z_n)$, say $w_k = z_{n_k}$, with 
\[\|w_{k+1} - w_{k}\|_{K} \ \leq \ 1/2^{k}\]
or, equivalently, 
\[w_{k+1} - w_{k} \ \in \ \frac{1}{2^{k}}(B \cap V - B \cap V)\]
for all $k$.  It follows that, for every $k$, 
we can find $x_k, y_k \in (\frac{1}{2^k}B) \cap V)$ with 
\[w_{k+1} - w_{k} = x_{k} - y_{k}.\] 
Then $\|x_k\|, \|y_k\| \leq 1/2^{k}$ and 
\[\|x_{k} - y_{k}\| = \|w_{k+1} - w_{k}\| \leq 1/2^{k}.\]
It follows that 
\[\sum_{k=1}^{\infty} x_{k},  \ \sum_{k=1}^{\infty} y_k\]
converge to elements, say $x$ and $y$, of $\W$, with both sums 
$\|~\cdot~\|$-convergent.  Let 
\[S_{n} = \sum_{k=1}^{n} x_k \ \ \mbox{and} \ 
T_{n} = \sum_{k=1}^{n} y_k\]
so that $S_n \rightarrow x$ and $T_n \rightarrow y$ with 
respect to $\|~\cdot~\|$. Note that since $x_n, y_n \in K$, 
$S_n, T_n \in K$ as well. Because $K$ is closed in $\|~\cdot~\|$, 
$x, y \in K$. Moreover, we have $S_n \leq x$ for every $n$ 
and likewise $T_n \leq y$ for every $n$. Thus, $x - S_n$ and 
$y - T_n$ belong to $K$. Since the two norms agree on $K$, we have $\|x - S_n\|_{K}, \|y - T_n\| \rightarrow 0$, so $S_n \rightarrow x$ and $T_n \rightarrow x$  
with respect to $\|~\cdot~\|_{K}$. Finally, we have 
\[w_{k+1} - w_{1} = S_{n+1} - T_{n+1} \rightarrow x - y\]
so $w_{k} \rightarrow x - y + w_1 \in \V$.  Since 
$(z_n)$ is Cauchy and has $(w_k)$ as a convergent subsequence, 
$(z_n)$ also converges to $x - y + w_1$, and we are done. $\Box$ 

\begin{corollary}\label{cor: V(M) complete} Let $\M$ be any test space, and 
let $\Omega = \Pr(\M)$ be the set of all probability weights 
on $\M$, and let $\V = \V_{A}$ for $A = (\M,\Omega)$. Then 
$(\V,\Omega)$ is a complete BNS. 
\end{corollary}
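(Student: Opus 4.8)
The plan is to recognize $(\V,\Omega)$ as the cone-base space sitting inside the Banach space $\W$ of bounded weights, and then to show that its base norm is equivalent to the complete norm manufactured by the Klee-type Proposition \ref{prop: Special Klee}. First I would pin down the cone-base structure. Let $u(\mu) = \sum_{x\in E}\mu(x)$ (independent of $E$, by definition of $\W$) be the unit functional. The decisive little observation is that for a \emph{positive} weight $\mu \in \W_{+}$ we have $\|\mu\|_{1} = \sup_{E}\sum_{x\in E}|\mu(x)| = \sup_{E}\sum_{x\in E}\mu(x) = u(\mu)$, so the variation norm and the unit functional agree on the cone. Since every nonzero $\mu \in \W_{+}$ satisfies $u(\mu)>0$ (any $x$ with $\mu(x)>0$ lies in some test), it factors uniquely as $u(\mu)\,\alpha$ with $\alpha = \mu/u(\mu) \in \Pr(\M) = \Omega$; hence $\W_{+} = \R_{+}\Omega$, $\Omega$ is a base for $\W_{+}$, and $\V = \W_{+}-\W_{+} = \spn(\Omega) = \V_{A}$ (splitting a finite combination into its positive and negative parts). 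This also confirms $\Omega \subseteq \W_{+}$, with $u\equiv 1$ on $\Omega$.

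Next I would invoke the completeness machinery already in place. By Proposition \ref{prop: W complete}, $(\W,\|\cdot\|_{1})$ is a Banach space, and $\W_{+}$ is a closed cone. Applying Proposition \ref{prop: Special Klee} with $K = \W_{+}$ yields a \emph{complete} norm $p_{B_{K}}$ on $\V$, namely the Minkowski functional of $B_{K} = (B_{1}\cap\W_{+}) - (B_{1}\cap\W_{+})$, where $B_{1}$ is the unit ball of $(\W,\|\cdot\|_{1})$. The remaining work is then to compare $p_{B_{K}}$ with the base norm, whose unit ball is $B = \con(\Omega\cup-\Omega)$.

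The comparison rests on two set identities. Using $\|\mu\|_{1}=u(\mu)$ on the cone, I would show $B_{1}\cap\W_{+} = \{\mu\in\W_{+} : u(\mu)\le 1\} = \con(\Omega\cup\{0\})$, the sub-normalized states; hence $B_{K} = \{\,s\alpha - t\beta : \alpha,\beta\in\Omega,\ 0\le s,t\le 1\,\}$. On the other hand, unwinding the convex hull (padding with $0 = \tfrac12\gamma-\tfrac12\gamma$) gives $B = \{\,s\alpha - t\beta : \alpha,\beta\in\Omega,\ s,t\ge 0,\ s+t\le 1\,\}$. From these descriptions the inclusions $B \subseteq B_{K} \subseteq 2B$ are immediate (the second because $\tfrac12 B_{K}\subseteq B$). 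Passing to Minkowski functionals reverses inclusions, so $\tfrac12\|\cdot\|_{B} \le p_{B_{K}} \le \|\cdot\|_{B}$: the base norm and the Klee norm are equivalent. In particular $\|\cdot\|_{B}$ is a genuine norm (it dominates a norm), and since $(\V,p_{B_{K}})$ is complete and the norms are equivalent, $(\V,\|\cdot\|_{B})$ is complete. This establishes that $(\V,\Omega)$ is a complete BNS.

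I do not expect a serious obstacle here: the genuine difficulty — manufacturing a complete norm on a difference cone $K-K$ in a Banach space — is exactly what Proposition \ref{prop: Special Klee} already supplies. The only points requiring care are the equalities $\|\mu\|_{1}=u(\mu)$ on $\W_{+}$ and the two set descriptions of $B$ and $B_{K}$, which reduce the whole statement to the elementary inclusion chain $B\subseteq B_{K}\subseteq 2B$; everything after that is the standard argument that equivalent norms share completeness.
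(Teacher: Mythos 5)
Your proposal is correct and follows essentially the same route as the paper's own proof: completeness of $(\W,\|\cdot\|_{1})$ from Proposition \ref{prop: W complete}, the Klee-type Proposition \ref{prop: Special Klee} applied to the closed cone $\W_{+}$, and then identification of the resulting complete norm with the base norm. If anything, you are more careful than the paper's one-line argument on the final step: the paper asserts that $(B\cap\W_{+})-(B\cap\W_{+})$ \emph{is} the base-norm unit ball, which is not literally true (for disjointly supported $\alpha,\beta\in\Omega$ one has $\alpha-\beta\in B_{K}$ while $\|\alpha-\beta\|_{B}=2$), whereas your inclusion chain $B\subseteq B_{K}\subseteq 2B$ establishes exactly the norm equivalence that the conclusion requires.
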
 

{\em Proof:} $\W = \W(\M)$ is a Banach space, $\W_{+}$ is a 
closed cone in $\W$, and the Minkowski functional 
of $B \cap \W_{+} - B \cap \W_+$, where $B$ is the closed unit 
ball of $\W$, is the closed unit ball for the base-norm 
on $\V$. $\Box$ 

\begin{lemma}\label{lemma: closed subcone}  Let $(\V,\Omega)$ be a BNS with closed cone $K$ and base $\Omega$. Let $\Omega_o$ be a closed convex subset 
of $\Omega$.  Then the cone $K_o = \{ ta | a \in \Omega_o, t \geq 0\}$ generated by $\Omega_o$ is also closed in $\W$. Moreover, 
if $\V$ is complete in its base-norm, so is $\V_o = K_o - K_o$ 
in the base-norm generated by $\Omega_o$. 
\end{lemma}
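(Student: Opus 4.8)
The plan is to treat the two assertions in turn, using the completeness hypothesis only for the second. Write $u$ for the unit functional, so that $u(a) = 1$ for all $a \in \Omega$. Recall that $\W$ is a Banach space (Proposition \ref{prop: W complete}), that $u$ is $\W$-continuous since $|u(\mu)| = |\mu(E)| \leq \|\mu\|_1$, and that $u$ is strictly positive on the cone: a non-negative weight $v$ with $u(v) = \sum_{x \in E} v(x) = 0$ must vanish identically.

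First I would check that $K_o$ is closed in $\W$. Because $\Omega = K \cap u^{-1}(1)$ is closed in $\W$ and $\Omega_o$ is a closed subset of $\Omega$, the base $\Omega_o$ is itself $\W$-closed. Suppose $t_n a_n \to v$ in $\W$ with $a_n \in \Omega_o$ and $t_n \geq 0$. Applying $u$ gives $t_n = u(t_n a_n) \to u(v) =: t$; and since $K$ is closed and each $t_n a_n \in K$, the limit $v$ lies in $K$, so $t = u(v) \geq 0$. If $t > 0$ then $a_n = (t_n a_n)/t_n \to v/t$, whence $v/t \in \Omega_o$ by closedness and $v = t\,(v/t) \in K_o$. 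If $t = 0$ then $v \in K$ with $u(v) = 0$, so $v = 0 \in K_o$ by strict positivity. In either case $v \in K_o$, which gives the first assertion.

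For completeness I would apply Proposition \ref{prop: Special Klee} with ambient Banach space $\V$ (complete in its base-norm, by hypothesis) and cone $K_o$. The preliminary point is that $K_o$ is base-norm closed: the base-norm dominates the variation norm (this is the inequality $\|z\| \leq \|z\|_K$ established in the proof of Proposition \ref{prop: Special Klee}), so the inclusion of $\V$ into $\W$ is continuous and the $\W$-closed set $K_o \subseteq \V$ is therefore base-norm closed. Proposition \ref{prop: Special Klee} then yields that $\V_o = K_o - K_o$ is complete in the Minkowski functional of $(B \cap K_o) - (B \cap K_o)$, where $B$ denotes the base-norm unit ball of $\V$.

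It remains to compare this Klee norm with the base-norm generated by $\Omega_o$. On the cone the base-norm agrees with $u$, so $\|a\| = 1$ for $a \in \Omega_o$ and hence $B \cap K_o = \{ ta : a \in \Omega_o,\ 0 \leq t \leq 1\}$; the Klee ball is therefore $\{ sa - tb : a,b \in \Omega_o,\ 0 \leq s,t \leq 1\}$, with Minkowski functional $\inf\{\max(s,t) : x = sa - tb\}$. The base-norm generated by $\Omega_o$ has unit ball $\con(\Omega_o \cup -\Omega_o) = \{ sa - tb : a,b \in \Omega_o,\ s,t \geq 0,\ s+t = 1\}$ and so equals $\inf\{ s + t : x = sa - tb\}$. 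Since $\max(s,t) \leq s + t \leq 2\max(s,t)$, the two norms are equivalent, and completeness transfers from the Klee norm to the base-norm generated by $\Omega_o$. I expect the main thing to get right to be precisely this identification of the two balls, together with the observation that $\Omega_o \subseteq u^{-1}(1)$ is separated from $0$ and hence a genuine base; everything else is routine.
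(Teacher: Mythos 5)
Your proof is correct and follows the same route as the paper's: closedness of $K_o$ by applying the (continuous) unit functional $u$ and splitting into the cases $t>0$ and $t=0$, then completeness via Proposition \ref{prop: Special Klee} applied to the cone $K_o$ inside the Banach space $\V$. The one place you diverge is the final comparison of norms, and there your version is actually the more careful one: the paper asserts the literal equality $(B \cap K_o) - (B \cap K_o) = \con(\Omega_o \cup -\Omega_o)$, which is false in general --- for $a,b \in \Omega_o$ the difference $a-b$ always lies in the Klee ball, but its $\Omega_o$-base norm can equal $2$ (e.g.\ when $\Omega_o$ is a segment and $a,b$ are its endpoints) --- whereas what actually holds is the two-sided inclusion $\con(\Omega_o \cup -\Omega_o) \subseteq (B \cap K_o) - (B \cap K_o) \subseteq 2\,\con(\Omega_o \cup -\Omega_o)$, which is exactly your estimate $\max(s,t) \leq s+t \leq 2\max(s,t)$ on the Minkowski functionals. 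Since equivalent norms have the same Cauchy sequences and the same limits, completeness transfers either way, so both arguments reach the conclusion; your explicit checks that $\Omega_o$ is $\W$-closed (via $\Omega = K \cap u^{-1}(1)$) and that $K_o$ is closed in the base norm of $\V$ (via continuity of the inclusion $(\V,\|\cdot\|_{B}) \rightarrow (\W,\|\cdot\|_1)$), both needed before Proposition \ref{prop: Special Klee} can be invoked, also fill small gaps that the paper leaves tacit.
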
 

{\em Proof:} To see that $\V_o$ is closed, let $z_n = r_n a_n \in K_o$ where 
$r_n \geq 0$ and $a_n \in \Omega_o$, and suppose 
$r_n a_n \rightarrow ra$ where $r \geq 0$ and $a \in \Omega$. 
We wish to show that $a \in \Omega_o$. This is trivial 
if $r = 0$, so assume in what follows that $r > 0$. 
Applying the order unit $u$ associated with $\Omega$, 
and noting that this is continuous with respect to the 
base-norm on $\W$, we have $r_n \rightarrow r$. Claim: 
$ra_n \rightarrow ra$. To see this, note that 
\begin{eqnarray*}
\|ra_n - ra\| & = & \|ra_n - r_n a_n + r_n a_{n} - ra\| \\
& \leq & \|ra_n - r_n a_n\| + \|r_n a_n - r a\| \\
& \leq & |r - r_n|\|a_n\| + \|r_n a_n - ra\|,
\end{eqnarray*}
and both of the last terms go to $0$ as $n \rightarrow \infty$. 
Since $r > 0$, we have 
\[a_n = \frac{1}{r} ra_n \rightarrow \frac{1}{r} r a = a.\]
Thus, as $\Omega_o$ is closed, $a \in \Omega_o$ and we are 
done. 

Now suppose that $\V$ is complete. Let $B$ denote 
the unit ball of $\V$, that is $B = \co(\Omega \cup -\Omega)$. 
Then 
\[B \cap K_o = B \cap (K \cap K_o) = (B \cap K) \cap K_o\]
that is $B \cap K_o$ consists of elements of $\V$ of the 
form $ta_o$ for some $t \geq 0$ and $a_o \in \Omega_o$, 
with $ta_o \in B$. But since $ta_o \geq 0$ in $\V$ 
(since $\Omega_o \subseteq \Omega$), we have 
$ta_o \in B$ iff $0 \leq t \leq 1$, i.e., 
$B \cap K_o = B_o \cap K_o$. Thus, 
$B \cap K_o - B \cap K_o = B_o$, and the completeness of 
$\V_o$ 
follows from Proposition \ref{prop: Special Klee}. $\Box$

We now have 

\begin{proposition} Let $(\M,\Omega)$ be any probabilistic 
model with $\Omega$ convex and $\| \cdot\|_1$-closed in 
$\V(\Omega)$ is a complete BNS\end{proposition}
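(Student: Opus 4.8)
The plan is to read off the result directly from the two preceding facts of this appendix, Corollary~\ref{cor: V(M) complete} and Lemma~\ref{lemma: closed subcone}. All the genuine analysis has already been carried out there; what remains is simply to view $(\M,\Omega)$ as sitting inside the full model $(\M,\Pr(\M))$ with $\Omega$ playing the role of a closed convex set of states.

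First I would invoke Corollary~\ref{cor: V(M) complete}: for the full state space $\Pr(\M)$, the space $\V(\Pr(\M)) = \W_{+} - \W_{+}$ is a complete base-normed space, with closed positive cone $\W_{+}$ inside the Banach space $(\W,\|\cdot\|_1)$ and base $\Pr(\M)$. This furnishes exactly the ambient complete BNS hypothesized in Lemma~\ref{lemma: closed subcone}. Next I would verify that our $\Omega$ qualifies as the ``closed convex subset of the base'' that the Lemma requires. By hypothesis $\Omega$ is convex and $\|\cdot\|_1$-closed, and $\Omega \subseteq \Pr(\M) \subseteq \W_{+}$. Since the base norm on $\V(\Pr(\M))$ agrees with $\|\cdot\|_1$ on the positive cone $\W_{+}$ (as observed in the proof of Proposition~\ref{prop: Special Klee}), closedness of $\Omega$ in $\|\cdot\|_1$ coincides with closedness in the base norm.

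Taking the Lemma's big base to be $\Pr(\M)$ and setting $\Omega_o := \Omega$, Lemma~\ref{lemma: closed subcone} then yields that the cone $K_o = \{ t\alpha \, | \, \alpha \in \Omega, \, t \geq 0 \}$ is closed in $\W$, and that $\V_o = K_o - K_o = \V(\Omega)$ is complete in the base norm generated by $\Omega$ --- which is precisely the assertion. (Note the harmless notational clash: the symbol $\Omega$ in the statement of Lemma~\ref{lemma: closed subcone} corresponds to $\Pr(\M)$ here, while its $\Omega_o$ corresponds to our given $\Omega$.)

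The only point demanding care is this matching of topologies: one must confirm that ``$\|\cdot\|_1$-closed'' is the same as ``base-norm closed'' for subsets of $\Pr(\M)$, so that Lemma~\ref{lemma: closed subcone} applies without modification. That in turn rests on the coincidence of the two norms on $\W_{+}$, where every element is a non-negative multiple of a probability weight; the coincidence is immediate from the description of the base norm as the Minkowski functional of $\con(\Pr(\M) \cup -\Pr(\M))$. Beyond checking this, the argument is pure assembly of results already in hand.
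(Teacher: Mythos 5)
Your proof is correct and follows exactly the route the paper itself takes: its proof of this proposition is literally ``Immediate from Corollary~\ref{cor: V(M) complete} and Lemma~\ref{lemma: closed subcone}.'' Your write-up simply fills in the assembly, including the one point the paper leaves tacit (that $\|\cdot\|_1$-closedness of $\Omega$ suffices for the closedness hypothesis of Lemma~\ref{lemma: closed subcone}, since the base norm dominates $\|\cdot\|_1$), which is a welcome extra check.
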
 

{\em Proof:} Immediate from Corollary \ref{cor: V(M) complete} and Lemma \ref{lemma: closed subcone}. $\Box$\\


\tempout{
{\bf Example 3:} Let $X$ be the unit sphere of an infinite-dimenionsional Hilbert space $\H$, let ${\F}(\H)$ be the corresponding frame manual, and let $\Omega = \Pr(\H)$. By 
Gleason's Theorem, we can associated $\Omega$ with the convex set of density operators in $\V(\H)$, the space of self-adjoint trace-class operators on $\H$. 
It is a standard fact [ref] that $\V(\H)$ is a Banach space in the trace norm, given by 
$\|T\| = \Tr(|T|)$, where $|T| = (T^{\ast} T)^{1/2}$. It follows from Klee's Theorem [details?]
that $\V(\H)$ is also complete in its base norm.  Using the Spectral Theorem, 
not hard to see that the trace-norm is the same as 
the norm $\|~\cdot~\|_1$ introduced above. Indeed, if $|T| = \sum_{x \in E} t_{x} P_{x}$ 
$E$ is an ONB, $t_{x} > 0$, and $P_{x}$ is the rank one projection attached to the unit vector $x \in E$, we have 
\[\sum_{x \in E} |\Tr(|T|P_{x})| = \sum_{x \in E} |t_{x}| = \Tr(|T|),\]
so $\|T\|_{\tr} \leq \|T\|_1$. On the other hand, of $F$ is any other orthonormal 
basis, then since $\Tr(P_{x}P_{y}) = |\langle x, y \rangle|^2$, we have 
\begin{eqnarray*}
\sum_{y \in F} |\Tr(|T|P_{y})| & = & 
\sum_{y \in F} \left | \sum_{x \in E} t_{x} \Tr(P_{x} P_{y}) \right | \\
& \leq & 
\sum_{y \in F } \sum_{x \in E} |t_{x}| |\Tr(P_{x}P_{y})| \\
& \leq & 
\sum_{x \in E}  |t_{x}| \left (\sum_{y \in F}  \Tr(P_x P_y) \right )  \\
& \leq & \sum_{x \in E} |t_{x}| = \Tr(|T|).
\end{eqnarray*}
Thus, $\|T\|_{1} \leq \|T\|_{\tr}$. 
It follows 
(but is easy to check directly) that $\Tr(W_n a) \rightarrow \Tr(Wa)$ for all bounded s.a. operators $a$ on $\H$, and 
thus, that $\Tr(W) = \lim_{n} \Tr(W_n) = 1$.  The positive 
cone of $\V(\H)$ is trace-norm closed, so the set 
$\Omega(\H)$ of density operators is closed, and the completeness 
of $\V(\H)$ can be seen as an instance of Proposition 3 above. 
\\
}

\tempout{
{\bf Unitality and Compactness} The following theorem of Cook's is really useful. Define the {\em ultraweak} topology on $\V(A)$ to be the weakest one making evaluations $\hat{x}$ continuous (i.e., the relative product topology.) 

\begin{proposition}[Cook] Let $K \subseteq \Omega$ be ultraweakly = outcome-wise compact and unital. Then 
$\M(A)$ is locally finite.
\end{proposition} 

{\em Proof:} Suppose $E \in \M$ were infinite. Because $K$ is unital, for every finite event $a \subseteq E$ we can find some $\alpha_{a} \in K$ with $\alpha_{a}(a) = 0$. If the 
net $(\alpha_{a})$, indexed by finite subsets of $E$, has a convergent subnet $\alpha_{a_{i}}$, with limit $\alpha$, then for any $x \in X$ we have $\alpha_{a_i} \rightarrow \alpha(x)$, and it follows that for any finite event $b$, we also have $\alpha_{a_i}(b) \rightarrow \alpha(b)$. For a given choice of $b$, $i$ with $b \subseteq a_i$, so we have for all $j$ with $j \geq i$ that $b \subseteq a_{i} \subseteq a_j$, whence, $\alpha_{a_j}(b) = 0$ for all $j \geq i$.It follows that $\alpha(b) = 0$.  Since this holds for all finite events, $b$ is identically $0$ on $X$, and thus, does not belong to $\Omega$, let alone $K$. In particular, $K$ is not compact. $\Box$ 

It follows that neither $\F(\H)$ nor $\B(S,\Sigma)$ is 
ultraweakly compact. 

\begin{example} In the case of the regular Borel model $(\B(S,\Sigma),\Delta(S))$ where $S$ is compact, the relevant topology is that of pointwise convergence on continuous functions $f : S \rightarrow [-1,1]$, which is quite distinct from that 
of convergence pointwise on $X = \Sigma \setminus \{\emptyset\}$.
\end{example} 
}

\end{appendix} 
\end{document}